\documentclass[12pt,onecolumn, margin=1in]{IEEEtran}
\usepackage{cite}
\usepackage{amsmath,amssymb,amsfonts}
\usepackage{bbm}
\usepackage{algorithmic}
\usepackage{graphicx}
\usepackage{subcaption}
\usepackage{setspace}
\usepackage{textcomp}
\usepackage{placeins}
\usepackage{booktabs}
\usepackage[margin=1in]{geometry}
\usepackage{xcolor}
\usepackage[colorlinks=true,linkcolor=blue,citecolor=blue,urlcolor=black]{hyperref}
\usepackage{tikz}
\usetikzlibrary{arrows.meta}

\usepackage{amsmath, amsthm}

\DeclareMathOperator*{\argmin}{arg\,min}

\newtheorem{theorem}{Theorem}[section]
\newtheorem{corollary}{Corollary}[theorem]
\newtheorem{lemma}[theorem]{Lemma}

\newcommand{\bracket}[1]{\left[ #1 \right]}

\newcommand{\paren}[1]{\left( #1 \right)}
\newcommand{\abs}[1]{\left| #1 \right|}

\newcommand{\R}[1]{\mathbb{R}^{#1}}
\newcommand{\N}[1]{\mathbb{N}^{#1}}

\newcommand{\Esub}[2]{\mathbb{E}_{#2}\left[ #1 \right]}
\newcommand{\E}[1]{\mathbb{E}\left[ #1 \right]}

 \newcommand{\indic}[1]{\mathbbm{1}_{\left\{#1\right\}}} \newcommand{\iid}{\stackrel{iid}{\sim}}

\newcommand{\gauss}[2]{\mathcal{N}\paren{#1, #2}}

 \def\BibTeX{{\rm B\kern-.05em{\sc i\kern-.025em b}\kern-.08em
    T\kern-.1667em\lower.7ex\hbox{E}\kern-.125emX}}
\newcommand{\xv}{\mathbf{x}}
\newcommand{\xiv}{\boldsymbol{\xi}}

\newcommand{\bX}{\boldsymbol{X}}
\newcommand{\bY}{\boldsymbol{Y}}
\newcommand{\bXi}{\boldsymbol{\Xi}}

\newcommand{\Wh}{\hat{W}}
\newcommand{\wh}{\hat{w}}

\newcommand{\sigx}{\sigma_x}
\newcommand{\sigxi}{\sigma_\xi}
\newcommand{\xtx}{\xv^T\xv}
\newcommand{\xt}{\tilde{\xv}}
\newcommand{\xttxt}{\tilde{\xv}^T\tilde{\xv}}

\newcommand{\erf}{\mathrm{erf}}
\newcommand{\Ht}{\tilde{H}}
\newcommand{\Xfp}{X_{fp}}
\newcommand{\Yfp}{Y_{fp}}
\newcommand{\snr}{\mathrm{SNR}}
\newcommand{\compsimplex}{\boldsymbol{\Delta}_{\mathcal{X}_c}}
\newcommand{\mse}{\mathrm{MSE}}
\newcommand{\Yhfp}{\hat{Y}_{fp}}

\begin{document}
\title{The Thermodynamic Costs of Simple Linear Regression}
\author{

Samuel H. D'Ambrosia$^\dagger$, Sultan M. Daniels$^\dagger$, Michael R. DeWeese, and Anant Sahai
\thanks{This work was supported in part by the U.S. Army Research Laboratory and the U.S. Army Research Office under Contract No. W911NF-20-1-0151, and by the H2H8 Nonprofit Organization.

}
\thanks{
$^\dagger$SHD and SMD contributed equally to this work.

Samuel H. D'Ambrosia and Michael R. DeWeese are with the Department of Physics, Redwood Center for Theoretical Neuroscience, and Berkeley AI Research Lab, at the University of California, Berkeley, CA 94720 (email: shda@berkeley.edu, deweese@berkeley.edu)

Sultan M. Daniels and Anant Sahai are with the Department of Electrical Engineering and Computer Sciences, and Berkeley AI Research Lab, at the University of California, Berkeley, CA 94720 (email: sultan\_daniels@berkeley.edu, sahai@eecs.berkeley.edu)

}
}

\maketitle

\begin{abstract}
    % \doublespacing
    The construction of models from data is a significant contributor to the energetic costs of computation. Because of this, understanding how foundational thermodynamic bounds apply to modeling algorithms will be increasingly important. Here, we study the thermodynamic costs of a basic and fundamental modeling algorithm: simple linear regression. Following Landauer, we approximate the thermodynamic lower bound on irreversibly performing both exact linear regression and linear regression via stochastic gradient descent as implemented on floating-point numbers. From this, we derive energy-cost aware scaling laws for the optimal dataset size for training a linear regression model given a generalization error dependent demand for inference. Additionally, we discuss a method to lower bound the entropy production from the mismatch cost for algorithms with continuous input variables. 
\end{abstract}

\begin{IEEEkeywords}

Computational efficiency, energy consumption, energy dissipation, energy efficiency, thermodynamics, thermal energy, scaling laws, numerical representations

\end{IEEEkeywords}

\section{Introduction}

% \doublespacing

The energetic costs of computation are significant and growing, consuming 4.4$\%$ of U.S. energy as of 2023 \cite{Shehabi2024}, with an increasing share of this energetic cost due to algorithms for constructing and deploying data-driven models \cite{Luccioni2024, deVries2023, Verdecchia2023}. 
Simultaneously, as the physical size of computer components shrinks, thermodynamic bounds will become increasingly relevant \cite{Schuster2016, Conte2017, Kish2004, Zhang2022, Landauer1961, Bennett1982, Lloyd2000, Frank2017}. 
Because of this, understanding both how fundamental thermodynamic lower bounds \cite{Landauer1961, Bennett1982, Bennett1982b, Landauer1972} and more advanced stochastic thermodynamic \cite{WolpertStochThermo2024, Wolpert2019, Yadav2025, Manzano2024, KolchinskyWolpert2021} bounds apply to modeling algorithms is of increasing importance. 

Here, we explore the thermodynamic costs of training a single-parameter model: a line with an intercept of zero. Linear regression is a simple predictive modeling algorithm that is central to modern machine learning \cite{McCullochPitts1943, Rosenblatt1958, Widrow1960, garethisl}. 
Although \cite{thermo_of_learning} studied thermodynamic limits for a binary classification task, the question of how thermodynamic bounds apply to regression remains unexplored.

The thermodynamics of computation is predominantly studied for discrete algorithms \cite{demaine_energy-efficient_2016, Wolpert2019}, yet many machine learning algorithms are designed for real-valued inputs, outputs, and model parameters that must be quantized to run on digital hardware. Although previous work has studied analog implementations of learning algorithms \cite{thermo_dnn}, we choose to analyze training algorithms that use floating-point representations since most frontier deep learning systems today are implemented on digital chips that are optimized for and judged on their performance using floating-point representations \cite{nvidia_nvidia_2025, advanced_micro_devices_inc_amd_2025}. 

To quantify thermodynamic costs, in Section~\ref{sec:fpn}, we show a link between the differential entropy of continuous random variables and the entropy of their floating-point quantized counterparts by extending the uniform lattice framework in \cite{kostina_data_2017} to the nonuniform bin structure of the floating-point format. 
The entropy of Gaussian distributed random variables quantized to floating-point numbers is derived in this paper, providing a theoretical foundation for empirical observations that exponent bits of neural network weights have low entropy while the mantissa bits have high entropy \cite{bordawekar2022efloatentropycodedfloatingpoint, hao_neuzip_2024}. The approximations used here for the entropy of floating point numbers are discussed further in \cite{daniels2026entropyfloatingpointnumbers}.

We apply Landauer's principle to explore the thermodynamic costs of single-parameter linear regression via two approaches: exact linear regression (Section~\ref{sec:exact}) and linear regression by stochastic gradient descent (SGD, Section~\ref{sec:sgd}). We show that the number of input data points is the primary contributor to the thermodynamic cost in both cases, finding that the number of mantissa bits has a large contribution to the thermodynamic cost due to the high entropy of the floating-point numbers' mantissa, and that input data with a higher signal-to-noise ratio leads to a lower thermodynamic cost. 
Following our analysis of the minimum energy costs, in Section~\ref{sec:scaling}, we derive a scaling law that represents this energy-accuracy tradeoff as a profit-maximization problem that accounts for the relationship between the model quality and the user demand for running inference \cite{sardana2025chinchillaoptimalaccountinginferencelanguage}. We find that the irreducible error of the model's predictions means that, in certain regimes, using more data to increase model accuracy is not worth the associated energy cost.
Lastly, we return to analyzing the costs of training and discuss a method for lower bounding the mismatch cost contribution to thermodynamic costs beyond Landauer's bound (Section~\ref{sec:mmc}).  

\section{Preliminaries}\label{sec:prelim}

One-dimensional linear regression forms a model for a data set given by $n$ data points $\mathcal{D} = \{(X_i,Y_i)\}_{i=1}^n$. For a given $w \in \R{}$, let $\bX \sim \gauss{0}{\sigx^2I_n}$, $\bXi \sim \gauss{0}{\sigxi^2 I_n}$, where $\bX = \begin{bmatrix}X_1, \dots, X_n \end{bmatrix}^T$, and $\bXi = \begin{bmatrix}\xi_1, \dots, \xi_n \end{bmatrix}^T$. The ground truth data labels are given by $\bY = w\bX + \bXi$. $\boldsymbol{X}$ and $\boldsymbol{\Xi}$
are assumed to be independent, with $\sigx > 0$ and $\sigxi > 0$. 
Equivalently, each data point is independently sampled from the distribution 
\begin{equation}\label{eq:initial_dist}
    f_{XY}(x,y) = \frac{1}{2\pi\sigma_x \sigma_\xi} \, \exp \bigg[-\frac{x^2}{2\sigma_x^2} - \frac{(y-wx)^2}{2\sigma_\xi^2} \bigg].
\end{equation}

We will consider fitting this data with a single variable $\hat w$, representing the slope of a line passing through zero. The loss function for the model will be the mean squared error
\begin{equation}
    \label{eq:loss}
    L(\wh) = \frac{1}{2n} \sum^n_{i = 1}(\hat w x_i - y_i)^2.
\end{equation}

There are two methods we will consider for finding $\hat w$: using the analytic formula to find the exact best fit given the data set, and using stochastic gradient descent to minimize the error. The exact expression for the optimal $\hat w$ given a dataset with $n$ data points is:

\begin{equation}
    \label{eq:exact_formula}
    \hat{w} = \frac{\sum^n_{i=1} x_iy_i}{\sum^n_{i=1} x_i^2}.
\end{equation}
On the other hand, stochastic gradient descent (SGD) allows us to approximate the best fit using an iterative updating method. Batches of size $B$ are sampled from an infinite data stream of independent and identically distributed $(X_i, Y_i)$ pairs, and used to update the model parameters. $\hat w$ is initialized at the value $\hat w_0 \in \R{}$, and updated for each batch using
\begin{equation}
    \label{eq:SGD_update}
    \wh_{k + 1} = \wh_k - \frac{\eta}{B} \sum_{(x_i, y_i) \in \mathcal{B}} \frac{\partial \ell(\wh, (x_i, y_i))}{\partial \wh} \bigg |_{\wh = \wh_k},
\end{equation}
where $\eta$ is the learning rate, $k$ is the step number, and the per-sample loss function is $\ell(\wh, (x_i, y_i)) = \frac{1}{2}(\wh x_i - y_i)^2$. 

While these results are theoretically understood as a relation among continuous variables, here we assume that the computation that determines $\hat w$ is implemented on discrete registers, each of which serves as a representation for a single continuous number. For each continuous variable, we can define a quantization map $Q(X) = X_Q$ that takes a continuous random variable $X$ to a discrete random variable $X_Q$. Assuming the representation is stored on $R$ physical bits, the quantization is given by a string of binary variables $x_Q \in \{0,1\}^R$. We will denote the Shannon entropy of a quantized variable $X_Q$ as $H(X_Q) = -\sum_{x_Q}p(x_Q) \log[p(x_Q)]$, where $\log[\cdot]$ is base 2. Additionally, we avoid the more detailed numerical analysis question of how finite-precision arithmetic perturbs the computation by assuming the regression algorithms evolve according to their ideal real-valued form. Accordingly, if $\Wh$ denotes the model parameter produced by the ideal continuous algorithm, we assume stored output is approximated by $Q(\Wh) = \Wh_Q$.

\subsection{Physical Implementation}\label{sec:phys_implement}

Here we will adopt the \textit{standard accounting convention} following Wolpert \cite[ pp. 32-36]{Wolpert2019}, which allows us to track the thermodynamic costs of logically irreversible computations. A logically irreversible computation is a process where the output state of the computer cannot be used to uniquely determine the input state, resulting in a loss of information. Under the standard accounting convention, computations are assumed to be performed by a cyclic device that computes output states from input states, while only saving outputs\footnote{While reversible algorithms for linear regression \cite{Demaine2021} and stochastic gradient descent \cite{maclaurin2015_rev_SGD} exist, here we focus on the case in which computations are done irreversibly, with inputs being erased and non-recoverable from saved output parameters. The difficulties of reversible computation, and further justification for the standard accounting convention are discussed in \cite{Wolpert2019}.}. 

We assume the physical system that implements the computation is composed of physical bits, systems with two distinct physical states labeled $\{0,1\}$. Let $I,M,O \in \N{}$, and $\mathcal{X}_I = \{0,1\}^I$ denote the set of possible states of the input bits, $\mathcal{X}_M = \{0,1\}^M$ denote the set of possible states of any intermediary bits, and $\mathcal{X}_O = \{0,1\}^O$ denote the set of possible states of the output bits. The joint logical state of the computer will be denoted by $x_c = (x_I, x_M, x_O)\in \mathcal{X}_c = \mathcal{X}_I \times \mathcal{X}_M \times \mathcal{X}_O$, where $x_I \in \mathcal{X}_I$, $x_M \in \mathcal{X}_M$, $x_O \in \mathcal{X}_O$. Let $p(x_c)$ be a probability distribution over $\mathcal{X}_c$. 

The equivalence between thermodynamic entropy and Gibbs-Von Neumann entropy is a classic result in statistical mechanics \cite{tolman, gibbs, maroney2008physicalbasisgibbsvonneumann}. Following Maroney \cite[p. 13]{Maroney}, the Gibbs-Von Neumann entropy (referred to as the thermodynamic entropy) of a physical system with distinct logical states $x_c$ is given by
\begin{equation}
    \label{eq:phys_ent}
    S_{sys} = k_B \ln[2]H(p(x_c)) + \sum_{x_c \in \mathcal{X}_c} p(x_c) S_{x_c}
\end{equation}
where $S_{x_c}$ is the internal entropy of each logical state, and $H(p(x_c)) = -\sum_{x_c \in \mathcal{X}_c} p(x_c) \log[p(x_c)]$. Here we assume the internal entropy of each logical state is the same $\forall x_c \in \mathcal{X}_c, S_{x_c} = S_0$ (i.e. the 0 state of each bit has the same number of equally likely underlying physical microstates as the 1 state), meaning the average internal entropy over any probability distribution $p(x_c)$ over computational states will be $\sum_{x_c \in \mathcal{X}_c} p(x_c) S_{x_c} = S_0$. 

Landauer's principle holds that a logically irreversible computation must incur an energetic cost, due to logically irreversible operations reducing the thermodynamic entropy of the computational system \cite{Landauer1961}. Assuming the computational system begins and ends at the same average internal energy $U_0$, and all components of the computational system remain in thermal equilibrium with their environment at temperature $T$, performing an entropy-reducing computation expels on average at least $Q \geq Q_{min} = -T \Delta S_{sys}$ heat into the computational system's environment \cite{Landauer1961, Maroney}, and requires on average $W \geq W_{min} = -T \Delta S_{sys}$ work (i.e. work would be energy taken from the computer's battery, since by the first law of thermodynamics, if $\Delta U = W - Q = 0$, $Q = W$ \cite{Callen}). From this we associate the energetic cost $\Delta E$ with the work required and heat expelled, $\Delta E \triangleq W = Q$. Under these assumptions, the generalized Landauer's bound states the lower bound on the average energetic cost of a computation is given by $\Delta E \geq \Delta E_{min} = -T \Delta S_{sys}$ \cite{Landauer1961, Maroney}.

The standard accounting convention imposes the following constraints on the probability of computational states at a given step of the computational process. Let the computational process finish at step $F$. Let $x_c^0 =(x_I^0, x_M^0, x_O^0) \in \mathcal{X}_c$ be the initial state of the computer. At step $0$, $p_0(x_c) = \delta_{x_c,x_c^0}$. The cyclic device will return to this state after the computation finishes. At step $1$, the input is loaded while all other bits remain in their initial state, implying $p_1(x_c) = p(x_I) \delta_{x_M,x_M^0} \delta_{x_O,x_O^0}$, where $p(x_I)$ is the probability of input states in $\mathcal{X}_I$. Following \cite[p. 34]{Wolpert2019} if we assume all registers are initialized to the known state $x_c^0$, loading the input at step 1 can be done in an energetically cost free manner. At step $F-1$, the output will have been computed, while the input and intermediary bits are assumed to have been reset to their initial positions specified by $x_c^0$, implying that $p_{F-1}(x_c) = \delta_{x_I,x_I^0} \delta_{x_M,x_M^0} p(x_O)$ where $p(x_O)$ is the probability of output states in $\mathcal{X}_O$. Finally, step $F$ resets the system to $p_F(x_c) = p_0(x_c) = \delta_{x_c,x_c^0}$. If the output is saved and passed on to another computational system, the final step $F$ which resets the output bits can be undertaken in an energetically cost free manner \cite[p. 34]{Wolpert2019}, which matches our case since we would want to keep the model for use after training. 

This implies that only steps from 1 to $F-1$ incur an energetic cost, fixed by the difference in thermodynamic entropy between step 1 and step $F-1$. With $p_1(x_c) = p(x_I) \delta_{x_M,x_M^0} \delta_{x_O,x_O^0}$ and $X_I \sim p(x_I)$, by Eq.~\eqref{eq:phys_ent} we see the thermodynamic entropy of the system at step 1 is $S_{1} = k_B \ln[2] H(X_I) + S_0$. Similarly, with $p_{F-1}(x_c) = \delta_{x_I,x_I^0} \delta_{x_M,x_M^0} p(x_O)$ and $X_O \sim p(x_O)$ we have $S_{F-1} = k_B \ln[2]H(X_O) + S_0$. Thus, assuming equal internal entropy $S_0$ across each logical state $x_c \in \mathcal{X}_c$, unchanged average energy $U_0$, and thermal equilibrium with an environment at temperature $T$, the lower bound on the energetic cost $\Delta E_{min} = -T\Delta S_{sys} = T(S_{1} - S_{F-1})$ of moving from step 1 to step $F-1$ is given by,
\begin{equation}
    \label{eq:Landauer}
    \Delta E_{min} = k_BT \ln[2] (H(X_I)
 - H(X_O)).
\end{equation}
$\Delta E_{min}$ will be referred to as the \textit{Landauer cost} (referred to as the `unconstrained' or `all-at-once' Landauer cost in \cite{Wolpert2019}).

We assume the computation is implemented `all-at-once', meaning the details of the thermodynamic cost of the intermediary transitions between 1 and $F-1$ states, or the use and re-initialization of any intermediary registers $\mathcal{X}_M$, are ignored. As discussed in \cite[p. 55]{Wolpert2019}, the cost of re-initializing the inputs and intermediary registers will be greater than the cost of simply re-initializing inputs. Focusing on the costs associated with input and output states allows us to obtain a lower bound on the energetic cost of the computation while leaving the analysis of more specific and complex intermediary steps to future work.

The Landauer cost provides a lower bound on the energetic cost of a logically irreversible computation by assuming the physical implementation is thermodynamically reversible (note that thermodynamic and logical reversibility are entirely distinct \cite{MARONEY_no_relationship}). A process is called \textit{thermodynamically reversible}, when the bound $Q \geq -T \Delta S_{sys}$ is saturated at $Q = Q_{min} = -T \Delta S_{sys}$ \cite{Callen,landau1980statistical, chandler}. However, real physical systems often incur energetic costs beyond Landauer's bound, increasing the heat generation and work requirements beyond those implied by a decrease in the computational system's entropy. Section~\ref{sec:mmc} discusses one type of energetic cost beyond the thermodynamically reversible lower bound: the mismatch cost (MMC) \cite{Wolpert2019, Yadav2025, KolchinskyWolpert2021, Manzano2024}.

\subsection{Exact Linear Regression}

In the case of exact linear regression, the cyclic device that computes the best fit slope takes input $X_I = Q(\mathcal{D}) = \mathcal{D}_Q$, which is composed of $2nR$ input bits. This is because each of $n$ data points has two discrete representations of continuous variables, $x$ and $y$, each of which is represented by $R$ bits. The output $X_O= \hat W_Q$ is a single $R$-bit discrete representation of the continuous variable $\Wh$. With each $(X,Y)$ data point being independent, the entropy of $\mathcal{D}_Q$ can be written as the sum of the entropy of the $n$ data points. Therefore, the Landauer cost for an irreversible cyclic device performing exact linear regression on independent data points is:

\begin{align}
    \label{eq:exact_landauer}
    \Delta E^{Ex}_{min} & = k_BT \ln[2] (H(\mathcal{D}_Q) - H(\Wh_Q)) \\ & = k_BT \ln[2] (n H(X_Q,Y_Q) - H(\Wh_Q)).
\end{align}

\subsection{Stochastic Gradient Descent}\label{sec:sgd_prelim}

For SGD, we assume the cyclic device completes one cycle for each update step specified by Eq.~\eqref{eq:SGD_update}. $2B$ registers are required to load each batch $\mathcal{B}_Q$, with $1$ additional register required to load the model parameter $\Wh_{Q,k}$ from step $k$, meaning a total of $(2B+1)R$ input bits for $X_I = (\Wh_{Q,k}, \mathcal{B}_Q)$. After the computation is completed, a single register of $R$ bits holding $X_O=\Wh_{Q,k+1}$ is saved, then re-loaded onto the input to the next step while assuming all other registers are re-initialized. Assuming each data point is independent, the Landauer cost of a single update is given by

\begin{align}
    \label{eq:one_step_L}
    \Delta E_{min,k} &  = k_BT \ln[2](H(\Wh_{Q,k}, \mathcal{B}_Q) - H(\Wh_{Q,k+1})) \\ & = k_BT \ln[2](H(\Wh_{Q,k}) + B H(X_Q,Y_Q) - H(\Wh_{Q,k+1})),
\end{align}

After $\tau$ updates the process terminates and saves the final model parameter $\hat W_{Q,\tau}$. We start with $\Wh$ initialized precisely to $\wh_0$, so that $H(\Wh_{Q,0}) = 0$. Summing over the energetic cost of each update Eq.~\eqref{eq:one_step_L}, $-H(\Wh_{Q,k+1})$ from step $k$ cancels with $H(\Wh_{Q,k})$ from step $k+1$. The resulting Landauer cost after $\tau$ updates is

\begin{equation}
    \begin{aligned}
        \Delta E^{SGD}_{min} & = \sum_{k=0}^{\tau-1} \Delta E_{min,k} = \sum_{k=0}^{\tau-1} k_BT \ln[2](H(\Wh_{Q,k}) + B H(X_Q,Y_Q) - H(\Wh_{Q,k+1})) \\
        & = k_BT \ln[2](\tau B H(X_Q,Y_Q) - H(\Wh_{Q,\tau})).
    \end{aligned}
\end{equation}

\section{The Entropy of Floating-point Numbers}\label{sec:fpn}

In digital hardware, numerical values are stored as bit strings in finite registers, coming in many forms. Two examples are integers, uniformly spaced values with a range specified by the number of bits, and floating-point numbers, which uses a finite register of binary bits to represent a real value $x$ in binary scientific notation. 
Here, we focus on floating-point numbers, due to their ubiquitous role in representing continuous variables, and because floating point numbers play a central role in modern deep learning algorithms and hardware \cite{nvidia_nvidia_2025, advanced_micro_devices_inc_amd_2025}. 
As deep neural network training and inference loads attempt to saturate the computational resources of specialized hardware, there has been renewed interest in developing novel floating-point formats that find the right tradeoffs between precision, dynamic range, memory, computational speed and numerical stability \cite{kuzmin_fp8_2022}.
Some examples are microscaling (MX) formats \cite{rouhani_microscaling_2023, darvish_rouhani_shared_2023, su_characterization_2025} where $k$ numbers are represented with a single shared exponent that encodes the scale of the block, and $k$ scalar elements that are encoded using 8 bit or less floating point or integer formats.\footnote{An example is MXFP8 where a block has $32$ elements with a shared 8-bit exponent and each scalar element is encoded with FP8 using four exponent bits and three significand bits.} 

\subsection{The Structure of a Floating-point Number}\label{sec:fpn_structure}

In this paper, we will use a simple normalized floating-point format with midpoint rounding that does not use subnormal numbers (as defined in \cite{muller2010handbookfpn}) and where zero is not in the representable set.

Let $p \in \N{}$, $E \in \N{} \cup \{0\}$, and $\alpha \in \R{}$. The following map is defined on $\R{}\setminus\{0\}$; since all distributions used in this paper are absolutely continuous, the event that the value zero is realized is probability-zero. Define \begin{equation}
    \mathrm{round}_{p}(\alpha) \triangleq \paren{\argmin\limits_{i \in \{0, \dots, 2^{p-1}\}}\abs{\alpha - i2^{-(p-1)}}}2^{-(p-1)},
\end{equation} where ties in the $\argmin$ function are broken towards the higher index $i$. Assuming rounding to the nearest representable value, the value stored on a floating-point number can be written as
\begin{equation}
    x_{fp}(x) \triangleq (-1)^{s_{fp}(x)} \times 2^{e_{fp}(x)} \times \paren{1 + m_{fp}(x)},
\end{equation}
where \begin{equation}
\label{eq:sem_fp}
    \begin{aligned}
        s_{fp}(x) &\triangleq \indic{x < 0},\\
        \tilde e_{fp}(x) &\triangleq -\paren{2^{E-1} - 1}\indic{\log|x| < -\paren{2^{E-1} - 1}}\\
        &+ \sum\limits_{i=0}^{2^E - 1}\bracket{i - \paren{2^{E-1} - 1}}\indic{i - \paren{2^{E-1} - 1} \leq \log|x| < i+1 - \paren{2^{E-1} - 1}} + 2^{E-1}\indic{\log|x| \geq 2^{E-1} + 1}\\
          m_{fp}(x) &\triangleq \mathrm{round}_{p}\paren{|x|2^{-\tilde e_{fp}(x)} - 1}\indic{\log|x| < 2^{E-1}+1}\indic{\mathrm{round}_{p}\paren{|x|2^{-\tilde e_{fp}(x)} - 1
     } \neq 1} \\
          &+ \paren{1 - 2^{-(p-1)} }\paren{\indic{\log|x| \geq 2^{E-1}+1} + \indic{2^{E-1} \leq \log|x| < 2^{E-1}+1}\indic{\mathrm{round}_{p}\paren{|x|2^{-\tilde e_{fp}(x)}
     - 1} = 1}}\\e_{fp}(x) &\triangleq \min\paren{2^{E - 1}, \tilde e_{fp}(x) + \indic{\mathrm{round}_{p}\paren{|x|2^{-\tilde e_{fp}(x)} - 1} = 1}}.
       \end{aligned}\footnote{$\tilde e_{fp}(x)$ must be introduced to handle cases where mantissa value rounds up to the next exponent level. This occurs when $\mathrm{round}
      _{p}\paren{|x|2^{-\tilde e_{fp}(x)} - 1} = 1$.}
   \end{equation}
The quantities in Eq.~\eqref{eq:sem_fp} are decoded numerical values: $s_{fp}(x)$ is a single sign bit, $e_{fp}(x)$ is the binary exponent for $E \geq 1$ (for $E=0$, the format has a single exponent level with implicit exponent value $1/2$), and $m_{fp}$ is the significand (or mantissa) which encodes the binary significant digits of $x$ in the form $1.m_{fp}(x)$; in the physical registers these are encoded as binary integers using $1$, $E$, and $(p-1)$ bits respectively \cite{muller2010handbookfpn, Goldberg1991_FPN_Intro}. Since each stored field is a finite set of non-negative integers constrained by the number of bits in the register, this structure can only represent a finite set $U_{fp}$ of representable numbers along the real number line. For example, the standard single-precision IEEE-754 format uses one bit for $s_{fp}$, 23 bits for $m_{fp}$, and eight bits for $e_{fp}$. Note that the format defined by Eq.~\eqref{eq:sem_fp} is an idealized normalized floating-point format: it does not include zero, subnormal numbers, infinities, or NaNs, and its exponent range ($e_{\min} = -(2^{E-1}-1)$, $e_{\max} = 2^{E-1}$) differs slightly from the IEEE-754 standard. 
The structure of a floating-point number is illustrated in Fig.~\ref{fig:fp_diag}. See \cite{muller2010handbookfpn, Goldberg1991_FPN_Intro} for further details on the floating-point representation.

\subsection{Computing the Exact Discrete Entropy of Quantized Random Variables}\label{sec:exact_entropy}

The purpose of a floating-point number is to store a real number in a discrete state. Thus for a continuous random variable $X$, our goal will be to find the discrete entropy $H(X_{fp})$, where $X_{fp}$ represents the discrete random variable resulting from the floating-point quantization of $X$. We first directly compute the entropy of a random variable that is clipped and midpoint quantized to representable values in Theorem~\ref{thm:quantized_ent}. Then, Corollary~\ref{cor:gauss_fp_ent} finds the representable values of the floating-point representation given in Eq.~\eqref{eq:sem_fp}, which can be directly applied to Theorem~\ref{thm:quantized_ent} to directly compute the entropy of $\Xfp$. Due to their length, these exact expressions can be found in App.~\ref{app:exact_entropy_float}. Nevertheless, their numerical evaluations can be seen in Figures~\ref{fig:fp_exact_approx} and~\ref{fig:full_fp_exact_approx_p_E} which support the analysis of following approximations.

\subsection{Illustrative and Computationally Tractable Approximations}\label{sec:illustrative_approx}

We can gain further insight into the entropy of floating-point numbers, and a significant computational advantage for the entropy of large floating-point numbers, if we make three approximations which allow for the derivation of analytic expressions for the floating point entropy: relating the discrete entropy of the quantized variable to the differential entropy of the continuous variable, smoothing and extending the bin size function, and approximating non-zero mean distributions. The approximations applied here are discussed for a broader class of distributions in \cite{daniels2026entropyfloatingpointnumbers}.

\subsubsection*{Approximation 1 -- Relating discrete and differential entropy}

The first approximation relates the differential entropy of a continuous random variable to the discrete entropy of its counterpart discrete representation. This has been studied by \cite{CoverThomas, renyi_dimension_1959, jaynes_information_1962, Jaynes1968-limiting, linder_asymptotic_1994, gish_asymptotically_1968, kostina_data_2017}, focusing on when the quantization uses uniform bins of size $\Delta \in \R{}$. This results in the well known relationship $H(X_Q) \approx h(X) - \log \Delta$, where $h(X) = - \int f_X(x) \log f_X(x) dx$ is the differential entropy. 

However, the formalism in \cite{kostina_data_2017} can be extended to non-uniform bins, and applied to the case of floating-point numbers. In App.~\ref{app:FPN_app1}, the relationship between the differential and discrete entropy is derived for the non-uniform quantization of a $d$-dimensional distribution on $d$ independent registers. For simplicity, here we can focus on the case of a univariate distribution where $d=1$. Assuming bounded support on the region of $\R{}$ bounded by the granular region (see \cite{GrayNeuhoff1998Quantization}) $\mathbb{U} \triangleq [-2^{e_{max} + 1} + 2^{e_{max} - p}, \,\, 2^{e_{max} + 1} - 2^{e_{max} - p}]$, where $e_{max} = 2^{E-1}$. Approximation 1 is $H(X_Q) \approx \tilde H(X_Q)$, where
\begin{equation}
    \label{eq:diff_approx}
    \tilde H(X_Q) \triangleq -\int_\mathbb{U} f_X(x) \log [f_X(x) \Delta(x)] dx = h(X) - \Esub{\log[\Delta(x)]}{\mathbb{U}},
\end{equation}
and $\Esub{\cdot}{\mathbb{U}}$ denotes expectation over the granular region $\mathbb{U}$. As discussed in App.~\ref{app:FPN_app1} and \cite{kostina_data_2017}, approximation 1 is valid if $f_X(x)$ varies slowly over each bin, meaning $f_{X}(x)$ is well approximated by a piecewise distribution which is uniform over each bin (see Fig.~\ref{fig:fp_joint_exact_approx_p_E_snr_100}). 

This expression can be understood through E.T. Jaynes' work on the ``limiting density of discrete points" \cite{Jaynes1968-limiting}. Jaynes emphasized that differential entropy by itself is not an absolute quantity\footnote{While the primary motivation for introducing a discrete representation is the reality that modern computers use discrete representation, the known pathologies of differential entropy also make its application to a cyclic device difficult. Differential entropy is coordinate dependent (as noted by Jaynes \cite{Jaynes1968-limiting}), and it can diverge to $h(X) \rightarrow - \infty$ for sufficiently narrow distributions. Since the standard accounting convention assumes registers are re-initialized to specific, i.e. infinitely narrow states during the computation, use of the differential entropy would not be possible.}, depending implicitly on a reference measure that specifies how the continuous space is discretized. The position dependent bin density provides us with this underlying measure, by specifying the density of states along the number line as $1/\Delta(x)$ \cite{Jaynes1968-limiting}.

\subsubsection*{Approximation 2 -- Smoothing the bin size function and extending its domain}

For floating-point numbers, the spacing between adjacent representable values is constant inside each exponent block. When the value to be quantized $x$ is within the range $[2^{e_{fp}(x)}, 2^{e_{fp}(x)+1})$, a floating-point number can represent $2^{(p-1)}$ equally spaced values. Thus for $2^{e_{fp}(x)} \leq x < 2^{e_{fp}(x)+1}$, meaning that $x$ is in a bin in the interior of an exponent block, the bin size is, 
\begin{equation}
    \label{eq:true_bins}
    \Delta(x) = \frac{2^{e_{fp}(x)+1} - 2^{e_{fp}(x)}}{2^{p-1}}= 2^{e_{fp}(x) - (p-1)} \text{ if }x \in \mathbb{U}.
\end{equation}
In the main text, only the bin size when $x$ is in an interior bin is given for readability; the true midpoint bin-size function $\Delta(x)$ from App.~\ref{app:FPN_app1} takes on different values when $x$ is in the two outer clipping bins, the exponent-boundary bins and the two bins adjacent to zero. These special bins are treated explicitly in App.~\ref{app:FPN_app2}.

Our second approximation will alter the bin size. First we can smooth the steps by introducing a best-fit linear approximation as shown as the dashed red line in Fig.~\ref{fig:fp_diag}, which uses $e_{s}(x) \triangleq \log[|x|/\sqrt{2}]$. This gives us
\begin{equation}
    \label{eq:step_approx}
    \Delta_s(x) \triangleq \frac{1}{\sqrt{2}}|x| \cdot 2^{1-p} \approx \Delta(x).
\end{equation}
By Theorem~\ref{thm:steps}, the error caused by introducing this smooth approximation is bounded by $d/2$ plus the small contribution from the bins adjacent to zero; for most distributions it will be much smaller due to cancellation in the overestimation and underestimation.

Unlike $\Delta(x)$, it is simple to define $\Delta_s(x)$ for any $x \in \mathbb{R}$. With this, and by assuming the probability of overflow and underflow values is small, $\int_{\mathbb{R} \setminus \mathbb{U}} f_X(x) dx \approx 0$, we extend the domain of integration from $\mathbb{U}$ to $\mathbb{R}$. Since $\Delta_s(x)$ grows only linearly in $|x|$, the omitted tail integral $\int_{\mathbb{R}\setminus\mathbb{U}} f_X(x)|\log[f_X(x)\Delta_s(x)]|\,dx$ is near zero for all distributions used in this paper (Gaussian tails decay exponentially and Student's $t$ tails decay as a power law, both of which dominate the linear growth of $\Delta_s$) when the granular region $\mathbb{U}$ is sufficiently large. Approximation 2 states that $\Ht_s(x) \approx \tilde H(x)$ :

\begin{equation}
    \label{eq:first_two_approx}
    \Ht_s(X_{fp}) \triangleq h(X) + (p-1)- \int^\infty_{-\infty} f_X(x) \log\bracket{\frac{|x|}{\sqrt{2}}} dx,
\end{equation}
where $h(X) = -\int^\infty_{-\infty} f_X(x) \log f_X(x) dx$. Approximation 2 applies when $f_X(x)$ has low overflow or underflow probability. Beyond this, the bin-size smoothing error is bounded by $d/2$ plus the small contribution from the bins adjacent to zero (Theorem~\ref{thm:steps}); the domain extension from $\mathbb{U}$ to $\mathbb{R}$ is a separate approximation controlled by the probability mass outside $\mathbb{U}$. Simple and predictive analytic expressions can be obtained using just Eq.~\eqref{eq:first_two_approx} when continuous random variables are Gaussian and centered on 0, as discussed in Section~\ref{sec:gaussians}. 

\subsubsection*{Approximation 3 -- Evaluating $\mathbb{E}[\log[|x|/\sqrt{2}]]$}

We can also consider an approximation which can assist with distributions that are not centered at 0. Here we can assume the distribution is offset by a shift, $\mu$. In this case, $\Ht_s(X_{fp})$ will depend on a convolution of $\log[|x|/\sqrt{2}]$ with the probability distribution $f_X(x - \mu)$, 
\begin{equation}
    \label{eq:G_int_non_gauss}
    G_{f_X}(\mu) \triangleq \int^{\infty}_{-\infty} f_X(x-\mu) \log[|x|/\sqrt{2}] dx.
\end{equation}
For a smooth distribution which decays rapidly around its mean $\mu$, the resulting integral will approximate a smoothed $\log\bracket{|x|/\sqrt{2}}$ function, as shown for a Gaussian distribution in Fig.~\ref{fig:fp_off_zero}. Approximation 3 states that $\tilde H_s(X_{fp}) \approx \tilde H_s^\mu(X_{fp})$, with
\begin{equation}
    \label{eq:three_approxes}
    \Ht^\mu_s(\Xfp) \triangleq h(X) + (p-1) - \log\bracket{|\mu| / \sqrt{2}}.
\end{equation}
This approximation replaces the integral term $G_{f_X}(\mu)$ from Eq.~\eqref{eq:G_int_non_gauss} with $\log[|\mu|/\sqrt{2}]$, and applies for $\mu \neq 0$ when $X \sim \mathcal{N}(\mu, \sigma^2)$ with $|\mu| \gg \sigma$, as well as non-Gaussian distributions which approximate a delta function as a parameter $\epsilon$ defining the distribution's width is decreased toward zero ($\epsilon \to 0$), making the distribution increasingly concentrated around $\mu \neq 0$, as shown in App.~\ref{app:FPN_app3}. The full approximation $H(X_{fp}) \approx \Ht^\mu_s(X_{fp})$ additionally requires that Approximation~1 remains valid, which can fail when the distribution becomes narrower than the quantization bin size. 
Fig.~\ref{fig:fp_exact_approx_mu} shows this approximation along with the true value of $H(\Xfp)$ as $\mu$ varies.

\subsection{Approximating the entropy of a floating-point quantized univariate and bivariate Gaussian}

\label{sec:gaussians}

\subsubsection{Univariate Gaussian}
Applying Eq.~\eqref{eq:first_two_approx}, one can obtain an analytic expression for the approximate discrete entropy of a zero-mean normally distributed continuous random variable $X \sim \mathcal{N}(0, \sigma^2)$ as represented on a floating-point number. By Theorem~\ref{thm:single_gaussian}, $\Ht_s(X_{fp}) = \tilde H^0_s(p)$, where
\begin{equation}
    \label{eq:single_mean_0}
    \Ht^0_{s}(p) \triangleq p + \frac{1}{2} \log[2 \pi e] + \frac{\gamma_e}{2\ln[2]} \approx p + 2.46 \text{ bits}.
\end{equation}
$p$ is the precision, and $\gamma_e \approx 0.5772$ is the Euler--Mascheroni constant.

\begin{figure}[htbp]
  \centering
  \begin{subfigure}{0.32\textwidth}
    \centering
    \includegraphics[width=\linewidth]{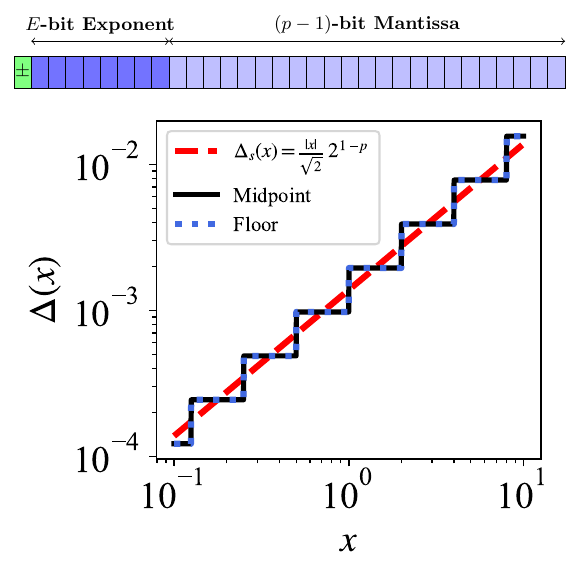}
    \caption{}
    \label{fig:fp_diag}
  \end{subfigure}
  \begin{subfigure}{0.32\textwidth}
    \centering
    \includegraphics[width=\linewidth]{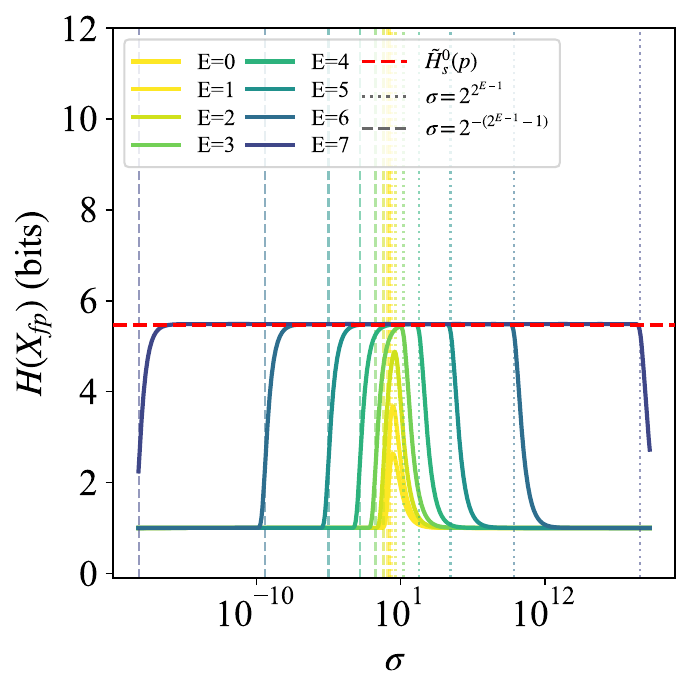}
    \caption{}
    \label{fig:fp_exact_approx_sigma}
  \end{subfigure}
  \begin{subfigure}{0.32\textwidth}
    \centering
    \includegraphics[width=\linewidth]{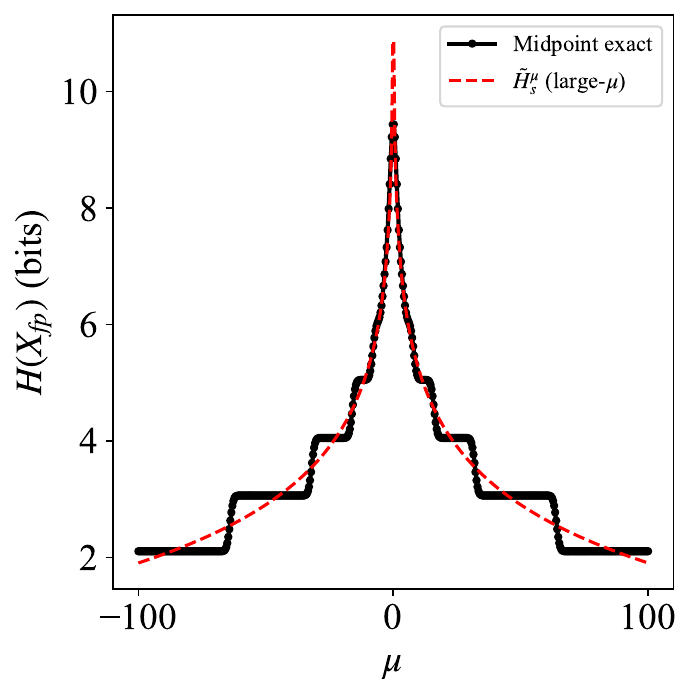}
    \caption{}
    \label{fig:fp_exact_approx_mu}
  \end{subfigure}
\caption{\textit{Floating-point structure and Gaussian approximations}. (\ref{fig:fp_diag}) The structure of a floating-point number where each box represents one bit. The true bin size function $\Delta(x)$ is plotted on log-log scale for both midpoint (black solid curve) and floor quantization (blue dotted curve) with $p = 10$ and $E=4$ along with the smooth approximation $\Delta_s(x)$ (red dashed curve). (\ref{fig:fp_exact_approx_sigma}) shows the entropy of $\Xfp$ which is a discrete representation of the random variable $X \sim \gauss{0}{\sigma^2}$ that has been clipped and midpoint quantized onto a floating-point representation with precision $p=3$ and various numbers of exponent bits $E$. The x-axis shows the standard deviation $\sigma$ of the underlying continuous distribution. The horizontal red line shows the approximate entropy $\Ht^0_s(p)$. The vertical dashed lines mark $\sigma = 2^{e_{\min}}$, while the vertical dotted lines mark $\sigma = 2^{e_{\max}}$ for each $E$. Notice that exact entropy closely follows the approximate entropy until $\sigma$ approaches these boundaries for each $E$. (\ref{fig:fp_exact_approx_mu}) shows the entropy of $\Xfp$ when $p = 7$ and $E = 7$, while the mean $\mu$ of the underlying continuous random variable $X \sim \gauss{\mu}{1}$ varies from $-100$ to $100$. The approximate entropy $\Ht^\mu_s(\Xfp)$ is plotted as the red dashed curve alongside the exact entropy (solid black curve) and smoothly passes through the stepwise drops in the exact entropy as $|\mu|$ increases.}
  \label{fig:fp_exact_approx}
\end{figure}

$\Ht^0_s$ predicts that the floating-point entropy will be constant with respect to the variance of the continuous random variable, seen by the red dashed line in Fig.~\ref{fig:fp_exact_approx_sigma}. This shows that the approximation $H(X_{fp}) \approx \tilde H_s(X_{fp})$ works well for a wide range given there are few overflows or underflows (see App.~\ref{app:exact_FPN_extrafigs}), even for 2-bit mantissa when $p=3$. Fig.~\ref{fig:fp_exponents} in App.~\ref{app:FPN_app_gauss} displays a histogram of the exponents for numbers sampled from the distribution $\mathcal{N}(0,\sigma^2)$. Additionally, from Fig.~\ref{fig:fp_exact_approx_mu} we see $H(X_{fp}) \approx \tilde H^\mu_s(X_{fp})$ works well beyond values near 0, with errors remaining within $1/2$ of a bit despite oscillatory errors due to the steps in the true bin size.

\begin{figure}[htbp]
  \centering
  \begin{subfigure}{0.32\textwidth}
    \centering
    \includegraphics[width=\linewidth]{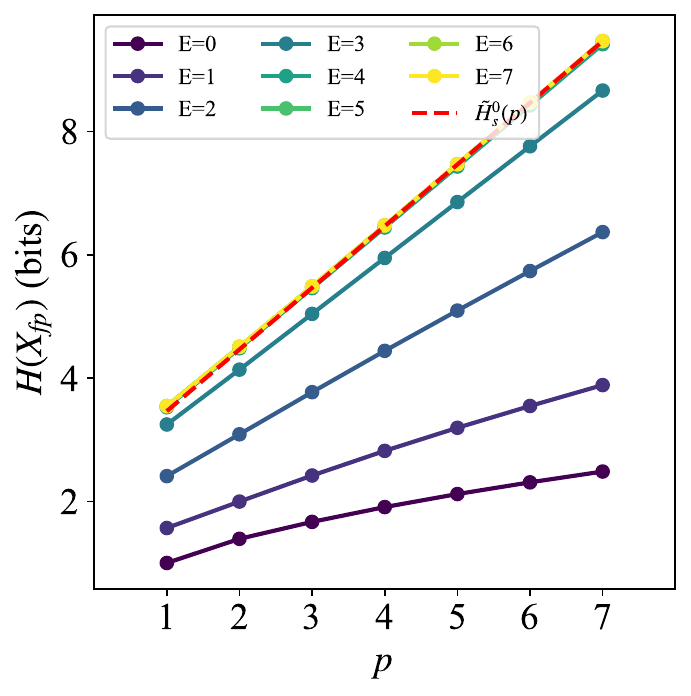}
    \caption{}
    \label{fig:fp_exact_approx_p_E}
  \end{subfigure}
  \begin{subfigure}{0.32\textwidth}
    \centering
    \includegraphics[width=\linewidth]{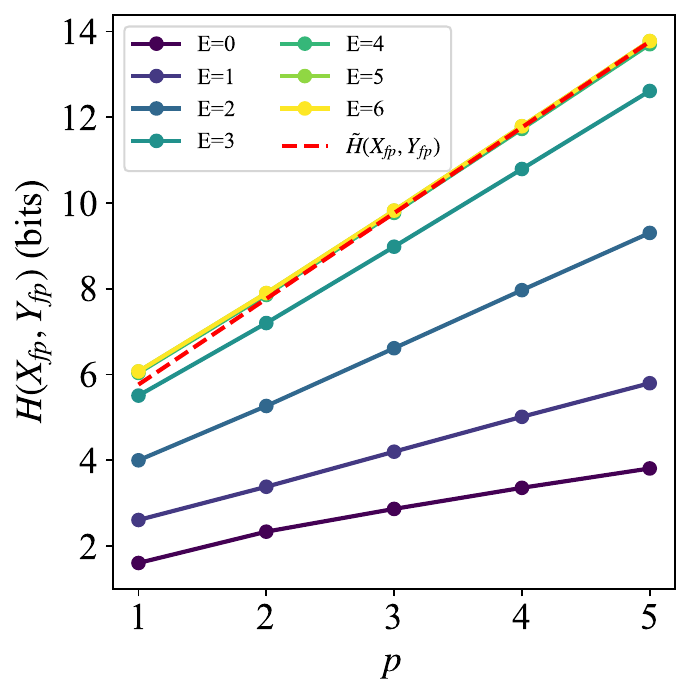}
    \caption{}
    \label{fig:fp_joint_exact_approx_p_E_snr_2}
  \end{subfigure}
  \begin{subfigure}{0.32\textwidth}
    \centering
    \includegraphics[width=\linewidth]{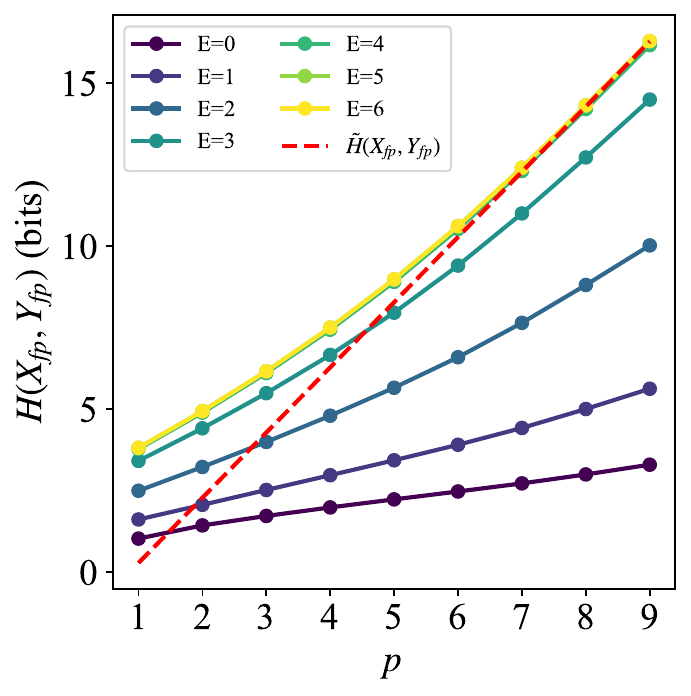}
    \caption{}
    \label{fig:fp_joint_exact_approx_p_E_snr_100}
  \end{subfigure}
\caption{\textit{Regimes where the approximations hold}. (\ref{fig:fp_exact_approx_p_E}) Shows the exact discrete entropy of $\Xfp$ as $p$ varies for each $E$ when the underlying continuous random variable is $X\sim \gauss{0}{1}$. The approximation $\Ht^0_s(p)$ is also plotted as the dashed red line. Notice that the curves for $E \geq 4$ are directly on top of each other, showing that $\Ht^0_s(p)$ is close to $H(\Xfp)$ when $E$ is large enough to keep the probability of overflows and underflows low (see App.~\ref{app:exact_FPN_extrafigs}). The approximation stays close to the true entropy with a large exponent and low precision. (\ref{fig:fp_joint_exact_approx_p_E_snr_2}) Shows the joint entropy of $(\Xfp, \Yfp)$ where $X \sim \gauss{0}{1}$, $\xi \sim \gauss{0}{0.25}$ and $Y = X + \xi$. The approximation $\Ht_s(\Xfp, \Yfp)$ is the red dashed line, and it again follows closely to $H(\Xfp, \Yfp)$ for $E \geq 4$. (\ref{fig:fp_joint_exact_approx_p_E_snr_100}) shows the same quantities as Fig.~\ref{fig:fp_joint_exact_approx_p_E_snr_2}, except $SNR \triangleq w^2\sigx^2/\sigxi^2 = 10000$ as opposed to $4$. In this high $SNR$ regime, the $\Ht_s(\Xfp, \Yfp)$ underestimates $H(\Xfp, \Yfp)$ for $p \leq 6$ due to the quantization noise that is present when the precision is low. Like the other settings, $\Ht_s(\Xfp, \Yfp)$ follows closely to $H(\Xfp, \Yfp)$ when $p >6$ and $E \geq 4$. Corollary~\ref{cor:joint_ent} reviews the exact calculation of the discrete entropy for the bivariate Gaussian case.}
  \label{fig:full_fp_exact_approx_p_E}
\end{figure}

\subsubsection{Bivariate Gaussian} 

We can apply a similar approach to approximating the discrete entropy for the joint entropy of $X$ and $Y = wX + \xi$ as represented over two floating-point numbers, defined by Eq.~\eqref{eq:initial_dist}. This makes use of the multivariate results presented in App.~\ref{app:FPN_app1}. For a given underlying slope $w$, by Theorem~\ref{thm:double_gaussian}, we obtain 
\begin{equation}
    \label{eq:approx_joint_ent_SNR}
    \begin{aligned}
        \Ht_s(\Xfp, \Yfp) &= 2\Ht^0_{s}(p) - \frac{1}{2} \log \bigg [1 + \frac{\sigma_x^2 w^2}{\sigma^2_\xi} \bigg ].
    \end{aligned}
\end{equation}

This expression in general depends on $w$, and if $w \neq 0$, then the discrete entropy will in general be lower than $2 \Ht^0_{s}(p)$, to a degree determined by the signal-to-noise ratio $\snr = \sigma_x^2 w^2/\sigma_{\xi}^2$. One issue to notice is that for an infinite signal-to-noise ratio the expression diverges, similar to an infinite capacity scalar Gaussian channel \cite{CoverThomas, shannon_communication_1949}. This is because Eq.~\eqref{eq:diff_approx} only applies for distributions which are sufficiently broad with respect to the bin size. For infinite signal-to-noise ratio, the distribution will be a sharp line in the joint space, which will vary quickly over individual bins, making approximation 1 inaccurate. 

This is seen in Fig.~\ref{fig:fp_joint_exact_approx_p_E_snr_100} where the dashed line showing the approximation only closely follows the exact joint entropy once the precision $p$ is $7$ for a signal-to-noise ratio of $10000$, while for a signal-to-noise ratio of $4$ Fig.~\ref{fig:fp_joint_exact_approx_p_E_snr_2} the approximation closely follows the exact joint entropy once the precision $p$ is $2$. In particular, the approximation underestimates the exact entropy in the high-$\snr$ small-$p$ regime. When there is infinite $\snr$, $X$ is perfectly recoverable from $Y$. Nonetheless, depending on the slope $w$ and the precision $p$, there could be two different $(\Xfp)_1$ and $(\Xfp)_2$ where $(\Xfp)_1 \neq (\Xfp)_2$, but $(\Yfp)_1 = Q(wX_1 + \xi_1) = Q(wX_2 + \xi_2) = (\Yfp)_2$. This would mean $(\Xfp)_1$ is not perfectly recoverable from $(\Yfp)_1$ even in the infinite-$\snr$ case. This additional entropy from the discrete binning in the high-$\snr$ small-$p$ regime cannot be captured by the smooth approximation $\Delta_s(x)$.

Not only do these approximations show close correspondence with the true entropy of floating-point quantized Gaussian random variables, but they can be numerically evaluated quickly for large $p$ and $E$. This contrasts the formulas for the exact entropy in Corollary~\ref{cor:gauss_fp_ent} that require the enumeration of $2^{p+E}$ values. For example, a IEEE-754 single-precision floating point number uses $p=24$ and $E=8$ which would require the enumeration of $2^{32} = 4.3\times 10^9$ values.

\section{Landauer Cost of Exact Linear Regression}\label{sec:exact}

We now will compute the Landauer cost of exact simple linear regression, assuming all continuous variables are implemented on single-precision floating-point numbers, with $E = 8$, $p=24$. The optimal fit line $\wh$ is determined by exact linear regression formula Eq.~\eqref{eq:exact_formula}, and is computed from the entire data set. With each of $n$ data points being independent and identically distributed, if we assume the distribution is sufficiently slow varying for the $p=24$ case (as supported by Fig.~\ref{fig:fp_joint_exact_approx_p_E_snr_100}, where the $p=4$ case will have even finer bins), and that the overflow and underflow probabilities are low, the entropy of the input data can be approximated using Eq.~\eqref{eq:approx_joint_ent_SNR}: 
\begin{equation}\label{eq:input_ent_ex}
    H(\mathcal{D}_{fp}) = nH(\Xfp, \Yfp) \approx n \Ht_s(X_{fp}, Y_{fp}).
\end{equation}

For the output data, $\Wh = \frac{\bX^T\bY}{\bX^T\bX}$ is the predicted slope using the exact formula for linear regression. The differential entropy of the predicted slope is
\begin{align}
h(\Wh) &= h\paren{\frac{\bX^T\bY}{\bX^T\bX}} = h\paren{\frac{\bX^T(w\bX + \bXi)}{\bX^T\bX}} = h\paren{w + \frac{\bX^T\bXi}{\bX^T\bX}} = h\paren{\frac{\bX^T\bXi}{\bX^T\bX}},
\end{align}
since $w$ is a constant. From Lemma~\ref{lem:compute_f_z}, if $Z = \frac{\bX^T\bXi}{\bX^T\bX}$, the probability density function of $Z$ is
\begin{align}
    f_Z(z) = \sqrt{\frac{1}{\pi\paren{\sigx^2}^n\sigxi^2}}\frac{\Gamma\paren{\frac{n+1}{2}}}{\Gamma\paren{\frac{n}{2}}}\paren{\frac{\sigx^2\sigxi^2}{\sigx^2z^2 + \sigxi^2}}^{\frac{n + 1}{2}}\label{eq:exact_final_dist}
\end{align}
which is the probability density function of a scaled Student's t-distributed random variable with $n$ degrees of freedom and a scale of $\frac{\sigxi}{\sigx\sqrt{n}}$ \cite{JohnsonNormanLloyd1994Cud}. 
See Appendix~\ref{sec:compute_f_z} for a proof. Fig.~\ref{fig:z_fit} in App.~\ref{sec:compute_f_z} shows $f_Z(z)$ for different $n$ along with a histogram of the empirical $z$ over 50000 trials. With $\E{Z} = 0$ for $n > 1$ and $\mathrm{Var}(Z) = \frac{\sigxi^2}{\sigx^2}\paren{\frac{1}{n-2}}$ for $n > 2$ \cite{JohnsonNormanLloyd1994Cud}, as $n$ increases the distribution narrows and peaks around zero.
The discrete entropy of $\Wh_{fp}$ is given by Theorem~\ref{thm:quantized_ent} where the cumulative distribution of the scaled and translated Student's t \cite{JohnsonNormanLloyd1994Cud}
\begin{equation}\label{eq:wh_cdf}
F_{\Wh}(\wh) = \frac{1}{2} + \frac{(\wh-w) \sigma_x \Gamma \left(\frac{n+1}{2}\right)}{\sqrt{\pi } \sigma_\xi \Gamma \left(\frac{n}{2}\right)} \, _2F_1\left(\frac{1}{2},\frac{n+1}{2};\frac{3}{2};-\frac{(\wh-w)^2 \sigma_x^2}{\sigma_\xi^2}\right)
\end{equation}
is used and the representable values of the floating-point representation given in Corollary~\ref{cor:gauss_fp_ent}.

In this case, for $n > 2$ and $|w| \gg \frac{\sigxi}{\sigx\sqrt{n-2}}$, $H(\Wh_{fp}) \approx \Ht_s^w(\Wh_{fp})$ is applied, where the resulting entropy is shown in Fig.~\ref{fig:H_fp_wh} as $n$ increases for various values of the ground truth $w$. The quality of the approximation in this case is justified by its offset from 0, and for high $n$ the distribution approximates a delta function (App.~\ref{app:FPN_app3}), as well as by the fine bin size provided by $p=24$. Fig.~\ref{fig:H_fp_wh_exact} in App.~\ref{sec:compute_h_z} shows the quality of the approximation $\Ht^\mu_s(\Wh_{fp})$ for $p=4$ floating point numbers when the exact entropy is computable, that the approximation still applies fairly well even for low $n$, which should only improve as $p$ increases. Taking the difference between the input and output entropies, we find the Landauer cost of exact linear regression as specified by Eq.~\eqref{eq:exact_landauer}, and applying Eqs.~\eqref{eq:first_two_approx} and~\eqref{eq:three_approxes}:
\begin{equation}
    \Delta E_{min}^{Ex} \approx k_BT\ln[2]\paren{n\Ht_s(X_{fp},Y_{fp}) - \Ht^w_{s}(\Wh_{fp})}, \label{eq:landauer_exact_lr}
\end{equation}

This result and the entropies that contribute to it are shown in Fig.~\ref{fig:ex_landauer}. Note from Eqns.~\eqref{eq:three_approxes}, \eqref{eq:single_mean_0}, and~\eqref{eq:approx_joint_ent_SNR} that the precision $p$ of the floating-point representation is a significant contributor to the entropy. In Fig.~\ref{fig:Ex_dH} we plot this contribution alone as well, given by $\Delta E^{Ex}_{p} = k_BT \ln[2](2n - 1)p$.

\begin{figure}[htbp]
  \centering
  \begin{subfigure}{0.4\textwidth}
    \centering
    \includegraphics[width=\linewidth]{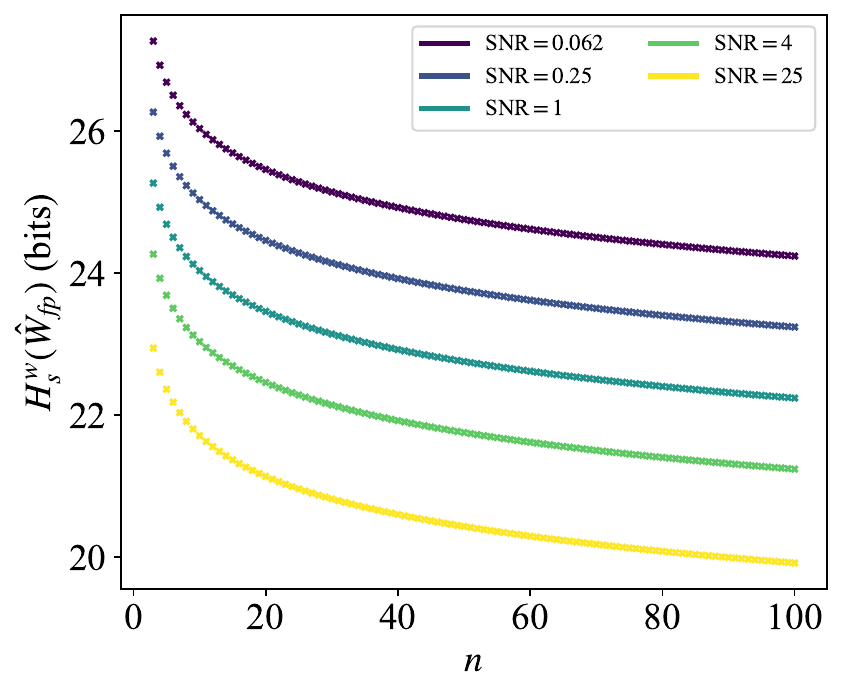}
    \caption{}
    \label{fig:H_fp_wh}
  \end{subfigure}
  \begin{subfigure}{0.4\textwidth}
    \centering
    \includegraphics[width=\linewidth]{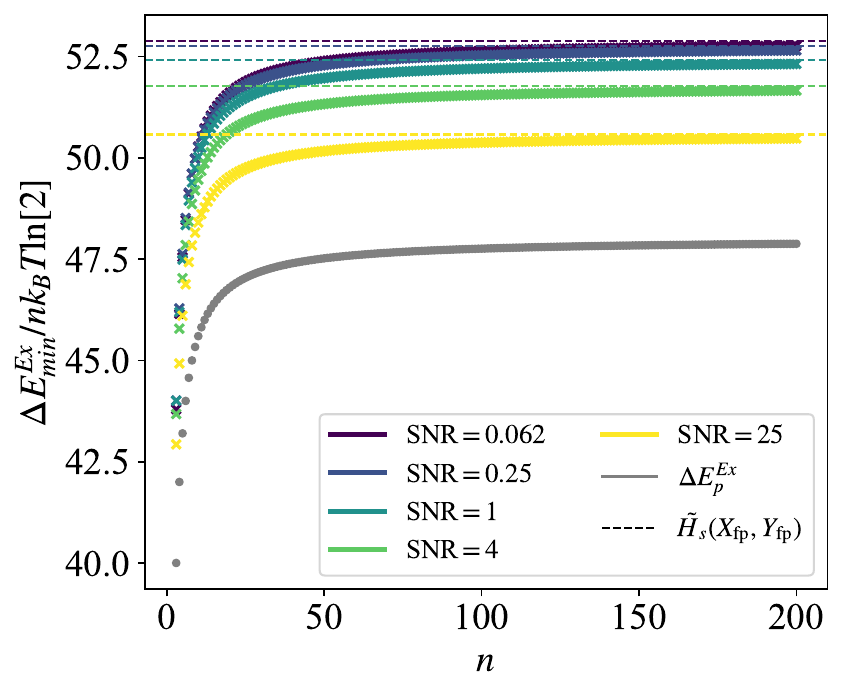}
    \caption{}
    \label{fig:Ex_dH}
  \end{subfigure}\hfill
  \caption{\textit{The Landauer cost for exact zero-intercept simple linear regression.} Input and output states are floating-point numbers with $p = 24$ and $E=8$. The candidate values of the $SNR = w^2\sigx^2/\sigxi^2$ are $0.062$, $0.25$, $1$, $4$, and $25$. (\ref{fig:H_fp_wh}) The output model approximate entropy $\Ht_s^w(\Wh_{fp})$, and its dependence on the number of data samples $n$. (\ref{fig:Ex_dH}) The approximate entropy difference rate between the input data and the output model for various values of $n$. The contribution of the precision of the data $p$ plotted here is $\Delta E^{Ex}_{p}/( k_BT \ln[2]n) =(2n - 1)p/n$ as the gray dots. The joint entropy of the input data is plotted as the dashed horizontal lines for each $SNR$ value.} 
  \label{fig:ex_landauer}
\end{figure}

Fig.~\ref{fig:ex_landauer} shows us the main contributors to the entropy of the output state, and to the overall lower bound on thermodynamic cost. As $n$ increases, the entropy of the output model $\hat w$ decreases (after $n>2$ Fig.~\ref{fig:H_fp_wh}), while the entropy of the input increases with more data. This confirms that thermodynamic costs are higher with more data, and that for large $n$, the entropy of the input data dominates that of the output model. The Landauer cost is nearly linear in $n$ beyond small values with the slope approaching the joint entropy of the input data, as shown in Fig.~\ref{fig:Ex_dH} where Eq.~\eqref{eq:landauer_exact_lr} is scaled by $1/n$ to demonstrate the contribution of the output entropy at low $n$. 

The contribution of the precision $p$ to the Landauer cost is
\begin{equation}
    \label{eq:precision_contribution_Ex}
    \Delta E^{Ex}_{p} \triangleq k_BT \ln[2]\,(2n - 1)p.
\end{equation}
For high $n$, at $p=24$ this is the main contributor to the cost, by approximately $90\%$.\footnote{This effect is even more extreme when analyzing the Landauer cost of averaging and summing normally distributed variables in Appendix~\ref{sec:avg_sum}. In this simpler setting, the only contributors to the Landauer cost are the size of the dataset $n$ and the precision of the data $p$.} Beyond this, the higher the signal-to-noise ratio, the lower the thermodynamic cost.\footnote{For arbitrarily high signal-to-noise ratio, the approximation will eventually break down due to quantization noise as seen in Fig.~\ref{fig:fp_joint_exact_approx_p_E_snr_100}. However, for $p=24$ and the highest SNR considered here being 25, we are well within the bounds of the approximation's applicability.}

\section{Landauer Cost of Linear Regression via Stochastic Gradient Descent}\label{sec:sgd}

For stochastic gradient descent, we again compute the Landauer cost assuming the algorithm is implemented on single-precision floating-point numbers. For stochastic gradient descent, the final distribution $f_{\Wh}(\wh)$ is more difficult to compute directly as was possible in the exact case. However, there are approximations which can allow us to obtain predictive analytic expressions, following \cite{MandtSGDBayes, MandtSGDVariational, Jastrzebski2017ThreeFI}. 

\subsection{Asymptotic Behavior}\label{sec:stationarysgd}

We can begin by considering the expression for $f_{\Wh_\tau}(\hat w_\tau)$ that applies in the case where SGD has been relatively successful after $\tau$ steps, meaning $\hat w_\tau \approx w$. Under the infinite-stream model of Section~\ref{sec:prelim}, the population gradient at $w$ is zero, and the mean-squared error loss function is $\sigma_x^2$-strongly convex so with a sufficiently small step size $\eta$ the iterates concentrate near $w$ \cite{bottou2018optimization}.

Following \cite{MandtSGDBayes, MandtSGDVariational, Jastrzebski2017ThreeFI}, the update rule for $\hat w_k$ given by Eq.~\eqref{eq:SGD_update} can be seen as a discretization of a continuous-time Ornstein-Uhlenbeck process. The limit stationary distribution of the continuous-time Ornstein-Uhlenbeck process can then be analyzed and any time discretization errors are neglected. We can first consider the asymptotic behavior of this process to approximate $f_{\Wh_\tau}(\hat w_\tau)$ as the stationary distribution of this process. Rewriting Eq.~\eqref{eq:SGD_update} with $\ell_i(\wh) \triangleq \ell(\wh, (x_i, y_i))$, $g_i(\hat w_k) = \partial \ell_i(\hat w_k)/\partial \hat w_k = x_i(\hat w_k x_i-y_i)$, and $\hat g_B(\hat w_k) = \frac{1}{B} \sum_{i \in \mathcal{B}} g_i(\hat w_k)$, our update rule becomes
\begin{equation}
    \label{eq:short_SGD}
    \hat w_{k + 1} = \hat w_k - \eta \hat g_B(\hat w_k),
\end{equation}
where $\eta$ is the step size and $k$ is the step number. When $B$ is sufficiently large, following \cite{MandtSGDBayes, Jastrzebski2017ThreeFI}, the central limit theorem can be invoked to assume the gradient noise is normally distributed, therefore we assume 
\begin{equation}
    \hat g_B(\hat w_k) \approx \E{g_i(\hat w_k)} +\frac{1}{\sqrt{B}}\Delta g(\hat w_k), \,\,\, \Delta g(\hat w_k) \sim \mathcal{N}(0,C(\hat w_k)),
\end{equation}
where $C(\hat w)$ is the variance of the gradient.

The batch gradient at the optimum $w$ is $g_i(w) = X_i(wX_i-(wX_i+\xi_i)) = -X_i\xi_i$. 
Therefore, when the step size $\eta$ is small, we can assume $\hat w_\tau \approx w$, and we have $g_i(\wh_\tau) \approx -X_i\xi_i$. With $X_i$ and $\xi_i$ independent, we can assume the gradient variance is thus given by the product of the variances $C(w) = \sigma_x^2\sigma_\xi^2$. Under these approximations, we can rewrite Eq.~\eqref{eq:short_SGD} as a Langevin equation in terms of $\phi=\wh_k - w$
\begin{equation}
    \frac{d\phi}{dt} = - \sigma_x^2 \, \phi + \sqrt{\frac{\eta}{B}} \sigma_x\sigma_\xi\, \epsilon(t),\label{eq:langevin}
\end{equation}
where $\epsilon(t)$ is a Gaussian white noise process with $\E{\epsilon(t)\epsilon(t')} = \delta(t-t')$ and the continuous time variable $t$ is identified with the rescaled discrete SGD step index (i.e., $t = \eta k$). The coefficient $\sigma_x^2$ in front of $\phi$ on the right hand side is given by expected Hessian of the population loss function \cite{MandtSGDBayes}, that is $\partial^2 \E{L(\wh)}/\partial \hat w^2 = \sigma^2_x$. 
In the Langevin equation $\sigx^2$ acts as the effective force constant of a harmonic potential, while $T_{\text{eff}} = \eta \sigma^2_x \sigma^2_\xi/2B$ acts as an effective temperature. Let $f_{\phi}$ be the probability density of $\phi(\tau)$. The resulting stationary distribution is
\begin{equation}
    \label{eq:SGD_final_dist}
    f_{\Wh_\tau}(\wh_\tau)=f_{\phi(\tau)}(\hat w_\tau-w) \approx \sqrt{\frac{B}{\pi \eta \sigma_\xi^2}} \cdot \exp\bigg(- \frac{B (\hat w_\tau - w)^2}{\eta \sigma_\xi^2}
    \bigg).
\end{equation}

This gives us an approximation for the output distribution of regression via SGD. We see that $f_{\Wh_\tau}(\wh_\tau) \sim \gauss{w}{\frac{\eta\sigxi^2}{2B}}$. For $|w| \gg \sqrt{\frac{\eta\sigxi^2}{2B}}$, App.~\ref{app:FPN_app3} tells us $H(\Wh_{fp,\tau}) \approx \Ht^{w}_s(\Wh_{fp,\tau})$. 
Otherwise, for $w \approx 0$ we assume $\tilde H_s(\Wh_{fp,\tau}) \approx \Ht^0_s(p)$.
Again approximating $H(X_{fp},Y_{fp}) \approx \tilde H_s(X_{fp}, Y_{fp})$, we can evaluate the resulting Landauer cost as, 
\begin{equation}
    \Delta E^{SGD}_{min} \approx k_BT \ln[2] \paren{\tau B \Ht_s(\Xfp,\Yfp) - \Ht_s^w(\Wh_{fp,\tau})}.
\end{equation}

Notice in Eq.~\eqref{eq:SGD_final_dist} that the distribution of $\Wh_\tau$ has no dependence on $\tau$. This means that $\tilde H_s(\Wh_{fp,\tau})$ is a constant with respect to $\tau$, while $\tau B \Ht_s(\Xfp,\Yfp)$ grows linearly with $\tau$. This shows that in the asymptotic regime of SGD, the Landauer cost is dominated by the entropy of the input data samples and grows linearly with the number of steps taken by the algorithm.

\subsection{Nonasymptotic Behavior}\label{sec:nonstationarysgd}

Next we can consider the case where SGD has not necessarily converged to the stationary distribution, and consider how the step number affects the thermodynamic cost. In Section~\ref{sec:stationarysgd}, we showed that the asymptotic SGD dynamics can be approximated by the continuous Ornstein-Uhlenbeck process, Eq.~\eqref{eq:langevin}. This picture can further allow us to approximate how the entropy difference evolves throughout the running of the SGD algorithm. In Section~\ref{sec:stationarysgd}, the assumption that $\wh \approx w$ allowed for the derivation of Eq.~\eqref{eq:langevin}. This assumption is no longer valid when the model is initialized far from the global optimum. However, while Fig.~\ref{fig:sgd_w_2x2} confirms that there is large approximation error for small $k$, for the SGD parameters for which the Landauer cost was computed it also shows that Eq.~\eqref{eq:langevin} tracks the dynamics of SGD well once $k \approx 100$.

Let the initial model parameter be initialized deterministically to $\wh_0 \in \R{}$. From \cite{pavliotis2014stochasticprocesses}, the solution to Eq.~\eqref{eq:langevin} allows us to evaluate the distribution over the model parameter at step $k$, again with $\phi = \wh_k - w$
\begin{align}
    \phi \sim \gauss{(\wh_0-w) e^{-\sigx^2\eta k}}{\frac{\eta\sigxi^2}{2B}\paren{1 -e^{-2\sigx^2 \eta k}}}.
\end{align}

Let $\tilde \mu(k) \triangleq w\paren{1- e^{-\sigx^2\eta k}} + \wh_0 e^{-\sigx^2\eta k}$. For $\abs{\tilde \mu} \gg \sqrt{\frac{\eta\sigxi^2}{2B}\paren{1 -e^{-2\sigx^2 \eta k}}}$ and large $k$, the entropy of this distribution is well approximated by $H(\Wh_{fp,k}) \approx \Ht_s^{\tilde \mu(k)}(\Wh_{fp,k})$ from Eq.~\eqref{eq:three_approxes}, by App.~\ref{app:FPN_app3}.

Note that the condition $|\tilde\mu(k)| \gg \sqrt{\eta\sigxi^2/(2B)}$ used below is sufficient, since $\sqrt{\eta\sigxi^2/(2B)} \geq \sqrt{(\eta\sigxi^2/(2B))(1 - e^{-2\sigx^2\eta k})}$ for all $k > 0$. From this we can approximate the Landauer cost after running $k$ iterations of SGD,
\begin{align}
     \Delta E^{SGD}_{min}(k) \approx k_BT \ln[2] \paren{k B \tilde H_s(\Xfp,\Yfp) - \tilde H_s^{\tilde \mu(k)}(\Wh_{fp,k})}.\label{eq:sgd_landauer_noneq}
\end{align}

\begin{figure}[htbp]
\centering
\begin{subfigure}{0.40\textwidth}
    \includegraphics[width=\linewidth]{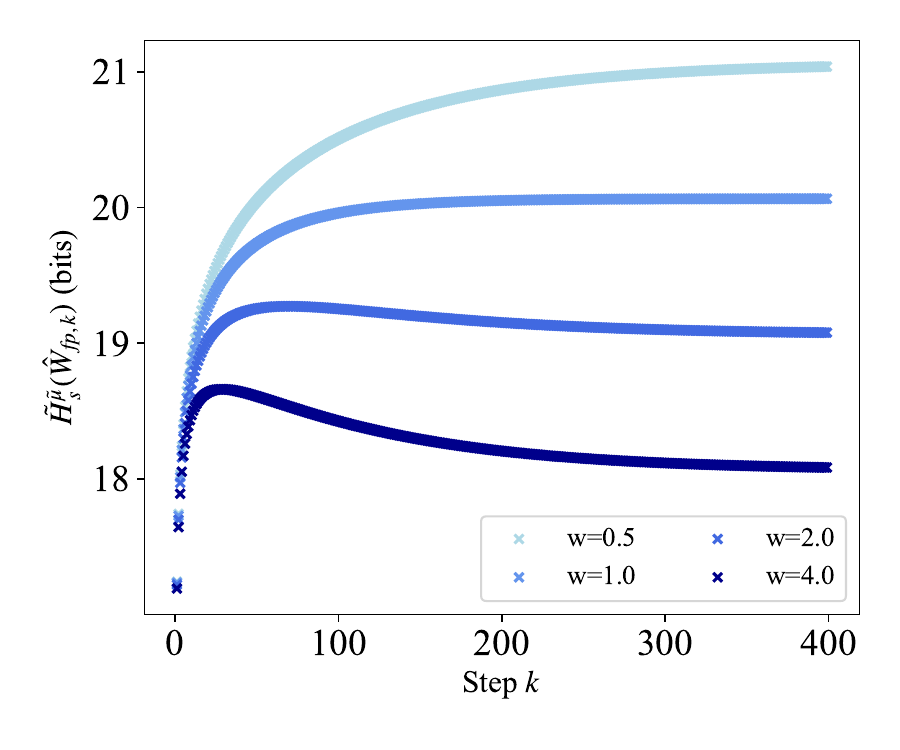}
    \caption{}
    \label{fig:sgd_hfp_vs_n}
\end{subfigure}
\begin{subfigure}{0.40\textwidth}
    \includegraphics[width=\linewidth]{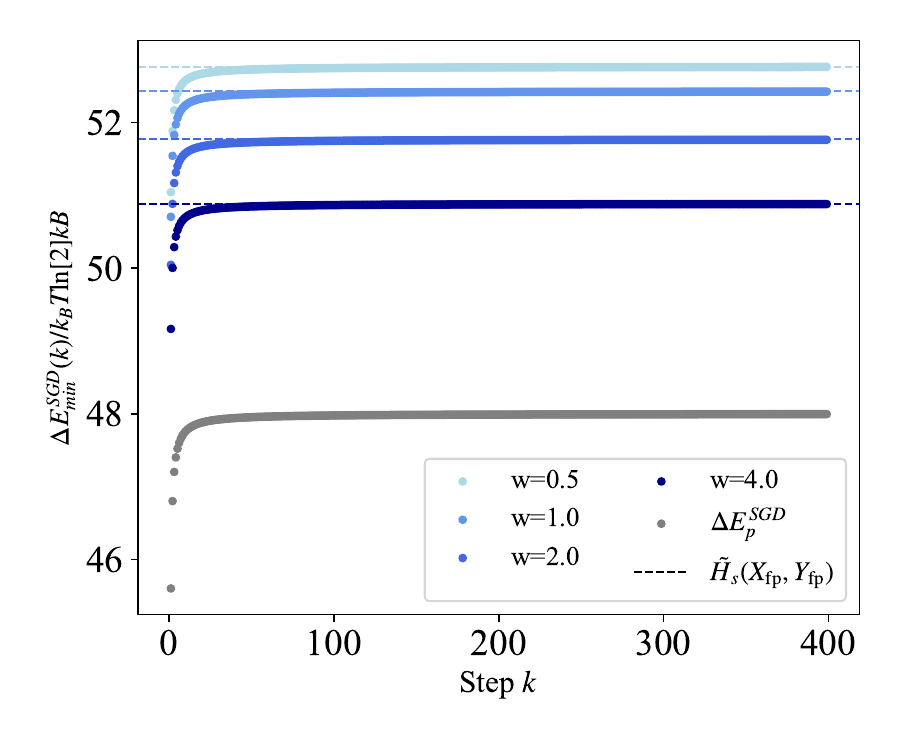}
    \caption{}
    \label{fig:sgd_entropy_diff_rate}
\end{subfigure}
\caption{\textit{Entropy dynamics of SGD.} Input and output states are assumed to be single-precision floating-point numbers, with $p = 24$. Here $\wh_0 = 1$, $\sigx^2 = 1$, $\sigxi^2 = 1$, $\eta = 10^{-2}$, and $B=10$.  (\ref{fig:sgd_hfp_vs_n}) shows the approximate floating-point entropy of $\Wh$ vs SGD step number $k$. (\ref{fig:sgd_entropy_diff_rate}) shows the entropy difference between the input states and the SGD predictor at step $k$. Here the precision contribution is given by $\Delta E^{SGD}_{p}/(k_BT \ln[2]k B) = (2k B - 1)p/(kB)$. The approximate joint entropy of an input sample $\Ht_s(\Xfp, \Yfp)$ are the dashed lines corresponding to each $w$.}
\label{fig:sgd_nonasymp}
\end{figure}

From Fig.~\ref{fig:sgd_entropy_diff_rate} we see how the Landauer cost increases with step number. Similar dependencies on the ground truth $w$ and precision $p$ are seen, comparable with Figs.~\ref{fig:ex_landauer}. Just as in the exact linear regression case in Section~\ref{sec:exact}, we see that the Landauer cost quickly becomes linear with the number of input data samples where the slope of the line is the input data's joint entropy. Again, the precision $p$ of the data is the primary contributor to the Landauer cost.

\section{Energy-cost Aware Scaling Laws}\label{sec:scaling}

Empirical scaling laws \cite{kaplan2020scalinglawsneurallanguage, hoffmann2022trainingcomputeoptimallargelanguage} have become tools for deep learning practitioners to set the best model and dataset size for a fixed training compute budget. 
However, \cite{patterson2021carbonemissionslargeneural} shows that the cost of running inference on a model after it has already been trained is a more significant burden on those who deploy models. 
\cite{sardana2025chinchillaoptimalaccountinginferencelanguage} addresses this by deriving a scaling law that finds the best model and dataset size that will achieve a certain value of pretraining loss while minimizing the total training and inference costs over the lifetime of the model. 

Here, we derive a scaling law in this simple setting of linear regression that finds the optimal dataset size that will maximize profit given prices of energy and inference, and a generalization error dependent user demand for inference \cite{sardana2025chinchillaoptimalaccountinginferencelanguage}. 
With continuous-valued regression, training loss and generalization error are both evaluated using mean-squared error, so we optimize directly over generalization error. 
This correspondence is not guaranteed in language modeling, where the relationship between pretraining loss and downstream performance is more complex \cite{schaeffer_are_2023, du_understanding_2025}. 
Also, in our single-parameter setting, the smooth continuous approximation predicts near-zero Landauer cost for inference when $\wh \neq 0$; however, the exact quantized map need not be injective and therefore need not be logically reversible. 
Although a more complicated multi-parameter regression model setting is necessary to better capture the effect of inference costs, we can still demonstrate tradeoffs between training data size and generalization error in this single-parameter setting.

\subsection{Landauer Cost of Running Inference}
Running inference on the trained model $\wh$, consists of multiplying a data sample independent of the training data $X^{test} \sim \gauss{0}{\sigx^2}$ by $\wh$. This gives $\hat{Y} = \wh X^{test}$. Assuming inference is run on a cyclic device, the input registers now hold $X_I = (X_{fp}^{test}, \wh_{fp})$, while the output state holds $X_O = (\hat Y_{fp}, \wh_{fp})$. Since $\wh$ is fixed during inference, $H(\wh_{fp}) = 0$, and the Landauer cost Eq.~\eqref{eq:Landauer} reduces to
\begin{equation}
    \Delta E_{min}^{inf} = k_BT \ln[2] (H(X^{test}_{fp}) - H(\hat Y_{fp}))
\end{equation}

However, since $\hat{Y} = \wh X^{test} \sim \gauss{0}{\wh^2\sigx^2}$, under our approximation Eq.~\eqref{eq:first_two_approx} and when $\wh \neq 0$, we see
\begin{align}
    \tilde H_s(\Yhfp) = \tilde H_s(X_{fp}^{test}) \approx \Ht^0_{s}(p),
\end{align}
which gives us $\Delta E_{min}^{inf} \approx 0$.
This implies that under the smooth entropy approximation Eq.~\eqref{eq:first_two_approx}, the entropy difference between the input and output is near zero for running inference on this single-parameter model, predicting nearly zero Landauer cost for $\wh \neq 0$. This is expected, since for fixed $\wh \neq 0$ the ideal real-valued map $x \mapsto \wh x$ is bijective, which is why the continuous approximation predicts zero entropy change. However, the actual quantized map $x_{fp} \mapsto Q(\wh x)$ need not be injective, so the implemented discrete computation is not generally logically reversible and can still incur nonzero Landauer cost. For multi-parameter models where the input is not recoverable from the output even at the continuous level, we would expect inference to incur a nonzero Landauer cost.

\subsection{Optimal Dataset Size}

\begin{figure}[bthp]
  \centering
  \begin{subfigure}{0.49\textwidth}
    \centering
    \includegraphics[width=\linewidth]{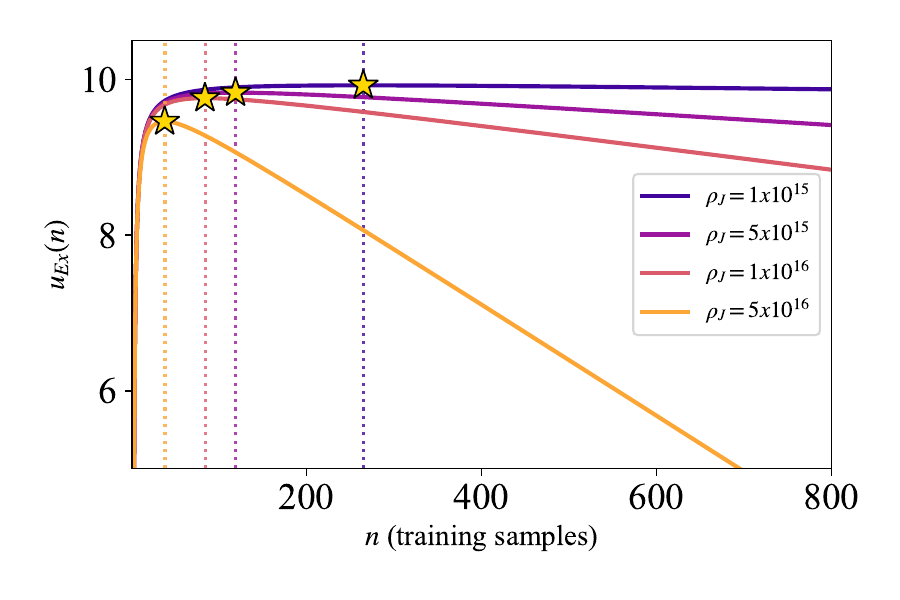}
    \caption{}
    \label{fig:profit_ex}
  \end{subfigure}
  \begin{subfigure}{0.49\textwidth}
    \centering
    \includegraphics[width=\linewidth]{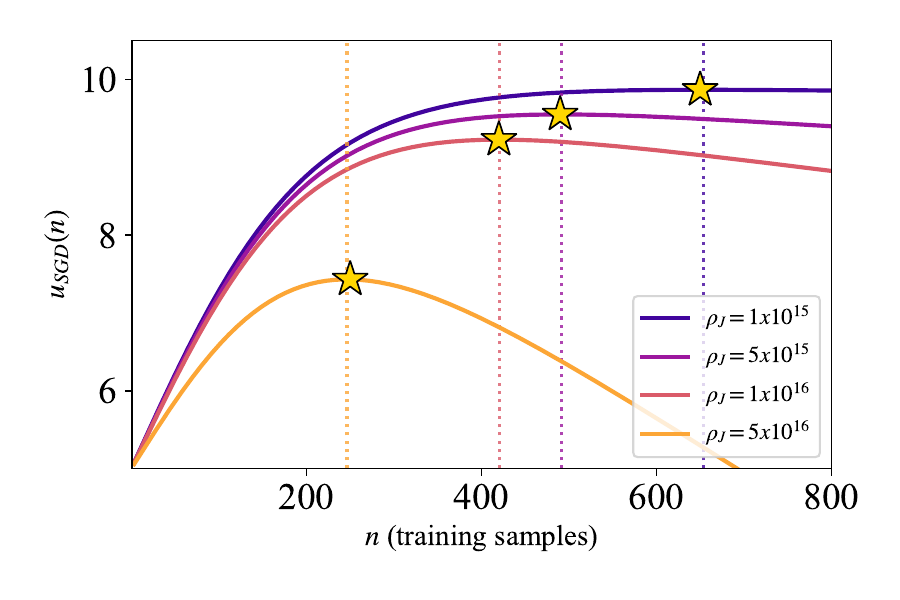}
    \caption{}
    \label{fig:profit_sgd}
  \end{subfigure}
  \begin{subfigure}{0.49\textwidth}
    \centering
    \hspace*{-0.35cm} 
    \includegraphics[width=\linewidth]{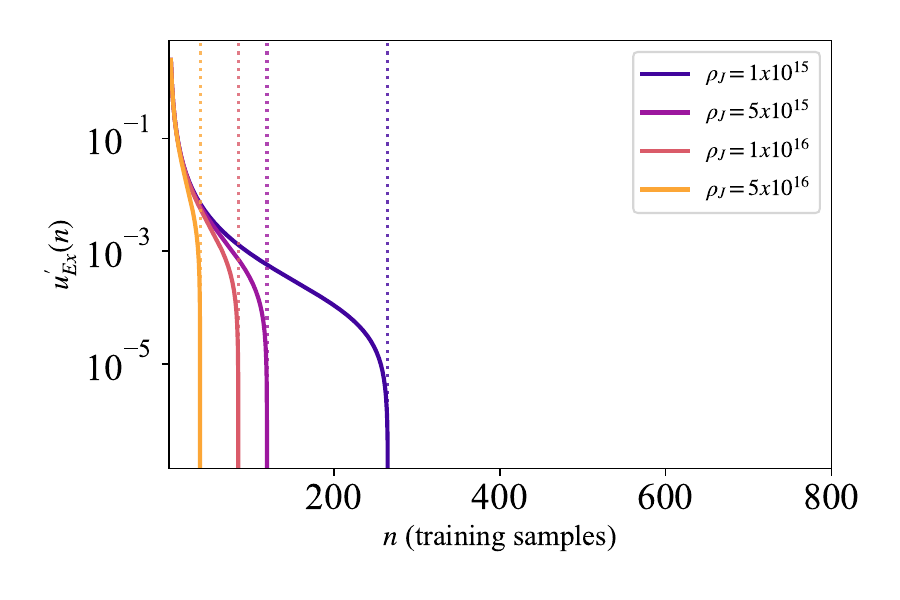}
    \caption{}
    \label{fig:ddn_profit_ex}
  \end{subfigure}
  \begin{subfigure}{0.49\textwidth}
    \centering
    \hspace*{-0.35cm} 
    \includegraphics[width=\linewidth]{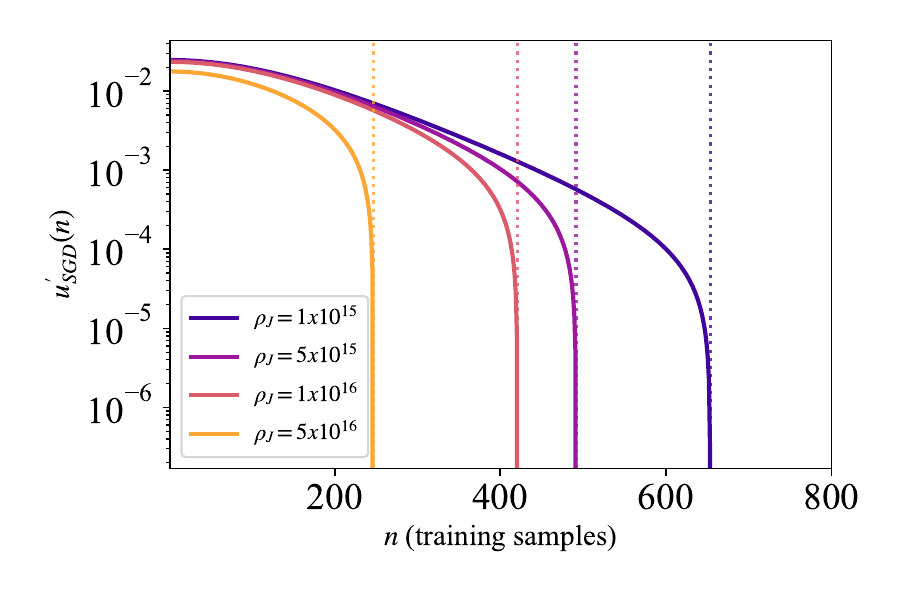}
    \caption{}
    \label{fig:ddn_profit_sgd}
  \end{subfigure}
  \caption{\textit{Optimal dataset size for the exact linear regression formula and for stochastic gradient descent.} (\ref{fig:profit_ex}) shows the profit gained versus the dataset size $n$ for the exact linear regression formula $u_{Ex}(n)$ given in Eq.~\eqref{eq:u_ex}. (\ref{fig:ddn_profit_ex}) shows $u'_{Ex}(n)$, the derivative of the profit function with respect to $n$  as given in Eq.~\eqref{eq:ddn_profit_ex}. (\ref{fig:profit_sgd}) shows the profit gained versus the dataset size $n$ for SGD $u_{SGD}(n)$ given in Eq.~\eqref{eq:u_sgd}. (\ref{fig:ddn_profit_sgd}) shows the derivative with respect to $n$ of the profit gained versus the dataset size for SGD as given in Eq.~\eqref{eq:ddn_profit_sgd}. For the profit plots, the gold stars are the maximum values with respect to each algorithm's feasible set for each value of $\rho_J$. For the derivative plots, the vertical dotted lines show the point where each curve crosses zero. For all figures, $\sigxi^2 =\sigx^2 = 1$, $k_BT = 4\times 10^{-21}$, $w = 2$, $\wh_0 = 1$, and $\rho_I = 10$. For the SGD plots, $\eta = 0.05$ and $B = 10$.}
  \label{fig:scaling_laws}
\end{figure}

Since Lemma~\ref{lem:compute_f_z} proves that $Z = \wh - w$ is a scaled Student's t distributed random variable, the generalization error for the exact formula versus training samples is (where the test pair $(X, Y)$ is drawn independently from the same data-generating distribution and independently of the training data and $\wh$)
\begin{align}
    \mse_{Ex}(n) =\E{(\wh X - Y)^2} =  \E{((\wh-w)X - \xi)^2} = \E{Z^2}\sigx^2 + \sigxi^2 =\sigxi^2\paren{ \frac{n-1}{n-2}},\label{eq:exact_error}
\end{align}
when $n > 2$ and infinite for $n = 1,2$. Note the cross term $\E{2ZX\xi} = 2\E{Z}\E{X}\E{\xi} = 0$ vanishes because $Z = \wh - w$ depends only on the training data, while the test pair $(X, \xi)$ is drawn independently with $\E{X} = \E{\xi} = 0$.

Following the above logic, we can approximate the generalization error of SGD with batch size $B$ versus total training samples $n = kB$ (where $k$ is the number of SGD steps) as follows, with $\phi_k = \wh_k - w$ as defined in Section~\ref{sec:nonstationarysgd},
\begin{align}
\mse_{SGD}(n) = \E{\phi_{n/B}^2}\sigx^2 + \sigxi^2 = \sigx^2\bracket{\frac{\eta\sigxi^2}{2B}\paren{1 -e^{-\frac{2\sigx^2 \eta n}{B}}} + \paren{\tilde \mu\paren{\frac{n}{B}} - w}^2} + \sigxi^2.
\end{align}
As in Eq.~\eqref{eq:exact_error}, the cross term $\E{2\phi_{n/B} X \xi} = 0$ vanishes because $\phi_{n/B}$ depends only on training randomness while the test pair $(X, \xi)$ is drawn independently with $\E{X} = \E{\xi} = 0$.

Let $d(\mse)$ be the inference demand given the model's generalization performance measured by mean-squared error. $d(\mse)$ will be a function that decreases as the mean-squared error increases and vice versa. Let $\rho_{J}$ be the price of energy, and $\rho_{I}$ be the price per sample that a user would pay to run inference on the model. We can now set up an optimization problem to find the optimal training dataset size $n^\star \in \N{}$:
\begin{align}
n^\star = \arg\max_{n \in \{3,4,\ldots\}} \left\{ \rho_{I}d(\mse(n)) - \rho_{J}\Delta E_{min}(n) \right\} \label{eq:scaling_opt}
\end{align}
where $\rho_{I}d(\mse)$ is the revenue gained from deployment, and $\rho_{J}\Delta E_{min}\paren{n}$ is the cost of energy for training. For exact regression $n \in \{3, 4, \ldots\}$; for SGD $n = kB$ with $k \in \mathbb{N}$. Appendix~\ref{sec:scaling_laws_app} specializes the feasible set accordingly.

To illustrate what an $n^\star$ might look like in our simple setting, let us assume there is a constant price per sample $\rho_{I}$ and furthermore, let $d(\mse) = \frac{1}{\mse}$. A more complicated price model might have a price for inference that increases as the user demand for inference increases, leading to a larger $n^\star$ than what will be derived in this section.

We can define the objective for each learning algorithm, $u_{Ex}(n)$ for exact linear regression and $u_{SGD}( n)$ for SGD as defined in App.~\ref{sec:scaling_laws_app}. Notice that for both objectives, as $\rho_J$ increases, the optimal dataset size $n^\star$ decreases. In general, the value of $n^\star$ will depend on the units of $\rho_I$, $\rho_J$ and $k_B T$, so we present Fig.~\ref{fig:scaling_laws} just as a demonstration of the trends visible in this trade-off. App.~\ref{sec:scaling_laws_app} presents derivations for solving a continuous relaxation of Eq.~\eqref{eq:scaling_opt} numerically, the results are presented in Fig.~\ref{fig:scaling_laws} in this section.

 When we set the constants that are shared between the exact formula and SGD equal to each other, we can compare between their optimal dataset sizes according to Eq.~\eqref{eq:nstar_ex} and Eq.~\eqref{eq:nstar_sgd}. This comparison serves as a one-parameter theoretical baseline that isolates how two training procedures for the same learning task differ in Landauer cost. In Figs.~\ref{fig:profit_ex} and \ref{fig:profit_sgd} we see that for each $\rho_J$ the optimal $n$ for the exact formula is less than or equal to the optimal $n$ for SGD. Nevertheless, the learning rate $\eta$ and the batch size $B$ are two hyperparameters for SGD that are not present in the exact linear regression algorithm. The effect of these hyperparameters on the optimal dataset size for SGD is shown in Fig.~\ref{fig:scaling_laws_sgd_eta_B}.
Although this simple profit model already displays differences between the exact method and SGD, more complicated profit models could also include other aspects of these algorithms that affect profitability, like their memory footprint and computational speed.

\FloatBarrier
\section{Bounding the Mismatch Cost of a Continuous Variable Algorithm}\label{sec:mmc}

Now we turn to the mismatch cost (MMC). While Landauer's bound assumes thermodynamic reversibility, there will in general be additional costs beyond $Q = -T \Delta S_{sys}$. Beyond this bound, the average heat flow $Q$ will depend on physical details such as the speed with which bits are manipulated \cite{Seifert2007, Sivak2012}, or if the bits are instantiated by physical systems that continuously dissipate heat \cite{Esposito2021}. A clarifying equivalent definition of a thermodynamically reversible process in terms of the second law of thermodynamics can help us understand energetic costs beyond reversibility. The second law of thermodynamics states \textit{total entropy production} $\Delta S_{tot}$ must be non-negative, defined as the total change in entropy of the system $\Delta S_{sys}$ and its environment $\Delta S_{env}$, $\Delta S_{tot} \triangleq \Delta S_{env} + \Delta S_{sys} \geq 0$. A process is thermodynamically reversible when $\Delta S_{tot} = 0$ \cite{Callen,landau1980statistical, chandler}. Mismatch cost quantifies the additional total entropy production that arises when the distribution of the system at the start of the computation differs from the optimal starting distribution that minimizes total entropy production. 

Here, we discuss a method to lower bound the MMC in the case of parameterized continuous input distributions. While $\Delta S_{sys}$ can be assessed at the algorithmic level based on the analysis above, $\Delta S_{env}$ and $\Delta S_{tot}$ will depend on the explicit time-dependent dynamics of the physically implemented computational system \cite{Wolpert2019, KolchinskyWolpert2021, Yadav2025}. Because of this, we cannot compute the MMC directly without more physical detail beyond the scope of this paper. Instead, this section serves the purpose of
demonstrating a bound on the MMC for computational systems that take a parameterized
continuous distribution as input (such as with linear regression), only given an assumed $\Delta S_{env}$, while leaving the true determination of $\Delta S_{env}$ to future work. While it remains undetermined, Appendix~\ref{app:mmc} discusses common assumptions which allow one to conceptually understand the assumed $\Delta S_{env}$, associating it with heat by $\Delta S_{env} = Q/T$. Under these assumptions, we see $Q = Q_{min} + T\Delta S_{tot} = -T\Delta S_{sys} + T\Delta S_{tot}$, meaning $\Delta S_{tot}$ can be understood as contributing to the energetic cost beyond the minimum implied by the Landauer cost.

With $\compsimplex$ as the probability simplex over the joint logical state of the computer $\mathcal{X}_c$, we can think of the computation as taking us from $p_1 \in \compsimplex$ to $p_{F-1} \in \compsimplex$ (with $p_s$ being the probability density at step $s$) where $p_{F-1}$ is fixed by a conditional distribution $\pi(x_c|x_c')$, where
\begin{equation}
    p_{F-1}(x_c) = \sum_{x_c'\in \mathcal{X}_c} \pi(x_c|x_c') p_1(x_c').
\end{equation}
The conditional distribution $\pi(x_c|x_c')$ can be interpreted as resulting from a physical process, as discussed in App.~\ref{app:mmc}. We can use the notation $p_{F-1} = \pi p_1$ to emphasize that $p_{F-1}$ is purely a function of the input state $p_1$ and the physical manipulations performed by the computation which determine $\pi(x_c|x_c')$.

Assuming a fixed physical implementation of the algorithm defining $\pi(x_c|x_c')$, a portion of the total entropy production $\Delta S_{tot}$ will be due to the MMC, which results from a divergence between the actual input distribution $p_1$ and the optimal initial distribution $q_1 \in \compsimplex$ which minimizes entropy production. With $\Delta S_{sys}(p_1) = k_B \ln[2] (H(\pi p_1) - H(p_1))$ the optimal input distribution $q_1$ defined by the minimum of
\begin{equation}
    \label{eq:mmc_exact_opt}
    q_1 \triangleq \arg \min_{r_1 \in \compsimplex}[\Delta S_{tot}(r_1)] = \arg \min_{r_1\in \compsimplex}[k_B \ln[2] \big (H(\pi r_1) - H(r_1) \big )+\Delta S_{env}(r_1)].\footnote{The optimum $q_1$ is defined over the full simplex and may assign nonzero probability to states with $x_M\neq x_M^0$ or $x_O\neq x_O^0$. Since at step $0$ $x_M = x_M^0$ and $x_O = x_O^0$ this loading can still be done cost free. If the actual input distribution $p_1$ has the restricted form $p(x_I)\delta_{x_M,x_M^0}\delta_{x_O,x_O^0}$ as imposed by the standard accounting convention, $q_1$ can minimize $\Delta S_{tot}$ over a superset while bound $\mathrm{MMC}(p_1)\geq 0$ remains valid.}
\end{equation}
Kolchinsky and Wolpert show that given a distribution with minimal entropy production $q_1$,
\begin{equation}
    \Delta S_{tot}(p_1) = \Delta S_{tot}(q_1) + k_B\ln[2] \big(D(p_1||q_1) - D(\pi p_1|| \pi q_1) \big ),
\end{equation}
in \cite{Kolchinsky2016DependenceOD, Wolpert_2020, KolchinskyWolpert2021} (where $D(p||q) = \sum_{x_c \in\mathcal{X}_c}p(x_c) \log[p(x_c)/q(x_c)]$ is the Kullback-Leibler (KL) divergence). The MMC is defined by
\begin{equation}
    \label{eq:MMC}
    \text{MMC}(p_1) \triangleq k_B \ln[2] \big ( D(p_1||q_1) - D(\pi p_1||\pi q_1) \big ) = \Delta S_{tot}(p_1) - \Delta S_{tot}(q_1),
\end{equation}
There may still be non-zero minimal entropy production for the optimal initial state $\Delta S_{tot}(q_1)$ called the `residual entropy production', however its character depends on further physical details which we ignore here \cite{Wolpert2019, Yadav2025}.

\textit{Continuous Variable Algorithms: }Several barriers arise for analyzing the MMC in a continuous valued context. Finding $q_1$ requires an optimization across the entire probability simplex of possible distributions $\compsimplex$. However, this becomes intractable for reasonably sized discretely represented continuous variables: a computational system with just one single-precision floating-point number contains $2^{32}$ states. Additionally, the input to such an algorithm is often parameterized by a variational continuous probability density such as Eq.~\eqref{eq:initial_dist}, restricting the accessible subset of $\compsimplex$. 

Instead, we can show that a positive lower bound on the MMC can be determined purely by considering a restricted set of variational input distributions. This can be specified by restricting the optimization to a sub-family of distributions specified by the quantization of a variational continuous probability density such as Eq.~\eqref{eq:initial_dist}. Let $\mathcal{V} = \{p_{I,\theta}(x_c)\delta_{x_M,x_M^0} \delta_{x_O,x_O^0}, \theta \in \Theta \} \subseteq \compsimplex$, where $\Theta$ corresponds to a bounded subregion of possible variational parameters, and $p_{I, \theta}(x_c)$ is the input distribution induced by quantizing the distribution parameterized by $\theta$. For example, in regression we can take Eq.~\eqref{eq:initial_dist} as our variational family parameterized by $\theta = \{\sigma_x,\sigma_\xi\}, \Theta = [\sigma_x^{min}, \sigma_x^{max}] \times [\sigma_\xi^{min}, \sigma_\xi^{max}] \subset \mathbb{R^+} \times \mathbb{R^+}$ (perhaps including a distribution over the initial SGD parameter $\Wh_0$). From this we can define the \textit{variational MMC}, MMC$_v$, as
\begin{equation}
    \text{MMC}_v(p_1) = \Delta S_{tot}(p_1) - \Delta S_{tot}(q_{1,v}),
\end{equation}
where $q_{1,v} \in \mathcal{V}$. The optimal variational distribution $q_{1,v}$ is found by an optimization over $\mathcal{V}$, in which the parameters $\theta$ are varied through the restricted subregion $\Theta$:
\begin{equation}
    \label{eq:mmc_var_opt}
    q_{1,v} \triangleq \arg \min_{r_{1,v} \in \mathcal{V}}[\Delta S_{tot}(r_{1,v})] = \arg \min_{r_{1,v} \in \mathcal{V}}[k_B \ln[2] \big (H(\pi r_{1,v}) - H(r_{1,v}) \big )+\Delta S_{env}(r_{1,v})]
\end{equation}

We can show that MMC$_v$ provides a positive lower bound on the true MMC of a variational input distribution $p_{1,v} \in \mathcal{V}$. First, we can note that since $\mathcal{V} \subseteq \compsimplex$, the variational optimum $q_{1,v}$ must have a larger or equal entropy production than the true optimum is $q_1$, meaning that $q_{1,v}$ itself must have a mismatch cost
\begin{equation}
    \text{MMC}(q_{1,v}) = k_B \ln [2] \big (D(q_{1,v}||q_1) - D(\pi q_{1,v}||\pi q_1) \big) = \Delta S_{tot}(q_{1,v}) - \Delta S_{tot}(q_1).
\end{equation}
Since $q_1$ minimizes $\Delta S_{tot}$ over the full simplex $\compsimplex$ and $q_{1,v} \in \mathcal{V} \subseteq \compsimplex$, we have
\begin{equation}
    \label{eq:qvqstar}
    \Delta S_{tot}(q_{1,v}) \geq \Delta S_{tot}(q_1) \geq 0.
\end{equation}
Subtracting $\Delta S_{tot}(p_1)$ from Eq.~\eqref{eq:qvqstar}, we see that for any distribution $p_1 \in \compsimplex$, MMC$_v$ forms a lower bound on the true MMC
\begin{equation}
    \text{MMC}_v(p_1) = \Delta S_{tot}(p_1) - \Delta S_{tot}(q_{1,v}) \leq \Delta S_{tot}(p_1) - \Delta S_{tot}(q_1) = \text{MMC}(p_1).
\end{equation}
In general, the variational MMC$_v$ can be negative if $\Delta S_{tot}(p_1) < \Delta S_{tot}(q_{1,v})$. However, this is only an issue if $p_1 \notin \mathcal{V}$. If $p_1 = p_{1,v} \in \mathcal{V}$, then by definition of $q_{1,v}$ as the variational optimum for $\mathcal{V}$ we know $\Delta S_{tot}(p_{1,v}) \geq \Delta S_{tot}(q_{1,v})$. This shows us that MMC$_v$ provides a positive lower bound on the total MMC of a variational input:
\begin{equation}
    \label{eq:var_bound}
    0 \leq \text{MMC}_v(p_{1,v}) \leq \text{MMC}(p_{1,v}).
\end{equation}

Wolpert and Kolchinsky show that while the optimized $q_1$ is not necessarily unique, it will have a unique MMC with any other distribution $p_1$\cite{Wolpert_2020}. It is important to note that while this uniqueness is broken by the variational bound as discussed in App.~\ref{app:mmc}, the lower bound Eq.~\eqref{eq:var_bound} continues to apply even for a non-unique variational optimum $q_{1,v}$. 

While without more specific physical details the true distribution of MMC$_v$ cannot be determined, App.~\ref{app:mmc} further discusses exact and SGD linear regression as illustrative examples.

\section{Discussion}\label{sec:disc}

In this work, we studied the thermodynamic costs of single-parameter linear regression represented on floating-point registers, comparing the exact analytic solution with stochastic gradient descent.
For this simple model, the Landauer cost is dominated by the size of the data, and the main contributor to this cost is the precision $p$, which determines the number of bits included in the mantissa. 
From this we see that mantissa bits are thermodynamically expensive, exponent bits are thermodynamically cheap. Fig.~\ref{fig:fp_exact_approx_sigma} shows that adding exponent bits only increases the range of representable values while leaving the entropy relatively unchanged once overflows and underflows are avoided. 
This implies that for many data sets, additional exponent bits will barely affect thermodynamic costs. 
Additionally, these results provide a theoretical foundation for empirical results on machine learning model compression via entropy coding \cite{bordawekar2022efloatentropycodedfloatingpoint, hao_neuzip_2024}.

The low thermodynamic cost of exponent bits also aligns with other innovations in machine learning systems, like the use of the bfloat16 number format \cite{googleBFloat16Secret} which has the same number of exponent bits as FP32, but only 7 mantissa bits as opposed to 23. 
Another important innovation is model quantization \cite{gholami2021surveyquantizationmethodsefficient, gupta2015deeplearninglimitednumerical, krishnamoorthi2018quantizingdeepconvolutionalnetworks} where deep learning models can be trained with very limited numerical precision, especially when using techniques like quantization aware training \cite{ash2024efqat, hasan2024optimizinglargelanguagemodels}.

Additionally, the signal-to-noise ratio of the data contributes significantly to the Landauer cost, as seen from Eq.~\eqref{eq:approx_joint_ent_SNR} and in Figs.~\ref{fig:ex_landauer} and~\ref{fig:sgd_nonasymp} where increasing the signal-to-noise ratio decreases the thermodynamic cost. This implies that less noisy data and more structured data may also yield lower fundamental energy costs.

We also derived scaling laws for the exact formula and SGD that demonstrate the trade-off between generalization error and the energy cost of training. When the demand for inference scales inversely with the generalization error and the price for inference is constant, the irreducible noise creates a threshold where training on more data will not increase the model's accuracy enough to justify the associated energy costs.

Finally, we presented a method for lower bounding the mismatch cost entropy production for variational input distributions. 
Future work will include studying specific thermodynamic models of bit implementations like CMOS \cite{CMOS}, such that an entropy flow to the environment $\Delta S_{env}$ can be determined and the mismatch cost specified.

The natural extension of this work is to multi-parameter models, where connections to kernel ridge regression 
via the Neural Tangent Kernel \cite{jacot2020neuraltangentkernelconvergence,karkada2025predictingkernelregressionlearning, pmlr-v80-belkin18a} suggest our analysis may generalize to certain regimes of MLP training. In these cases, the output entropy of the models may be more significant than in the single-parameter case. 
% For example, an overparameterized model may retain more information and have a correspondingly higher entropy output state, and thus a lower thermodynamic cost. 
% Furthermore, in the setting of overparameterized models, inference may incur a nonzero Landauer cost. 
% This would require inference costs to be included in any scaling laws that could be derived for multi-parameter models.

\bibliographystyle{IEEEtran}
\bibliography{thermo_lin_reg}
\newpage
\appendices
\FloatBarrier

\section{Summary Tables}\label{app:summary_tables}

\subsection{Standard Accounting Convention: Cyclic-Device Protocol}\label{app:cyclic_device_table}

Table~\ref{tab:cyclic_device} summarizes the standard accounting convention used throughout this paper. Under this protocol, the Landauer cost Eq.~\eqref{eq:Landauer} gives the unconstrained (``all-at-once'') lower bound on the energetic cost associated with the endpoint distributions $p_1$ and $p_{F-1}$. It does not include the costs of intermediate garbage erasure, intermediary-register resets, or time-resolved control operations. Any such implementation details can only increase the physical energetic cost above this lower bound.

Additionally, we assume each computational state $x_c$ has equal internal entropy $S_0$, that the computation begins and ends with the same average internal entropy $U_0$, and that it remains in thermal equilibrium with its environment at temperature $T$ throughout the computation.

\begin{table}[htbp]
\centering
\caption{Cyclic-device protocol constraints, where the joint computational state is $x_c = (x_I, x_M, x_O) \in \mathcal{X}_c$.}
\label{tab:cyclic_device}
\begin{tabular}{c l l l}
\toprule
\textbf{Step} & \textbf{Input $x_I$} & \textbf{Intermediary $x_M$} & \textbf{Output $x_O$} \\
\midrule
$0$ & Fixed at $x_I^0$ & Fixed at $x_M^0$ & Fixed at $x_O^0$ \\
$1$ & Loaded: $p(x_I)$ & Fixed at $x_M^0$ & Fixed at $x_O^0$ \\
$F{-}1$ & Reset to $x_I^0$ & Reset to $x_M^0$ & Computed: $p(x_O)$ \\
$F$ & Fixed at $x_I^0$ & Fixed at $x_M^0$ & Offloaded, reset to $x_O^0$ \\
\bottomrule
\end{tabular}
\vspace{4pt}

\noindent \footnotesize{Steps $0 \to 1$ (input loading) and $F{-}1 \to F$ (output offloading) are energetically cost free \cite[p. 34]{Wolpert2019}. Only steps $1 \to F{-}1$ incur an energetic cost, yielding $\Delta E_{min} = k_BT\ln[2](H(X_I) - H(X_O))$.}
\end{table}

\subsection{Approximation Roadmap}\label{app:approx_audit}

Table~\ref{tab:approx_audit} provides a compact summary of each approximation used in this paper: its mathematical status, the assumptions required, where it is validated, and where it is used downstream.

\begin{table}[htbp]
\centering
\caption{Roadmap for entropy approximations. ``Exact'' means a closed-form expression without approximation; ``Theorem-backed'' means an error bound is proved; ``Asymptotic'' means valid in a stated limit with numerical support.}
\label{tab:approx_audit}
\small
\begin{tabular}{p{3.0cm} p{1.6cm} p{3.6cm} p{2.4cm} p{2.4cm}}
\toprule
\textbf{Quantity / Eq.} & \textbf{Status} & \textbf{Assumptions} & \textbf{Validated by} & \textbf{Used in} \\
\midrule
Exact $H(\Xfp)$, $H(\Xfp, \Yfp)$ \newline (Thm.~\ref{thm:quantized_ent}, Cors.~\ref{cor:gauss_fp_ent}, \ref{cor:joint_ent}) & Exact & None & Definition & Figs.~1--2; benchmarks for Approx.~1--3 \\
\addlinespace
Approx.~1: $H(X_Q) \approx \tilde H(X_Q)$ \newline Eq.~\eqref{eq:diff_approx} & Asymptotic & $f_X$ slowly varying relative to $\Delta(x)$ & \cite{kostina_data_2017}; App.~\ref{app:FPN}, Figs.~\ref{fig:fp_exact_approx}, \ref{fig:full_fp_exact_approx_p_E} & All downstream expressions \\
\addlinespace
Approx.~2: $\tilde H(X_Q) \approx \tilde H_s(X_{fp})$ \newline Eq.~\eqref{eq:first_two_approx} & Theorem-backed & (a) Bin-smoothing: error $\leq d/2 + \varepsilon_0$ (Thm.~\ref{thm:steps}); (b) Domain extension: $\Pr(\text{overflow/underflow}) \approx 0$, i.e.\ $E \geq 4$ & Thm.~\ref{thm:steps}; Figs.~\ref{fig:fp_exact_approx}, \ref{fig:full_fp_exact_approx_p_E} & All downstream expressions \\
\addlinespace
Approx.~3: $\tilde H_s(X_{fp}) \approx \tilde H_s^\mu(X_{fp})$ \newline Eq.~\eqref{eq:three_approxes} & Asymptotic & $|\mu| \gg \sigma$; distribution concentrated around $\mu \neq 0$ & App.~\ref{app:FPN_app3}; Fig.~\ref{fig:fp_exact_approx_mu} & Output entropies (Secs.~IV, V) \\
\addlinespace
$\tilde H_s^0(p)$: zero-mean Gaussian \newline Eq.~\eqref{eq:single_mean_0}, Thm.~\ref{thm:single_gaussian} & Exact (given Approx.~1--2) & Approx.~1--2 hold; $X \sim \mathcal{N}(0, \sigma^2)$ & Thm.~\ref{thm:single_gaussian}; Fig.~\ref{fig:fp_exact_approx_sigma} & Input entropies; inference cost \\
\addlinespace
$\tilde H_s(\Xfp, \Yfp)$: bivariate Gaussian \newline Eq.~\eqref{eq:approx_joint_ent_SNR}, Thm.~\ref{thm:double_gaussian} & Exact (given Approx.~1--2) & Approx.~1--2 hold; $\mathrm{SNR}$ not too large (see Fig.~\ref{fig:fp_joint_exact_approx_p_E_snr_100}) & Thm.~\ref{thm:double_gaussian}; Figs.~\ref{fig:fp_joint_exact_approx_p_E_snr_2}, \ref{fig:fp_joint_exact_approx_p_E_snr_100} & Input entropy for training cost \\
\addlinespace
Exact LR output: $\tilde H_s^w(\Wh_{fp})$ \newline Eq.~\eqref{eq:landauer_exact_lr} & Asymptotic & Approx.~1--3; $n > 2$; $|w| \gg \sigxi/(\sigx\sqrt{n-2})$; $E \geq 4$ & Fig.~\ref{fig:ex_landauer} & Sec.~IV, Sec.~VI \\
\addlinespace
SGD stationary: $\Wh_\tau \sim \mathcal{N}(w, \eta\sigxi^2/(2B))$ \newline Eq.~\eqref{eq:SGD_final_dist} & Numerically validated & Large $B$ (CLT); small $\eta$; $\wh \approx w$; OU continuous-time limit; discretization error neglected & Figs.~10, 11 & Sec.~V-A \\
\addlinespace
SGD nonasymptotic: $\tilde H_s^{\tilde\mu(k)}(\Wh_{fp,k})$ \newline Eq.~\eqref{eq:sgd_landauer_noneq} & Numerically validated & All of the above; $|\tilde\mu(k)| \gg \sqrt{\eta\sigxi^2/(2B)}$; $k$ large enough for OU to track SGD ($k \gtrsim 100$ empirically) & Fig.~\ref{fig:sgd_w_2x2} & Sec.~V-B, Sec.~VI \\
\bottomrule
\end{tabular}
\end{table}

\FloatBarrier
\pagebreak
\section{Computing the exact entropy of quantized random variables}\label{app:exact_entropy_float}

Here we compute the exact discrete entropy of quantized random variables, such as normally distributed random variables stored as floating point numbers.

\begin{theorem}[Entropy of a Clipped and Arbitrarily Midpoint Quantized Random Variable]\label{thm:quantized_ent}
Let $X$ be an absolutely continuous random variable with cumulative distribution function $F$. Assume there are $K$ representable values in the quantization scheme and denote the set of these values as $\{u_1, \dots, u_K\}$, where $u_1 < \dots < u_K$. If $X_Q$ is the resulting clipped and midpoint quantized representation of $X$, then the discrete entropy of $X_Q$ is
\begin{equation*}
    \begin{aligned}
        H(X_Q) &= -F\paren{\frac{u_1 + u_2}{2}}\log\bracket{F\paren{\frac{u_1 + u_2}{2}}}\\
        &- \sum_{i = 2}^{K-1}\bracket{F\paren{\frac{u_{i+1} + u_i}{2}} - F\paren{\frac{u_{i} + u_{i-1}}{2}}}\log\bracket{F\paren{\frac{u_{i+1} + u_i}{2}} - F\paren{\frac{u_{i} + u_{i-1}}{2}}}\\
        &-\bracket{1 - F\paren{\frac{u_{K-1} + u_K}{2}}}\log\bracket{1 - F\paren{\frac{u_{K-1} + u_K}{2}}}.
        \end{aligned}
\end{equation*}
when $K \geq 3$. When $K = 2$, $H(X_Q) = -F\paren{\frac{u_1 + u_2}{2}}\log\bracket{F\paren{\frac{u_1 + u_2}{2}}} - \bracket{1 -  F\paren{\frac{u_1 + u_2}{2}}}\log\bracket{1 -  F\paren{\frac{u_1 + u_2}{2}}}$, and when $K = 1$, $H(X_Q) = 0$.

\end{theorem}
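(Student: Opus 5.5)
The proof is a direct computation from the definition of Shannon entropy. The plan is to identify the preimage of each representable value under the clipped midpoint quantizer and then compute its probability mass using $F$.

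\textbf{Step 1: describe the quantizer.} Define the midpoints $b_i \triangleq (u_i+u_{i+1})/2$ for $i=1,\dots,K-1$, and set $b_0 \triangleq -\infty$ and $b_K \triangleq +\infty$. Clipped midpoint (nearest-value) quantization sends $x$ to $u_i$ exactly when $x \in [b_{i-1}, b_i)$, or $(b_{i-1}, b_i]$ depending on the tie-breaking convention. Clipping means that every $x<b_1$, including values below $u_1$, maps to $u_1$, and every $x\ge b_{K-1}$ maps to $u_K$. Since $u_1<\dots<u_K$, the midpoints are strictly increasing. The cells therefore partition $\R{}$, so $X_Q$ is a well-defined discrete random variable with values in $\{u_1,\dots,u_K\}$.

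\textbf{Step 2: compute the cell probabilities.} Because $X$ is absolutely continuous, $F$ is continuous and every boundary point has probability zero. The tie-breaking rule and the choice of open or closed endpoints are therefore irrelevant. This gives
\begin{equation*}
P(X_Q=u_1)=F(b_1),\qquad P(X_Q=u_i)=F(b_i)-F(b_{i-1})\ \ (2\le i\le K-1),\qquad P(X_Q=u_K)=1-F(b_{K-1}).
\end{equation*}

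\textbf{Step 3: substitute into the entropy.} Insert these masses into $H(X_Q)=-\sum_i P(X_Q=u_i)\log P(X_Q=u_i)$, using the convention $0\log 0=0$ so that empty-probability cells contribute nothing. Writing $b_i=(u_i+u_{i+1})/2$ recovers the displayed formula for $K\ge3$. The degenerate cases follow the same way. For $K=2$ there is a single midpoint $b_1$, and the two masses are $F(b_1)$ and $1-F(b_1)$. For $K=1$, $X_Q\equiv u_1$ almost surely, so $H(X_Q)=0$.

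The only point that needs care is Step 2, specifically the treatment of boundary and tie conventions, including the paper's tie-breaking toward the higher index. This is handled entirely by the continuity of $F$: the set of midpoints is finite and hence has Lebesgue measure zero, so it carries no probability under an absolutely continuous law. Everything else is bookkeeping.
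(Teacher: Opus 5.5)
Your proposal is correct and follows essentially the same route as the paper: write $X_Q$ as a function of which midpoint cell $X$ falls into, compute each mass as a difference of $F$ values (the paper writes them as integrals of $f_X$ first), and substitute into the entropy definition, with the $K=1,2$ cases handled directly. Your explicit remarks on tie-breaking (irrelevant by continuity of $F$) and the $0\log 0=0$ convention are small additions that the paper leaves implicit.
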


\begin{proof}
Since $X_Q$ is a truncated and midpoint quantized representation of $X$, we have,
\begin{equation}
    X_Q = \begin{cases}
    \sum\limits_{i = 2}^{K-1} u_i\indic{X \in \left[\frac{u_{i} + u_{i-1}}{2},\frac{u_{i+1} + u_i}{2}\right)} + u_1\indic{X < \frac{u_1 + u_2}{2}} + u_K\indic{X \geq \frac{u_{K-1} + u_K}{2}} \text{ if } K \geq 3\\
    u_1\indic{X < \frac{u_1 + u_2}{2}} + u_2\indic{X \geq \frac{u_1 + u_2}{2}} \text{ if } K = 2\\
    u_1 \text{ if } K = 1.
    \end{cases}
\end{equation}

For $K = 1$, we see that $P\{X_Q = u_1\} = 1$ so $H(X_Q) = 0$. For $K = 2$, $P\{X_Q = u_1\} = F\paren{\frac{u_1 + u_2}{2}} = 1 - P\{X_Q = u_2\}$, so 
\begin{equation}
H(X_Q) = -F\paren{\frac{u_1 + u_2}{2}}\log\bracket{F\paren{\frac{u_1 + u_2}{2}}} - \bracket{1 -  F\paren{\frac{u_1 + u_2}{2}}}\log\bracket{1 -  F\paren{\frac{u_1 + u_2}{2}}}.
\end{equation}

Finally, for $K \geq 3$,

\begin{equation}
\begin{aligned}
    P\{X_Q = u_i\} &=  \begin{cases} \int\limits_{\frac{u_{i} + u_{i-1}}{2}}^{\frac{u_{i+1} + u_i}{2}} f_X(x)dx \text{ if } i \in [2, K-1]\\
    \int\limits_{-\infty}^{\frac{u_1 + u_2}{2}} f_X(x)dx \text{ if }  i = 1\\
    \int\limits_{\frac{u_{K-1} + u_K}{2}}^{\infty} f_X(x)dx \text{ if }  i = K\\ 
    \end{cases}\\
    &=\begin{cases} F\paren{\frac{u_{i+1} + u_i}{2}} - F\paren{\frac{u_{i} + u_{i-1}}{2}} \text{ if } i \in [2, K-1]\\
    F\paren{\frac{u_1 + u_2}{2}}  \text{ if } i = 1\\
    1 - F\paren{\frac{u_{K-1} + u_K}{2}} \text{ if } i = K
    \end{cases}.
\end{aligned}
\end{equation}

Inserting this into the formula for discrete entropy allows us to derive the following
\begin{equation}
    \begin{aligned}
        H(X_Q) & = - \sum_{i = 1}^K P\{X_Q = u_i\} \log\bracket{P\{X_Q = u_i\}} \\
        & = -F\paren{\frac{u_1 + u_2}{2}}\log\bracket{F\paren{\frac{u_1 + u_2}{2}}}\\
        &- \sum_{i = 2}^{K-1}\bracket{F\paren{\frac{u_{i+1} + u_i}{2}} - F\paren{\frac{u_{i} + u_{i-1}}{2}}}\log\bracket{F\paren{\frac{u_{i+1} + u_i}{2}} - F\paren{\frac{u_{i} + u_{i-1}}{2}}}\\
        &-\bracket{1 - F\paren{\frac{u_{K-1} + u_K}{2}}}\log\bracket{1 - F\paren{\frac{u_{K-1} + u_K}{2}}}.
        \end{aligned}
\end{equation}
    
\end{proof}

\begin{figure}[htbp]
\centering
\begin{tikzpicture}[>=Stealth, thick]

\draw[black, thick] (-3.5, 0) -- (5.5, 0);

\draw[black, thick] (-2, 2.8) -- (-2, 0);
  \draw[black, thick] ( 1, 2.8) -- ( 1, 0);
  \draw[black, thick] ( 4, 2.8) -- ( 4, 0);

\draw[blue, very thick] (-0.5, 2.8) -- (-0.5, 0);
  \draw[blue, very thick] ( 2.5, 2.8) -- ( 2.5, 0);

\draw[black, -{Stealth}, very thick] (-0.5, 1.4) -- ( 1.0, 1.4);
  \fill[black] (-0.5, 1.4) circle (3pt);
  \draw[black, -{Stealth}, very thick] ( 2.5, 1.4) -- ( 4.0, 1.4);
  \fill[black] ( 2.5, 1.4) circle (3pt);

\draw[black, -{Stealth}, very thick] (-0.5, 0.8) -- (-2.0, 0.8);
  \fill[white] (-0.5, 0.8) circle (3pt);
  \draw[black, very thick] (-0.5, 0.8) circle (3pt);
  \draw[black, -{Stealth}, very thick] ( 2.5, 0.8) -- ( 1.0, 0.8);
  \fill[white] ( 2.5, 0.8) circle (3pt);
  \draw[black, very thick] ( 2.5, 0.8) circle (3pt);

\draw[black, -{Stealth}, very thick] (-3.5, 1.1) -- (-2.0, 1.1);
  \draw[black, -{Stealth}, very thick] ( 5.5, 1.1) -- ( 4.0, 1.1);

\node at (-2, -0.4) {$u_1$};
  \node at ( 1, -0.4) {$u_2$};
  \node at ( 4, -0.4) {$u_3$};
  \node[blue] at (-0.5, -0.9) {$\dfrac{u_1+u_2}{2}$};
  \node[blue] at ( 2.5, -0.9) {$\dfrac{u_2+u_3}{2}$};

\end{tikzpicture}

\caption{Clipping and midpoint quantization with $K=3$ representable values $\{u_1, u_2,u_3\}$. The blue vertical lines represent the midpoints, and the arrows depict the regions of the real line that map to each representable value at a black vertical line.}\label{fig:midpoint_quantization}
\end{figure}

\begin{corollary}[Entropy of a Gaussian Random Variable Quantized to a Floating-point Number]\label{cor:gauss_fp_ent}
Let $X \sim \gauss{\mu}{\sigx^2}$. Let $X_{fp}$ be the clipped and midpoint quantized floating-point representation of $X$ with an $E$-bit exponent and a $(p-1)$-bit significand. The discrete entropy of $X_{fp}$ is given by Theorem~\ref{thm:quantized_ent} where $K = 2^{E + p}$, $F(x) = \frac{1}{2}\bracket{1 + \erf\paren{\frac{x -\mu}{\sqrt{2}\sigx}}}$,
\begin{equation*}
    u_i = \begin{cases}
    u^\prime_{i -2^{E + p - 1}}  \text{ if } i > 2^{E + p - 1}\\
    -u^\prime_{2^{E + p - 1} - (i-1)} \text{ otherwise},
\end{cases}
\end{equation*}
and 
\begin{equation*}
    u^\prime_i = 2^{\bracket{\left\lfloor (i-1)2^{-(p-1)}\right\rfloor - \paren{2^{E-1} - 1}}}\paren{1 + \bracket{(i-1)2^{-(p-1)} \bmod 1}}.
\end{equation*}
\end{corollary}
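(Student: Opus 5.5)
This corollary is a direct application of Theorem~\ref{thm:quantized_ent}, so the plan has three parts. First, identify the CDF. Second, show that the floating-point quantizer of Eq.~\eqref{eq:sem_fp} is exactly a clipped midpoint quantizer onto the listed set $\{u_i\}$. Third, confirm the ordering $u_1<\dots<u_K$ and the count $K=2^{E+p}$.

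The CDF is immediate. For $X\sim\gauss{\mu}{\sigx^2}$ we have $F(x)=\frac12[1+\erf((x-\mu)/(\sqrt2\sigx))]$, and $X$ is absolutely continuous as the theorem requires.

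Next I enumerate the positive representable values. The exponent takes the $2^E$ values $e\in\{-(2^{E-1}-1),\dots,2^{E-1}\}$. The significand takes the $2^{p-1}$ values $m\in\{0,2^{-(p-1)},\dots,1-2^{-(p-1)}\}$. Writing $i-1=j2^{p-1}+r$ with $0\le r<2^{p-1}$ gives $\lfloor (i-1)2^{-(p-1)}\rfloor=j$ and $(i-1)2^{-(p-1)}\bmod 1=r2^{-(p-1)}$. Hence $u'_i=2^{j-(2^{E-1}-1)}(1+r2^{-(p-1)})$ for $i=1,\dots,2^{E+p-1}$ enumerates every value $2^{e}(1+m)$ exactly once.

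These values are strictly increasing in $i$. Within a block of fixed $j$, increasing $r$ increases the value. Across consecutive blocks, the last value of block $j$ is $2^{e}(2-2^{-(p-1)})<2^{e+1}$, which is the first value of block $j+1$. Adding the sign bit gives the negatives. The indexing $u_i=-u'_{2^{E+p-1}-(i-1)}$ for $i\le 2^{E+p-1}$ lists them in increasing order, all below $u'_1>0$. This yields $K=2^{E+p}$ sorted values.

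The main obstacle is the last step: showing that the map $x\mapsto x_{fp}(x)$ of Eq.~\eqref{eq:sem_fp} coincides, up to a null set, with midpoint quantization onto $\{u_i\}$ with clipping. The argument splits into cases.

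\emph{Interior of an exponent block.} Here $\mathrm{round}_p$ sends $|x|$ to the nearest grid point of spacing $2^{e-(p-1)}$, which is precisely the midpoint rule. Ties go to the larger index, matching the half-open intervals $[\frac{u_i+u_{i-1}}2,\frac{u_{i+1}+u_i}2)$ of the theorem; in any case the ties form a probability-zero set.

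\emph{Upper end of a block.} When the rounded significand equals $1$, the $\tilde e_{fp}$ correction bumps the exponent. This sends $x$ to $2^{e+1}$, the nearest representable value across the block boundary, so that boundary bin is also midpoint.

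\emph{Top of the range.} For $\log|x|\ge 2^{E-1}+1$, or when rounding would exceed $e_{\max}$, the value saturates at $u'_K$. This is the clipping bin.

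\emph{Bottom of the range.} For $|x|<2^{e_{\min}}$ the exponent is held at $e_{\min}$. The significand term $|x|2^{-e_{\min}}-1$ is then negative and rounds to $i=0$, giving $u'_1$. I will check that the threshold $\frac{u'_1+u'_2}{2}$ is respected. Values near zero are then split by sign, with $-u'_1$ and $u'_1$ as nearest neighbors. Zero is a probability-zero point, and $(u_{K/2}+u_{K/2+1})/2=0$ is exactly the midpoint there.

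With this established, the probabilities $P\{X_{fp}=u_i\}$ are the CDF differences of Theorem~\ref{thm:quantized_ent}, and substituting the $u_i$ and $F$ above completes the proof.
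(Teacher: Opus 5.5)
Your proposal is correct and follows the paper's proof. Like the paper, you identify the Gaussian CDF, enumerate the $2^{E+p-1}$ positive values via the exponent index $\lfloor (i-1)/2^{p-1}\rfloor$ and mantissa index $(i-1) \bmod 2^{p-1}$, mirror them for the negatives, and invoke Theorem~\ref{thm:quantized_ent}.

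Your extra checks go beyond the paper, which simply takes $X_{fp}$ to be the midpoint quantizer onto the representable set. These are strict monotonicity across exponent blocks and the case analysis showing the map of Eq.~\eqref{eq:sem_fp} agrees with clipped midpoint quantization up to a null set. They are sound, but note that the saturation value is $u_K = u'_{2^{E+p-1}}$, not $u'_K$.
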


\begin{proof}
$F(x) = \frac{1}{2}\bracket{1 + \erf\paren{\frac{x -\mu}{\sqrt{2}\sigx}}}$ is the well-known Gaussian cumulative distribution function. All that is left to apply Theorem~\ref{thm:quantized_ent} is to explicitly construct the ordered sequence of representable values $\{u_1, u_2, \dots, u_K\}$ from the structure of the floating-point format given in Eq.~\eqref{eq:sem_fp}. There are $2^{p-1}$ distinct mantissa values for each exponent and $2^E$ exponent values, giving $2^{E+p-1}$ positive floating-point values in total. Including the negative values by symmetry, there are $K = 2^{E+p}$ representable values altogether.
The smallest value of the exponent $e_{\min} = -(2^{E-1}-1)$ while the largest value $e_{\max} = 2^{E-1}$.

We first enumerate the $2^{E+p-1}$ positive floating-point values $u^\prime_i$ in increasing order, indexed by $i = 1, 2, \ldots, 2^{E+p-1}$. For each fixed exponent value, we exhaust all $2^{p-1}$ mantissa values before incrementing the exponent.
Concretely, for index $i$, the exponent index is $\lfloor (i-1)/2^{p-1} \rfloor$, which starts at zero steps up by one exactly every $2^{p-1}$ values of $i$. The mantissa index within that exponent block is $(i-1) \bmod 2^{p-1}$, which cycles through $0, 1, \ldots, 2^{p-1}-1$ repeatedly. Translating the exponent index into the true exponent by subtracting $e_{\min}$, and the mantissa index into its fractional value by multiplying by $2^{-(p-1)}$, gives
\begin{equation*}
    u^\prime_i = 2^{\left\lfloor (i-1)2^{-(p-1)}\right\rfloor - (2^{E-1}-1)}\Bigl(1 + \bracket{(i-1)2^{-(p-1)} \bmod 1}\Bigr).
\end{equation*}

The full sequence $u_1 < u_2 < \cdots < u_K$ must enumerate the negative and positive floating-point values in increasing order. Since the negative floating-point values are the mirror image of the positive ones, the most negative value corresponds to $-u^\prime_{2^{E+p-1}}$ and the least negative to $-u^\prime_1$. Therefore, for $i \leq 2^{E+p-1}$, we set
\begin{equation*}
    u_i = -u^\prime_{2^{E+p-1} - (i-1)},
\end{equation*}
which enumerates the negative values in increasing order. For $i > 2^{E+p-1}$, we set
\begin{equation*}
    u_i = u^\prime_{i - 2^{E+p-1}},
\end{equation*}
which enumerates the positive values in increasing order. With these $u_i$ and $K = 2^{E+p}$, Theorem~\ref{thm:quantized_ent} can be used to compute $H(\Xfp)$, completing the proof.
\end{proof}

\begin{corollary}[Joint Entropy of Two Jointly Gaussian Random Variables Quantized to Floating-point Numbers]\label{cor:joint_ent}
    Let $w \in \R{}$, $X \sim \gauss{0}{\sigx^2}$ and $\xi \sim \gauss{0}{\sigxi^2}$ where $X$ and $\xi$ are independent. Define $Y = wX + \xi$. Let $X_{fp}$ and $Y_{fp}$ be the floating-point representation of $X$ and $Y$ respectively, where both representations have $E$ exponent bits and $(p-1)$ significand bits. Let $K = 2^{E+p}$ and denote the set of representable values as $\{u_1, u_2, \dots, u_K\}$ as given in Corollary~\ref{cor:gauss_fp_ent}. When $K \geq 2$, the joint discrete entropy of $X_{fp}$ and $Y_{fp}$ is
    \begin{equation*}
        H(X_{fp}, Y_{fp}) = H(\Xfp) - \sum_{i = 1}^{K}\sum_{j = 1}^{K} F_{XY}\paren{a_j, b_j; a_i, b_i}\log\bracket{\frac{F_{XY}\paren{a_j, b_j; a_i, b_i}}{F_X(b_i) - F_X(a_i)}}.
    \end{equation*}
where $H(\Xfp)$ is given from Corollary~\ref{cor:gauss_fp_ent}, $F_{XY}(a, b; c,d) \triangleq \int_a^b\int_c^d f_{XY}(x,y)dxdy$ where $f_{XY}$ is the bivariate Gaussian density function corresponding to Eq.~\eqref{eq:initial_dist}, $F_X(x)$ is the univariate Gaussian cumulative density function, and \begin{equation*}
    a_i = \begin{cases} -\infty & i = 1 \\ \frac{u_i + u_{i-1}}{2} & 2 \leq i \leq K\end{cases}, \qquad b_i =\begin{cases} \frac{u_{i+1}+u_i}{2} & 1 \leq i \leq K-1 \\ \infty & i = K\end{cases}. 
    \end{equation*} When $K = 1$, $H(\Xfp, \Yfp) = 0$.
\end{corollary}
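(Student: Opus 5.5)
We will prove Corollary~\ref{cor:joint_ent} by the chain rule for discrete entropy, $H(\Xfp,\Yfp) = H(\Xfp) + H(\Yfp\mid\Xfp)$, and then compute the conditional term from the cell probabilities induced by clipped midpoint quantization. The term $H(\Xfp)$ is supplied directly by Corollary~\ref{cor:gauss_fp_ent} (with $\mu=0$), since $X\sim\gauss{0}{\sigx^2}$ is quantized on the same representable set $\{u_1,\dots,u_K\}$. The case $K=1$ is trivial: both quantized variables are deterministic, so the joint entropy is $0$.

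For $K\ge 2$, the first step is to describe the quantization cells. Exactly as in the proof of Theorem~\ref{thm:quantized_ent}, $X_Q = u_i$ iff $X\in[a_i,b_i)$ with the stated $a_i,b_i$. The conventions $a_1=-\infty$ and $b_K=\infty$ encode the clipping, and $K=2$ fits the same formulas. Applying this to both coordinates, the event $\{\Xfp=u_i,\Yfp=u_j\}$ equals $\{X\in[a_i,b_i),\,Y\in[a_j,b_j)\}$. Its probability is therefore the rectangle integral of the joint density of $(X,Y)$, namely $f_{XY}$ from Eq.~\eqref{eq:initial_dist}, giving $P_{ij}=F_{XY}(a_j,b_j;a_i,b_i)$ with $y$ integrated over $[a_j,b_j]$ and $x$ over $[a_i,b_i]$. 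Boundary points carry zero mass by absolute continuity, so open versus closed endpoints do not matter.

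The second step computes the conditional entropy. The marginal is $P\{\Xfp=u_i\}=F_X(b_i)-F_X(a_i)$, which is the same cell mass used in Theorem~\ref{thm:quantized_ent}. Hence
\[
H(\Yfp\mid\Xfp) = -\sum_{i=1}^K\sum_{j=1}^K P_{ij}\log\bracket{\frac{P_{ij}}{F_X(b_i)-F_X(a_i)}},
\]
and adding $H(\Xfp)$ gives the stated formula.

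The only real obstacle is bookkeeping. We must make sure the index order in $F_{XY}(a_j,b_j;a_i,b_i)$ matches the definition $\int_a^b\int_c^d f_{XY}\,dx\,dy$, with the outer integral over $y$ on the $j$-cell and the inner over $x$ on the $i$-cell. We must also handle terms where $P_{ij}=0$ or the marginal vanishes through the convention $0\log 0=0$. Because the Gaussian density is positive everywhere, every cell has positive mass, so this convention is in fact never needed. Finally, the sum over $j$ of $P_{ij}$ returns the marginal, which confirms that the expression is a proper conditional entropy.
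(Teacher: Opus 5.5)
Your proposal is correct and follows essentially the same route as the paper. Both use the chain rule $H(\Xfp,\Yfp)=H(\Xfp)+H(\Yfp\mid\Xfp)$ and write the conditional probabilities as rectangle masses $F_{XY}(a_j,b_j;a_i,b_i)$ divided by the marginal cell mass $F_X(b_i)-F_X(a_i)$; your remarks on integration order, zero-mass boundaries, and strict positivity of every cell probability (since $\sigxi>0$) are details the paper leaves implicit.
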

\begin{proof}
\begin{equation}
     H(X_{fp}, Y_{fp}) = H(\Xfp) + H(\Yfp|\Xfp).
\end{equation}
$H(\Xfp)$ is given by Theorem~\ref{thm:quantized_ent}, so all that is left is to compute $H(\Yfp|\Xfp)$.
Let $\{u_1, \dots, u_K\}$ denote the $2^{E + p}$ representable values of the floating-point representation as specified in Corollary~\ref{cor:gauss_fp_ent}. We know that\begin{equation}
    \begin{bmatrix}
        X\\
        Y
    \end{bmatrix} \sim \gauss{0}{\begin{bmatrix}
        \sigx^2 & w\sigx^2\\
        w\sigx^2 & w^2\sigx^2 + \sigxi^2
    \end{bmatrix}}.
\end{equation}

Let $F_{XY}(a, b; c,d) \triangleq \int_a^b\int_c^d f_{XY}(x,y)dxdy$ where $f_{XY}$ is the bivariate Gaussian density function corresponding to Eq.~\eqref{eq:initial_dist}. Notice that,
\begin{equation}
    P\{Y \in [a,b)| X\in [c,d)\} = \frac{P\{Y \in [a,b), X\in [c,d)\}}{P\{X\in [c,d)\}} = \frac{F_{XY}(a, b; c,d)}{F_X(d) - F_X(c)}.\label{eq:cond_prob}\\
\end{equation}

When $K=1$, $P\{\Xfp = u_1 \} = P\{\Yfp = u_1 \} = 1$ so $H(\Yfp|\Xfp) = 0$. 
When $K \geq 2$, let $a_i$ and $b_i$ denote the left and right boundaries of the quantization bin for $u_i$:
\begin{equation}
    a_i \triangleq \begin{cases} -\infty & i = 1 \\ \frac{u_i + u_{i-1}}{2} & 2 \leq i \leq K\end{cases}, \qquad b_i \triangleq \begin{cases} \frac{u_{i+1}+u_i}{2} & 1 \leq i \leq K-1 \\ \infty & i = K\end{cases}.
\end{equation}
Then,
\begin{equation}
\begin{aligned}
    &H(\Xfp) + H(\Yfp|\Xfp) = H(\Xfp) - \sum_{i = 1}^{K}\sum_{j = 1}^K P\{\Yfp = u_j, \Xfp = u_i\}\log\bracket{P\{\Yfp = u_j | \Xfp = u_i\}}\\
    &= H(\Xfp) - \sum_{i = 1}^{K}\sum_{j = 1}^{K} P\left\{Y \in \left[a_j, b_j\right), X \in \left[a_i, b_i\right)\right\}\log\bracket{\frac{P\left\{Y \in \left[a_j, b_j\right), X \in \left[a_i, b_i\right)\right\}}{P\left\{ X \in \left[a_i, b_i\right)\right\}}}\\
    &= H(\Xfp) - \sum_{i = 1}^{K}\sum_{j = 1}^{K} F_{XY}\paren{a_j, b_j; a_i, b_i}\log\bracket{\frac{F_{XY}\paren{a_j, b_j; a_i, b_i}}{F_X(b_i) - F_X(a_i)}}.
\end{aligned}
\end{equation}

\end{proof}

\FloatBarrier
\pagebreak
\section{Approximating the entropy of floating point numbers}\label{app:FPN}

Here we review the details of calculations used to approximate the entropy of floating point numbers.

\subsubsection*{Approximation 1 -- Relating discrete and differential entropy for non-uniform bins}\label{app:FPN_app1}

Let there be a quantization scheme with $K$ representable values
$\{u_1, u_2, \ldots, u_K\}$, where $u_1 < u_2 < \cdots < u_K$.
Let $\mathbf{X} \sim f_\mathbf{X}$, where $f_\mathbf{X}$ is a differentiable probability density function with support on
\begin{equation}
\mathbb{U}^d \triangleq \left[-2^{e_{max} + 1} + 2^{e_{max} - p}, \,\, 2^{e_{max} + 1} - 2^{e_{max} - p}\right]^d,
\end{equation}
where we assume each component $X_j$ is independently clipped and quantized.

Let $\Delta: \left[-2^{e_{max} + 1} + 2^{e_{max} - p},\,\, 2^{e_{max} + 1} - 2^{e_{max} - p}\right] \to \mathbb{R}_+$
encode the bin sizes of the quantization scheme for a single component, where
\begin{equation*}
    \begin{aligned}
    \Delta(x_j) &\triangleq \sum_{i=2}^{K-1}\frac{u_{i+1} - u_{i-1}}{2}
    \indic{x_j \in \Big[\frac{u_i+u_{i-1}}{2}, \frac{u_{i+1}+u_i}{2}\Big)} + (u_2 - u_1)\indic{x_j \in [-2^{e_{max} + 1} + 2^{e_{max} - p}, \frac{u_{2} + u_{1}}{2}]}\nonumber\\
    & + (u_K - u_{K-1})\indic{x_j \in [\frac{u_{K} + u_{K-1}}{2}, 2^{e_{max} + 1} - 2^{e_{max} - p}]}.
    \end{aligned}
\end{equation*}
Since each component is stored on its own floating-point register, the quantization cell for $\mathbf{X}_Q$ is a $d$-dimensional rectangle with volume
\begin{equation}
\Delta(\mathbf{x}) \triangleq \prod_{j=1}^{d} \Delta(x_j),
\qquad
\log \Delta(\mathbf{x}) = \sum_{j=1}^{d} \log \Delta(x_j).
\end{equation}

Let $\mathbf{i} = (i_1,\dots,i_d)$ index the $d$-dimensional quantization cells, where $i_j \in \{1,\dots,K\}$. Define the cell boundaries
\begin{equation}
a_i=\begin{cases}
-2^{e_{max} + 1} + 2^{e_{max} - p} & i=1\\[2pt]
\frac{u_i+u_{i-1}}2 & 2\leq i\leq K
\end{cases},
\qquad
b_i=\begin{cases}
\frac{u_{i+1}+u_i}2 & 1\leq i\leq K-1\\[2pt]
2^{e_{max} + 1} - 2^{e_{max} - p} & i=K.
\end{cases}
\end{equation}
For each $d$-dimensional index $\mathbf{i}$, define the quantization cell
\begin{equation}
B_{\mathbf{i}} \triangleq \prod_{j=1}^d [a_{i_j}, b_{i_j}],
\end{equation}
and let
\begin{equation}
p_{\mathbf{i}} \triangleq \mathbb{P}\{\mathbf{X}\in B_{\mathbf{i}}\}
= \mathbb{P}\{\mathbf{X}_Q = \mathbf{u}_{\mathbf{i}}\},
\end{equation}
where $\mathbf{u}_{\mathbf{i}}$ is the $d$-dimensional representable value corresponding to the quantization cell $B_\mathbf{i}$.

Then the discrete entropy of $\mathbf{X}_Q$ is
\begin{equation}
H(\mathbf{X}_Q) = -\sum_{\mathbf{i}} p_{\mathbf{i}} \log p_{\mathbf{i}}.
\end{equation}

To relate this discrete entropy to the differential entropy of $\mathbf{X}$, define the piecewise-uniform density
\begin{equation}
g(\mathbf{x})
\triangleq
\sum_{\mathbf{i}}
\frac{p_{\mathbf{i}}}{|B_{\mathbf{i}}|}
\indic{\mathbf{x}\in B_{\mathbf{i}}},
\label{eq:piecewise_uniform_g}
\end{equation}
where $|B_{\mathbf{i}}| = \Delta(\mathbf{x})$ for $\mathbf{x} \in B_{\mathbf{i}}$ is the volume of each rectangular cell $B_{\mathbf{i}}$. By construction, $g$ is constant on each cell and integrates to one across all cells.

Following \cite{kostina_data_2017}, we can derive an exact identity relating the discrete and differential entropies. By the definition of Kullback-Leibler divergence,
\begin{align}
D(f_{\mathbf{X}}\|g)
&=
\int_{\mathbb{U}^d}
f_{\mathbf{X}}(\mathbf{x})
\log\frac{f_{\mathbf{X}}(\mathbf{x})}{g(\mathbf{x})}
\,d\mathbf{x}
\notag\\
&=
\int_{\mathbb{U}^d}
f_{\mathbf{X}}(\mathbf{x})\log f_{\mathbf{X}}(\mathbf{x})
\,d\mathbf{x}
-
\int_{\mathbb{U}^d}
f_{\mathbf{X}}(\mathbf{x})\log g(\mathbf{x})
\,d\mathbf{x}
\notag\\
&=
-h(\mathbf{X})
-
\int_{\mathbb{U}^d}
f_{\mathbf{X}}(\mathbf{x})\log g(\mathbf{x})
\,d\mathbf{x}.
\label{eq:KL_start}
\end{align}
Since $g(\mathbf{x}) = p_{\mathbf{i}}/|B_{\mathbf{i}}|$ for $\mathbf{x}\in B_{\mathbf{i}}$,
\begin{align}
\int_{\mathbb{U}^d}
f_{\mathbf{X}}(\mathbf{x})\log g(\mathbf{x})
\,d\mathbf{x}
&=
\sum_{\mathbf{i}}
\int_{B_{\mathbf{i}}}
f_{\mathbf{X}}(\mathbf{x})
\log\left(\frac{p_{\mathbf{i}}}{|B_{\mathbf{i}}|}\right)
d\mathbf{x}\\
&=
\sum_{\mathbf{i}}
 \bigg(
\int_{B_{\mathbf{i}}}
f_{\mathbf{X}}(\mathbf{x}) d\mathbf{x}\bigg)
\log\left(\frac{p_{\mathbf{i}}}{|B_{\mathbf{i}}|}\right)
=
\sum_{\mathbf{i}}
p_{\mathbf{i}}
\log\left(\frac{p_{\mathbf{i}}}{|B_{\mathbf{i}}|}\right)\\
&=
\sum_{\mathbf{i}} p_{\mathbf{i}}\log p_{\mathbf{i}}
-
\sum_{\mathbf{i}} p_{\mathbf{i}}\log |B_{\mathbf{i}}|.
\label{eq:KL_middle}
\end{align}
Substituting Eq.~\eqref{eq:KL_middle} into Eq.~\eqref{eq:KL_start} yields
\begin{align}
D(f_{\mathbf{X}}\|g)
&=
-h(\mathbf{X})
-
\sum_{\mathbf{i}} p_{\mathbf{i}}\log p_{\mathbf{i}}
+
\sum_{\mathbf{i}} p_{\mathbf{i}}\log |B_{\mathbf{i}}|
\notag\\
&=
-h(\mathbf{X})
+
H(\mathbf{X}_Q)
+
\mathbb{E}[\log \Delta(\mathbf{X})],
\end{align}
or equivalently,
\begin{equation}
H(\mathbf{X}_Q)
=
h(\mathbf{X})
-
\mathbb{E}[\log \Delta(\mathbf{X})]
+
D(f_{\mathbf{X}}\|g).
\label{eq:exact_identity}
\end{equation}
Approximation 1 corresponds to neglecting the nonnegative correction term $D(f_{\mathbf{X}}\|g)$:
\begin{equation}
H(\mathbf{X}_Q)
\approx
h(\mathbf{X})
-
\mathbb{E}[\log \Delta(\mathbf{X})].
\label{eq:approx1_clean}
\end{equation}

\cite{kostina_data_2017} shows rigorously that the correction term $D(f_{\mathbf{X}}\|g)$ becomes large if $f_{\mathbf{X}}$ varies noticeably within each quantization cell, and vanishes if the function is uniform across each cell. Here, we can understand this at a basic level by noting that $g(\mathbf{x})$ is equal to the mean value of $f_{\mathbf{X}}$ within a given bin $B_{\mathbf{i}}$. By the multivariate mean value theorem, $g(\mathbf{x}) = f_\mathbf{X}(\mathbf{\tilde x_{\mathbf{i}}})$ for some $\mathbf{\tilde x}_{\mathbf{i}} \in B_{\mathbf{i}}$. Taylor expanding $f_{\mathbf{X}}(\mathbf{x})$ within cell $B_{\mathbf{i}}$ to first order around $\mathbf{\tilde x}_{\mathbf{i}}$ we see:
\begin{equation}
f_{\mathbf{X}}(\mathbf{x}) \approx g (\mathbf{x})+ \nabla f_{\mathbf{X}}(\mathbf{\tilde x}_{\mathbf{i}}) \cdot (\mathbf{x}- \mathbf{\tilde x}_{\mathbf{i}}).
\end{equation}
Since $D(f_{\mathbf{X}}\|g) \approx 0$ when $f_{\mathbf{X}} \approx g$, a cell can contribute significantly to the error if the size of the cell and the gradient of $f_{\mathbf{X}}$ are simultaneously large, since $(\mathbf{x}-\mathbf{\tilde x}_{\mathbf{i}})$ is bounded by the size of the cell. However, when the cell diameter is small in regions where the density varies rapidly, or if the cell diameter is large where the density varies only slowly, the discrepancy between $f_{\mathbf{X}}$ and its cell average $g(\mathbf{x})$ remains small.

Finally, we can note that if each component is quantized independently, then
\begin{equation}
\mathbb{E}[\log \Delta(\mathbf{X})]
=
\sum_{j=1}^{d}\mathbb{E}_{X_j}[\log \Delta(X_j)],
\label{eq:separable}
\end{equation}
so that Approximation 1 takes the form
\begin{equation}
\tilde{H}(\mathbf{X}_Q)
\triangleq
h(\mathbf{X})
-
\sum_{j=1}^{d}\mathbb{E}_{X_j}[\log \Delta(X_j)].
\label{eq:multi_var_diff_disc}
\end{equation}

\subsubsection*{Approximation 2 -- Smoothing the bin size function and extending its domain}\label{app:FPN_app2}

In the multivariate case, with $\Delta_s(x) = \frac{|x|}{\sqrt{2}}\cdot 2^{1-p}$ from Eq.~\eqref{eq:step_approx}, we can define
\begin{equation}
    \Delta_s (\mathbf{x}) = \prod^d_{j=1} \Delta_s(x_j).
\end{equation}
Again, $\mathbb{U}^d = [-2^{e_{max} + 1} + 2^{e_{max} - p}, \,\, 2^{e_{max} + 1} - 2^{e_{max} - p}]^d$, where $e_{max} = 2^{E-1}$. Assuming that the tails of $f_\mathbf{X}$ are vanishingly small,
\begin{equation}
    \int_{\mathbb{R}^d \setminus \mathbb{U}^d} f_\mathbf{X}(\mathbf{x}) \, d\mathbf{x} \approx 0,
\end{equation}
we extend the domain from $\mathbb{U}^d$ to $\mathbb{R}^d$. As noted in the main text, this extension requires that the omitted tail integral $\int_{\mathbb{R}^d\setminus\mathbb{U}^d} f_\mathbf{X}(\mathbf{x})|\log[f_\mathbf{X}(\mathbf{x})\Delta_s(\mathbf{x})]|\,d\mathbf{x}$ is near zero, not merely that the tail mass is small. For the Gaussian and Student's $t$ distributions used in this paper, the tail decay dominates the linear growth of $\Delta_s$, so this tail integral is negligible when the granular region $\mathbb{U}$ is sufficiently large (i.e., $E$ is not too small). With this we can say from Eq.~\eqref{eq:multi_var_diff_disc} $\tilde H(\mathbf{X}_Q) \approx \tilde H_s(\mathbf{X}_Q)$, where
\begin{equation}
    \tilde H_s(\mathbf{X}_Q) = -\int_{\mathbb{R}^d}  f_\mathbf{X}(\mathbf{x})
    \log\left[f_\mathbf{X}(\mathbf{x})
    \prod_{j=1}^d \Delta_s(x_j)\right] d\mathbf{x}.
\end{equation}

Define the smooth-bin approximation restricted to $\mathbb U^d$ as
\begin{equation}
\label{eq:Ht_s_U}
\tilde H_{s,\mathbb U}(\mathbf X_{fp})
\;\triangleq\;
-\int_{\mathbb U^d} f_{\mathbf X}(\mathbf x)\,\log\left[f_{\mathbf X}(\mathbf x)\prod_{j=1}^d \Delta_s(x_j)\right]d\mathbf x.
\end{equation}
Theorem~\ref{thm:steps} bounds the smoothing error by the $d/2$ term plus the explicit contribution from the bins adjacent to zero.

Define the special-bin set
\begin{equation}
\label{eq:special_bins}
\begin{aligned}
\mathcal B_0 &\triangleq \left\{x\in \mathbb U:\ |x|<2^{e_{\min}}(1+2^{-p})\right\}
\end{aligned}
\end{equation}
where $\mathcal B_0$ is the union of the two bins adjacent to $0$. When $x \in \mathcal B_{0}$, $\Delta(x) = \Delta_0 = 2^{e_{\min}}(1+2^{-p})$. Define
\begin{equation}
\label{eq:special_error}
\varepsilon_{0}(\mathbf X)
\triangleq
\sum_{j=1}^{d}
\int_{\mathcal B_{0}}
f_{X_j}(x)
\left|
\log\bracket{\frac{\Delta_s(x)}{\Delta_0}}
\right|dx.
\end{equation}

\begin{theorem}[Error bound for smoothing the bin-size function]\label{thm:steps}
Let $\mathbf{X}\sim f$ be a $d$-dimensional random vector with the probability density $f$ supported on $\mathbb U^d$, and let $X_j$ denote its $j$th component. Let $\tilde H(\mathbf X_{fp})$ be the approximation from App.~\ref{app:FPN_app1} using the true midpoint bin-size function $\Delta$, and let $\tilde H_{s,\mathbb U}(\mathbf X_{fp})$ be the corresponding approximation on $\mathbb U^d$ using $\Delta_s$. Then
\begin{equation}
\abs{\tilde H(\mathbf X_{fp}) - \tilde H_{s,\mathbb U}(\mathbf X_{fp})} \leq \frac{d}{2} + \varepsilon_{0}(\mathbf X),
\end{equation}
where $\varepsilon_{0}(\mathbf X)$ is finite and defined in Eq.~\eqref{eq:special_error}.
\end{theorem}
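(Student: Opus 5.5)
The plan is to reduce the theorem to a pointwise, one-dimensional estimate on the ratio $\Delta_s(x)/\Delta(x)$. Both $\tilde H(\mathbf X_{fp})$ and $\tilde H_{s,\mathbb U}(\mathbf X_{fp})$ are integrals over the same domain $\mathbb U^d$ against the same density $f$, and they contain the same $-\int f\log f$ term. That term cancels in the difference. By the product structure $\log\Delta(\mathbf x)=\sum_j\log\Delta(x_j)$ and $\log\Delta_s(\mathbf x)=\sum_j\log\Delta_s(x_j)$ (the separability used in Eq.~\eqref{eq:separable}), and after marginalizing each term, this gives
\begin{equation*}
\tilde H(\mathbf X_{fp}) - \tilde H_{s,\mathbb U}(\mathbf X_{fp}) = \sum_{j=1}^d \int_{\mathbb U} f_{X_j}(x)\,\log\bracket{\frac{\Delta_s(x)}{\Delta(x)}}\,dx .
\end{equation*}
By the triangle inequality, it then suffices to show two things. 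First, $\abs{\log[\Delta_s(x)/\Delta(x)]}\le 1/2$ for every $x\in\mathbb U\setminus\mathcal B_0$. Second, on $\mathcal B_0$, where $\Delta=\Delta_0$, the contribution is exactly the $j$th summand of $\varepsilon_0(\mathbf X)$. Summing over $j$, the first part contributes at most $\frac12\int f_{X_j}\le \frac12$ per component, which gives $d/2+\varepsilon_0(\mathbf X)$.

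The core step is a case check of the ratio $r(x)\triangleq\Delta_s(x)/\Delta(x)$ over the bin types listed in the definition of $\Delta$ in App.~\ref{app:FPN_app1}. By symmetry I take $x>0$, and I use the representable values from Corollary~\ref{cor:gauss_fp_ent}.

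\emph{Interior bins of exponent block $e$.} Here $\Delta=2^{e-p+1}$ and $x\in[2^e+2^{e-p},\,2^{e+1}-2^{e-p}]$. This gives $r=|x|/(\sqrt2\,2^e)\in[2^{-1/2},2^{1/2})$, so $\abs{\log r}\le 1/2$.

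\emph{Exponent-boundary bins around $u=2^e$ with $e>e_{\min}$.} The neighbours are $2^e-2^{e-p}$ and $2^e+2^{e-p+1}$, so $\Delta=\tfrac32 2^{e-p}$ and $x\in[2^e-2^{e-p-1},\,2^e+2^{e-p})$. This gives $r=\tfrac{2\sqrt2}{3}\,x2^{-e}\in[0.94(1-2^{-p-1}),\,0.95(1+2^{-p})]$, which lies well inside $[2^{-1/2},2^{1/2}]$ for every $p\ge1$.

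\emph{Outer clipping bins.} Here $\Delta=u_K-u_{K-1}=2^{e_{\max}-p+1}$, and $x$ is at most $2^{e_{\max}+1}-2^{e_{\max}-p}$. So $r$ again lies in $[2^{-1/2},2^{1/2})$.

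\emph{Bins adjacent to zero.} These are exactly $\mathcal B_0$, and they are left to $\varepsilon_0$.

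I expect the main obstacle to be this bookkeeping. The midpoint bins straddle exponent boundaries, so the naive blockwise bound $x\in[2^e,2^{e+1})$ does not directly apply. Each boundary case therefore has to be checked by writing its endpoints explicitly in terms of $2^{e-p}$, and one has to confirm that the slightly different widths (for example the factor $3/2$) keep $r$ inside $[2^{-1/2},2^{1/2}]$ uniformly in $p$. I would also check the degenerate small-$p$ case $p=1$ separately.

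Finally, I would show that $\varepsilon_0(\mathbf X)$ is finite. Since $f$ is differentiable on the compact set $\mathbb U^d$, each marginal $f_{X_j}$ is bounded. The integrand on $\mathcal B_0$ is then bounded by $\sup f_{X_j}\cdot\abs{\log(|x|2^{1-p}/(\sqrt2\Delta_0))}$. The only singularity is $\log|x|$ at $x=0$, which is integrable on a bounded interval, so each summand is finite and $\varepsilon_0(\mathbf X)<\infty$. This completes the plan.
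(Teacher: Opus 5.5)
Your plan matches the paper's proof: the $-\int f\log f$ terms cancel, the difference reduces by separability to $\sum_j\int_{\mathbb U} f_{X_j}\log[\Delta_s/\Delta]$, and a three-way case check (interior, exponent-boundary, outer clipping) gives $2^{-1/2}\le \Delta_s/\Delta\le 2^{1/2}$ off $\mathcal B_0$. The zero-adjacent bins go into $\varepsilon_0(\mathbf X)$, and finiteness follows from a bounded marginal times the locally integrable $\log|x|$. Indexing the boundary bins by $2^e$ rather than $2^{e+1}$ makes no difference.

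There is one place where your deferred $p=1$ check would fail as written: the outer clipping bin.
\begin{itemize}
\item At $p=1$ the widths disagree. You have $u_K-u_{K-1}=2^{e_{\max}-1}$, not $2^{e_{\max}-p+1}=2^{e_{\max}}$.
\item Neither value keeps the ratio in range on the bin $x\in[3\cdot 2^{e_{\max}-2},\,3\cdot 2^{e_{\max}-1}]$. With $2^{e_{\max}-1}$ the ratio reaches $3/\sqrt2$; with $2^{e_{\max}}$ it drops to $3/(4\sqrt2)$.
\item The bound holds only with the actual cell length $b_K-a_K=3\cdot 2^{e_{\max}-2}$, which gives $r\in[2^{-1/2},2^{1/2}]$ exactly.
\end{itemize}
That cell length is what the paper's proof uses. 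It is also the quantity Approximation~1 requires, because the identity $H(\mathbf X_Q)=h(\mathbf X)-\mathbb E[\log\Delta(\mathbf X)]+D(f_{\mathbf X}\|g)$ needs $\Delta=|B_{\mathbf i}|$. The literal clipping-bin formula $u_K-u_{K-1}$ in the definition of $\Delta$ agrees with the cell length only for $p\ge 2$.

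For $p\ge2$ your argument is fine as written. At $p=1$ the exponent-boundary ratio actually reaches $2^{-1/2}$ rather than lying ``well inside'' the interval. That is harmless, since the bound is non-strict.
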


\begin{proof}

There are 3 types of bins for which we will bound the error: bins on the interior of an exponent block, bins on the outer edges, and bins on the boundary between exponent blocks (regions for which a single exponent value applies). The bound can be shown by considering the ratio between bin sizes, case by case.

\paragraph{Interior Bins}
For interior bins strictly within each exponent block $e$, by Eq.~\eqref{eq:true_bins},
\begin{equation}
    \label{eq:real_bins}
    \Delta(x_j) = 2^{e-(p-1)} \quad \text{for } 2^e \leq |x_j| < 2^{e+1},
\end{equation}
and within this range $\Delta_s(x_j) = \frac{|x_j|}{\sqrt{2}} \cdot 2^{1-p}$ satisfies
\begin{equation}
    \begin{aligned}
    &\frac{2^e}{\sqrt{2}}2^{1-p} \leq \Delta_s(x_j) \leq \frac{2^{e+1}}{\sqrt{2}}2^{1-p}\\
    \implies &\frac{1}{\sqrt{2}} \leq \frac{\Delta_s(x_j)}{\Delta(x_j)} < \sqrt{2}.
    \end{aligned}
\end{equation}

\paragraph{Outer clipping bins}

For the outer clipping bins, by symmetry it suffices to analyze $[\frac{u_K + u_{K-1}}{2}, 2^{e_{\max}+1}-2^{e_{\max}-p}]$ which is the bin on positive side of the real line. When $p \geq 2$, we know that $u_K = 2^{e_{max}+1}(1 - 2^{-p})$, and $u_{K-1} = 2^{e_{max}+1}(1 - 2^{1-p})$, so the outer clipping bin size is \begin{equation}
\Delta(x_j) = 2^{e_{max} + 1} - 2^{e_{max} - p}- \frac{u_K + u_{K-1}}{2} = 2^{e_{max}}\paren{2^{1-p}} = 2^{e_{max}-p + 1}
\end{equation}
Now, we have
\begin{equation}
\begin{aligned}
  &\frac{(u_K + u_{K-1})}{2\sqrt{2}}2^{1-p} \leq \Delta_s(x_j) \leq \frac{2^{e_{max} + 1} - 2^{e_{max} - p}}{\sqrt{2}}2^{1-p}\\
  \implies &\frac{ 2^{e_{max}}(1 - 2^{-1-p} - 2^{-p})}{2^{e_{max} - p}\sqrt{2}}2^{1-p} \leq \frac{\Delta_s(x_j)}{\Delta(x_j)} \leq \frac{2^{e_{max}}(1 - 2^{-1 - p})}{2^{e_{max} - p}\sqrt{2}}2^{1-p}\\
  \implies &\frac{ 2(1 - 2^{-1-p} - 2^{-p})}{\sqrt{2}} \leq \frac{\Delta_s(x_j)}{\Delta(x_j)} \leq \frac{2(1- 2^{-1-p})}{\sqrt{2}}\\
  \implies &\sqrt{2}(1 - 2^{-1-p} - 2^{-p}) \leq \frac{\Delta_s(x_j)}{\Delta(x_j)} \leq \sqrt{2}(1- 2^{-1-p}).
\end{aligned}
\end{equation}
When $p = 1$, $u_K = 2^{e_{max}}$ and $u_{K-1} = 2^{e_{max}-1}$. The true bin width is 
\begin{equation}
\Delta(x_j) = 2^{e_{max}+1} - 2^{e_{max}-1} - \frac{1}{2}(2^{e_{max}} + 2^{e_{max}-1}) = 3\cdot2^{e_{max}-2}.
\end{equation}
Using the same bounding technique, we have
\begin{equation}
\begin{aligned}
  &\frac{2^{e_{max}} + 2^{e_{max}-1}}{2\sqrt{2}} \leq \Delta_s(x_j) \leq \frac{2^{e_{max}+1} - 2^{e_{max}-1}}{\sqrt{2}}\\
  \implies &\frac{3\cdot 2^{e_{max}-2}}{3\cdot2^{e_{max}-2}\sqrt{2}} \leq \frac{\Delta_s(x_j)}{\Delta(x_j)} \leq \frac{3\cdot 2^{e_{max}-1}}{3\cdot2^{e_{max}-2}\sqrt{2}}\\
  \implies &\frac{ 1}{\sqrt{2}} \leq \frac{\Delta_s(x_j)}{\Delta(x_j)} \leq \frac{2}{\sqrt{2}}.
\end{aligned}
\end{equation}
For the outer clipping bins, we see that for all $p\geq 1$, $\frac{1}{\sqrt{2}} \leq \frac{\Delta_s(x_j)}{\Delta(x_j)} \leq \sqrt{2}$.

\paragraph{Exponent boundary bins}

At exponent boundaries, let the last representable value of the $e$-th exponent block be $u_A = 2^{e+1} - 2^{e-(p-1)}$. The first representable value of the $e+1$-th exponent block be $u_B = 2^{e+1}$, and the second representable value of the $e+1$-th exponent block be $u_C = 2^{e+1} + 2^{e+1 - (p-1)}$. The left boundary of the midpoint-quantization bin for $u_B$ is $m_L = \frac{u_A + u_B}{2} = 2^{e+1}(1 - 2^{-p-1})$ while the right boundary is $m_R = \frac{u_B + u_C}{2} = 2^{e+1}(1+2^{-p})$. When $x_j \in [m_L, m_R)$, the bin width is
\begin{equation}\label{eq:exp_boundary}
\Delta(x_j) = m_R - m_L = 3 \cdot 2^{e-p},
\end{equation}
and the ratio $\Delta_s(x_j)/\Delta(x_j)$ is bounded by
\begin{equation}
    \begin{aligned}
    &\frac{2^{e+1}(1 - 2^{-p-1})}{\sqrt{2}}2^{1-p} \leq \Delta_s(x_j) \leq \frac{2^{e+1}(1+2^{-p})}{\sqrt{2}}2^{1-p}\\
    \implies & \frac{2^{e+1}(1 - 2^{-p-1})}{3 \cdot 2^{e-p}\sqrt{2}}2^{1-p} \leq \frac{\Delta_s(x_j)}{\Delta(x_j)} \leq \frac{2^{e+1}(1+2^{-p})}{3 \cdot 2^{e-p}\sqrt{2}}2^{1-p}\\
    \implies & \frac{2\sqrt{2}(1 - 2^{-p-1})}{3} \leq \frac{\Delta_s(x_j)}{\Delta(x_j)}  \leq \frac{2\sqrt{2}(1+2^{-p})}{3}.
    \end{aligned}
\end{equation}
When $p = 1$ the lower bound is equal to $1/\sqrt{2}$ and the upper bound is equal to $\sqrt{2}$. The lower bound is monotonically increasing in $p$ while the upper bound is monotonically decreasing in $p$, and both converge to $2\sqrt{2}/3$ as $p \rightarrow \infty$, which is between $1/\sqrt{2}$ and $\sqrt{2}$. This means both bounds lie within $[1/\sqrt{2},\, \sqrt{2}]$, so for within exponent block bins and exponent boundary bins,
\begin{equation}
    -\frac{1}{2} \leq \log\left[\frac{\Delta_s(x_j)}{\Delta(x_j)}\right] < \frac{1}{2}.
\end{equation}

\paragraph{Bound on entropy difference from bin ratios}

Hence, for every $x_j\in \mathbb U\setminus \mathcal B_{0}$,
\begin{equation}
\abs{\log\bracket{\frac{\Delta_s(x_j)}{\Delta(x_j)}}} \leq \frac{1}{2}.
\end{equation}

The remaining points are exactly the special bins collected in $\mathcal B_{0}$, where we do not claim a uniform pointwise bound and instead keep their contribution explicitly:
\begin{equation}
\begin{aligned}
&\left|
\tilde H(\mathbf X_{fp}) - \tilde H_{s,\mathbb U}(\mathbf X_{fp}) \right| = \Bigg | -\int_{\mathbb U^d} f_{\mathbf X}(\mathbf x)\,\log\left[f_{\mathbf X}(\mathbf x)\prod_{j=1}^d \Delta(x_j)\right]d\mathbf x  \\ & \qquad \qquad \qquad \qquad \qquad
+ \int_{\mathbb U^d} f_{\mathbf X}(\mathbf x)\,\log\left[f_{\mathbf X}(\mathbf x)\prod_{j=1}^d \Delta_s(x_j)\right]d\mathbf x \Bigg |\\
&\left|
\tilde H(\mathbf X_{fp}) - \tilde H_{s,\mathbb U}(\mathbf X_{fp}) \right| = \abs{\sum_{j=1}^{d} \Esub{\log\bracket{\frac{\Delta_s(x)}{\Delta(x)}}}{X_j}} \\
&\leq\sum_{j=1}^{d} \int_{\mathbb U\setminus \mathcal B_{0}} f_{X_j}(x) \abs{\log\bracket{\frac{\Delta_s(x)}{\Delta(x)}}}dx + \varepsilon_{0}(\mathbf X) \leq\frac d2 + \varepsilon_{0}(\mathbf X).
\end{aligned}
\end{equation}
The point $x=0$ has measure zero, and $\varepsilon_{0}(\mathbf X)$ is finite since each marginal density is continuous on $\mathbb U$ while $|\log|x||$ is locally integrable near $0$.
\end{proof}

Not only is $\varepsilon_{0}(\mathbf X)$ finite, as shown in Theorem~\ref{thm:steps}, but for the probability distributions that this paper studies, Corollary~\ref{cor:eps_sp_bound} shows that $\varepsilon_{0}(\mathbf X)$ is negligibly small.

\begin{corollary}[Bound on $\varepsilon_{0}$ for densities bounded near zero]
\label{cor:eps_sp_bound}
Under the hypotheses of Theorem~\ref{thm:steps}, with
$\mathcal{B}_{0} = \mathcal{B}_0$ (the two zero-crossing bins),
define
\begin{equation}
  a \;\triangleq\; 2^{e_{\min}}\bigl(1 + 2^{-p}\bigr),
  \qquad
  C_0 \;\triangleq\; 2\,a\left(p - \frac{1}{2} + \log e\right).
\end{equation}
Then
\begin{equation}
\label{eq:eps_sp_general}
  \varepsilon_{0}(\mathbf{X})
  \;\le\;
  C_0
  \sum_{j=1}^{d}
  \sup_{x \in \mathcal{B}_0} f_{X_j}(x).
\end{equation}
 
If every marginal density $f_{X_j}$ attains its mode at $0$,
then
\begin{equation}
\label{eq:eps_sp_mode}
  \varepsilon_{0}(\mathbf{X})
  \;\le\;
  C_0
  \sum_{j=1}^{d} f_{X_j}(0).
\end{equation}
 
\medskip
\noindent\textbf{(i)\; Zero-mean Gaussian marginals.}\;
If $X_j \sim \mathcal{N}(0,\sigma_j^2)$, then $f_{X_j}(0) = 1/(\sigma_j\sqrt{2\pi})$, and
\begin{equation}
\label{eq:eps_sp_gauss}
  \varepsilon_{0}(\mathbf{X})
  \;\le\;
  \frac{C_0}{\sqrt{2\pi}}
  \sum_{j=1}^{d} \frac{1}{\sigma_j}.
\end{equation}
 
\noindent\textbf{(ii)\; Scaled Student's $t$ marginals.}\;
If $X_j \sim t_\nu(0,\sigma_j)$ with density
$f_{X_j}(x)
= \frac{\Gamma\bigl(\frac{\nu+1}{2}\bigr)}
       {\sigma_j\sqrt{\nu\pi}\;\Gamma\bigl(\frac{\nu}{2}\bigr)}
  \bigl(1 + x^2/(\nu\sigma_j^2)\bigr)^{-(\nu+1)/2}$,
then $f_{X_j}(0) = \Gamma\bigl(\frac{\nu+1}{2}\bigr)
  \big/\bigl(\sigma_j\sqrt{\nu\pi}\;\Gamma\bigl(\frac{\nu}{2}\bigr)\bigr)$, and
\begin{equation}
\label{eq:eps_sp_student}
  \varepsilon_{0}(\mathbf{X})
  \;\leq\;
  \frac{C_0\;\Gamma\bigl(\frac{\nu+1}{2}\bigr)}
       {\sqrt{\nu\pi}\;\Gamma\bigl(\frac{\nu}{2}\bigr)}
  \sum_{j=1}^{d} \frac{1}{\sigma_j}.
\end{equation}
\end{corollary}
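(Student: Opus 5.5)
The plan is to bound each one-dimensional integral in the definition of $\varepsilon_0(\mathbf X)$ in Eq.~\eqref{eq:special_error} by pulling out the supremum of the density, and then compute the remaining integral of the logarithm in closed form. For each $j$,
\begin{equation*}
\int_{\mathcal B_0} f_{X_j}(x)\abs{\log\bracket{\frac{\Delta_s(x)}{\Delta_0}}}dx \;\le\; \Bigl(\sup_{x\in\mathcal B_0} f_{X_j}(x)\Bigr)\int_{-a}^{a}\abs{\log\bracket{\frac{\Delta_s(x)}{\Delta_0}}}dx .
\end{equation*}
This uses $\mathcal B_0 \subseteq (-a,a)$ with $a = \Delta_0 = 2^{e_{\min}}(1+2^{-p})$, and that the integrand is nonnegative. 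The whole task therefore reduces to showing that the last integral is at most $C_0$.

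For that integral, I will use $\Delta_s(x) = |x|2^{1-p}/\sqrt2$ to write the ratio as
\begin{equation*}
\frac{\Delta_s(x)}{\Delta_0} = \frac{|x|}{a}\,2^{1/2-p}.
\end{equation*}
By evenness and the substitution $t = x/a$, the integral becomes
\begin{equation*}
2a\int_0^1 \abs{\log t - \paren{p-\tfrac12}}dt .
\end{equation*}
Since $\log t \le 0$ on $(0,1]$ and $p - \tfrac12 > 0$ for $p \ge 1$, the two pieces have the same sign. Hence the integral equals
\begin{equation*}
2a\Bigl(\int_0^1 (-\log t)\,dt + p - \tfrac12\Bigr).
\end{equation*}
Using $\int_0^1 -\log_2 t\,dt = 1/\ln 2 = \log e$, this is exactly $2a(p - \tfrac12 + \log e) = C_0$. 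Summing over $j$ gives Eq.~\eqref{eq:eps_sp_general}.

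For the mode case, if $f_{X_j}$ is maximized at $0$ then $\sup_{\mathcal B_0} f_{X_j} \le f_{X_j}(0)$, which gives Eq.~\eqref{eq:eps_sp_mode}. Both the zero-mean Gaussian and the scaled Student's $t$ densities are symmetric and unimodal at $0$. Substituting their values at the origin, $1/(\sigma_j\sqrt{2\pi})$ and $\Gamma(\tfrac{\nu+1}{2})/(\sigma_j\sqrt{\nu\pi}\,\Gamma(\tfrac\nu2))$, yields Eqs.~\eqref{eq:eps_sp_gauss} and~\eqref{eq:eps_sp_student}.

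The only step requiring care is the sign argument that removes the absolute value. It relies on $p \ge 1$; if one did not want to rely on it, the triangle inequality $|\log t - c| \le -\log t + |c|$ gives the same bound anyway. Beyond that, the main point to verify is that $\mathcal B_0$ really is contained in $[-a,a]$ and that $\Delta_0 = a$, both of which follow directly from the definitions in Eq.~\eqref{eq:special_bins}.
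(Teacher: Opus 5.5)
Your proposal is correct and follows essentially the same argument as the paper. You bound each term by $\sup_{\mathcal B_0} f_{X_j}$ times $\int_{-a}^{a}|\log[\Delta_s(x)/\Delta_0]|\,dx$, use $\Delta_s(x)/\Delta_0=(|x|/a)\,2^{1/2-p}<1$ to resolve the absolute value as $(p-1/2)+\log(a/|x|)$, evaluate $2\int_0^a$ to get exactly $C_0$, and then specialize to the mode-at-zero, Gaussian, and Student's $t$ cases.
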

 
\begin{proof}
We see that $a$ is the value of the first interior midpoint for a positive bin adjacent to zero $(0, a]$. The true midpoint bin width is,
$\Delta(x_j) = a$,
while $\Delta_s(x_j) = x_j \cdot 2^{1-p}/\sqrt{2}$.
Since on this interval $x_j 2^{1-p}/\sqrt{2} \leq x_j \leq a$, the ratio $\Delta_s(x_j)/\Delta(x_j) < 1$
on the entire bin, so
\begin{equation}
  \abs{\log\bracket{\Delta_s(x)/\Delta(x)}}
  = \log\bracket{\frac{a \cdot 2^{p-1/2}}{x}}.
\end{equation}
By symmetry the negative bin $[-a, 0)$ gives the same integrand in $|x|$.
Therefore, for each marginal,
\begin{align}
  \int_{\mathcal{B}_0} f_{X_j}(x)
    \bigl|\log\bracket{\Delta_s(x)/\Delta(x)}\bigr|\,dx
  &\leq
  \sup_{t \in \mathcal{B}_0} f_{X_j}(t)
  \;\cdot\;
  2\int_0^{a} \log\bracket{\frac{a\cdot 2^{p-1/2}}{x}}dx.
\end{align}
The integral evaluates as
\begin{align}
  2\int_0^{a} \log\bracket{\frac{a\cdot 2^{p-1/2}}{x}}dx
  &= 2\int_0^{a}\Bigl[\bigl(p - \frac{1}{2}\bigr)
     + \log[a/x]\Bigr]dx \notag\\
  &= 2a\left(p - \frac{1}{2}\right)
     + \frac{2}{\ln 2}\int_0^{a}\ln[a/x]\,dx \notag\\
  &= 2a\left(p - \frac{1}{2}\right) + \frac{2a}{\ln 2}
  \;=\; 2a\left(p - \frac{1}{2} + \log e\right)
  \;=\; C_0,
\end{align}
where we used $\int_0^{a}\ln[a/x]\,dx = a$.
Summing over $j=1,\dots,d$ gives Eq.~\eqref{eq:eps_sp_general}.

When each $f_{X_j}$ has its mode at zero,
$\sup_{x\in\mathcal{B}_0} f_{X_j}(x) = f_{X_j}(0)$,
yielding Eq.~\eqref{eq:eps_sp_mode}.
Substituting $f_{X_j}(0)$ for $\mathcal{N}(0,\sigma_j^2)$ and
$t_\nu(0,\sigma_j)$ gives Eqs.~\eqref{eq:eps_sp_gauss}
and~\eqref{eq:eps_sp_student}.
\end{proof}
 
For candidate floating-point formats and identical marginals
($\sigma_j = \sigma$ for all $j$), Figure~\ref{fig:eps0_C0_vs_p}
shows $C_0$ and the per-dimension bound $\varepsilon_{0}/d$
as functions of the precision $p$ for several exponent widths $E$,
with $\sigma = 1$ and $d = 1$.
Gaussian marginals are shown as circles and
Student's $t_5$ marginals as squares; the two nearly overlap because
the ratio
$\Gamma\bigl(\frac{\nu+1}{2}\bigr)\sqrt{2/\nu}\big/\Gamma\bigl(\frac{\nu}{2}\bigr)$
is $O(1)$ for all $\nu \ge 1$.
By $E = 4$, $\varepsilon_0 \approx 0.16$ for $p = 24$. 

\begin{figure}[htbp]
  \centering
  \begin{subfigure}[t]{0.48\textwidth}
    \centering
    \includegraphics[width=\textwidth]{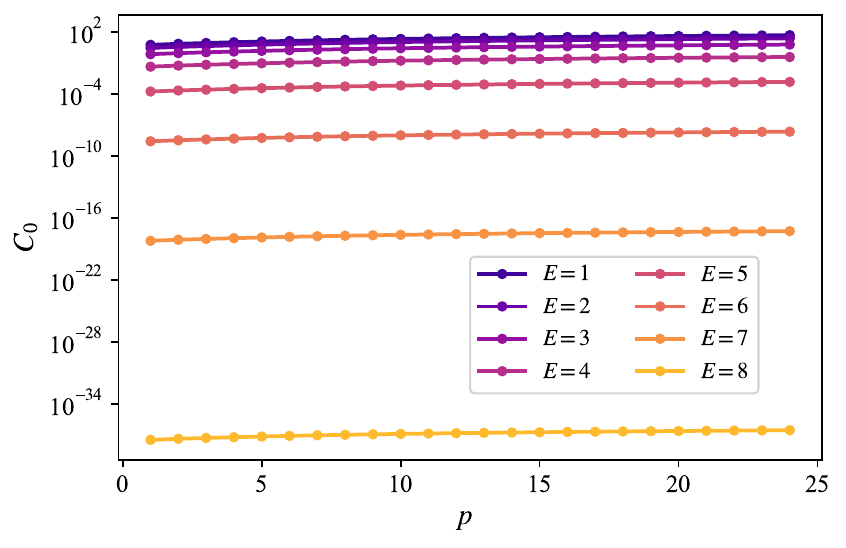}
    \caption{$C_0$ vs.\ precision $p$.}
    \label{fig:C0_vs_p}
  \end{subfigure}
  \hfill
  \begin{subfigure}[t]{0.48\textwidth}
    \centering
    \includegraphics[width=\textwidth]{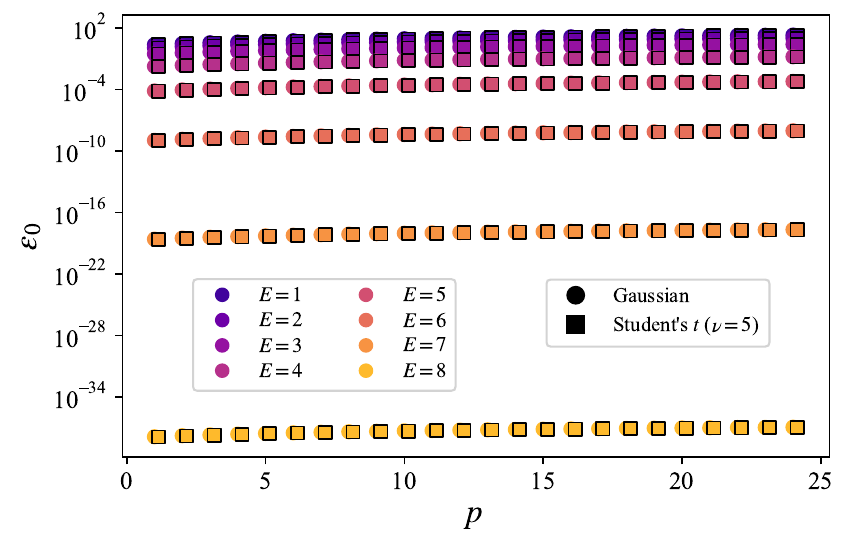}
    \caption{$\varepsilon_{0}$ bound vs.\ precision $p$.}
    \label{fig:eps0_vs_p}
  \end{subfigure}
  \caption{Numerical scale of $C_0$ and $\varepsilon_{0}$ (Corollary~\ref{cor:eps_sp_bound}) for $d=1$, $\sigma=1$.
  Each curve corresponds to a different exponent width $E$.
  Circles denote Gaussian marginals; squares denote Student's $t_5$ marginals.}
  \label{fig:eps0_C0_vs_p}
\end{figure}

\FloatBarrier
\subsubsection*{Approximation 3 -- $\mathbb{E}[\log[|x|/\sqrt{2}]]$}\label{app:FPN_app3}

\begin{figure}[htbp]
  \centering
    \includegraphics[trim={1cm 0.0cm 16cm 0.85cm}, clip,width=0.5\textwidth]{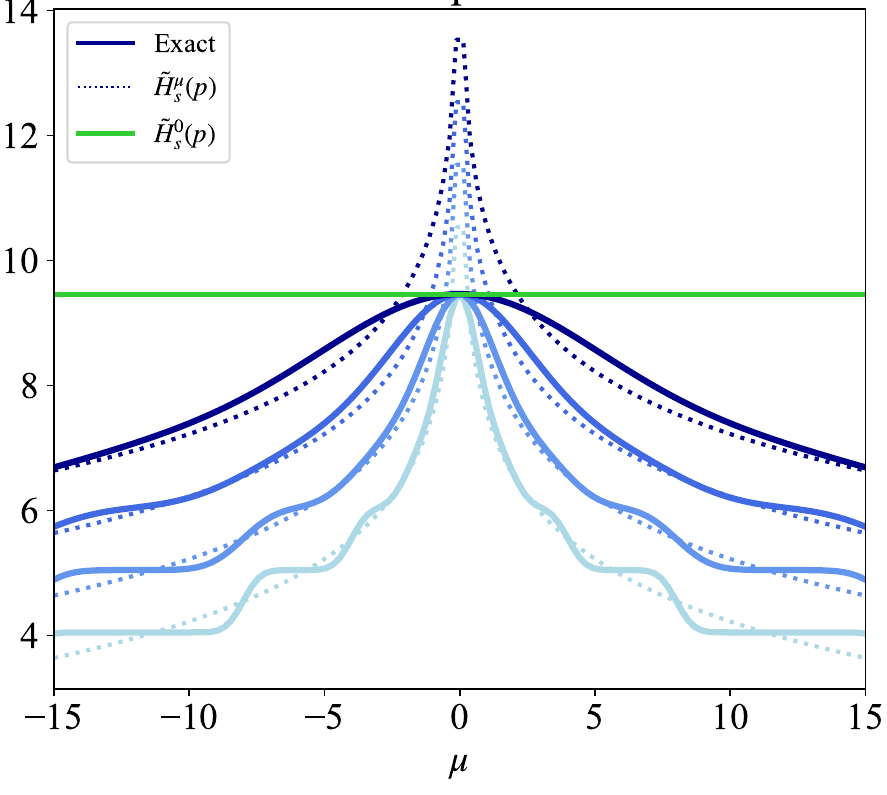}
\caption{\textit{Simulating approximation 3: } The dependence of the entropy of a normally distributed floating-point number on its standard deviation $\sigma$ and mean $\mu$. Dashed lines show Approximation 3 from Eq.~\eqref{eq:three_approxes}, while solid lines show the exact entropy from Corollary~\ref{cor:gauss_fp_ent}.}\label{fig:fp_off_zero}
\end{figure}

We can show that Eq.~\eqref{eq:first_two_approx} approximates Eq.~\eqref{eq:three_approxes} in two cases. First, for a Gaussian distribution where its mean is significantly greater than its standard deviation, $\mu \gg \sigma$. Second, for a distribution $X \sim f_{X}(x,\epsilon)$ that approximates a Dirac delta function as $\epsilon \rightarrow 0$. 

\textit{Non-zero mean Gaussian: }First, for a non-zero mean Gaussian, let $X = \mu + \epsilon$ where $\epsilon \sim \mathcal{N}(0, \sigma^2)$. Without loss of generality, assume $\mu > 0$ (the final result depends only on $|\mu|$). $\log\bracket{\abs{X/\sqrt{2}}}$ becomes\begin{align}
    \log\left|\frac{X}{\sqrt{2}}\right| = \log\left|\frac{\mu + \epsilon}{\sqrt{2}}\right| 
    = \log\left|\frac{\mu}{\sqrt{2}}\left(1 + \frac{\epsilon}{\mu}\right)\right| 
    = \log\left|\frac{\mu}{\sqrt{2}}\right| + \log\left|1 + \frac{\epsilon}{\mu}\right|.\label{eq:split_x}
\end{align}
To control the second term, we can split according to the events
\begin{equation}
    A=\left\{\abs{\frac{\epsilon}{\mu}} \leq \frac{1}{2}\right\} = \left\{X \in \bracket{\frac{\mu}{2}, \frac{3\mu}{2}}\right\}, \qquad A^c= \left\{|\frac{\epsilon}{\mu}| > \frac{1}{2}\right\}.
\end{equation}
Events in $A^c$ are exponentially unlikely when $\mu\gg\sigma$, since $P(A^c)$ is equal to
\begin{align}
    \mathbb P(A^c)& =\int_{-\infty}^{-\mu/2} \frac{\exp[-t^2/2\sigma^2]}{\sigma \sqrt{2\pi}} dt + \int_{\mu/2}^{\infty} \frac{\exp[-t^2/2\sigma^2]}{\sigma\sqrt{2\pi}}\,dt  = 2 \int_{\mu/2}^{\infty} \frac{\exp[-t^2/2\sigma^2]}{\sigma\sqrt{2\pi}}\,dt. \\
    &= 2\int_{\mu/2}^{\infty} \frac{\sigma}{t} \cdot
       \frac{t}{\sigma^2} \cdot
       \frac{\exp[-t^2/2\sigma^2]}{\sqrt{2\pi}}\,dt \leq \frac{4\sigma}{\mu} \int_{\mu/2}^{\infty}
       \frac{t}{\sigma^2} \cdot
       \frac{\exp[-t^2/2\sigma^2]}{\sqrt{2\pi}}\,dt \\
    & = \frac{4\sigma}{\mu\sqrt{2\pi}}\left[-\exp\left[-\frac{t^2}{2\sigma^2}
       \right]\right]_{\mu/2}^{\infty} = \frac{4\sigma}{\mu\sqrt{2\pi}}\,e^{-\mu^2/(8\sigma^2)}\label{eq:p_ac_bound}
\end{align}
meaning that for $\mu \gg \sigma$, $P(A^c)$ is small.

By the law of total expectation,
\begin{equation}\label{eq:total_expect}
    \mathbb{E}\left[\log\left|1 + \frac{\epsilon}{\mu}\right|\right] = \mathbb{E}\left[\log\left|1 + \frac{\epsilon}{\mu}\right|\,\middle|\, A\right]P(A) + \mathbb{E}\left[\log\left|1 + \frac{\epsilon}{\mu}\right|\,\middle|\, A^c\right]P(A^c).
\end{equation}
For the first term, on $A$ we have $1+\epsilon/\mu > 0$, so the Mercator series converges:
\begin{equation}
    \log\left[1 + \frac{\epsilon}{\mu}\right] = \frac{1}{\ln[2]} \left( \frac{\epsilon}{\mu} - \frac{\epsilon^2}{2\mu^2} + O \left (\frac{\epsilon^3}{\mu^3}\right ) \right).
\end{equation}
Since $1+\epsilon/\mu > 0$ on $A$, the absolute value can be dropped. On $A = \{|\epsilon/\mu| \le 1/2\}$, the Mercator remainder satisfies $|\ln[1+x] - x + x^2/2| \le \frac{2}{3}|x|^3$ for $|x| \le 1/2$. Therefore,
\begin{equation}
    \E{\left|\ln\left(1+\frac{\epsilon}{\mu}\right) - \frac{\epsilon}{\mu} + \frac{\epsilon^2}{2\mu^2}\right| \Bigg|A} \le \frac{2}{3}\frac{\E{|\epsilon|^3|A}}{|\mu|^3} = O\left(\frac{\sigma^3}{\mu^3}\right).
\end{equation}
$\mathbb{E}[\epsilon \mid A] = 0$ since the distribution of $\epsilon$ is symmetric over the interval $\bracket{-\mu/2, \, \mu/2}$ and by the monotone convergence theorem as $\mu/\sigma \to \infty$,  $\mathbb{E}[\epsilon^2 \mid A] \to \sigma^2$. Thus,
\begin{equation}
    \mathbb{E}\left[\log\left|1 + \frac{\epsilon}{\mu}\right|\,\middle|\, A\right]P(A) = \frac{P(A)}{\ln[2]} \left( - \frac{\sigma^2}{2\mu^2} + O\left(\frac{\sigma^3}{|\mu|^3}\right) \right) \approx -\frac{\sigma^2}{2\mu^2 \ln[2]}.
\end{equation}

For the term pertaining to $A^c$, Lemma~\ref{lem:Ac-bound} shows that $\mathbb{E}\left[\left|\log\left|1 + \epsilon/\mu\right|\right|\,\middle|\, A^c\right]\mathbb{P}(A^c) \to 0$ as $\mu/\sigma \to \infty$, which means taking the expectation of Eq.~\eqref{eq:split_x} gives
\begin{equation}
    \mathbb{E} \left[\log\left|\frac{X}{\sqrt{2}}\right| \right] \approx \log\left|\frac{\mu}{\sqrt{2}}\right| - \frac{\sigma^2}{2\mu^2 \ln[2]}.
\end{equation}
Since $\frac{\sigma^2}{2\mu^2 \ln[2]} \rightarrow 0$ as $\mu/\sigma \rightarrow \infty$, we see that the convolution integral can be approximated for $\mu \gg \sigma$ as 
\begin{align}
    G_{\mathcal{N}(\mu,\sigma^2)}(\mu)=\int^{\infty}_{-\infty} \frac{\exp(-\frac{(x-\mu)^2}{2\sigma^2})}{\sigma \sqrt{2\pi}} \, \log \bigg[\frac{|x|}{\sqrt{2}} \bigg] dx \approx \log\left[\frac{|\mu|}{\sqrt{2}}\right].
\end{align}
as shown in Fig.~\ref{fig:fp_off_zero}, which shows that the value of the convolution integral is insensitive to the value of $\sigma$ when $\mu \gg \sigma$. 

\noindent This means for a non-zero mean Gaussian random variable, for $\mu \gg \sigma$, $\Ht_s(\Xfp) \approx \Ht_s^\mu(X_{fp})$, with
\begin{equation}
    \label{eq:int_entropy}
    \Ht^\mu_s(\Xfp) \triangleq \frac{1}{2}\log[2\pi e \sigma^2] + (p-1) - \log\bracket{|\mu| / \sqrt{2}}.
\end{equation}

\begin{lemma}[Bound on the Mercator Series Remainder]\label{lem:mercator-remainder}
For all $|x| \le \frac{1}{2}$,
\begin{equation}
  \abs{\ln[1+x] - x + \frac{x^2}{2}} \leq \frac{2}{3}\,|x|^3.
\end{equation}
\end{lemma}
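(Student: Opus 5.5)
The plan is to express the Mercator remainder $R(x) \triangleq \ln[1+x] - x + x^2/2$ in integral form and then bound that integral directly. Since $\ln[1+x]$ is smooth on $(-1,\infty)$ and $R(0)=0$, I will differentiate to get
\begin{equation*}
R'(x) = \frac{1}{1+x} - 1 + x = \frac{x^2}{1+x},
\end{equation*}
and hence $R(x) = \int_0^x \frac{t^2}{1+t}\,dt$ for every $x \in [-\tfrac12,\tfrac12]$. This representation gives the sign of $R$ immediately, and the required bound reduces to controlling the factor $1/(1+t)$ on the segment between $0$ and $x$.

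I will treat the two signs of $x$ separately. For $0 \le x \le \tfrac12$, I use $1+t \ge 1$ to obtain $0 \le R(x) \le \int_0^x t^2\,dt = x^3/3$, which is already below $\tfrac23 x^3$.

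For $-\tfrac12 \le x < 0$, I write $R(x) = -\int_x^0 \frac{t^2}{1+t}\,dt$. On $[x,0]$ we have $1+t \ge 1+x \ge \tfrac12$, so
\begin{equation*}
|R(x)| \le 2\int_x^0 t^2\,dt = \frac{2}{3}|x|^3.
\end{equation*}
Combining the two cases gives $|R(x)| \le \tfrac23 |x|^3$ on the whole interval, which is exactly Lemma~\ref{lem:mercator-remainder}.

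The only step that needs care is the negative side, because that is where the constant $\tfrac23$ is decided. The worst case is the endpoint $x = -\tfrac12$, where $1+t$ reaches its minimum $\tfrac12$ and supplies the factor $2$. The cruder estimate $1+t \ge 1+x$, applied uniformly over the segment, still yields exactly $\tfrac23$, so no sharper analysis is needed. A Lagrange-form remainder with $|f'''(\xi)| = 2/(1+\xi)^3 \le 16$ would give only $\tfrac{16}{6}|x|^3$, which is too weak; this is why I use the integral representation of $R'$ rather than a third-order Taylor bound.
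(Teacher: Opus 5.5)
Your proof is correct, and it takes a different route from the paper. The paper works directly with the Mercator series. It writes $R_2(x)=\sum_{k\ge 3}(-1)^{k+1}x^k/k$, applies the triangle inequality together with $1/k\le \frac13$, and sums the geometric tail to get the symmetric estimate $|R_2(x)|\le \frac{|x|^3}{3(1-|x|)}$. It then finishes with $1-|x|\ge \frac12$.

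You instead compute $R'(x)=x^2/(1+x)$ and use the integral representation $R(x)=\int_0^x \frac{t^2}{1+t}\,dt$. This needs only the fundamental theorem of calculus, not convergence of the series. It also gives more information: $R$ has the sign of $x$, the argument extends to all $x>-1$, and on $[0,\frac12]$ you get the sharper constant $\frac13$.

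On the negative side, if you kept the factor $\frac{1}{1+x}$ instead of replacing it by $2$, you would get $\frac{|x|^3}{3(1-|x|)}$ for $-1<x<0$. That is exactly the paper's intermediate bound, so the two arguments have the same strength where the constant $\frac23$ is decided. The paper's version buys a single estimate with no case split; yours buys sign information and a wider domain.

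Your remark on the Lagrange form is also right. Bounding $|f'''(\xi)|$ by its worst value $16$ yields only $\frac{8}{3}|x|^3$. Some exact form of the remainder, either the integral form you use or the series the paper uses, is what recovers $\frac23$.
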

 
\begin{proof}
The Mercator series gives
\begin{equation}
  \ln\bracket{1+x} \;=\; \sum_{k=1}^{\infty} \frac{(-1)^{k+1}\, x^k}{k},
  \qquad |x| < 1,
\end{equation}
so the second-order remainder is
\begin{equation}
  R_2(x)
  \triangleq \ln\bracket{1+x} - x + \frac{x^2}{2}
  \;=\; \sum_{k=3}^{\infty} \frac{(-1)^{k+1}\, x^k}{k}.
\end{equation}
By the triangle inequality and using the inequality $1/k \le 1/3$ for every $k \ge 3$,
\begin{equation}
  |R_2(x)|
  \;\le\; \sum_{k=3}^{\infty} \frac{|x|^k}{k}
  \;\le\; \frac{1}{3} \sum_{k=3}^{\infty} |x|^k
  \;=\; \frac{1}{3} \cdot \frac{|x|^3}{1 - |x|},
\end{equation}
where the last equality is the geometric series
$\sum_{k=3}^{\infty} |x|^k = |x|^3 \sum_{j=0}^{\infty} |x|^j = |x|^3/(1-|x|)$,
valid for $|x| < 1$.
Since $|x| \le \frac{1}{2}$ implies $1 - |x| \ge \frac{1}{2}$, we obtain
\begin{equation}
  |R_2(x)|
  \;\le\; \frac{|x|^3}{3 \paren{\frac{1}{2}}}
  \;=\; \frac{2}{3}\,|x|^3.
\end{equation}
\end{proof}

\begin{lemma}\label{lem:Ac-bound}
Let $\epsilon \sim N(0,\sigma^2)$, $\mu \neq 0$, and $A = \{|\epsilon/\mu| \le \frac{1}{2}\}$.
Then,
\begin{equation}
  \E{\abs{\ln\abs{1+\frac{\epsilon}{\mu}}}\,\, \Bigr|\,A^c}P(A^c)
  \leq \,\paren{\frac{C_1\sigma}{|\mu|} + \frac{C_2|\mu|}{\sigma}}\, e^{-|\mu|^2/8\sigma^2}
\end{equation}
for some $C_1, C_2 > 0$.
In particular, the bound decays faster than any polynomial in~$|\mu|/\sigma$.
\end{lemma}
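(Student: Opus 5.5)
Since $\E{\abs{\ln\abs{1+\epsilon/\mu}}\,|\,A^c}P(A^c) = \E{\abs{\ln\abs{1+\epsilon/\mu}}\indic{A^c}}$, I will bound the unconditional integral $\int_{|t|>|\mu|/2} \abs{\ln\abs{1+t/\mu}}\,\varphi_\sigma(t)\,dt$, where $\varphi_\sigma$ is the $\gauss{0}{\sigma^2}$ density. By symmetry of $\varphi_\sigma$ we may take $\mu>0$. I will split $A^c$ into three regions: (I) $t > \mu/2$; (II) $t < -\mu/2$ with $|1+t/\mu| \geq 1/2$, i.e. $t \le -3\mu/2$ or $t\in[-\mu/2... ]$ excluded, so (II) is $t\le -3\mu/2$; and (III) the singular window $t\in(-3\mu/2,-\mu/2)$, where $1+t/\mu\in(-1/2,1/2)$ and the logarithm can blow up at $t=-\mu$.

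On (I) and (II) the argument $|1+t/\mu|$ is at least $1/2$, so I will use $\abs{\ln|1+t/\mu|} \le \ln 2 + |t|/\mu$, which follows from $\ln(1+u)\le u$. This reduces the tail contribution to $\ln 2\, P(|\epsilon|>\mu/2) + \mu^{-1}\int_{|t|>\mu/2}|t|\varphi_\sigma(t)\,dt$. The first term is controlled by Eq.~\eqref{eq:p_ac_bound}, giving $\frac{4\sigma}{\mu\sqrt{2\pi}}e^{-\mu^2/8\sigma^2}$. The second evaluates exactly to $\frac{2\sigma}{\mu\sqrt{2\pi}}e^{-\mu^2/8\sigma^2}$. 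Both are of the $C_1\sigma/|\mu|$ form.

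Region (III) is the main obstacle, because the integrand has a logarithmic singularity at $t=-\mu$. Here I will bound the Gaussian density by its supremum on the window, $\varphi_\sigma(t)\le \frac{1}{\sigma\sqrt{2\pi}}e^{-\mu^2/8\sigma^2}$ for $|t|\ge\mu/2$. Substituting $s=1+t/\mu$ then gives
$\int_{-3\mu/2}^{-\mu/2}\abs{\ln|1+t/\mu|}\,dt = \mu\int_{-1/2}^{1/2}|\ln|s||\,ds = \mu(1+\ln 2)$,
using $\int_0^{1/2}-\ln s\,ds = \tfrac12(1+\ln 2)$. The result is $\frac{(1+\ln 2)\mu}{\sigma\sqrt{2\pi}}e^{-\mu^2/8\sigma^2}$, which is the $C_2|\mu|/\sigma$ term.

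Summing the three regions yields the claimed bound with $C_1 = 6/\sqrt{2\pi}$ (before absorbing $\ln 2$ factors) and $C_2=(1+\ln 2)/\sqrt{2\pi}$. Finally, since $x\,e^{-x^2/8}$ and $x^{-1}e^{-x^2/8}$ with $x=|\mu|/\sigma$ decay faster than any power of $x$, the bound decays faster than any polynomial in $|\mu|/\sigma$.
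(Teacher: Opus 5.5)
Your proposal is correct and follows essentially the same route as the paper. Your regions (I) and (II) together form the paper's $A_1^c$, which is handled by a linear pointwise bound on $|\ln|1+\epsilon/\mu||$ together with the Gaussian tail bound and the truncated first moment. Your region (III) is the paper's $A_2^c$, handled by bounding the density by its value at $|t|=\mu/2$ and using $\int_{-1/2}^{1/2}|\ln|s||\,ds = 1+\ln 2$. The differences are cosmetic: you integrate in $t$ rather than in $\delta = 1+\epsilon/\mu$, and your pointwise bound $\ln 2 + |t|/\mu$ is slightly sharper than the paper's $\ln 2 + 1 + |\epsilon/\mu|$, which only shrinks $C_1$.
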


\begin{proof}
Write $Z = \epsilon/\sigma \sim N(0,1)$, so that $\epsilon/\mu = \sigma Z/\mu$
(taking $\mu>0$ without loss of generality; the case $\mu<0$ is symmetric).
Set $\delta \triangleq 1 + \epsilon/\mu = 1 + \sigma Z/\mu$ and split
\begin{equation}
  A^c = A_1^c \cup A_2^c, \qquad
  A_1^c \triangleq \left\{\abs{\frac{\epsilon}{\mu}}>\frac{1}{2}\right\} \cap \left\{\abs{\delta}\ge \frac{1}{2}\right\},\qquad
  A_2^c \triangleq \left\{\abs{\frac{\epsilon}{\mu}}>\frac{1}{2}\right\} \cap \left\{\abs{\delta} < \frac{1}{2}\right\}.
\end{equation}

\medskip
\noindent\textit{Contribution of~$A_1^c$.}\enspace
For $t \ge \frac{1}{2}$ we have $|\ln t| \le \ln 2 + t$
(since $-\ln t \leq \ln 2$ when $t \geq \frac{1}{2}$,
and $\ln t \le t$ when $t > 0$). Hence on~$A_1^c$,
\begin{equation}
  |\ln{\abs{\delta}}| \;\le\; \ln 2 + |\delta|
  \;\le\; \ln 2 + 1 + |\frac{\epsilon}{\mu}|
\end{equation}
almost surely.
Therefore
\begin{align}
&\abs{\ln|\delta|}\indic{A_1^c} \;\le\; \paren{\ln 2 + 1 + |\frac{\epsilon}{\mu}|}\indic{A_1^c}\\
  \implies &\E{\abs{\ln\bracket{|\delta|}}\indic{A_1^c}}
  \;\le\; \E{\paren{\ln 2 + 1 + |\frac{\epsilon}{\mu}|}\indic{A_1^c}}\\
  \implies &\E{\abs{\ln\bracket{|\delta|}}\indic{A_1^c}}
  \;\le\; (1+\ln 2)\,P(A_1^c) + \E{\abs{\frac{\sigma Z}{\mu}}\indic{A_1^c}}\\
    \implies &\E{\abs{\ln\bracket{|\delta|}}\indic{A_1^c}}
  \;\le\; (1+\ln 2)\,P(A^c) \;+\;
    \frac{\sigma}{\mu}\E{|Z|\,\mathbf{1}_{|Z|>\mu/2\sigma}},
\end{align}
since $A_1^c \subset A^c \implies P(A_1^c) \leq P(A^c)$ and $\indic{A_1^c} \leq \indic{A^c}$ almost surely.
From Eq.~\eqref{eq:p_ac_bound}, the Gaussian tail bound gives
$P(A^c) \leq \frac{4\sigma}{\mu\sqrt{2\pi}}\,e^{-\mu^2/(8\sigma^2)}$,
and the truncated first moment evaluates to
\begin{equation}
  \mathbb{E}\bigl[|Z|\,\mathbf{1}_{|Z|>a}\bigr]
  = \frac{2}{\sqrt{2\pi}}\int_a^\infty ze^{-\frac{z^2}{2}}dz = \frac{2}{\sqrt{2\pi}}\,e^{-a^2/2},
  \qquad a = \frac{\mu}{2\sigma}.
\end{equation}
Since \begin{equation}
    \E{\abs{\ln\abs{\delta}} \,|\, A_1^c}P(A_1^c) = \E{\abs{\ln\abs{\delta}}\indic{A_1^c}} \leq \paren{\ln 2 + \frac{3}{2}}\frac{4\sigma}{\mu\sqrt{2\pi}}\,e^{-\mu^2/(8\sigma^2)},
\end{equation}
the entire $A_1^c$~contribution is at most
$\frac{C_1\sigma}{\mu}\,e^{-\mu^2/(8\sigma^2)}$ for some constant~$C_1$.

\medskip
\noindent\textit{Contribution of~$A_2^c$.}\enspace
On the event $\{|\delta|<\frac{1}{2}\}$ we have
$\epsilon/\mu \in (-\frac{3}{2},-\frac{1}{2})$,
i.e.\ $Z \in (-\frac{3\mu}{2\sigma},-\frac{\mu}{2\sigma})$.
The density of~$\delta$ at a point~$t$ is
$\frac{\mu}{\sigma}f_Z\paren{\frac{\mu}{\sigma}(t-1)}$,
where $f_Z$ is the standard normal density.
For $\abs{t}<\frac{1}{2}$ we have $(t-1)^2 \ge \frac{1}{4}$, giving the
uniform density bound
\begin{equation}
  \frac{\mu}{\sigma}f_Z\paren{\frac{\mu}{\sigma}(t-1)}
  = \frac{\mu}{\sigma\sqrt{2\pi}}e^{-\frac{\mu^2(t-1)^2}{2\sigma^2}}
  \leq \frac{\mu}{\sigma\sqrt{2\pi}}\,e^{-\frac{\mu^2}{8\sigma^2}}.
\end{equation}
Since $\abs{\delta} < \frac{1}{2}$ almost surely implies $\abs{\frac{\epsilon}{\mu}}>\frac{1}{2}$ almost surely, we have $A_2^c = \{\abs{\delta} < \frac{1}{2}\}$ and
\begin{equation}
  \E{\abs{\ln\abs{\delta}}\,\indic{|\delta|<1/2}}
  = \int_{-1/2}^{1/2}\frac{\mu}{\sigma}f_Z\paren{\frac{\mu}{\sigma}(t-1)}\abs{\ln\abs{t}}dt
  \leq
  \frac{\mu}{\sigma\sqrt{2\pi}}\,e^{-\frac{\mu^2}{8\sigma^2}}
  \int_{-1/2}^{1/2} |\ln\abs{t}\,|\,dt.
\end{equation}
The integral evaluates to
$-2\int_0^{1/2}\ln\bracket{t}\,dt = -2\bracket{t\ln t -t}\Bigr|_0^{1/2}
= \ln 2 + 1$,
so the $A_2^c$~contribution is at most
$\frac{C_2\mu}{\sigma}\,e^{-\frac{\mu^2}{8\sigma^2}}$ for some constant $C_2$.

\medskip
\noindent\textit{Combining both contributions.}\enspace
Adding both contributions yields
\begin{equation}
  \E{\abs{\ln\abs{1+\frac{\epsilon}{\mu}}}\,\mathbf{1}_{A^c}}
  \;\le\; \paren{\frac{C_1\sigma}{\mu} + \frac{C_2\mu}{\sigma}} e^{-\frac{\mu^2}{8\sigma^2}}
\end{equation}
which decays faster than any polynomial in~$|\mu|/\sigma$.
\end{proof}

\textit{Approximating delta distribution: } Next, we can show the same result for a function which approximates a delta function as $\varepsilon \rightarrow 0$, provided $\mu \neq 0$. For a random variable $X$ where $f_X(x-\mu; \varepsilon) \rightarrow \delta (x-\mu)$ as $\varepsilon \rightarrow 0$ with $\mu \neq 0$, Eq.~\eqref{eq:G_int_non_gauss} becomes
\begin{equation}
    G_{f_X}(\mu) \triangleq \int^{\infty}_{-\infty} f_X(x-\mu; \epsilon) \log \bigg[\frac{|x|}{\sqrt{2}} \bigg] dx.
\end{equation}
As $\epsilon \rightarrow 0$, for any $\mu \neq 0$ such that $\log[|x| / \sqrt{2}]$ is continuous,
\begin{align*}
    G_{f_X}(\mu) & \approx \int^{\infty}_{-\infty} \delta(x-\mu) \log \bigg[\frac{|x|}{\sqrt{2}} \bigg] dx \\
    & = \log \bigg[\frac{|\mu|}{\sqrt{2}} \bigg].
\end{align*}
Thus for any distribution, which may be non-Gaussian, if it approaches $\delta(\cdot)$ as a parameter $\epsilon$ which defines it approaches 0 and $\mu \neq 0$, in the limit of $\epsilon \rightarrow 0$ the integral term satisfies $G_{f_X}(\mu) \approx \log[|\mu|/\sqrt{2}]$, giving $\Ht_s(X_{fp}) \approx \Ht_s^\mu(X_{fp})$, where
\begin{equation}
    \Ht^\mu_s(\Xfp) \triangleq h(X) + (p-1) - \log\bracket{|\mu| / \sqrt{2}}.
\end{equation}

\FloatBarrier
\pagebreak
\section{Applying approximations to the Gaussian case}\label{app:FPN_app_gauss}

Here the theorems for the floating point entropy of Gaussian random variables are reviewed. See Figure~\ref{fig:fp_exponents} for a directly computed histogram of exponent values. Directly computing the entropy of the histograms returns a discrete entropy of $\sim 2.54$ bits, which is close to the offset factor above the mantissa contribution $p$ in $\Ht^0_s(p) \approx p + 2.46$ bits. Since bin sizes are fairly large with respect to the underlying distribution of exponents, this divergence likely can be explained by quantization noise.

\begin{figure}[htbp]
  \centering
    \includegraphics[width=0.36\linewidth]{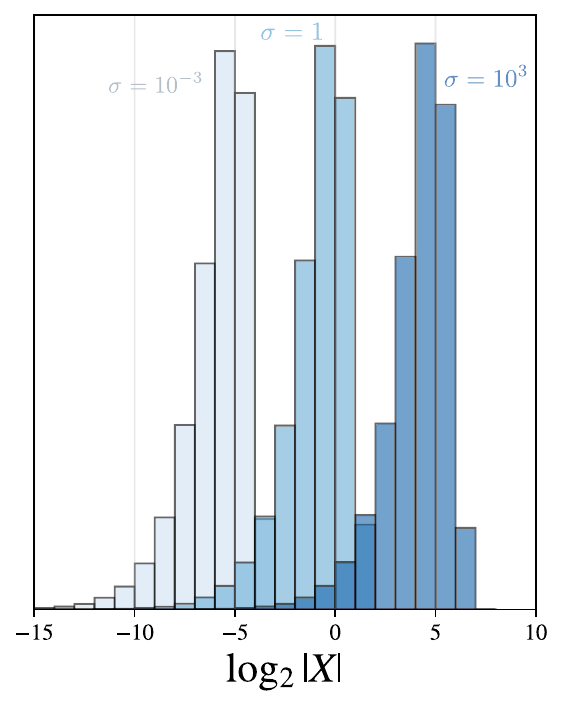}
\caption{\textit{Histogram of exponent values for a normally distributed random variable.} The distribution of exponent states for $X \sim \mathcal{N}(0,\sigma^2)$. From left to right, the distributions plotted have $\sigma = \{10^{-3}, 10^0, 10^3 \}$. All three have a discrete entropy $H(\log[X]) \approx 2.54$ bits. These distributions conform well to observational data in \cite{bordawekar2022efloatentropycodedfloatingpoint, hao_neuzip_2024}.}
    \label{fig:fp_exponents}
\end{figure}

\begin{theorem}[Approximating the entropy of a mean-zero univariate gaussian]
    \label{thm:single_gaussian}
    The discrete entropy of a zero-mean normally distributed continuous random variable $X \sim \mathcal{N}(0, \sigma_x^2)$ as represented on a floating-point number, with the approximations specified by Eq.~\eqref{eq:diff_approx} and~\eqref{eq:step_approx} applied, is a constant specified by the floating point number's precision, $p$.
    \begin{equation}
        \tilde H_s(X_{fp}) = (p-1) + \frac{1}{2} \log[2 \pi e] + 1 + \frac{\gamma_e}{2\ln[2]}
    \end{equation}
    
\end{theorem}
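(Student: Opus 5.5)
The plan is to start from the closed form of Approximation~2, Eq.~\eqref{eq:first_two_approx}, which for $X \sim \gauss{0}{\sigx^2}$ reads
\begin{equation*}
\tilde H_s(X_{fp}) = h(X) + (p-1) - \E{\log\bracket{|X|/\sqrt{2}}}.
\end{equation*}
The proof then reduces to evaluating two standard quantities for a centered Gaussian and checking that the $\sigx$-dependence cancels. First, I will invoke the textbook value $h(X) = \frac{1}{2}\log[2\pi e \sigx^2] = \frac{1}{2}\log[2\pi e] + \log \sigx$. Second, I will split the remaining expectation as $\E{\log|X|} - \frac{1}{2}$, using $\log\sqrt{2} = \frac{1}{2}$, so that $-\E{\log[|X|/\sqrt{2}]} = \frac{1}{2} - \E{\log|X|}$.

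The main computation is $\E{\ln|X|}$. Writing $X = \sigx Z$ with $Z \sim \gauss{0}{1}$ gives $\E{\ln|X|} = \ln\sigx + \E{\ln|Z|}$. For the standard normal term, I will substitute $u = z^2/2$ in $2\int_0^\infty \ln z \, \frac{e^{-z^2/2}}{\sqrt{2\pi}}dz$. This turns it into $\frac{1}{\sqrt{\pi}}\int_0^\infty \frac{1}{2}\paren{\ln 2 + \ln u}\, u^{-1/2}e^{-u}\,du$. The two pieces are then handled separately. The $\ln 2$ piece uses $\Gamma(1/2) = \sqrt{\pi}$. The $\ln u$ piece is $\Gamma'(1/2) = \Gamma(1/2)\psi(1/2)$, with $\psi(1/2) = -\gamma_e - 2\ln 2$. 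Combining these should give $\E{\ln|Z|} = \frac{1}{2}\paren{\ln 2 + \psi(1/2)} = -\frac{1}{2}(\gamma_e + \ln 2)$. Converting to base 2 yields $\E{\log|Z|} = -\frac{\gamma_e}{2\ln 2} - \frac{1}{2}$. I expect this digamma evaluation, and in particular getting the $\ln 2$ bookkeeping right, to be the main obstacle. To guard against a slip, I will cross-check against the known identity $\E{\ln Z^2} = -\gamma_e - \ln 2$ for a $\chi^2_1$ variable.

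Finally, I will assemble the pieces:
\begin{equation*}
\tilde H_s(X_{fp}) = \tfrac{1}{2}\log[2\pi e] + \log\sigx + (p-1) + \tfrac{1}{2} - \log\sigx + \tfrac{\gamma_e}{2\ln 2} + \tfrac{1}{2}.
\end{equation*}
The $\log\sigx$ terms cancel, which is the expected scale invariance; it comes from $\Delta_s$ being proportional to $|x|$. The two halves combine into $+1$, giving exactly $(p-1) + \frac{1}{2}\log[2\pi e] + 1 + \frac{\gamma_e}{2\ln 2}$. This is also consistent with Eq.~\eqref{eq:single_mean_0}.

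It should be noted explicitly that the theorem concerns the approximations applied, with the domain extended to $\mathbb{R}$. No error terms such as $\varepsilon_0$ or the tail integrals enter, because those are accounted for separately by Theorem~\ref{thm:steps} and Corollary~\ref{cor:eps_sp_bound}. Integrability of $\log|x|$ against the Gaussian density near zero, which the derivation requires, follows from the local integrability of $\log|x|$.
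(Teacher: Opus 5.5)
Your proposal is correct and is essentially the paper's proof: both start from $h(X) - \mathbb{E}[\log \Delta_s(X)]$ integrated over all of $\mathbb{R}$, use $h(X) = \frac{1}{2}\log[2\pi e \sigx^2]$, and evaluate the Gaussian log-moment so that the $\log \sigx$ terms cancel. The only difference is that you derive $-2\int_0^\infty f_X(x)\log[x/\sqrt{2}]\,dx = 1 + \frac{\gamma_e}{2\ln 2} - \log\sigx$ via the substitution $u = z^2/2$ and $\psi(1/2) = -\gamma_e - 2\ln 2$, whereas the paper states this value without derivation; your computation of it is correct.
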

\begin{proof}
    Let $X \sim \mathcal{N}(0, \sigma_x^2)$.
    \begin{equation}
    \begin{aligned}
    \label{eq:1d-0m-fp}
        \Ht_{s}(\Xfp) &\triangleq h(X) - \E{\log[\Delta_s(x)]} \\
        &= \frac{1}{2}\log[2\pi e \sigma^2_x] - \int^{\infty}_{-\infty} \frac{\exp(-\frac{x^2}{2\sigma_x^2})}{\sigma_x \sqrt{2\pi}} \log [\Delta_s(x)] \, dx  \\
        &= \frac{1}{2}\log[2\pi e \sigma^2_x] - \int^{\infty}_{-\infty} \frac{\exp(-\frac{x^2}{2\sigma_x^2})}{\sigma_x \sqrt{2\pi}} (\log[|x|/\sqrt{2}] - (p - 1)) \, dx \\
        & = \frac{1}{2}\log[2\pi e \sigma^2_x] + (p - 1) - 2\int^{\infty}_0 \frac{\exp(-\frac{x^2}{2\sigma_x^2})}{\sigma_x \sqrt{2\pi}} \log[x/\sqrt{2}] \, dx\\
        & = \frac{1}{2}\log[2\pi e \sigma^2_x] + (p-1) + \bigg(1 + \frac{\gamma_e}{2\ln[2]} -\frac{1}{2} \log[\sigma^2_x] \bigg)\\
        & = (p-1) + \frac{1}{2} \log[2 \pi e] + 1 + \frac{\gamma_e}{2\ln[2]}\\
        & = p + \frac{1}{2} \log[2 \pi e] + \frac{\gamma_e}{2\ln[2]}.
    \end{aligned}
\end{equation}
\end{proof}

\begin{theorem}
    \label{thm:double_gaussian}
    Let $X \sim \mathcal{N}(0,\sigma_x^2)$, $\Xi \sim \mathcal{N}(0,\sigma_\xi^2)$, $w \in \mathbb{R}$, and $Y = wX + \Xi$. The approximate joint discrete entropy of $(\Xfp, \Yfp)$ is,
\begin{equation*}
    \Ht_s(\Xfp, \Yfp) = 2 \Ht^0_s(p)-\frac{1}{2} \log \bigg [ 1+\frac{\sigma_x^2 w^2}{\sigma_\xi^2} \bigg].
\end{equation*}
\end{theorem}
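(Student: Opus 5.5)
We would work directly from the two-dimensional form of Approximations 1 and 2 in App.~\ref{app:FPN_app2}. With $d=2$ and $\mathbf{X}=(X,Y)$, that form reads
\begin{equation*}
\Ht_s(\Xfp,\Yfp) = h(X,Y) - \E{\log[\Delta_s(X)]} - \E{\log[\Delta_s(Y)]}.
\end{equation*}
This holds because the cell volume factorizes as $\Delta_s(x)\Delta_s(y)$, as in Eq.~\eqref{eq:separable}. The proof then reduces to computing the joint differential entropy and the two marginal log-bin terms separately.

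\textbf{Joint differential entropy.} The pair $(X,Y)$ is jointly Gaussian with covariance matrix having entries $\sigx^2$, $w\sigx^2$, $w\sigx^2$, $w^2\sigx^2+\sigxi^2$. Its determinant is $\sigx^2\sigxi^2$, so
\begin{equation*}
h(X,Y)=\log[2\pi e\,\sigx\sigxi].
\end{equation*}
Equivalently, by the chain rule, $h(X,Y)=h(X)+h(Y|X)=h(X)+h(\Xi)$, since $Y-wX=\Xi$ is independent of $X$.

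\textbf{Marginal log-bin terms.} Each marginal is a zero-mean Gaussian: $X\sim\gauss{0}{\sigx^2}$ and $Y\sim\gauss{0}{\sigma_y^2}$ with $\sigma_y^2=w^2\sigx^2+\sigxi^2$. The computation inside the proof of Theorem~\ref{thm:single_gaussian} gives, for $Z\sim\gauss{0}{\sigma^2}$,
\begin{equation*}
-\E{\log[\Delta_s(Z)]} = (p-1) + 1 + \frac{\gamma_e}{2\ln[2]} - \frac12\log[\sigma^2].
\end{equation*}
Hence each marginal satisfies $-\E{\log\Delta_s(Z)} = \Ht^0_s(p) - \tfrac12\log[2\pi e\sigma^2]$. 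This is the one delicate point, since one must reuse the integral $2\int_0^\infty \phi_\sigma(x)\log[x/\sqrt2]\,dx$ rather than the finished theorem. It rests on the identity $\E{\ln|Z|}=\ln\sigma-\tfrac12(\gamma_e+\ln 2)$, which we take from that earlier proof.

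\textbf{Assembly.} Summing the three terms gives
\begin{equation*}
\Ht_s(\Xfp,\Yfp)=2\Ht^0_s(p) + \log[2\pi e\sigx\sigxi] - \tfrac12\log[2\pi e\sigx^2]-\tfrac12\log[2\pi e\sigma_y^2].
\end{equation*}
The correction simplifies to $\tfrac12\log[\sigxi^2/\sigma_y^2] = -\tfrac12\log[1+\sigx^2w^2/\sigxi^2]$, which is the claimed formula.

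The algebra is routine. The only thing to confirm is that the extension of the domain to $\mathbb{R}^2$ under Approximation 2 is exactly what defines $\Ht_s$, so no further error terms appear.
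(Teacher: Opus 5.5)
Your proposal is correct and follows essentially the same route as the paper. Like the paper, you split $\Ht_s(\Xfp,\Yfp)$ into $h(X,Y)=h(X)+h(\Xi)$ plus the two marginal log-bin expectations, evaluate each of those with the zero-mean Gaussian computation from Theorem~\ref{thm:single_gaussian} (using variance $w^2\sigx^2+\sigxi^2$ for $Y$), and collect terms. Your determinant/chain-rule argument for $h(X,Y)$ and your explicit identity $\E{\ln|Z|}=\ln\sigma-\tfrac12(\gamma_e+\ln 2)$ state outright what the paper's proof uses without derivation.
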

\begin{proof}
\begin{equation}
\label{eq:approx_joint_ent_SNR_start}
    \begin{aligned}
        \Ht_s(\Xfp, \Yfp) &\triangleq h(X, Y) - \iint_{-\infty}^\infty f_{XY}(x,y) ( \log[\Delta_s(x)] + \log[\Delta_s(y)])dxdy \\
        = h& (X)  + h(\Xi) - \iint_{-\infty}^\infty f_{XY}(x,y)  \log[\Delta_s(x)] dxdy - \iint_{-\infty}^\infty f_{XY}(x,y)\log[\Delta_s(y)]dxdy \\
        = \frac{1}{2}&\log[2\pi e \sigma_x^2] + \frac{1}{2}\log[2\pi e \sigma_\xi^2] - \int_{-\infty}^\infty f_{X}(x)  \log[\Delta_s(x)] dx - \int_{-\infty}^\infty f_{Y}(y)\log[\Delta_s(y)]dy.
    \end{aligned}
\end{equation}
where $h(X,Y) = h(X) + h(\Xi)$ is because $X$ is independent from $\Xi$ and $(X,\Xi) \mapsto (X,Y)$ is an invertible linear transformation with Jacobian 1. The expressions follow directly from the differential entropy of a normal distribution \cite{CoverThomas}. In the last line, $f_X(x)$ and $f_Y(y)$ are the marginal distributions over $f_{XY}(x,y)$,
\begin{equation}
    f_X(x) = \int_{-\infty}^\infty f_{XY}(x,y) \, dy\text{,  }f_Y(y) = \int_{-\infty}^\infty f_{XY}(x,y) \, dx,
\end{equation}
where from Eq.~\eqref{eq:initial_dist}
\begin{equation}
    f_X(x) =
\frac{1}{\sqrt{2\pi\sigma_x^2}}
\exp\left(-\frac{x^2}{2\sigma_x^2}\right)\text{,  } f_Y(y) =
\frac{1}{\sqrt{2\pi(\sigma_\xi^2+w^2\sigma_x^2)}}
\exp\left(-\frac{y^2}{2(\sigma_\xi^2+w^2\sigma_x^2)}\right).
\end{equation}
Applying the same steps as in Theorem~\ref{thm:single_gaussian} to evaluate the bin terms, we obtain
\begin{equation}
    \int_{-\infty}^{\infty} f_X(x)\,\log\bracket{\Delta_s(x)}\,dx
\approx
-\left(p+\frac{\gamma_e}{2\ln 2}\right)
+\frac{1}{2}\log \sigma_x^2,
\end{equation}

\begin{equation}
    \int_{-\infty}^{\infty} f_Y(y)\,\log\bracket{\Delta_s (y)}\,dy
    \approx
    -\left(p+\frac{\gamma_e}{2\ln 2}\right)
    +\frac{1}{2}\log\bracket{\sigma_\xi^2+w^2\sigma_x^2}.
\end{equation}
Plugging these back into the original expression, we obtain
\begin{equation}
    \label{eq:approx_joint_ent_SNR_app}
    \begin{aligned}
        \Ht_s(\Xfp, \Yfp) &= \frac{1}{2}\log[2\pi e \sigma_x^2] + \frac{1}{2}\log[2\pi e \sigma_\xi^2] + p+\frac{\gamma_e}{2\ln 2} - \frac{1}{2}\log \sigma_x^2\\
        & + p + \frac{\gamma_e}{2\ln[2]} - \frac{1}{2} \log[\sigma^2_\xi + \sigma_x^2 w^2].
    \end{aligned}
\end{equation}
Using the definition $H^0_s(p)$ from definition Eq.~\eqref{eq:single_mean_0} and Theorem~\ref{thm:single_gaussian}, we can obtain a clean final expression,
\begin{equation}
    \Ht_s(\Xfp, \Yfp) = 2 \Ht^0_s(p)-\frac{1}{2} \log \bigg [ 1+\frac{\sigma_x^2 w^2}{\sigma_\xi^2} \bigg]
\end{equation}
\end{proof}

\section{SGD Fit Quality}

Here we review the fit quality of approximations introduced to obtain analytic expressions for the entropy of model parameters along stochastic gradient descent. Fig.~\ref{fig:assymp_fit} shows the asymptotic fit quality, while Fig.~\ref{fig:non_assymp_fit} shows the nonasymptotic fit quality.

\begin{figure}[htbp]
  \centering
    \includegraphics[width=0.3\linewidth]{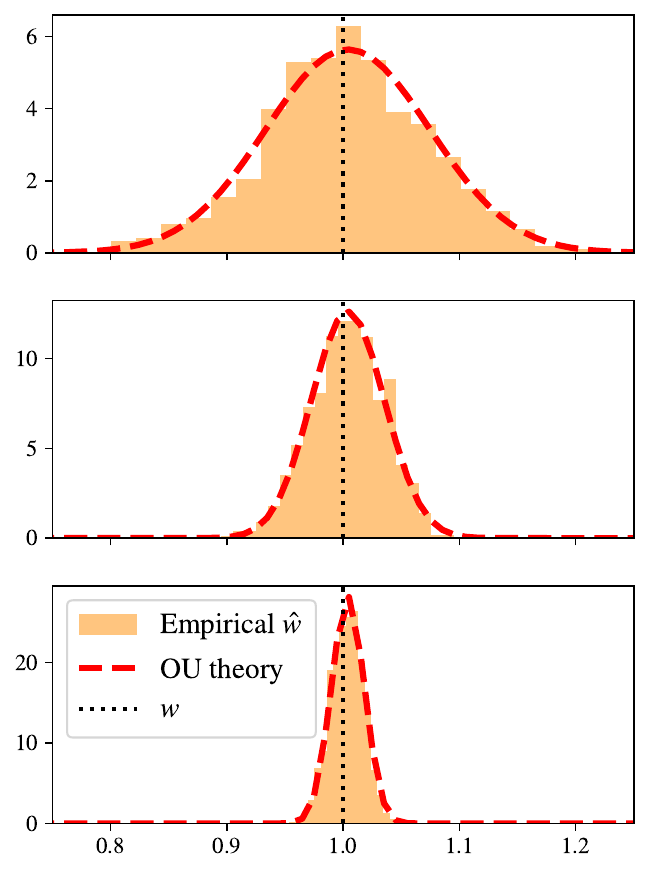}
  \caption{\textit{Fit quality for asymptotic stochastic gradient descent.} Empirical investigation of the validity of the continuous Ornstein-Uhlenbeck process approximation with a simulation of $\hat w$ with $\eta = 0.01$, $\tau = 200$, $\sigma_x^2 = 1$, and $\sigma_\xi^2=1$, for 1000 trials. From top to bottom, the batch sizes are $B = \{1,5,25\}$ while $\tau = 200$, showing the approximation is already effective for these parameters at low $B$. }
  \label{fig:assymp_fit}
\end{figure}

\begin{figure}[htbp]
  \centering
  \begin{subfigure}{0.42\textwidth}
    \centering
    \includegraphics[width=\linewidth]{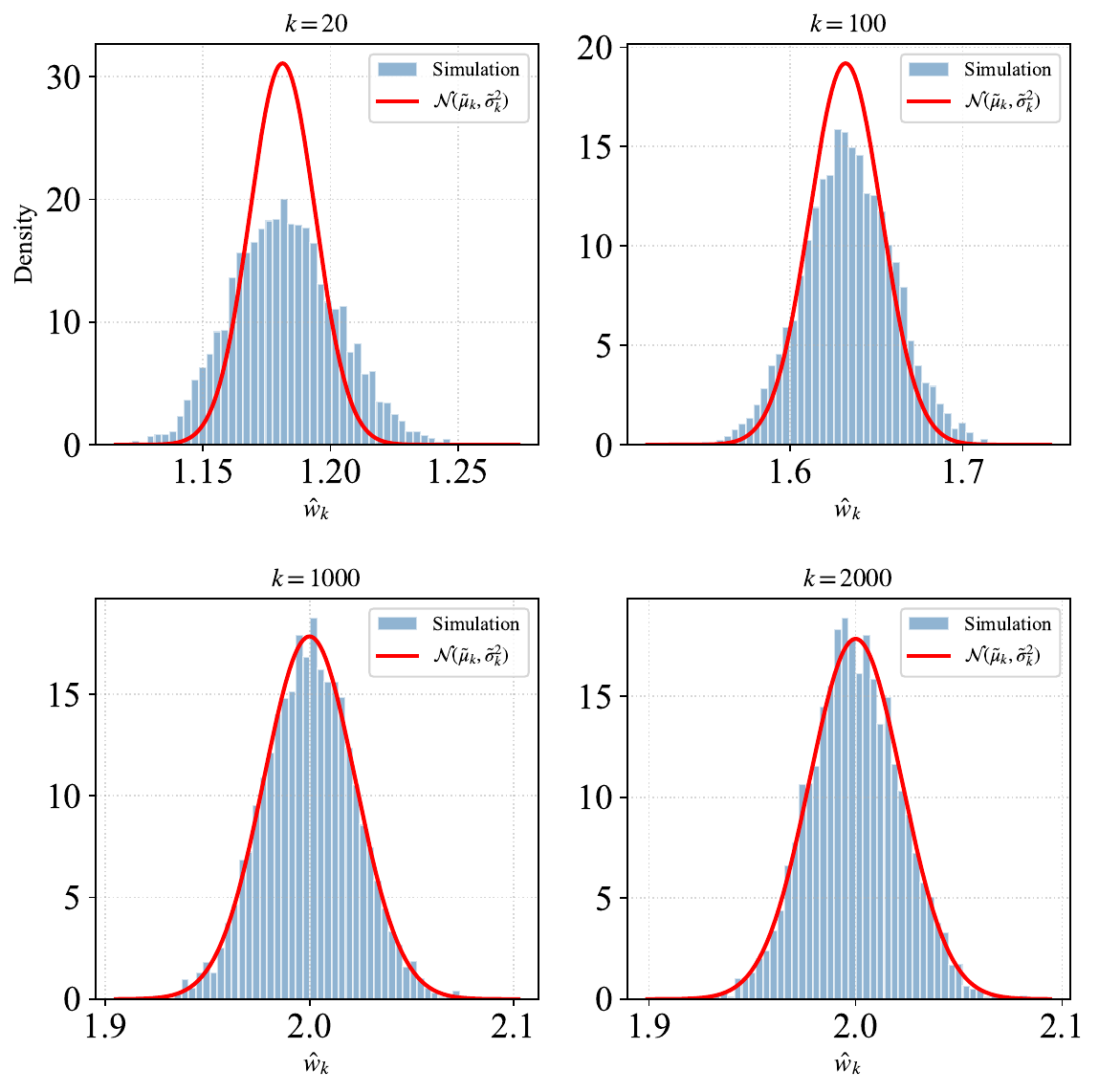}
    \caption{}
    \label{fig:non_assymp_fit}
  \end{subfigure}
  \begin{subfigure}{0.3\textwidth}
    \centering
    \includegraphics[width=\linewidth]{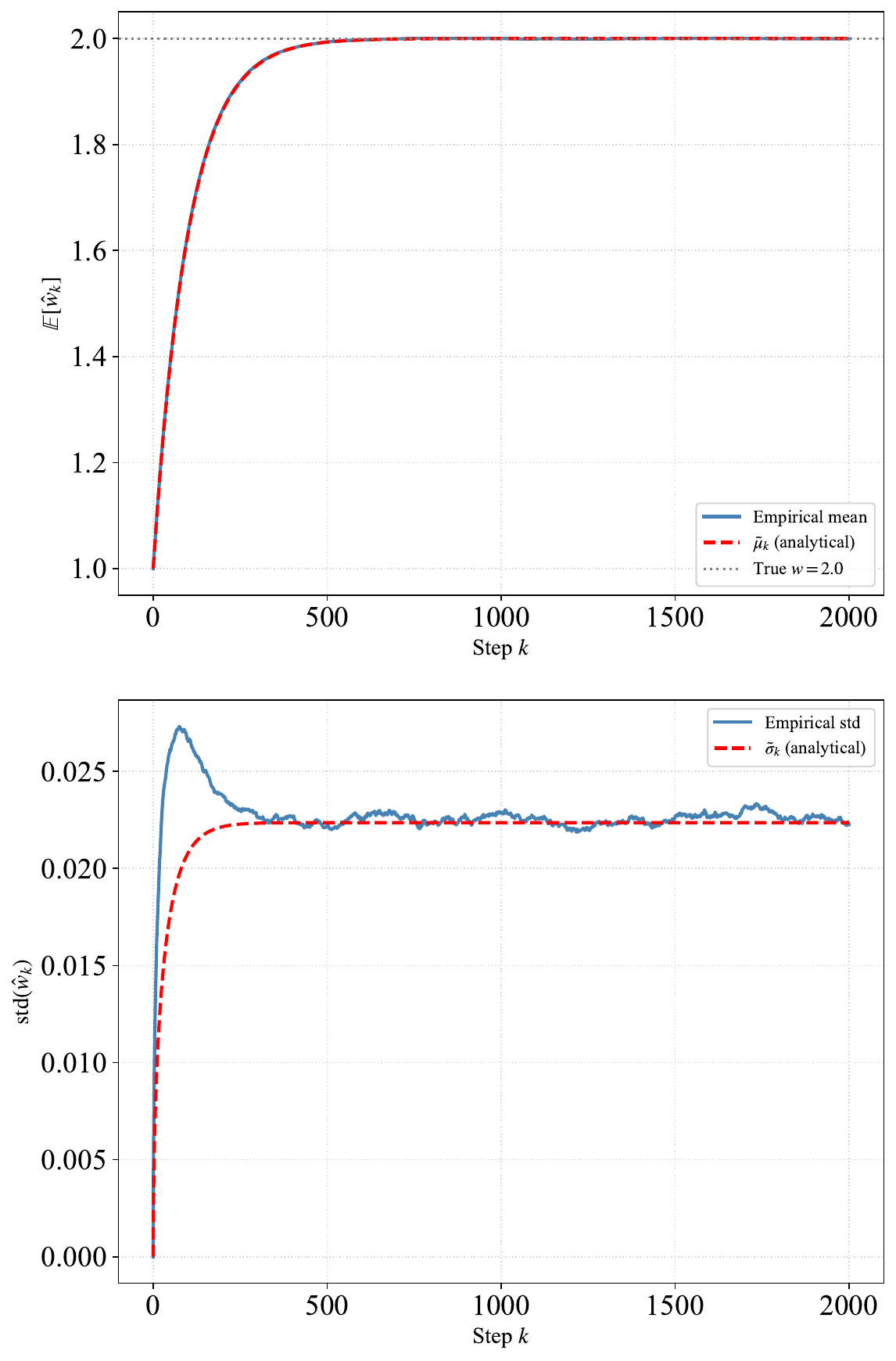}
    \caption{}
    \label{fig:mean_fit}
  \end{subfigure}
  \caption{\textit{SGD dynamics for zero-intercept simple linear regression.} (\ref{fig:non_assymp_fit}) SGD parameter $\hat w$ distribution at selected iterations, for 5000 separate trials of running SGD. $\sigx^2 = 1$, $\sigxi^2 = 1$, $\eta = 0.01$, $B = 10$. (\ref{fig:mean_fit}) Theoretical mean and standard deviation compared to the empirical mean and standard deviation as step number $k$ increases up to final value $k = \tau = 2000$.}
  \label{fig:sgd_w_2x2}
\end{figure}

\FloatBarrier
\pagebreak
\section{Landauer Cost of Averaging and Summing}\label{sec:avg_sum}
An even simpler prediction algorithm than linear regression is the average. Let the input to this algorithm be $X_1,\dots, X_n \iid f$, where $\E{X_i} = \mu$ and $\mathrm{var}\paren{X_i} = \sigma^2$. The differential entropy of the input is $h(X_1, \dots, X_n) = nh(X_1)$.

If $f$ is not Gaussian, for sufficiently large $n$, central limit theorem can be invoked to show that $\frac{1}{n}\sum\limits_{i = 1}^n X_i \Rightarrow \gauss{\mu}{\frac{\sigma^2}{n}}$ as $n \rightarrow \infty$. So the differential entropy of the final distribution is $h(\frac{1}{n}\sum\limits_{i = 1}^n X_i) \approx \frac{1}{2}\log\bracket{2\pi e \frac{\sigma^2}{n}}$. When $f_{X_i}$ is Gaussian, this is true even for small $n$.

For the case when $ \forall i, f_{X_i} \sim \gauss{0}{1}$ and the random variables are encoded in floating-point numbers with precision $p$, the discrete entropy difference is
\begin{align}
    \Delta H = n\Ht^0_{s}(p) - \Ht^0_{s}(p) = (n-1)\Ht^0_{s}(p) \approx (n-1)(p + 2.46) \text{ bits,}\label{eq:landauer_avg}
\end{align}
where $\Ht^0_{s}(p)$ is derived in Section~\ref{sec:fpn}. Notice that Eq.~\eqref{eq:landauer_avg} is dominated by the number of inputs. As stressed in Section~\ref{sec:fpn}, $\Ht^0_{s}(p)$ does not depend on the variance of the random variable. This means that for zero-mean Gaussian random variables, the Landauer cost of averaging $\frac{1}{n}\sum\limits_{i = 1}^n X_i$ and summing $\sum\limits_{i = 1}^n X_i$ is equivalent, since $\sum\limits_{i = 1}^n X_i \sim \gauss{0}{n\sigma^2}$.

\FloatBarrier
\pagebreak
\section{Computing the Probability Distribution of Z}\label{sec:compute_f_z}

\begin{figure}[htbp]
  \centering
  \begin{subfigure}{0.33\textwidth}
    \centering
    \includegraphics[width=\linewidth]{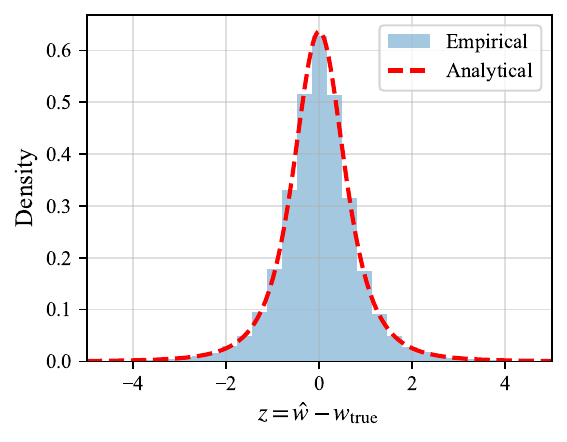}
    \caption{$n = 3$}
    \label{fig:z_fit_3}
  \end{subfigure}\hfill
  \begin{subfigure}{0.33\textwidth}
    \centering
    \includegraphics[width=\linewidth]{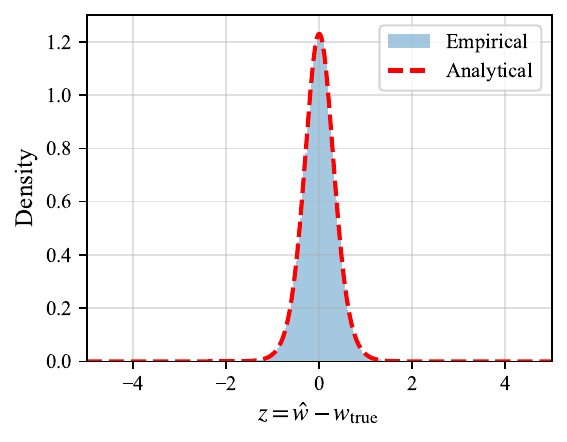}
    \caption{$n=10$}
    \label{fig:z_fit_10}
  \end{subfigure}\hfill
  \begin{subfigure}{0.33\textwidth}
    \centering
    \includegraphics[width=\linewidth]{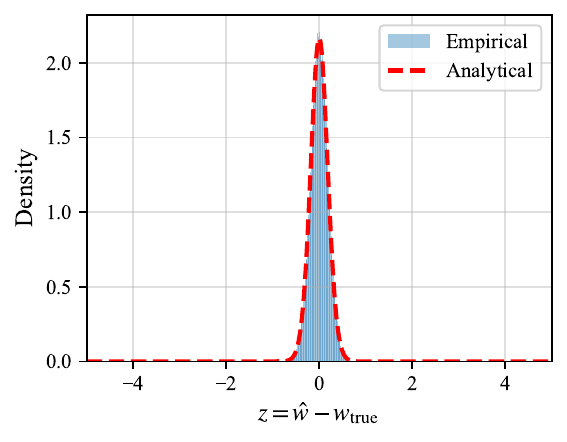}
    \caption{$n=30$}
    \label{fig:z_fit_30}
  \end{subfigure}
  \caption{\textit{The probability density function} $f_Z(z)$, with $\sigx = \sigxi = 1$. As $n$ increases, the distribution becomes more peaked around $z = 0$. The simulation is of 50000 trials.} 
  \label{fig:z_fit}
\end{figure}

\begin{lemma}
    \label{lem:compute_f_z}
    Let $\bX \sim \mathcal{N}(0, \sigx^2 I_n)$ and $\bXi \sim \mathcal{N}(0, \sigxi^2 I_n)$ be independent, with $n \in \mathbb{N}$, and define $Z = \frac{\bX^T\bXi}{\bX^T\bX}$. The probability density function of $Z$ is
\begin{align}
    f_Z(z) = \sqrt{\frac{1}{\pi\paren{\sigx^2}^n\sigxi^2}}\frac{\Gamma\paren{\frac{n+1}{2}}}{\Gamma\paren{\frac{n}{2}}}\paren{\frac{\sigx^2\sigxi^2}{\sigx^2z^2 + \sigxi^2}}^{\frac{n + 1}{2}}
\end{align}
which is the probability density function of a scaled Student's t-distributed random variable with $n$ degrees of freedom and a scale of $\frac{\sigxi}{\sigx\sqrt{n}}$ \cite{JohnsonNormanLloyd1994Cud}. 
\end{lemma}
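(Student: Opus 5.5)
The plan is to condition on the design vector $\bX$, recognise $Z$ as a normal random variable divided by the square root of an independent chi-square variable, and then read off the Student's $t$ density. No earlier results of the paper are needed beyond standard facts about Gaussian and chi-square distributions.

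\emph{Step 1 (conditional law).} Fix $\bX = \xv$ with $\xv \neq 0$; this holds almost surely. Since $\bXi$ is independent of $\bX$, the scalar $\xv^T\bXi$ is a linear functional of a Gaussian vector. Hence $\xv^T\bXi \sim \gauss{0}{\sigxi^2\,\xtx}$, and therefore
\begin{equation*}
Z \mid \{\bX=\xv\} \sim \gauss{0}{\frac{\sigxi^2}{\xtx}}.
\end{equation*}

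\emph{Step 2 (factorisation).} Set $S \triangleq \bX^T\bX/\sigx^2$, so that $S \sim \chi^2_n$. Also set $G \triangleq \bX^T\bXi/(\sigxi\sqrt{\bX^T\bX})$. By Step 1, conditionally on $\bX$ we have $G \sim \gauss{0}{1}$. This conditional law does not depend on $\bX$, so $G$ is independent of $\bX$, and hence of $S$, with $G\sim\gauss{0}{1}$. We can then write
\begin{equation*}
Z = \frac{\sigxi}{\sigx}\,\frac{G}{\sqrt{S}} = \frac{\sigxi}{\sigx\sqrt{n}}\, T, \qquad T \triangleq \frac{G}{\sqrt{S/n}}.
\end{equation*}
By the classical definition, $T$ is Student's $t$ with $n$ degrees of freedom. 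This already gives the "scaled Student's $t$ with scale $\sigxi/(\sigx\sqrt{n})$" claim.

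\emph{Step 3 (explicit density).} To obtain the displayed formula, I would either apply the change of variables $z = sT$ with $s = \sigxi/(\sigx\sqrt n)$ to the known $t_n$ density, or derive it directly. The direct derivation writes $f_Z(z) = \int_0^\infty \sqrt{\frac{u\sigx^2}{2\pi\sigxi^2}}\, e^{-u\sigx^2 z^2/(2\sigxi^2)} f_{\chi^2_n}(u)\,du$, where $u = S$ and the conditional variance of $Z$ is $\sigxi^2/(\sigx^2 u)$. Collecting powers of $u$ gives a Gamma integral $\int_0^\infty u^{(n-1)/2} e^{-u(1+\sigx^2z^2/\sigxi^2)/2}\,du = \Gamma\paren{\frac{n+1}{2}}\, 2^{(n+1)/2}\,(1+\sigx^2z^2/\sigxi^2)^{-(n+1)/2}$. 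Combined with the $\chi^2_n$ normaliser $1/(2^{n/2}\Gamma(n/2))$, this yields
\begin{equation*}
f_Z(z) = \frac{\sigx}{\sqrt{\pi}\,\sigxi}\,\frac{\Gamma\paren{\frac{n+1}{2}}}{\Gamma\paren{\frac{n}{2}}}\paren{\frac{\sigxi^2}{\sigxi^2+\sigx^2 z^2}}^{\frac{n+1}{2}}.
\end{equation*}

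\emph{Step 4 (matching the stated form).} It remains to check that this coincides with the lemma's expression. Multiplying out the lemma's prefactor gives $\frac{(\sigx\sigxi)^{n+1}}{\sigx^n\sigxi} = \sigx\sigxi^{n}$ over the factor $(\sigx^2z^2+\sigxi^2)^{(n+1)/2}$, together with the same $\frac{1}{\sqrt\pi}\frac{\Gamma\paren{\frac{n+1}{2}}}{\Gamma\paren{\frac{n}{2}}}$. This agrees with the Step 3 expression, since $\sigx\,\sigxi^{n+1}/\sigxi = \sigx\sigxi^n$.

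The main point requiring care is the independence claim in Step 2, that $G$ is independent of $S$. I would justify it by noting that the conditional distribution of $G$ given $\bX$ is the fixed law $\gauss{0}{1}$, so the joint law of $(G,\bX)$ factorises. The Gamma-integral bookkeeping in Step 3 is routine.
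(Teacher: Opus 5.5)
Your proof is correct. Its core matches the paper's: both condition on $\bX$ to get $Z \mid \bX \sim \mathcal{N}(0, \sigxi^2/\bX^T\bX)$ and then average over $\bX$. The paper averages with an $n$-dimensional integral over $\xv$. It absorbs the two Gaussian exponentials into the density of a rescaled vector $\tilde{\xv} \sim \mathcal{N}(0, cI_n)$, with $c = \sigx^2\sigxi^2/(\sigx^2 z^2 + \sigxi^2)$, and evaluates $\mathbb{E}[\sqrt{\tilde{\xv}^T\tilde{\xv}}]$ as the mean of a $\chi(n)$ variable. You instead note that the conditional law depends on $\bX$ only through $S = \bX^T\bX/\sigx^2 \sim \chi^2_n$, and you do a one-dimensional Gamma integral. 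That is the same computation after passing to radial coordinates.

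The genuine difference is your Step 2. You show $G$ is standard normal and independent of $S$, which gives the representation $Z = \frac{\sigxi}{\sigx\sqrt{n}}\, G/\sqrt{S/n}$. This proves the ``scaled Student's $t$ with $n$ degrees of freedom and scale $\sigxi/(\sigx\sqrt{n})$'' part of the lemma directly from the definition of $t_n$. The paper only derives the density and asserts the identification by comparison with the cited reference. Your route makes that second claim self-contained and lets you obtain the density by a change of variables if you prefer. The paper's route never needs the independence argument.
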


\begin{proof}[Proof of Lemma~\ref{lem:compute_f_z}]
Recall that $Z = \frac{\xv^T\xiv}{\xtx}$. Using the law of total probability, we have
\begin{align}
    f_{Z}(z) &= \int f_{Z|\xv} (z|\xv)f_\xv(\xv)d\xv.
\end{align}
We see that $f_{Z|\xv} (z|\xv) \sim \gauss{0}{\frac{\sigxi^2}{\xv^T\xv}}$. Let $\xt \sim \gauss{0}{\frac{\sigx^2\sigxi^2}{\sigx^2z^2 + \sigxi^2}I}$. This means,
\begin{align}
    f_{Z}(z) &= \int \frac{\sqrt{\xtx}}{\sqrt{2\pi\sigxi^2}}e^{\frac{-(\xv^T\xv)z^2}{2\sigxi^2}}\paren{\frac{1}{\sqrt{2\pi\sigx^2}}}^ne^{-\frac{\xv^T\xv}{2\sigx^2}} d\xv\\
    &= \frac{1}{\sqrt{2\pi\paren{\sigx^2}^n\sigxi^2}}\paren{\frac{\sigx^2\sigxi^2}{\sigx^2z^2 + \sigxi^2}}^{\frac{n}{2}}\int \sqrt{\xtx}\paren{\frac{1}{\sqrt{2\pi}}}^n \paren{\frac{\paren{\sigx^2}^n\sigxi^2}{\sigx^2z^2 + \sigxi^2}}^{-\frac{n}{2}} e^{-\frac{\xtx}{2}\bracket{\frac{z^2}{\sigxi^2} + \frac{1}{\sigx^2}}}d\xv\\
    &= \frac{\E{\sqrt{\xttxt}}}{\sqrt{2\pi\paren{\sigx^2}^n\sigxi^2}}\paren{\frac{\sigx^2\sigxi^2}{\sigx^2z^2 + \sigxi^2}}^{\frac{n}{2}}.
\end{align}

Let $c \triangleq \frac{\sigx^2\sigxi^2}{\sigx^2z^2 + \sigxi^2}$ and let $\mathbf{g} \sim \mathcal{N}(0, I_n)$. We have $\E{\sqrt{\xttxt}} = \sqrt{c}\,\E{\sqrt{\mathbf{g}^T\mathbf{g}}} = \sqrt{2c}\frac{\Gamma\paren{\frac{n+1}{2}}}{\Gamma\paren{\frac{n}{2}}}$, where the last equality uses $\sqrt{\mathbf{g}^T\mathbf{g}} \sim \chi(n)$ \cite{JohnsonNormanLloyd1994Cud}. Therefore,
\begin{align}
    f_Z(z) &= \sqrt{\frac{1}{\pi\paren{\sigx^2}^n\sigxi^2}}\sqrt{\frac{\sigx^2\sigxi^2}{\sigx^2z^2 + \sigxi^2}}\frac{\Gamma\paren{\frac{n+1}{2}}}{\Gamma\paren{\frac{n}{2}}}\paren{\frac{\sigx^2\sigxi^2}{\sigx^2z^2 + \sigxi^2}}^{\frac{n}{2}}\\
    &= \sqrt{\frac{1}{\pi\paren{\sigx^2}^n\sigxi^2}}\frac{\Gamma\paren{\frac{n+1}{2}}}{\Gamma\paren{\frac{n}{2}}}\paren{\frac{\sigx^2\sigxi^2}{\sigx^2z^2 + \sigxi^2}}^{\frac{n + 1}{2}}\label{eq:f_z_appendix}
\end{align}
This distribution is depicted in Fig.~\ref{fig:z_fit} for varying values of $n$.
\end{proof}

\FloatBarrier
\newpage
\section{Computing the Differential Entropy of Z}\label{sec:compute_h_z}

\begin{lemma}\label{lem:diff_ent_Ex}
    Under the hypotheses of Lemma~\ref{lem:compute_f_z}, let $\psi(x) \triangleq \frac{d}{dx}\ln{\Gamma(x)} = \frac{\Gamma^\prime(x)}{\Gamma(x)}$ be the digamma function. The differential entropy of $Z$ is
\begin{align}
    \label{eq:diff_ent_Ex}
    h(Z) &= \log\bracket{e}\paren{\ln\bracket{\frac{\sigxi}{\sigx}} - \ln{\paren{\frac{\Gamma\paren{\frac{n+1}{2}}}{\sqrt{\pi}\Gamma\paren{\frac{n}{2}}}}} + \frac{n + 1}{2}\bracket{\psi\paren{\frac{n+1}{2}} - \psi\paren{\frac{n}{2}}}}.
\end{align}
\end{lemma}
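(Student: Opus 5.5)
The plan is to reduce the claim to the known entropy of a standard Student's $t$ variable via a scaling argument, and then check the constants against the density of Lemma~\ref{lem:compute_f_z}. By Lemma~\ref{lem:compute_f_z}, $Z = sT$ with $s = \frac{\sigxi}{\sigx\sqrt{n}}$ and $T \sim t_n$ a standard Student's $t$ variable with $n$ degrees of freedom. By the scaling property of differential entropy, $h(Z) = h(T) + \log s$. It therefore suffices to compute $h(T)$ in nats and convert to bits by multiplying by $\log[e]$.

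To compute $h(T)$ directly, write the density as $f_T(t) = C_n (1+t^2/n)^{-(n+1)/2}$ with $C_n = \Gamma(\frac{n+1}{2})/(\sqrt{n\pi}\,\Gamma(\frac n2))$. Then
\begin{equation*}
h(T) = -\ln C_n + \tfrac{n+1}{2}\,\mathbb{E}\bigl[\ln(1+T^2/n)\bigr]
\end{equation*}
in nats. The key step is evaluating $\mathbb{E}[\ln(1+T^2/n)]$, and I expect it to be the main obstacle. I would handle it with the substitution $u = (T^2/n)/(1+T^2/n)$, under which $u \sim \mathrm{Beta}(\frac12,\frac n2)$. Then $\ln(1+T^2/n) = -\ln(1-u)$, and the standard Beta log-moment identity $\mathbb{E}[\ln(1-u)] = \psi(\frac n2) - \psi(\frac{n+1}{2})$ gives
\begin{equation*}
\mathbb{E}\bigl[\ln(1+T^2/n)\bigr] = \psi\!\left(\tfrac{n+1}{2}\right) - \psi\!\left(\tfrac n2\right).
\end{equation*}
That identity itself follows by differentiating $B(a,b)$ with respect to $b$ under the integral sign. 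An alternative route to the same expectation is to differentiate the normalization $\int (1+t^2/n)^{-\alpha}\,dt = \sqrt{n\pi}\,\Gamma(\alpha-\frac12)/\Gamma(\alpha)$ with respect to $\alpha$ and evaluate at $\alpha = \frac{n+1}{2}$.

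Finally, combine the pieces:
\begin{equation*}
h(Z) = \log[e]\Bigl(-\ln C_n + \ln s + \tfrac{n+1}{2}\bigl[\psi(\tfrac{n+1}{2}) - \psi(\tfrac n2)\bigr]\Bigr).
\end{equation*}
The $\sqrt n$ factors cancel, since $-\ln C_n + \ln s = \ln\frac{\sigxi}{\sigx} - \ln\bigl(\Gamma(\frac{n+1}{2})/(\sqrt\pi\,\Gamma(\frac n2))\bigr)$. This yields exactly Eq.~\eqref{eq:diff_ent_Ex}. I would also verify that the expression is finite for every $n \ge 1$, including the Cauchy case $n=1$, where the log-moment is still finite even though the variance is not.
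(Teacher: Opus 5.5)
Your proof is correct, and it reaches the result by a different organization than the paper's. The paper never factors out the scale. It computes $-\int f_Z\ln f_Z$ directly with $\sigx,\sigxi$ carried throughout and splits off the normalizing constant. It then evaluates the remaining log-moment $\int \paren{\frac{\sigx^2\sigxi^2}{\sigx^2 z^2+\sigxi^2}}^{\frac{n+1}{2}}\ln\bracket{z^2+\frac{\sigxi^2}{\sigx^2}}dz$ by the substitution $\theta=\tan^{-1}(\sigx z/\sigxi)$. That turns it into the Wallis-type integrals $\int_{-\pi/2}^{\pi/2}\cos^{n-1}\theta\,d\theta$ and $\int_{-\pi/2}^{\pi/2}\cos^{n-1}\theta\ln\sec^2\theta\,d\theta$, and the paper asserts the digamma value of the second one without derivation.

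You instead use the scaling identity $h(sT)=h(T)+\ln s$ to collect all parameter dependence in a single term. You then evaluate $\mathbb{E}[\ln(1+T^2/n)]$ through the representation $u=\frac{T^2/n}{1+T^2/n}\sim\mathrm{Beta}(\tfrac12,\tfrac n2)$ and the log-moment identity $\mathbb{E}[\ln(1-u)]=\psi(b)-\psi(a+b)$, which you justify by differentiating $B(a,b)$. The two computations are the same integral in different coordinates, since your $u$ equals $\sin^2\theta$ for the paper's angle $\theta$.

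Your route gives cleaner bookkeeping: the $\sigx$, $\sigxi$ and $\sqrt n$ factors all cancel between $\ln s$ and $-\ln C_n$. It also supplies an actual derivation of the digamma identity the paper takes for granted. The paper's route is more self-contained, because it works only from the explicit density of Lemma~\ref{lem:compute_f_z} and needs no Student's $t$ or Beta distributional facts.
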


\begin{proof}[Proof of Lemma~\ref{lem:diff_ent_Ex}]
    Using the distribution derived in Lemma~\ref{lem:compute_f_z}, the differential entropy of $Z$ is
\begin{align}
    &\frac{1}{\log\bracket{e}}h(Z) = -\int\limits_{-\infty}^\infty f_Z(z)\ln{f_Z(z)}dz\\
     &= -\int\limits_{-\infty}^\infty \paren{\sqrt{\frac{1}{\pi\paren{\sigx^2}^n\sigxi^2}}\frac{\Gamma\paren{\frac{n+1}{2}}}{\Gamma\paren{\frac{n}{2}}}\paren{\frac{\sigx^2\sigxi^2}{\sigx^2z^2 + \sigxi^2}}^{\frac{n + 1}{2}}}\\
     &\times\ln{\paren{\sqrt{\frac{1}{\pi\paren{\sigx^2}^n\sigxi^2}}\frac{\Gamma\paren{\frac{n+1}{2}}}{\Gamma\paren{\frac{n}{2}}}\paren{\frac{\sigx^2\sigxi^2}{\sigx^2z^2 + \sigxi^2}}^{\frac{n + 1}{2}}}}dz\\
     &= -\int\limits_{-\infty}^\infty \paren{\sqrt{\frac{1}{\pi\paren{\sigx^2}^n\sigxi^2}}\frac{\Gamma\paren{\frac{n+1}{2}}}{\Gamma\paren{\frac{n}{2}}}\paren{\frac{\sigx^2\sigxi^2}{\sigx^2z^2 + \sigxi^2}}^{\frac{n + 1}{2}}}\nonumber\\
     &\bracket{ \ln{\paren{\sqrt{\frac{1}{\pi\paren{\sigx^2}^n\sigxi^2}}\frac{\Gamma\paren{\frac{n+1}{2}}}{\Gamma\paren{\frac{n}{2}}}\paren{\sigxi^2}^{\frac{n + 1}{2}}}}- \frac{n + 1}{2}\ln\bracket{z^2 + \frac{\sigxi^2}{\sigx^2}}}dz\\
     &= -\sqrt{\frac{1}{\pi\paren{\sigx^2}^n\sigxi^2}}\frac{\Gamma\paren{\frac{n+1}{2}}}{\Gamma\paren{\frac{n}{2}}} \ln{\paren{\sqrt{\frac{1}{\pi\paren{\sigx^2}^n\sigxi^2}}\frac{\Gamma\paren{\frac{n+1}{2}}}{\Gamma\paren{\frac{n}{2}}}\paren{\sigxi^2}^{\frac{n + 1}{2}}}}\int\limits_{-\infty}^\infty \paren{\frac{\sigx^2\sigxi^2}{\sigx^2z^2 + \sigxi^2}}^{\frac{n + 1}{2}} dz\\
     &+ \frac{n + 1}{2}\sqrt{\frac{1}{\pi\paren{\sigx^2}^n\sigxi^2}}\frac{\Gamma\paren{\frac{n+1}{2}}}{\Gamma\paren{\frac{n}{2}}}\int\limits_{-\infty}^\infty \paren{\frac{\sigx^2\sigxi^2}{\sigx^2z^2 + \sigxi^2}}^{\frac{n + 1}{2}}\ln\bracket{z^2 + \frac{\sigxi^2}{\sigx^2}}dz\\
     &= - \ln{\paren{\sqrt{\frac{1}{\pi\paren{\sigx^2}^n\sigxi^2}}\frac{\Gamma\paren{\frac{n+1}{2}}}{\Gamma\paren{\frac{n}{2}}}\paren{\sigxi^2}^{\frac{n + 1}{2}}}}\nonumber\\
     &+ \frac{n + 1}{2}\sqrt{\frac{1}{\pi\paren{\sigx^2}^n\sigxi^2}}\frac{\Gamma\paren{\frac{n+1}{2}}}{\Gamma\paren{\frac{n}{2}}}\int\limits_{-\infty}^\infty \paren{\frac{\sigx^2\sigxi^2}{\sigx^2z^2 + \sigxi^2}}^{\frac{n + 1}{2}}\ln\bracket{z^2 + \frac{\sigxi^2}{\sigx^2}}dz.\label{eq:normalized}
\end{align}

The first term in Eq.~\eqref{eq:normalized} is due to $f_Z(z)$ being a probability density function.

Just solving for $\int\limits_{-\infty}^\infty \paren{\frac{\sigx^2\sigxi^2}{\sigx^2z^2 + \sigxi^2}}^{\frac{n + 1}{2}}\ln\bracket{z^2 + \frac{\sigxi^2}{\sigx^2}}dz$, let $\psi(x) \triangleq \frac{d}{dx}\ln{\Gamma(x)} = \frac{\Gamma^\prime(x)}{\Gamma(x)}$. Using the change of variables $\theta = \tan^{-1}(\frac{\sigx z}{\sigxi})$,

\begin{align}
    &\int\limits_{-\infty}^\infty \paren{\frac{\sigx^2\sigxi^2}{\sigx^2z^2 + \sigxi^2}}^{\frac{n + 1}{2}}\ln\bracket{z^2 + \frac{\sigxi^2}{\sigx^2}}dz = \sigx^n\sigxi\int\limits_{-\frac{\pi}{2}}^\frac{\pi}{2} \paren{\frac{1}{\sec^2{\theta}}}^{\frac{n - 1}{2}}\ln\bracket{\frac{\sigxi^2}{\sigx^2}\sec^2\theta}d\theta\\
    &=  \sigx^n\sigxi\bracket{\int\limits_{-\frac{\pi}{2}}^\frac{\pi}{2} \paren{\frac{1}{\sec^2{\theta}}}^{\frac{n-1}{2}}\ln\bracket{\sec^2\theta}d\theta + \ln\bracket{\frac{\sigxi^2}{\sigx^2}}\int\limits_{-\frac{\pi}{2}}^\frac{\pi}{2} \paren{\frac{1}{\sec^2{\theta}}}^{\frac{n-1}{2}}d\theta}\\
    &=  \sigx^n\sigxi\bracket{\frac{\sqrt{\pi}\Gamma\paren{\frac{n}{2}}\paren{\psi\paren{\frac{n+1}{2}} - \psi\paren{\frac{n}{2}}}}{\Gamma\paren{\frac{n+1}{2}}}  + \frac{\ln\bracket{\frac{\sigxi^2}{\sigx^2}}\sqrt{\pi}\Gamma\paren{\frac{n}{2}}  }{\Gamma\paren{\frac{n+1}{2}}}}\\
    &= \frac{ \sigx^n\sigxi\sqrt{\pi}\Gamma\paren{\frac{n}{2}}}{\Gamma\paren{\frac{n+1}{2}}}\bracket{\psi\paren{\frac{n+1}{2}} - \psi\paren{\frac{n}{2}} + \ln\bracket{\frac{\sigxi^2}{\sigx^2}}}.
\end{align}

Now returning to the computation of $h(Z)$, we have
\begin{align}
    \frac{1}{\log\bracket{e}}h(Z) &= - \ln{\paren{\sqrt{\frac{1}{\pi\paren{\sigx^2}^n\sigxi^2}}\frac{\Gamma\paren{\frac{n+1}{2}}}{\Gamma\paren{\frac{n}{2}}}\paren{\sigxi^2}^{\frac{n + 1}{2}}}}\nonumber\\ 
    &+ \frac{n + 1}{2}\bracket{\psi\paren{\frac{n+1}{2}} - \psi\paren{\frac{n}{2}} + \ln\bracket{\frac{\sigxi^2}{\sigx^2}}}\\
    &= - \ln{\paren{\frac{\sigxi^n}{\sqrt{\pi}\sigx^n}\frac{\Gamma\paren{\frac{n+1}{2}}}{\Gamma\paren{\frac{n}{2}}}}} + \frac{n + 1}{2}\bracket{\psi\paren{\frac{n+1}{2}} - \psi\paren{\frac{n}{2}} + \ln\bracket{\frac{\sigxi^2}{\sigx^2}}}\\
    &= \ln\bracket{\frac{\sigxi}{\sigx}} - \ln{\paren{\frac{\Gamma\paren{\frac{n+1}{2}}}{\sqrt{\pi}\Gamma\paren{\frac{n}{2}}}}} + \frac{n + 1}{2}\bracket{\psi\paren{\frac{n+1}{2}} - \psi\paren{\frac{n}{2}}}.
\end{align}
\end{proof}

\begin{figure}[htbp]
  \centering
  \begin{subfigure}{0.35\textwidth}
    \centering
    \includegraphics[width=\linewidth]{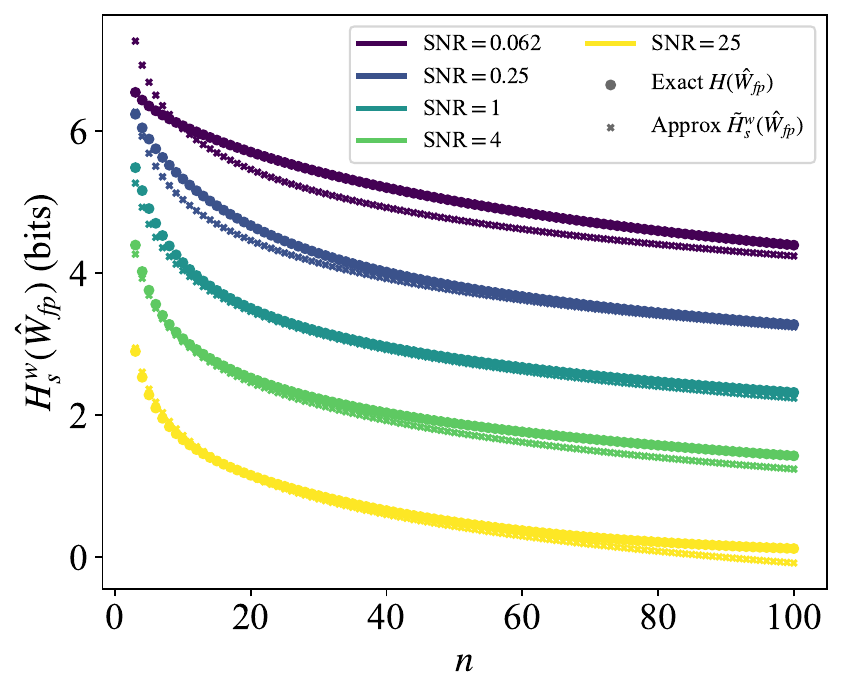}
    \caption{}
    \label{fig:H_fp_wh_exact}
  \end{subfigure}
  \begin{subfigure}{0.35\textwidth}
    \centering
    \includegraphics[width=\linewidth]{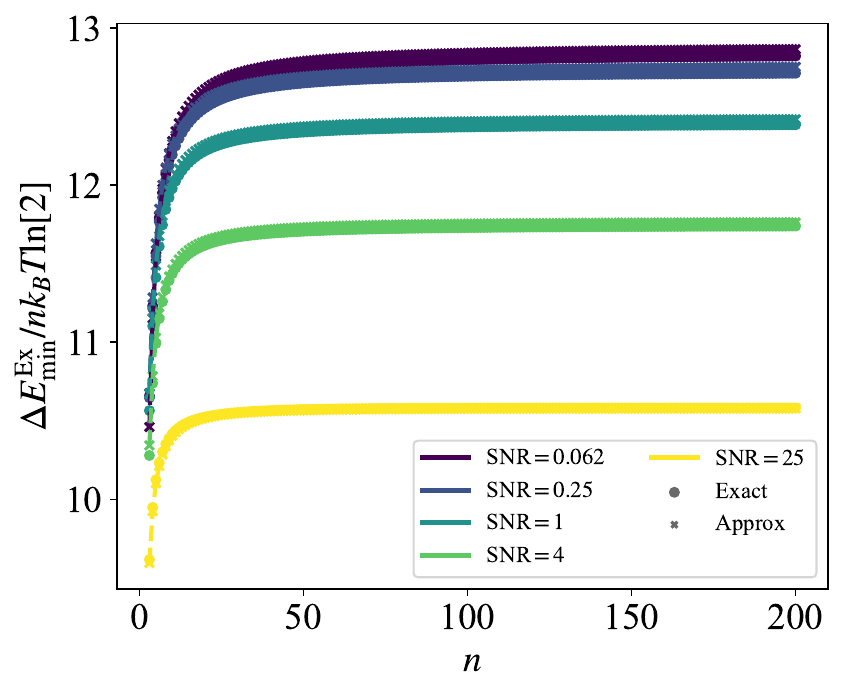}
    \caption{}
    \label{fig:Ex_dH_exact}
  \end{subfigure}\hfill
  \caption{\textit{The Landauer cost for exact zero-intercept simple linear regression.} Input and output states are floating-point numbers with $p = 4$ and $E=4$. The candidate values of the $SNR = \frac{w^2\sigx^2}{\sigxi^2}$ are $0.062$, $0.25$, $1$, $4$, and $25$. (\ref{fig:H_fp_wh_exact}) The output model entropy, and its dependence on the data size $n$ and ground truth $w$. (\ref{fig:Ex_dH_exact}) The entropy difference for various values of SNR and a range of $n$.}
  \label{fig:exact_lr_ent}
\end{figure}

\FloatBarrier
\pagebreak
\section{Mismatch Cost}
\label{app:mmc}

\subsubsection{Defining $\Delta S_{env}$ and $\Delta S_{tot}$ with a Continuous Time Markov Chain}

One general way to model such a time dependence is via a continuous time Markov chain (CTMC) based on transition rates between states of the joint computational system \cite{Wolpert2019, Manzano2024, KolchinskyWolpert2021, Owen_2019, esposito2010}. In this case, the dynamics of the computational system are fixed by a time varying matrix $W_{x_c,x_c'}(t)$, which defines the rate of transitions between states $x_c, x_c'\in \mathcal{X}_c$ at time $t$ \cite{Wolpert2019, KolchinskyWolpert2021}. 
While running an algorithm, the computational system evolves according to
\begin{equation}
    \label{eq:CTMC}
    \frac{d}{dt} p(x_c,t) = \sum_{x'_c \in \mathcal{X}_c} W_{x_c;x_c'}(t) p(x_c',t).
\end{equation}
For these dynamics, \cite{esposito2010} shows
\begin{equation}
    \Delta S_{sys} = -k_B \ln[2] \int^{t_f}_0 \sum_{x_c,x_c' \in \mathcal{X}_c} W_{x_c,x_c'}(t)p(x_c', t) \log \bigg[ \frac{p(x_c,t)}{p(x_c',t)} \bigg]dt,
\end{equation}
\begin{equation}
    \label{eq:s_env}
    \Delta S_{env} = - k_B \ln[2] \int^{t_f}_0\sum_{x_c,x_c'\in \mathcal{X}_c} W_{x_c,x_c'}(t) p(x_c', t) \log \bigg[ \frac{W_{x_c',x_c}(t)}{W_{x_c,x_c'}(t)} \bigg] dt,
\end{equation}
\begin{equation}
    \Delta S_{tot} = k_B \ln[2] \int^{t_f}_0\sum_{x_c,x_c'\in \mathcal{X}_c} W_{x_c,x_c'}(t) p(x_c', t) \log \bigg[ \frac{W_{x_c,x_c'}(t)p(x_c',t)}{W_{x_c',x_c}(t) p(x_c,t)} \bigg] dt.
\end{equation}
If the transition rates $W_{x_c,x_c'}$ satisfy $\ln[W_{x_c,x_c'}/W_{x_c',x_c}] = (E(x_c') - E(x_c))/(k_BT)$, where $E(x_c)$ is an energy function over computational states, then we can say the average heat is given by $Q = T \Delta S_{env}$ \cite[p.30]{Wolpert2019} \cite{Seifert_2012}. This implies that the average heat output, fixed by Eqs.~\eqref{eq:CTMC},~\eqref{eq:s_env} is given by $Q = T(-\Delta S_{sys} + \Delta S_{tot}) = \Delta E_{min} +T\Delta S_{tot}$.

Let $\compsimplex$ denote the probability simplex associated with the computational state space $\mathcal{X}_c$. We can think of the computation which takes us from $p_1 \in \compsimplex$ to $p_{F-1} \in \compsimplex$ as fixed by a conditional distribution $\pi(x_c|x_c')$, which determines $p_{F-1}$ from $p_1$ by
\begin{equation}
    p_{F-1}(x_c) = \sum_{x_c'\in \mathcal{X}_c} \pi(x_c|x_c') p_1(x_c').
\end{equation}
The conditional distribution $\pi(x_c|x_c')$ can be physically implemented to infinitesimally small error by the time-protocol of the rate matrix $W_{x_c,x_c'}(t)$ \cite[p. 23]{Wolpert2019}\cite{Owen_2019} (assuming $x_c$ [or $x_M$] is extended to include sufficient states, see \cite{Owen_2019}). The notation $p_{F-1} = \pi p_1$ emphasizes that $p_{F-1}$ is purely a function of the input state $p_1$ and the physical manipulations performed by the computation $W_{x_c,x_c'}(t)$, which determines $\pi(x_c|x_c')$.

\subsubsection{Variational MMC and the Island Decomposition}

The requirements to obtain the minimum entropy distribution $q_1$ that returns a unique MMC are specified in \cite{Wolpert_2020}. \cite{Wolpert_2020} points out that uniqueness of optimal distributions can only be verified for `islands' of the conditional distribution defined by the algorithm. 
We can describe an algorithm (possibly stochastic) as implementing a conditional distribution $\pi(x_c | x_c')$ that maps between computational states $x_c,x_c' \in \mathcal{X}_c$. The islands of the algorithm are specified by the following relation for $x_c, x'_c \in \mathcal{X}_c$:
\begin{equation}
    \label{eq:Islands}
    x_c \sim x_c' \Leftrightarrow \exists x_f \in \mathcal{X}_c : \pi(x_f|x_c') > 0, \pi(x_f|x_c) > 0,
\end{equation}
meaning there is a finite probability of $x_c$ and $x_c'$ transitioning to the same state $x_f$. An \textit{island} of $\pi(x_c | x_c')$ is a connected subset of $\mathcal{X}_c$ given by the transitive closure of Eq.~\eqref{eq:Islands}. The set of islands of $\pi$ partition $\mathcal{X}_c$, where the set of islands is denoted $L(\pi)$.

We can consider the distribution over an island $c \in L(\pi)$, given by $p^c(x_c) = \indic{x_c \in c}p(x_c)/p(c)$. If optimized over the probability simplex specific to the island $c$ is $\boldsymbol{\Delta}_c$, the minimizer $q^c(x)$ will be unique \cite{Wolpert_2020}. The total distribution optimized over $\compsimplex$ can be decomposed in terms of islands and the probability of their individual occupation $q_1(c)$, 
\begin{equation}
    \label{eq:island_dist}
    q_1(x_c) = \sum_{c \in L(\pi)} q_1(c) q_1^c(x_c).
\end{equation}
The factors $q(c)$ do not contribute to the KL divergences in Eq.~\eqref{eq:MMC}, meaning they do not contribute to the MMC and ensuring the MMC is unique for the optimal distribution $q_1$.

For a deterministic algorithm, $\pi$ will be valued 0 or 1 for all inputs. In this case, the islands of the algorithm will be specified by the set of inputs that correspond to a single output state. As we have considered them, both exact linear regression and SGD are deterministic algorithms. In the exact case, the final $\wh_Q$ follows deterministically from the data $\mathcal{D}_Q$, and given that for SGD $\wh_Q$ has a deterministic initialization, the final $\wh_Q$ also follows deterministically from the batches $\{\mathcal{B}_{Q,1},...,\mathcal{B}_{Q,\tau}\}$. Each island of the algorithms corresponds to all possible $\mathcal{D}_Q$ or $\{\mathcal{B}_{Q,1},...,\mathcal{B}_{Q,\tau}\}$ that have the same output $\wh_Q$. For both the exact regression and SGD, the islands of input states are complex but disjoint subsets of the input data, which correspond to specific values of $\Wh$, rendering the island decomposition difficult to consider in this case.

The variational approach breaks the island decomposition by preventing $q_1^c(x_c)$ from being individually optimized, and by preventing a description of the input state as specified by Eq.~\eqref{eq:island_dist}. With a variational $q_{1,v}$ that breaks the island decomposition, we cannot guarantee it to be unique. However, this lack of uniqueness does not prevent MMC$_v$ from providing a lower bound on the true MMC: one can perform a bounded optimization to find $q_{1,v}$, which may or may not be unique, the difference MMC$_v = \Delta S_{tot}(p_{1,v}) - \Delta S_{tot}(q_{1,v})$ will retain the property $0 \leq $ MMC$_v \leq $ MMC for variational distributions $p_{1,v} \in \mathcal{V}$. 

\subsubsection{Illustrative MMC for exact regression and SGD}

\FloatBarrier
\begin{figure}[htbp]
    \centering
    \begin{subfigure}{0.36\textwidth}
        \includegraphics[width=\linewidth]{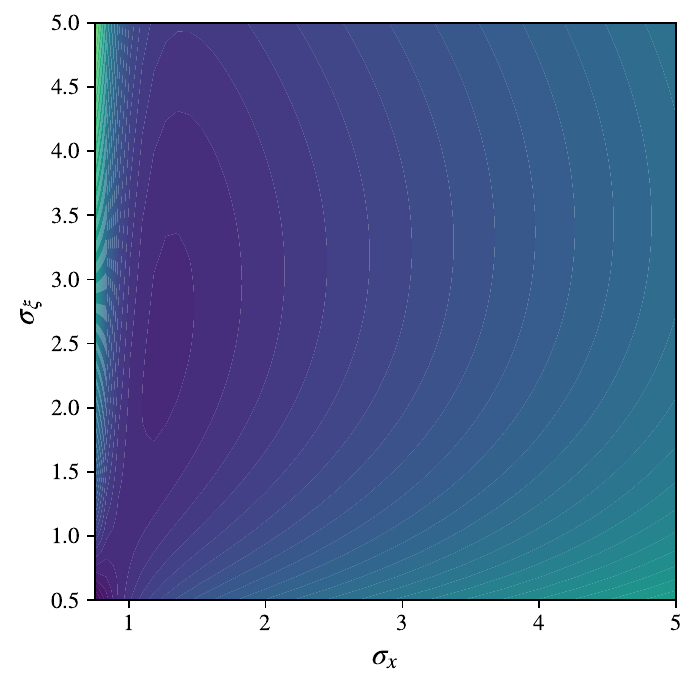}
        \caption{}
        \label{fig:MMC_Ex}
    \end{subfigure}
    \begin{subfigure}{0.42\textwidth}
        \includegraphics[width=\linewidth]{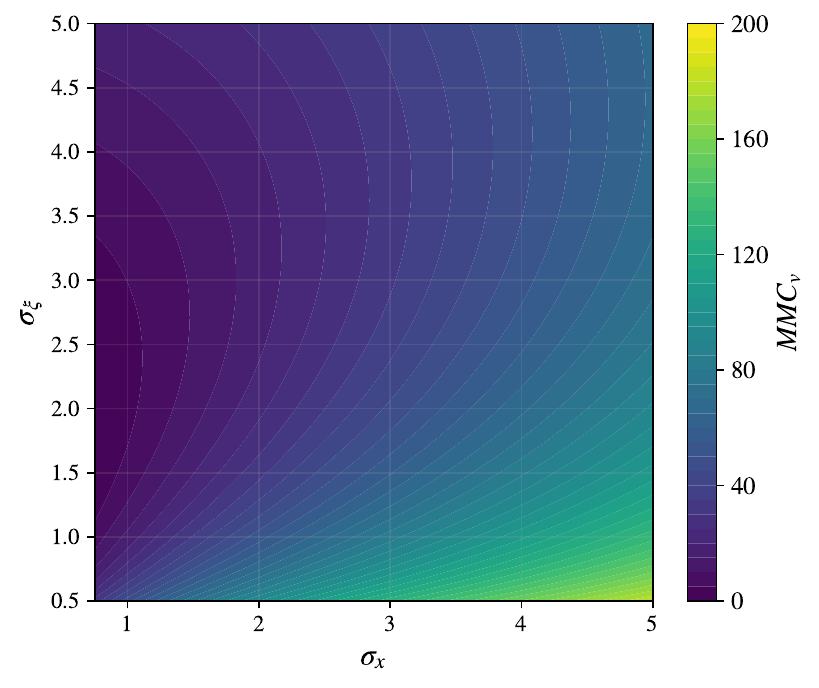}
        \caption{}
        \label{fig:MMC_SGD}
    \end{subfigure}
    \caption{\textit{A lower bound on the mismatch cost for continuously parameterized inputs}. MMC$_v(\sigma_x, \sigma_\xi)$ as a function of $\sigma_x$ and $\sigma_\xi$, with each plot assuming the entropy flow $\Delta S_{env,Ex} = \Delta S_{env, SGD} = C + \alpha(\sigma_x^2 + \sigma_\xi^2 )$, and $w = 1$. A bounded optimization is performed for $0.75 \leq \sigma_x \leq 5$, $0.5 \leq \sigma_\xi \leq 5$. \ref{fig:MMC_Ex} A sample MMC$_v$ landscape for exact linear regression with the illustrative $\Delta S_{env,Ex}=C + \alpha(\sigma_x^2 + \sigma_\xi^2 )$, and $n = 10$. \ref{fig:MMC_SGD} A sample MMC$_v$ landscape for linear regression via SGD with an illustrative $\Delta S_{env,SGD}=C + \alpha(\sigma_x^2 + \sigma_\xi^2 )$, and $B = 1$, $\tau = 50$.} 
    \label{fig:MMC}
\end{figure}

For exact regression, the variational family will be $\theta = (\sigma_x,\sigma_\xi)$, which only runs from inputs to outputs once, $\Delta S_{tot}$ will be
\begin{equation}
    \Delta S_{tot, Ex}(\sigma_x, \sigma_\xi) = -E^{Ex}_{min}(\sigma_x,\sigma_\xi)/T+\Delta S_{env,Ex}(\sigma_x,\sigma_\xi)
\end{equation}
from which $q_{1,v}(\sigma_x^\star,\sigma_\xi^\star)$, defined by the optimal $\sigma_x^\star$ and $\sigma_\xi^\star$, follows from Eq.~\eqref{eq:mmc_var_opt}. MMC$_v$ for exact regression is given by MMC$_v(\sigma_x,\sigma_\xi) = \Delta S_{tot, Ex}(\sigma_x,\sigma_\xi) - \Delta S_{tot,Ex}(\sigma_x^\star,\sigma_\xi^\star)$. Fixing $\Delta S_{env,Ex} = C+\alpha(\sigma_x^2 + \sigma_\xi^2)$, MMC$_v$ is plotted for a range of distributions in Fig.~\ref{fig:MMC_Ex}.

For the SGD device, each update cycle $k$ takes input $(\mathcal{B}_Q, \hat{W}_{Q,k})$ and produces
output $\hat{W}_{Q,k+1}$, constituting a separate physical process. The MMC should
therefore be evaluated at each step and summed,
\begin{equation}
    \mathrm{MMC}_{v,\mathrm{SGD}}(\sigma_x,\sigma_\xi) = \sum_{k=0}^{\tau-1} \Bigl(\Delta S_{tot, k}(\sigma_x,\sigma_\xi) - \Delta S_{tot,k}(\sigma_x^\star,\sigma_\xi^\star)\Bigr),
\end{equation}
where the per-step total entropy production is
\begin{equation}
    \Delta S_{\mathrm{tot},k}(\sigma_x, \sigma_\xi)
    = 
    - k_B \ln[2]\left(H(\hat{W}_{Q,k}) + B H(X_{fp}, Y_{fp})
      - H(\hat{W}_{Q,k+1})\right) + \Delta S_{env,SGD}.
\end{equation}
To simplify this expression, we can assume that the entropy flow at each step depends only
on the batch distribution, and not on the evolving model parameter distribution,
\begin{equation}
    \Delta S_{env,k}(\theta) = \Delta S_{env,Ex}(\theta)/\tau =  (C + \alpha(\sigma_x^2 + \sigma_\xi^2))/\tau \quad \forall\, k.
\end{equation}
This reduces the per-step sum to a telescoping series as in the preliminaries, giving the total entropy production
\begin{equation}
    \Delta S_{tot,\mathrm{SGD}}(\sigma_x,\sigma_\xi) = -E^{SGD}_{min}(\sigma_x,\sigma_\xi)/T + \Delta S_{env,SGD}(\sigma_x,\sigma_\xi),
\end{equation}
and thus $\mathrm{MMC}_{v,\mathrm{SGD}}(\sigma_x,\sigma_\xi) = \Delta S_{tot,\mathrm{SGD}}(\sigma_x,\sigma_\xi) - \Delta S_{tot,\mathrm{SGD}}(\sigma_x^\star,\sigma_\xi^\star)$.
Fixing $\Delta S_{env}$ beyond its reversible bound cannot be done at an algorithmic level. While we leave evaluation of the true entropy flow $\Delta S_{env}$ to future work, we can demonstrate an example of this approach using an illustrative choice of the entropy flow: $\Delta S_{env}(\sigma_x,\sigma_\xi) = \Delta S_{env,Ex}(\sigma_x,\sigma_\xi) = \Delta S_{env,SGD}(\sigma_x,\sigma_\xi) = C + \alpha(\sigma_x^2 + \sigma_\xi^2)$, as shown in Fig.~\ref{fig:MMC}.

\pagebreak
\FloatBarrier
\section{Derivations for Energy-cost Aware Scaling Laws}\label{sec:scaling_laws_app}

\subsubsection{Exact Formula}
Using the generalization error scaling for the exact linear regression formula in Eq.~\eqref{eq:exact_error}, we can solve a continuous relaxation of the optimization problem in Eq.~\eqref{eq:scaling_opt} where $\tilde n \in \R{}$. Let \begin{equation}\label{eq:u_ex}
u_{Ex}(\tilde n) \triangleq \rho_I\frac{\tilde n-2}{(\tilde n-1)\sigxi^2} - \rho_{J}\Delta E_{min}^{Ex}\paren{\tilde n}.
\end{equation}
 The optimization problem now becomes
\begin{align}
    \tilde n^\star &= \arg\max\limits_{\tilde n \geq 3} u_{Ex}(\tilde n) \text{, and } \nonumber\\
    n^\star &= \arg\max_{n\in\{\lfloor \tilde n^\star\rfloor,\lceil \tilde n^\star\rceil\}} u_{Ex}(n)\label{eq:nstar_ex}.
\end{align}
For simplicity, assume $|w| \gg \frac{\sigxi}{\sigx}$. Although the following method could be derived using the exact entropy calculations in Sections~\ref{sec:fpn}~and~\ref{sec:exact}, we will use $\tilde H_s(\Xfp, \Yfp)$ from Eq.~\eqref{eq:approx_joint_ent_SNR} and $\tilde H^w_s(\Wh_{fp}) = h(Z) + (p-1) -\log\bracket{\frac{|w|}{\sqrt{2}}}$ in this section for tractability. $h(Z)$ is computed in Lemma~\ref{lem:diff_ent_Ex}. Let $\psi^\prime(x) \triangleq \frac{d}{dx}\psi(x)$. Taking the derivative of $u$ with respect to $\tilde n$, we have
\begin{align}
    u^\prime_{Ex} (\tilde n) &= \frac{\rho_I}{\sigxi^2}\paren{\frac{1}{\tilde n-1}}^2\nonumber - \rho_J k_BT \ln[2] \paren{2\Ht^0_{s}(p) - \frac{1}{2}\log\bracket{ 1 + \frac{\sigma^2_x w^2}{\sigma_\xi^2}}- \frac{d}{d\tilde n}h(Z)}\label{eq:ddn_profit_ex},
\end{align}
where
\begin{align}
    \frac{d}{d\tilde n}h(Z) &= \log\bracket{e}\paren{\frac{\tilde n+1}{4} }\left(\psi^\prime\left(\frac{\tilde n+1}{2}\right)-\psi^\prime\left(\frac{\tilde n}{2}\right)\right).
\end{align}

It is not guaranteed that Eq.~\eqref{eq:ddn_profit_ex} will have a solution for all settings of parameters. It is recommended to solve the equation numerically using a root finder.\footnote{In our simulations, the profit function appears to be unimodal in $n$ which would justify the rounding procedure given in Eq.~\eqref{eq:nstar_ex}. We leave the proof of this unimodality for future work.} In Fig.~\ref{fig:ddn_profit_ex} we can see where $u^\prime_{Ex} (\tilde n)$ crosses zero for various setting of the price of energy $\rho_J$ and these zero crossings correspond to maximum values of the profit $u_{Ex}(n)$.

\subsubsection{Stochastic Gradient Descent}
\begin{figure}[bthp]
  \centering
  \begin{subfigure}{0.49\textwidth}
    \centering
    \includegraphics[width=\linewidth]{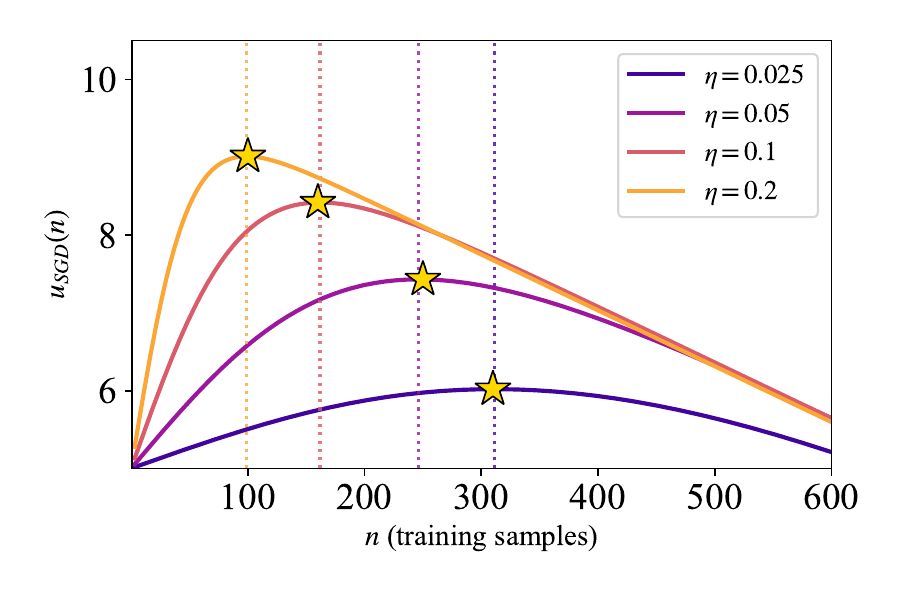}
    \caption{}
    \label{fig:profit_sgd_eta}
  \end{subfigure}
  \begin{subfigure}{0.49\textwidth}
    \centering
    \includegraphics[width=\linewidth]{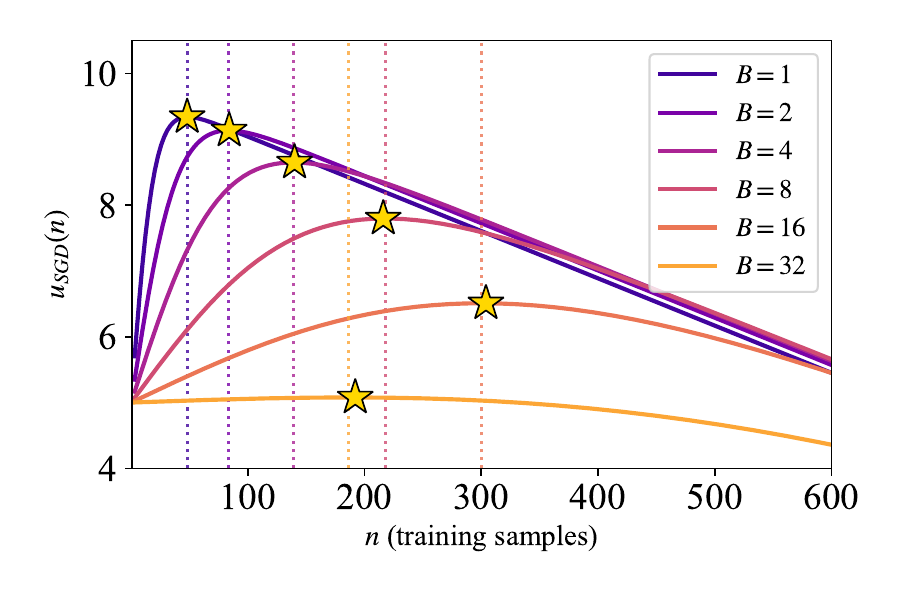}
    \caption{}
    \label{fig:profit_sgd_B}
  \end{subfigure}
  \begin{subfigure}{0.49\textwidth}
    \centering
    \hspace*{-0.35cm} 
    \includegraphics[width=\linewidth]{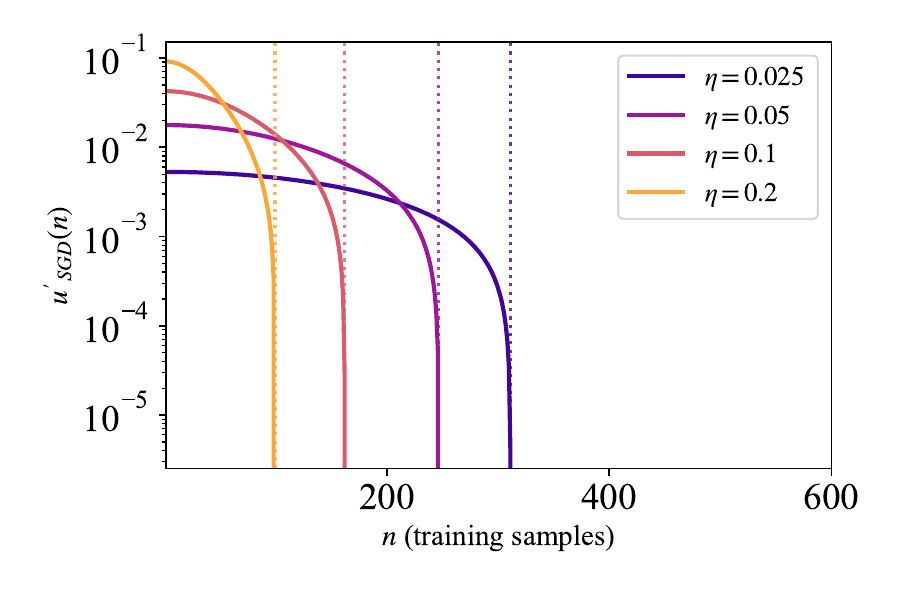}
    \caption{}
    \label{fig:ddn_profit_sgd_eta}
  \end{subfigure}
  \begin{subfigure}{0.49\textwidth}
    \centering
    \hspace*{-0.35cm} 
    \includegraphics[width=\linewidth]{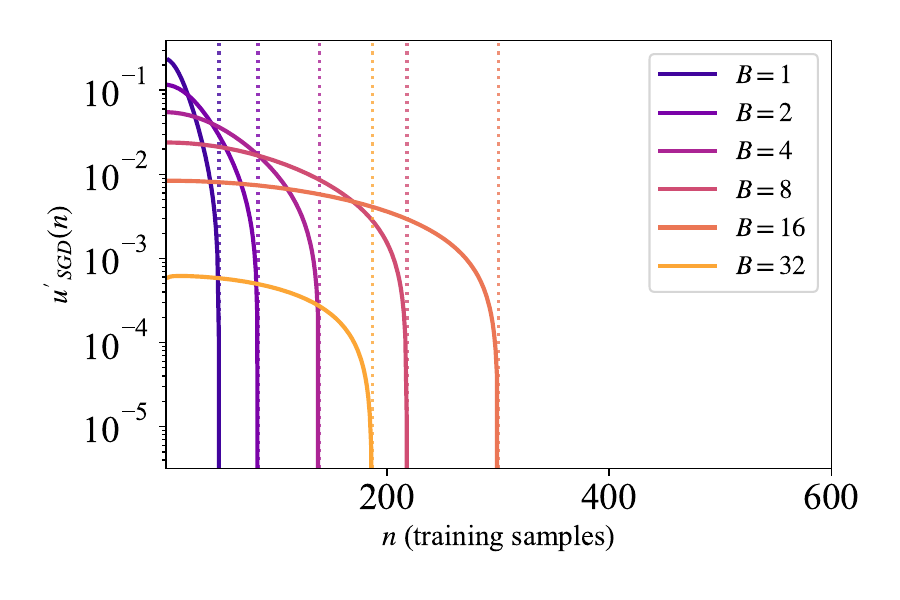}
    \caption{}
    \label{fig:ddn_profit_sgd_B}
  \end{subfigure}
  \caption{\textit{The effect of the learning rate and batch size on the optimal dataset size for stochastic gradient descent.} (\ref{fig:profit_sgd_eta}) shows $u_{SGD}(n)$ given in Eq.~\eqref{eq:u_sgd} for varying values of the learning rate $\eta$. Notice that smaller learning rates lead to larger optimal dataset sizes. (\ref{fig:ddn_profit_sgd_eta}) shows $u'_{SGD}(n)$ in Eq.~\eqref{eq:ddn_profit_sgd} for the different learning rates. (\ref{fig:profit_sgd_B}) shows $u_{SGD}(n)$ for various values of the batch size $B$. For the profit plots, the gold stars are the maximum values of the profit with respect to each algorithm's feasible set for each value of $\eta$ or $B$. (\ref{fig:ddn_profit_sgd_B}) shows $u'_{SGD}(n)$ for the different batch sizes. The vertical dotted lines show the point where each curve crosses zero. For all figures, $\sigxi^2 =\sigx^2 = 1$, $k_BT = 4\times 10^{-21}$, $w = 2$, $\wh_0 = 1$, $\rho_I = 10$, and $\rho_J = 5\times 10^{16}$. For the plots that vary $\eta$, $B=10$ and for the plots that vary $B$, $\eta = 0.05$.}
  \label{fig:scaling_laws_sgd_eta_B}
\end{figure}
For stochastic gradient descent, we can again take a continuous relaxation of the optimization problem in Eq.~\eqref{eq:scaling_opt}. Let 
\begin{equation}\label{eq:u_sgd}
u_{SGD}(\tilde n) \triangleq \rho_I\paren{\sigx^2\bracket{\frac{\eta\sigxi^2}{2B}\paren{1 -e^{-\frac{2\sigx^2 \eta \tilde n}{B}}} + \paren{\tilde \mu\paren{\frac{\tilde n}{B}} - w}^2} + \sigxi^2}^{-1} - \rho_J \Delta E_{min}^{SGD}\paren{\frac{\tilde n}{B}},
\end{equation} 
to obtain the optimization problem
\begin{align}
    \tilde n^\star &= \arg\max\limits_{\tilde n \geq B} u_{SGD}(\tilde n) \text{, and } \nonumber\\
    n^\star &= \arg\max_{n\in\{B\lfloor \tilde n^\star/B\rfloor,\, B\lceil \tilde n^\star/B\rceil\}} u_{SGD}(n)\label{eq:nstar_sgd},
\end{align}
where the rounding ensures $n^\star \in B\N{}$, since the total number of SGD training samples must satisfy $n = kB$ for integer $k$.
Under the assumptions that the learning rate $\eta$ is small enough that $|\tilde \mu(k)| \gg \sqrt{\frac{\eta\sigxi^2}{2B}}$ throughout the optimization domain, and that $k$ is large enough for the OU approximation Eq.~\eqref{eq:langevin} to track the true SGD dynamics (see Section~\ref{sec:nonstationarysgd}), we can use the approximate entropy $\Ht^{\tilde \mu(k)}_s(\Wh_{fp,k})$ given in Eq.~\eqref{eq:three_approxes}. Taking the derivative of $u_{SGD}$ with respect to $\tilde n$, we have
\begin{align}
    &u^\prime_{SGD} (\tilde n) = -\rho_I\frac{\eta  \sigx^4 e^{-\frac{2 \eta  \tilde n \sigx^2}{B}} \left(\eta  \sigxi ^2-2 B (w-\wh_0)^2\right)}{B^2\paren{\sigx^2\bracket{\frac{\eta\sigxi^2}{2B}\paren{1 -e^{-\frac{2\sigx^2 \eta \tilde n}{B}}} + \paren{\tilde \mu\paren{\frac{\tilde n}{B}} - w}^2} + \sigxi^2}^{2}} \nonumber\\
    &- \rho_J k_BT \ln[2] \Bigg(2\Ht^0_{s}(p) - \frac{1}{2}\log\bracket{ 1 + \frac{\sigma^2_x w^2}{\sigma_\xi^2}}+ \frac{\eta  \sigx^2  \left(w \left(1-e^{-\frac{\eta  \tilde n \sigx^2}{B}}\right)-\wh_0 \right)}{B \ln\bracket{2} \left(1 - e^{-\frac{2 \eta  \tilde n \sigx^2}{B}}\right) \left(w \left(e^{\frac{\eta  \tilde n \sigx^2}{B}}-1\right)+\wh_0\right)}\Bigg).\label{eq:ddn_profit_sgd}
\end{align}

Similar to the exact formula case, in Fig.~\ref{fig:ddn_profit_sgd} we plot $u^\prime_{SGD} (n)$ for various $\rho_J$ and show that its zero crossings correspond to maxima in  $u_{SGD} (n)$ in Fig.~\ref{fig:profit_sgd}. The optimal dataset size decreases as $\rho_J$ increases.

\FloatBarrier
\section{Exact Floating-point Entropy vs Approximated Floating-point Entropy Figures}\label{app:exact_FPN_extrafigs}

\begin{figure}[htbp]
  \centering
  \begin{subfigure}{0.25\textwidth}
    \centering
    \includegraphics[width=\linewidth]{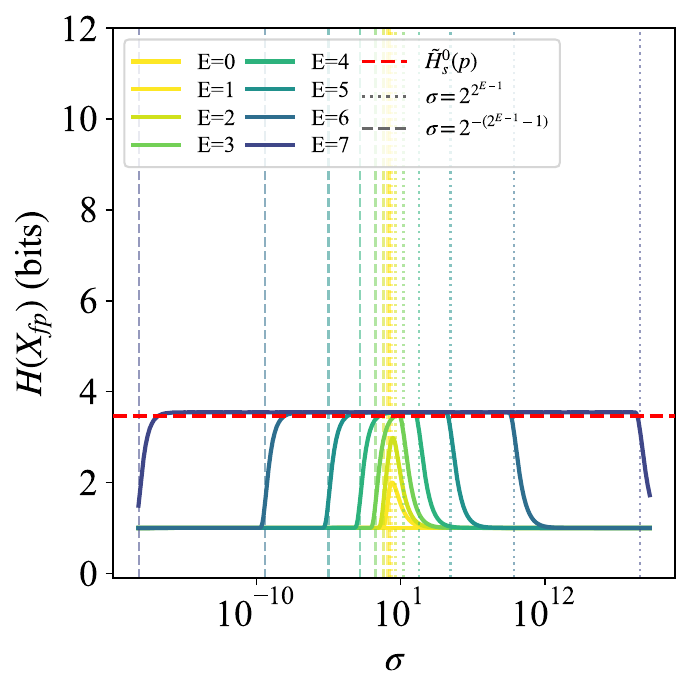}
    \caption{$p=1$.}
    \label{fig:app_sigma_p1}
  \end{subfigure}
  \begin{subfigure}{0.25\textwidth}
    \centering
    \includegraphics[width=\linewidth]{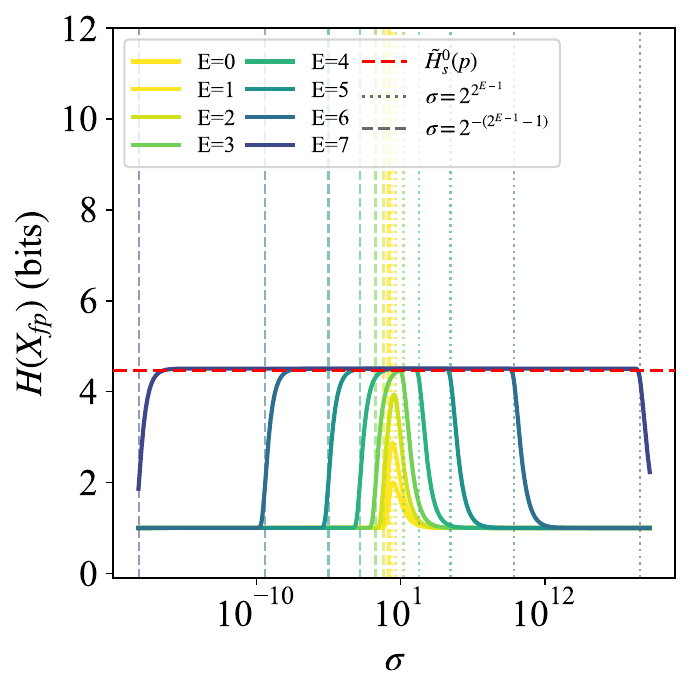}
    \caption{$p=2$.}
    \label{fig:app_sigma_p2}
  \end{subfigure}
  \begin{subfigure}{0.25\textwidth}
    \centering
    \includegraphics[width=\linewidth]{code/quantized_entropy_midpoint/sigma_p_E/entropy_sigma_midpoint_p3.pdf}
    \caption{$p=3$.}
    \label{fig:app_sigma_p3}
  \end{subfigure}
  \begin{subfigure}{0.25\textwidth}
    \centering
    \includegraphics[width=\linewidth]{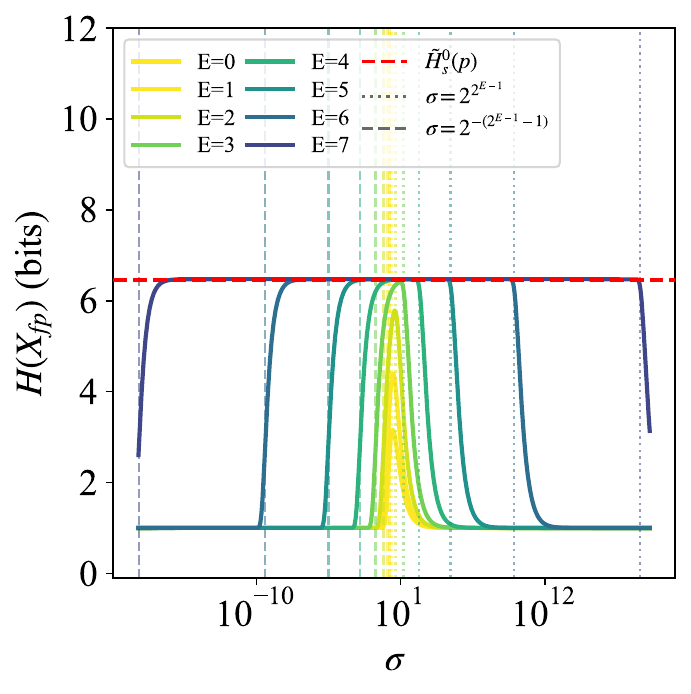}
    \caption{$p=4$.}
    \label{fig:app_sigma_p4}
  \end{subfigure}
  \begin{subfigure}{0.25\textwidth}
    \centering
    \includegraphics[width=\linewidth]{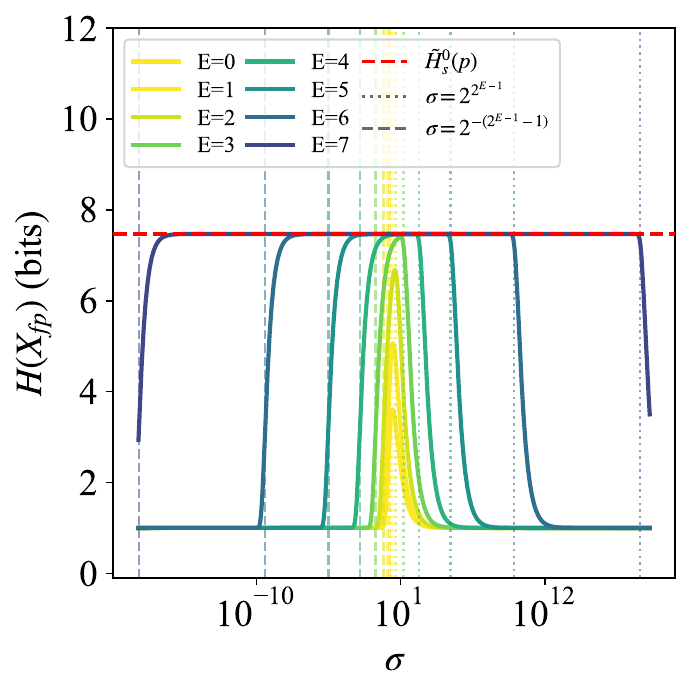}
    \caption{$p=5$.}
    \label{fig:app_sigma_p5}
  \end{subfigure}
  \begin{subfigure}{0.25\textwidth}
    \centering
    \includegraphics[width=\linewidth]{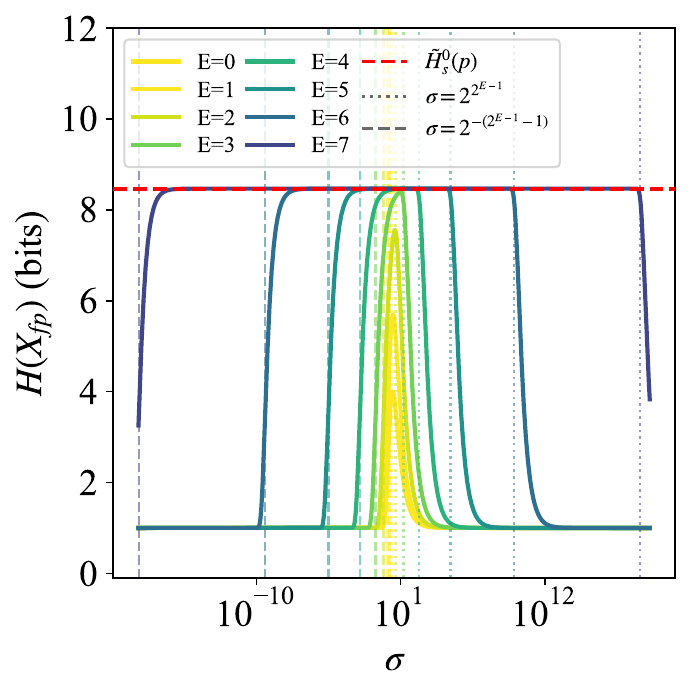}
    \caption{$p=6$.}
    \label{fig:app_sigma_p6}
  \end{subfigure}
  \begin{subfigure}{0.25\textwidth}
    \centering
    \includegraphics[width=\linewidth]{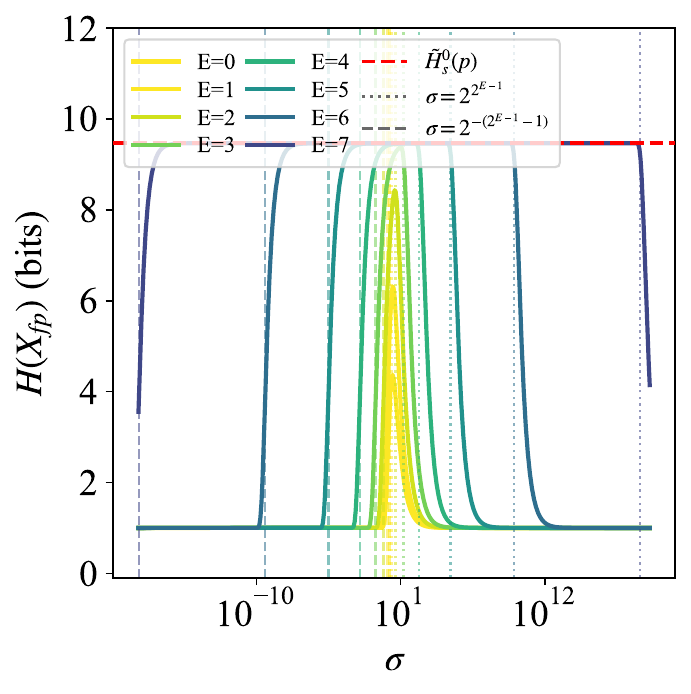}
    \caption{$p=7$.}
    \label{fig:app_sigma_p7}
  \end{subfigure}
  \begin{subfigure}{0.25\textwidth}
    \centering
    \includegraphics[width=\linewidth]{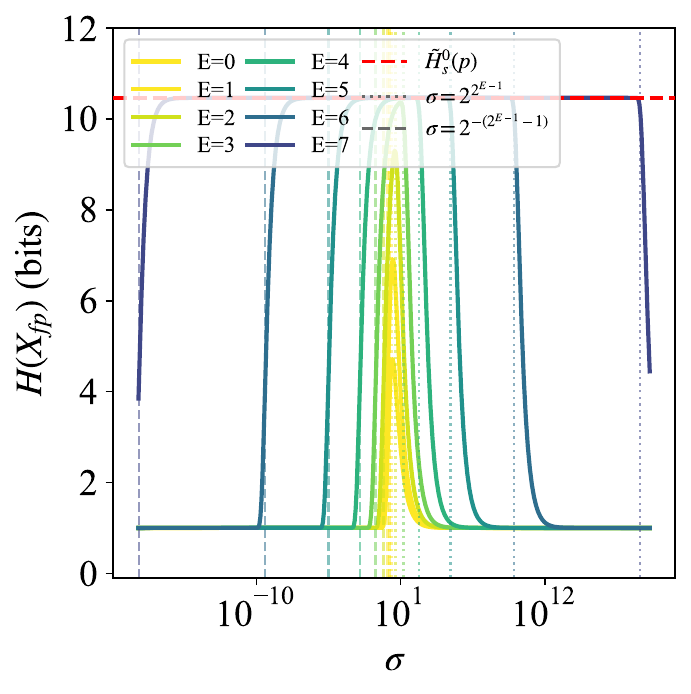}
    \caption{$p=8$.}
    \label{fig:app_sigma_p8}
  \end{subfigure}
  \caption{\textit{Exact midpoint-quantized entropy vs. standard deviation $\sigma$.} For each precision $p \in \{1,\ldots,8\}$, the exact discrete entropy $H(X_{fp})$ of $X \sim \gauss{0}{\sigma^2}$ (with $\mu=0$) is plotted as a function of $\sigma$ over a wide log-scale range. Each curve corresponds to a distinct value of exponent bits $E \in \{0,1,\ldots,7\}$. The vertical dashed lines mark $\sigma = 2^{e_{\min}}$ and the vertical dotted lines mark $\sigma = 2^{e_{\max}}$ for each $E$, and the horizontal red line shows the zero-mean approximation $\tilde{H}_s^0(p)$. These plots were generated by sweeping $\sigma$ over 500 log-spaced points and computing the exact entropy via Corollary~\ref{cor:gauss_fp_ent}.}
  \label{fig:app_entropy_vs_sigma_all}
\end{figure}

\begin{figure}[p]
  \centering
  \begin{subfigure}{0.32\textwidth}
    \centering
    \includegraphics[width=\linewidth]{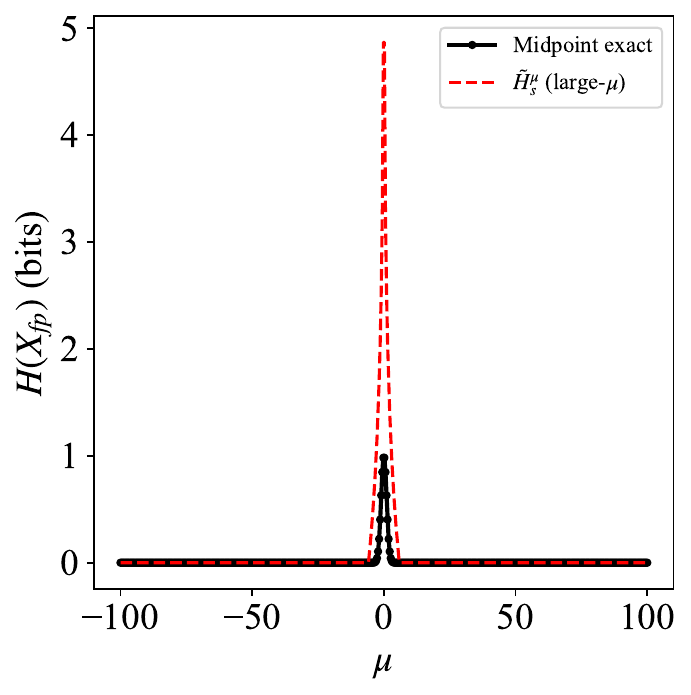}
    \caption{$p=1$, $E=0$.}
    \label{fig:app_mu_p1_E0}
  \end{subfigure}
  \begin{subfigure}{0.32\textwidth}
    \centering
    \includegraphics[width=\linewidth]{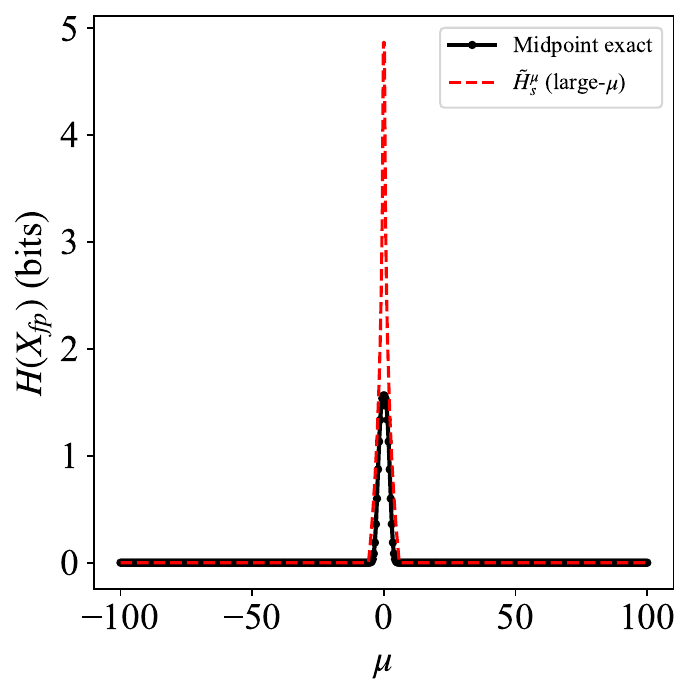}
    \caption{$p=1$, $E=1$.}
    \label{fig:app_mu_p1_E1}
  \end{subfigure}
  \begin{subfigure}{0.32\textwidth}
    \centering
    \includegraphics[width=\linewidth]{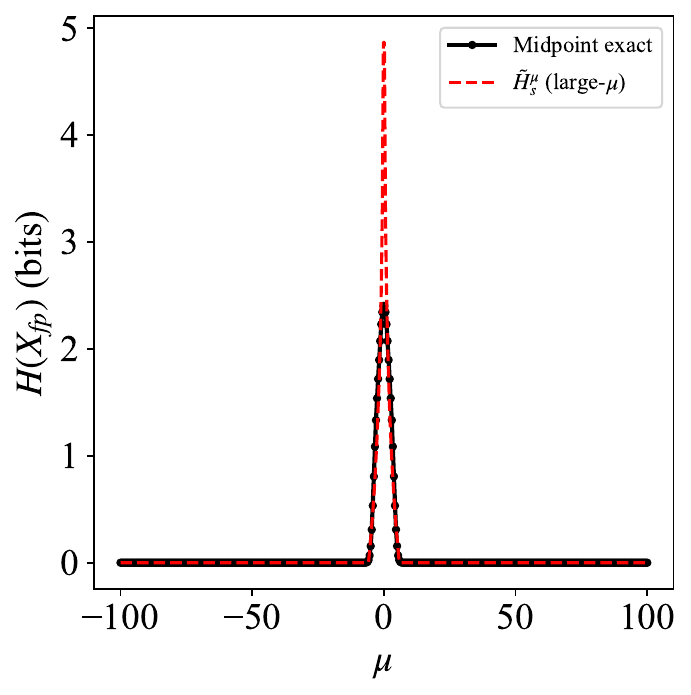}
    \caption{$p=1$, $E=2$.}
    \label{fig:app_mu_p1_E2}
  \end{subfigure}
  \begin{subfigure}{0.32\textwidth}
    \centering
    \includegraphics[width=\linewidth]{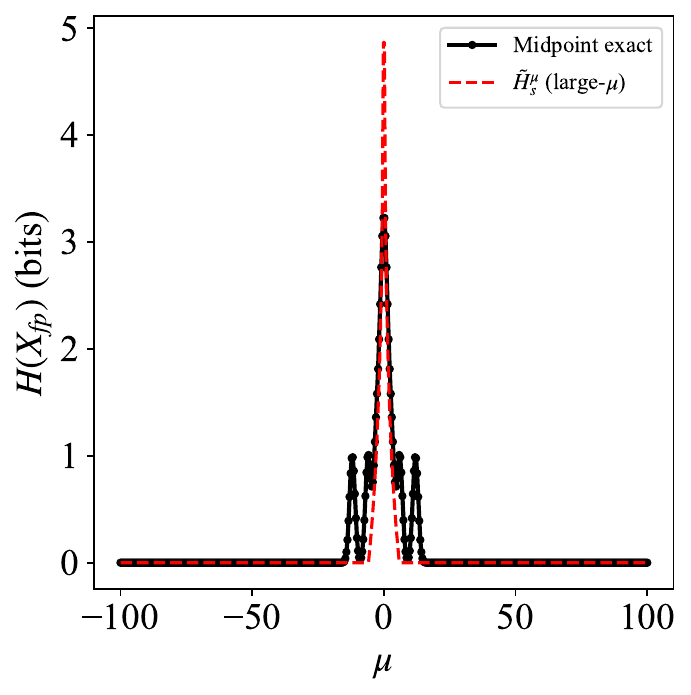}
    \caption{$p=1$, $E=3$.}
    \label{fig:app_mu_p1_E3}
  \end{subfigure}
  \begin{subfigure}{0.32\textwidth}
    \centering
    \includegraphics[width=\linewidth]{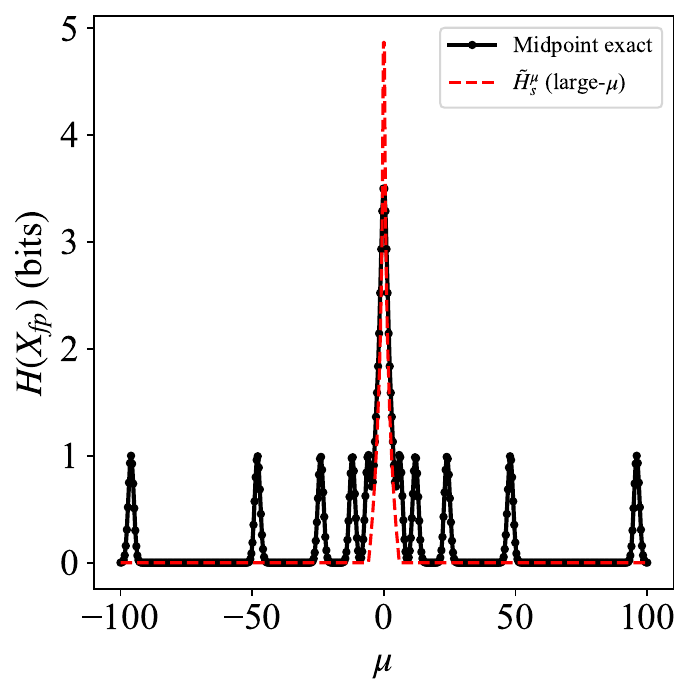}
    \caption{$p=1$, $E=4$.}
    \label{fig:app_mu_p1_E4}
  \end{subfigure}
  \begin{subfigure}{0.32\textwidth}
    \centering
    \includegraphics[width=\linewidth]{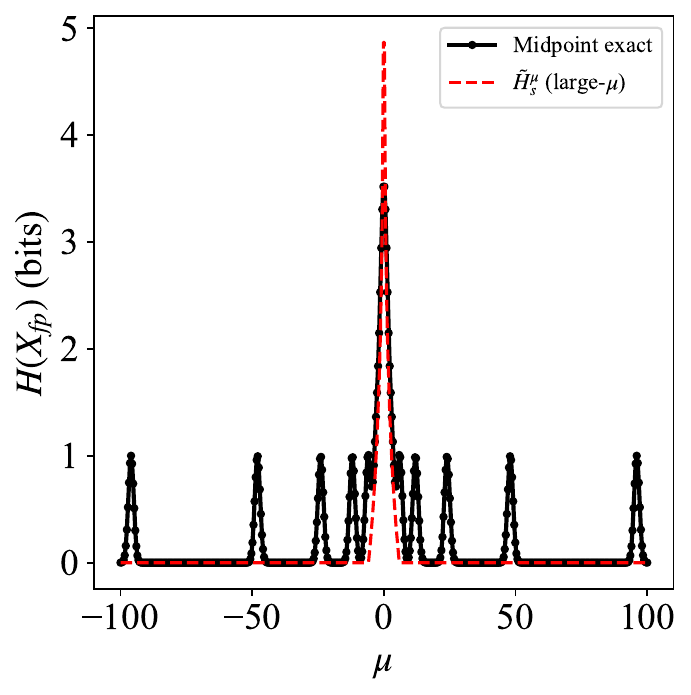}
    \caption{$p=1$, $E=5$.}
    \label{fig:app_mu_p1_E5}
  \end{subfigure}
  \begin{subfigure}{0.32\textwidth}
    \centering
    \includegraphics[width=\linewidth]{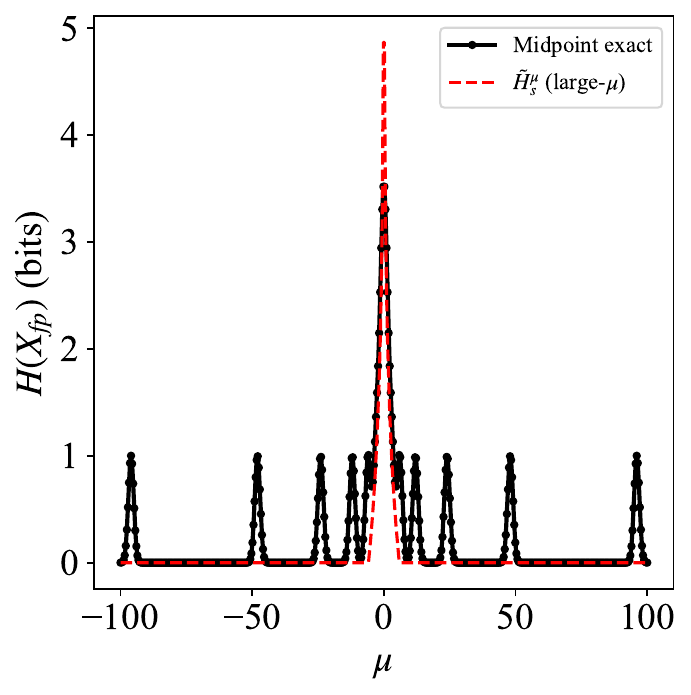}
    \caption{$p=1$, $E=6$.}
    \label{fig:app_mu_p1_E6}
  \end{subfigure}
  \begin{subfigure}{0.32\textwidth}
    \centering
    \includegraphics[width=\linewidth]{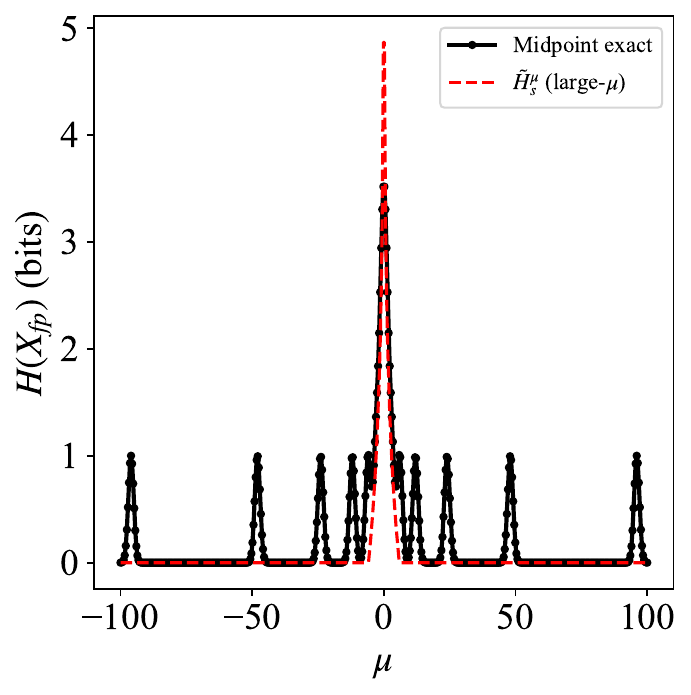}
    \caption{$p=1$, $E=7$.}
    \label{fig:app_mu_p1_E7}
  \end{subfigure}
  \caption{\textit{Exact midpoint-quantized entropy vs.\ mean $\mu$, $p=1$.} The exact entropy $H(X_{fp})$ of $X \sim \gauss{\mu}{1}$ (with $\sigma=1.0$ fixed and $p=1$) is plotted as a function of $\mu \in [-100, 100]$. Each panel shows a different number of exponent bits $E$. The solid curve is the exact entropy and the dashed curve is the large-$|\mu|$ approximation $\tilde{H}_s^\mu(\Xfp)$. These plots were generated by sweeping $\mu$ over 500 linearly-spaced points and evaluating the exact entropy via Corollary~\ref{cor:gauss_fp_ent}.}
  \label{fig:app_entropy_vs_mu_p1}
\end{figure}

\begin{figure}[p]
  \centering
  \begin{subfigure}{0.32\textwidth}
    \centering
    \includegraphics[width=\linewidth]{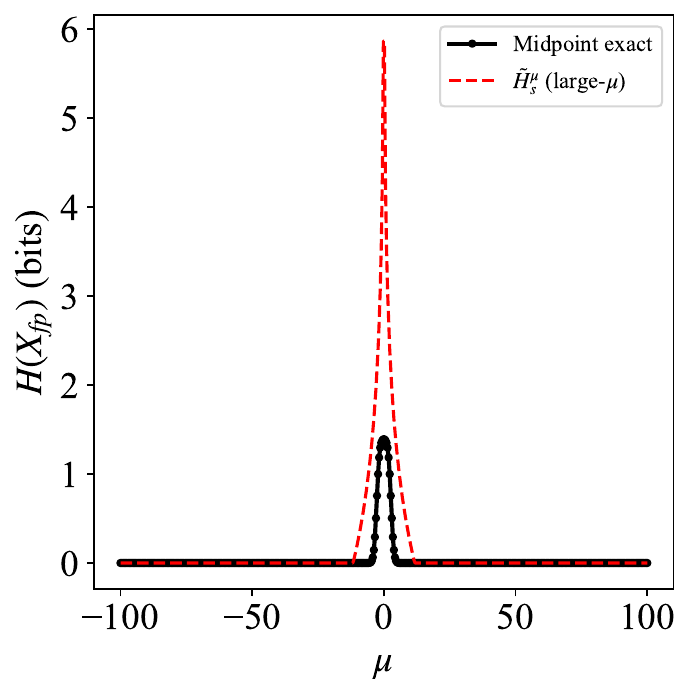}
    \caption{$p=2$, $E=0$.}
    \label{fig:app_mu_p2_E0}
  \end{subfigure}
  \begin{subfigure}{0.32\textwidth}
    \centering
    \includegraphics[width=\linewidth]{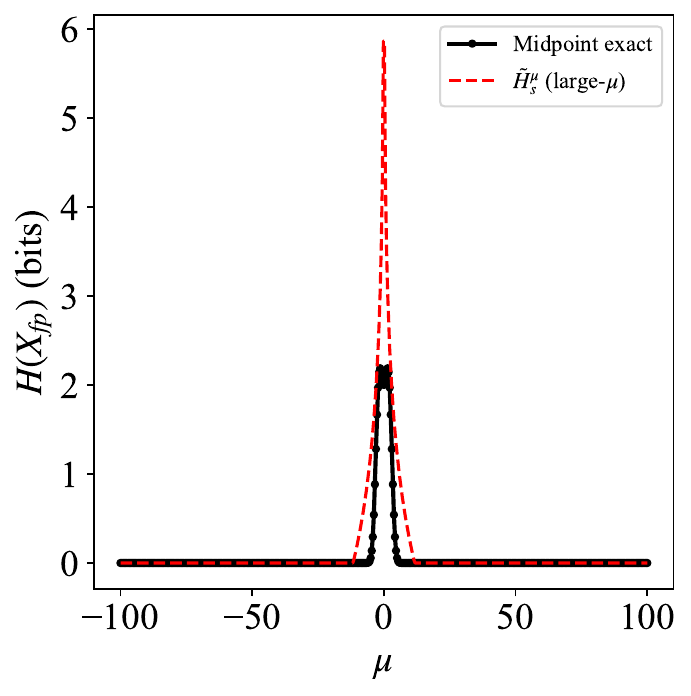}
    \caption{$p=2$, $E=1$.}
    \label{fig:app_mu_p2_E1}
  \end{subfigure}
  \begin{subfigure}{0.32\textwidth}
    \centering
    \includegraphics[width=\linewidth]{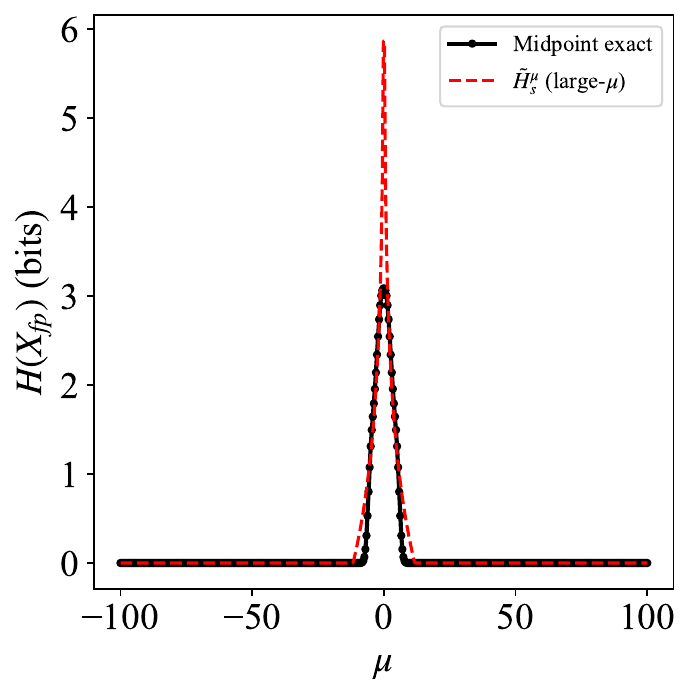}
    \caption{$p=2$, $E=2$.}
    \label{fig:app_mu_p2_E2}
  \end{subfigure}
  \begin{subfigure}{0.32\textwidth}
    \centering
    \includegraphics[width=\linewidth]{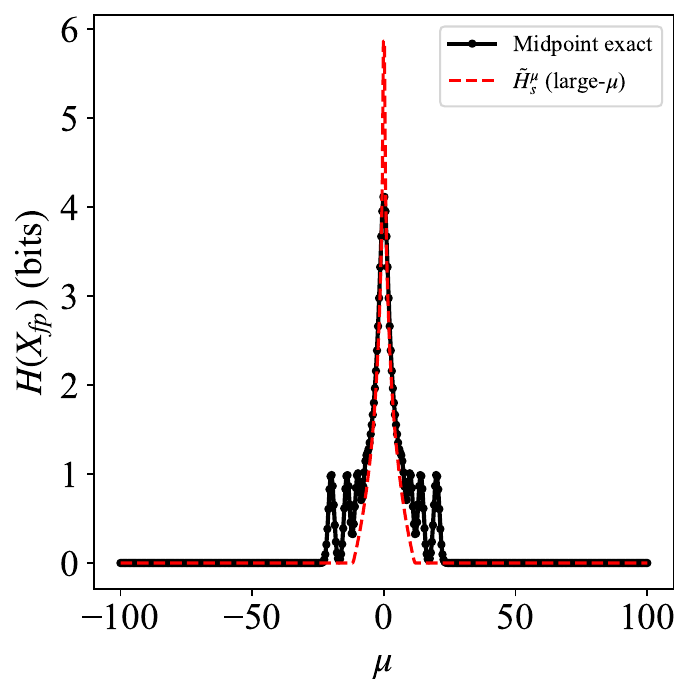}
    \caption{$p=2$, $E=3$.}
    \label{fig:app_mu_p2_E3}
  \end{subfigure}
  \begin{subfigure}{0.32\textwidth}
    \centering
    \includegraphics[width=\linewidth]{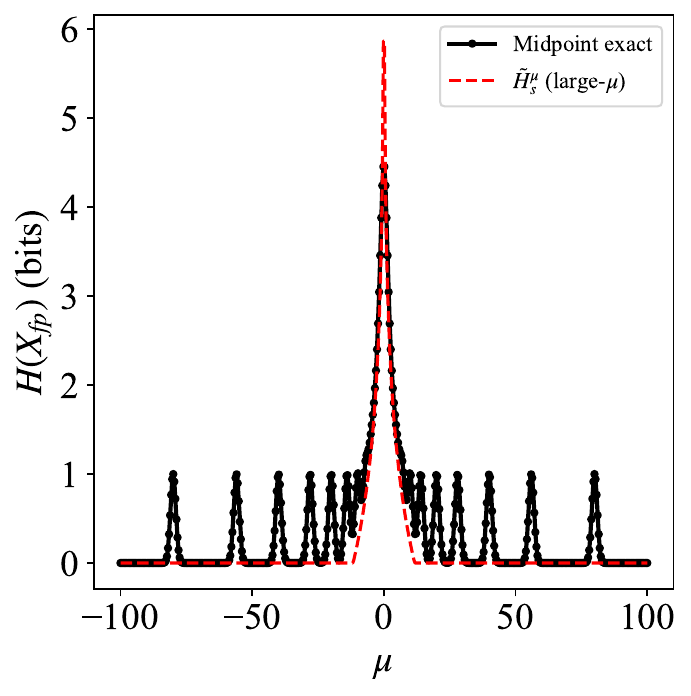}
    \caption{$p=2$, $E=4$.}
    \label{fig:app_mu_p2_E4}
  \end{subfigure}
  \begin{subfigure}{0.32\textwidth}
    \centering
    \includegraphics[width=\linewidth]{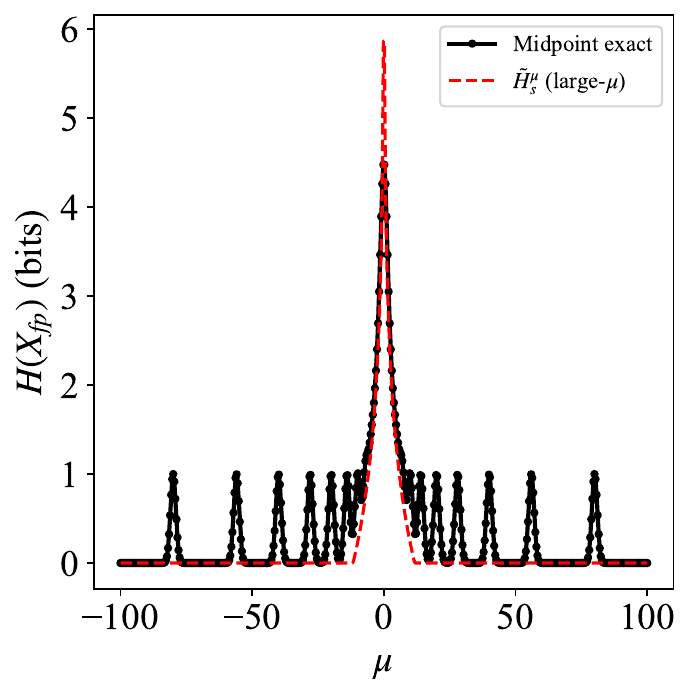}
    \caption{$p=2$, $E=5$.}
    \label{fig:app_mu_p2_E5}
  \end{subfigure}
  \begin{subfigure}{0.32\textwidth}
    \centering
    \includegraphics[width=\linewidth]{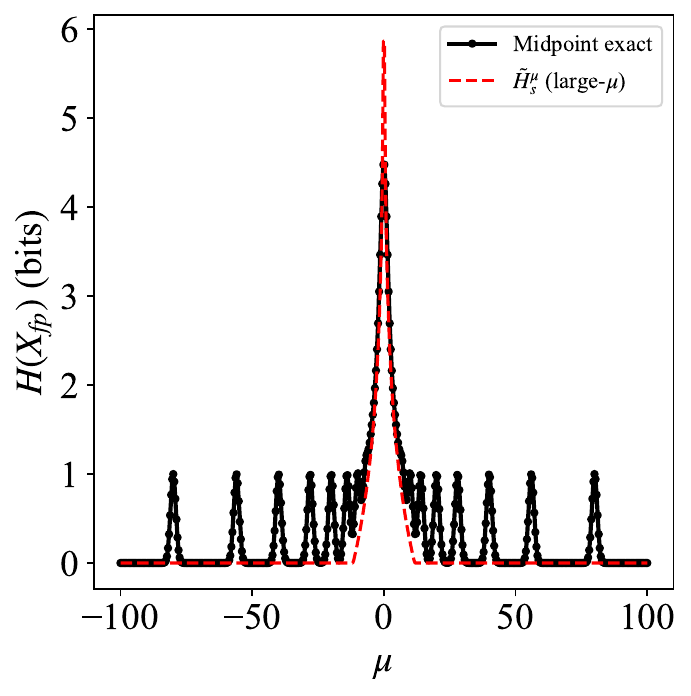}
    \caption{$p=2$, $E=6$.}
    \label{fig:app_mu_p2_E6}
  \end{subfigure}
  \begin{subfigure}{0.32\textwidth}
    \centering
    \includegraphics[width=\linewidth]{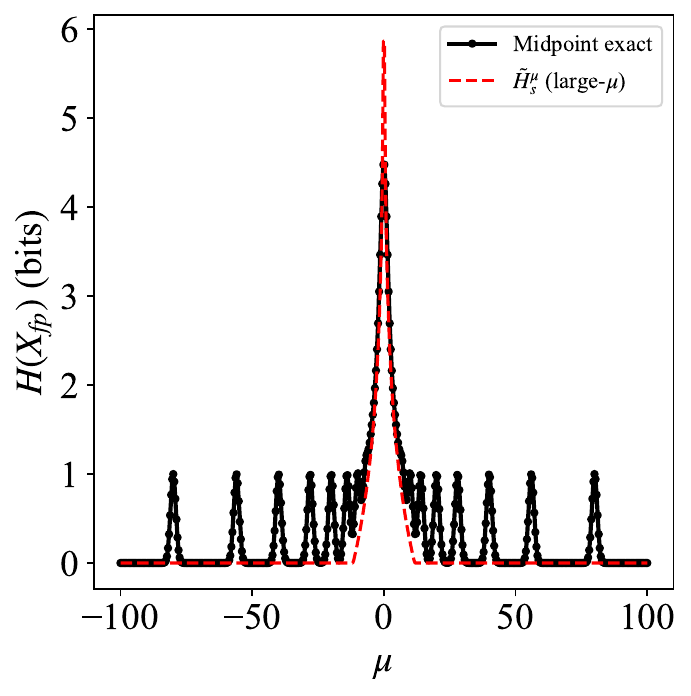}
    \caption{$p=2$, $E=7$.}
    \label{fig:app_mu_p2_E7}
  \end{subfigure}
  \caption{\textit{Exact midpoint-quantized entropy vs.\ mean $\mu$, $p=2$.} Same experiment as Fig.~\ref{fig:app_entropy_vs_mu_p1} with precision $p=2$.}
  \label{fig:app_entropy_vs_mu_p2}
\end{figure}

\begin{figure}[p]
  \centering
  \begin{subfigure}{0.32\textwidth}
    \centering
    \includegraphics[width=\linewidth]{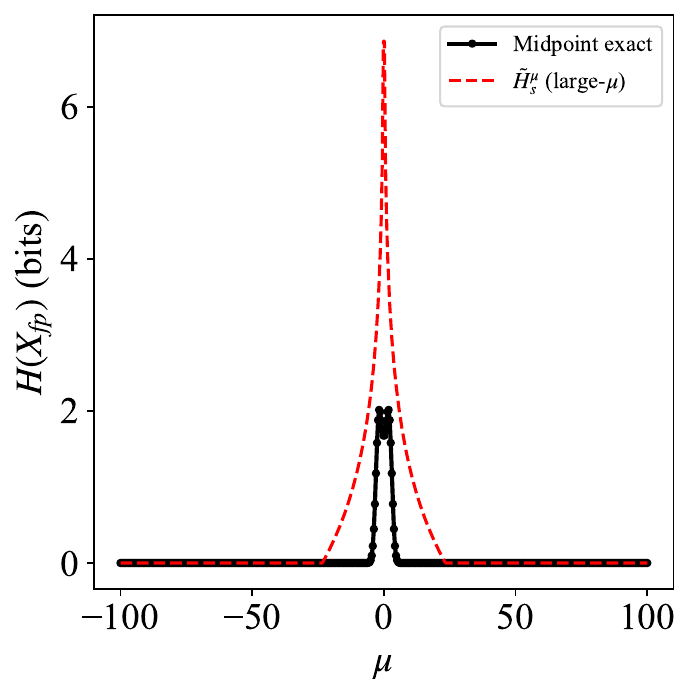}
    \caption{$p=3$, $E=0$.}
    \label{fig:app_mu_p3_E0}
  \end{subfigure}
  \begin{subfigure}{0.32\textwidth}
    \centering
    \includegraphics[width=\linewidth]{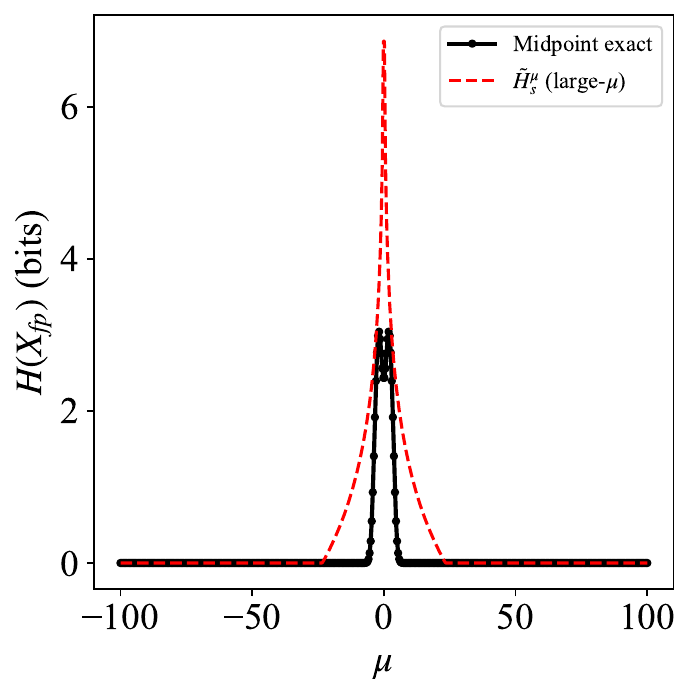}
    \caption{$p=3$, $E=1$.}
    \label{fig:app_mu_p3_E1}
  \end{subfigure}
  \begin{subfigure}{0.32\textwidth}
    \centering
    \includegraphics[width=\linewidth]{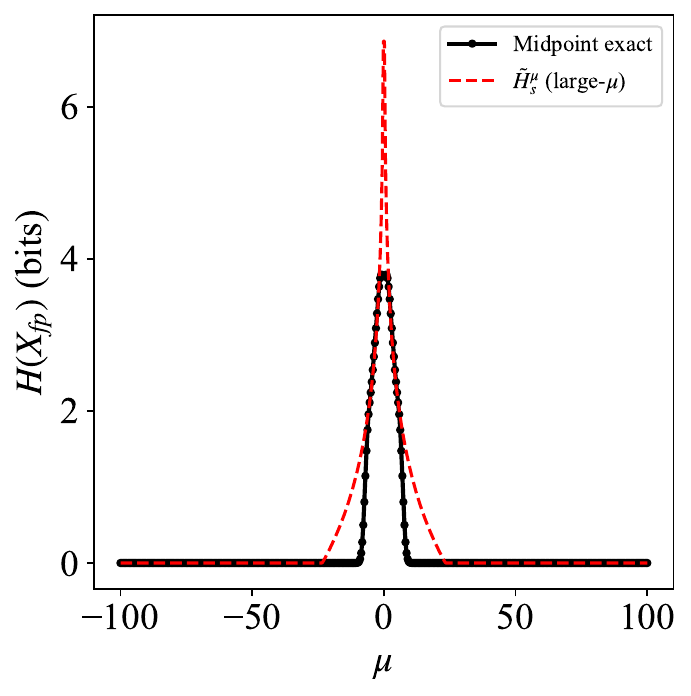}
    \caption{$p=3$, $E=2$.}
    \label{fig:app_mu_p3_E2}
  \end{subfigure}
  \begin{subfigure}{0.32\textwidth}
    \centering
    \includegraphics[width=\linewidth]{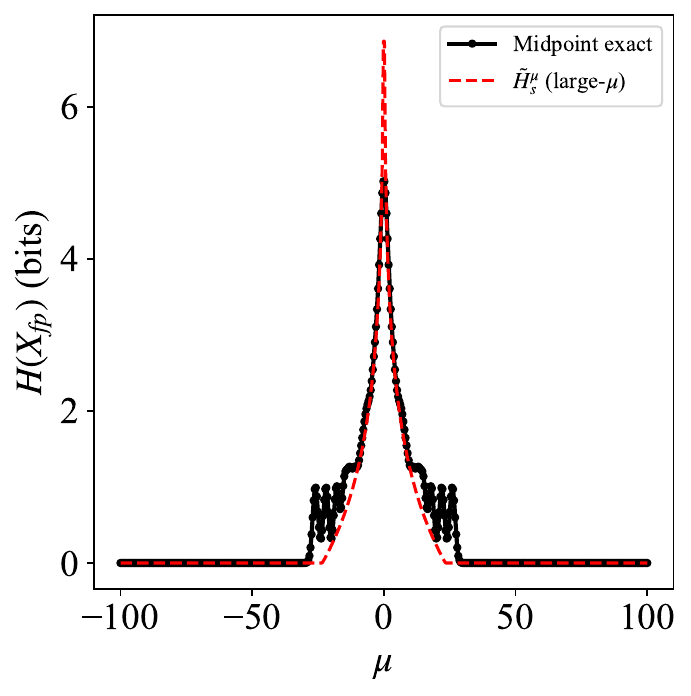}
    \caption{$p=3$, $E=3$.}
    \label{fig:app_mu_p3_E3}
  \end{subfigure}
  \begin{subfigure}{0.32\textwidth}
    \centering
    \includegraphics[width=\linewidth]{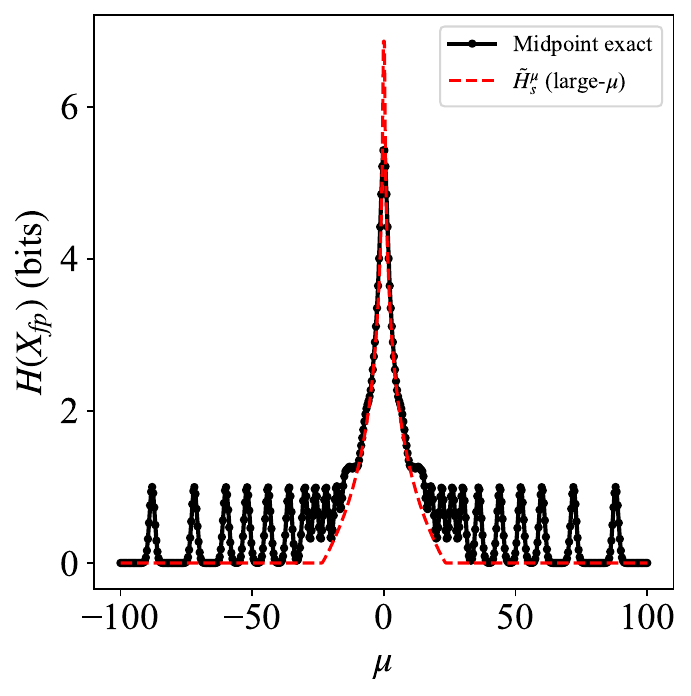}
    \caption{$p=3$, $E=4$.}
    \label{fig:app_mu_p3_E4}
  \end{subfigure}
  \begin{subfigure}{0.32\textwidth}
    \centering
    \includegraphics[width=\linewidth]{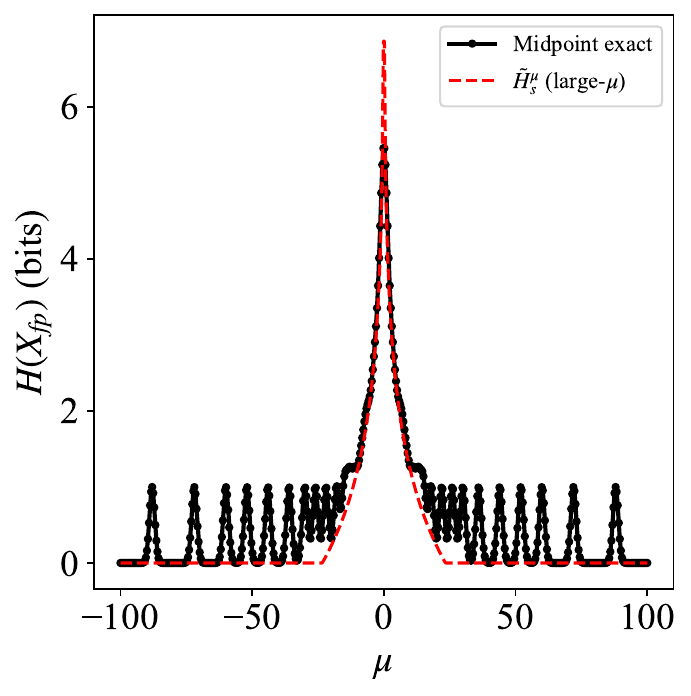}
    \caption{$p=3$, $E=5$.}
    \label{fig:app_mu_p3_E5}
  \end{subfigure}
  \begin{subfigure}{0.32\textwidth}
    \centering
    \includegraphics[width=\linewidth]{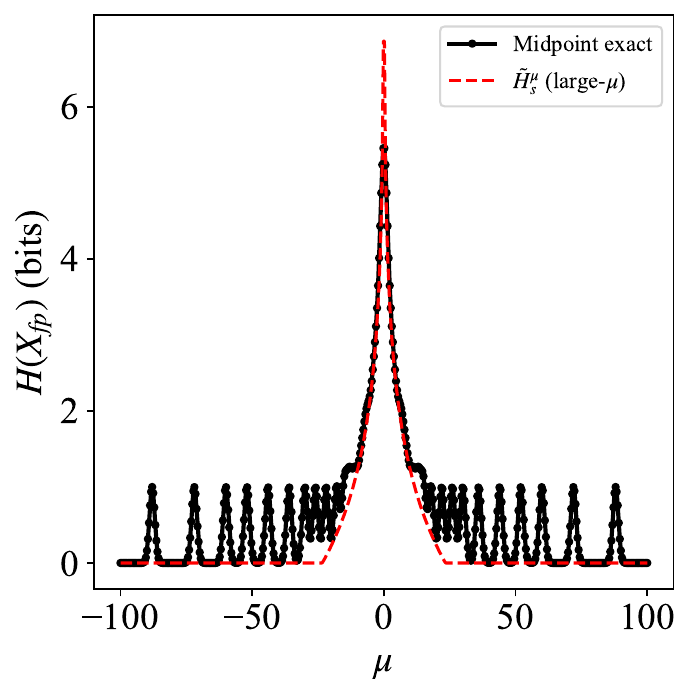}
    \caption{$p=3$, $E=6$.}
    \label{fig:app_mu_p3_E6}
  \end{subfigure}
  \begin{subfigure}{0.32\textwidth}
    \centering
    \includegraphics[width=\linewidth]{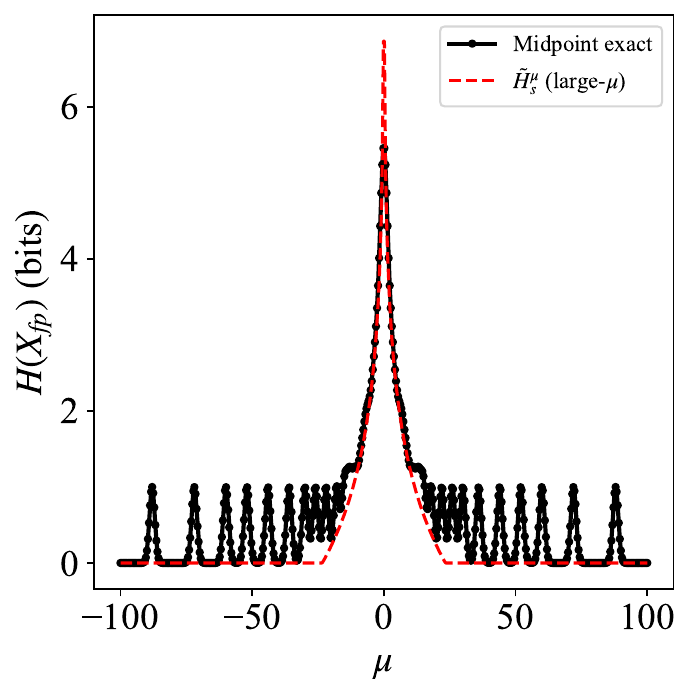}
    \caption{$p=3$, $E=7$.}
    \label{fig:app_mu_p3_E7}
  \end{subfigure}
  \caption{\textit{Exact midpoint-quantized entropy vs.\ mean $\mu$, $p=3$.} Same experiment as Fig.~\ref{fig:app_entropy_vs_mu_p1} with precision $p=3$.}
  \label{fig:app_entropy_vs_mu_p3}
\end{figure}

\begin{figure}[p]
  \centering
  \begin{subfigure}{0.32\textwidth}
    \centering
    \includegraphics[width=\linewidth]{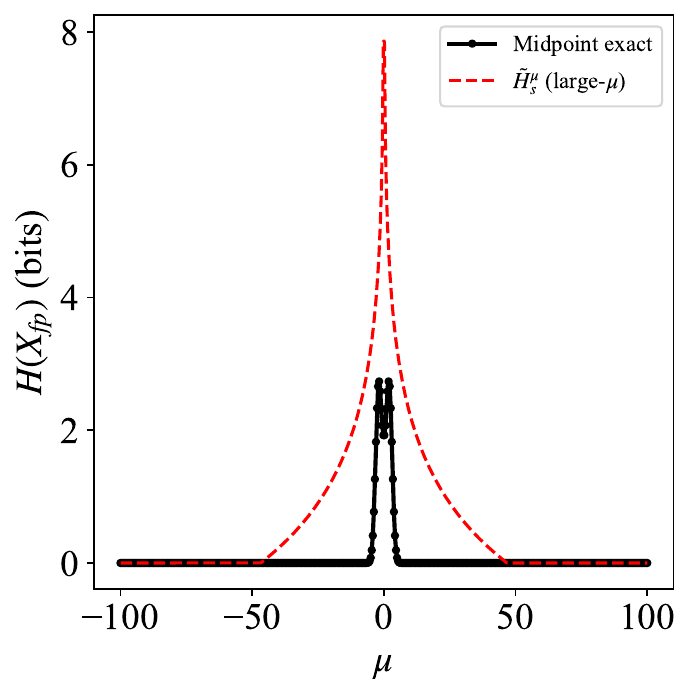}
    \caption{$p=4$, $E=0$.}
    \label{fig:app_mu_p4_E0}
  \end{subfigure}
  \begin{subfigure}{0.32\textwidth}
    \centering
    \includegraphics[width=\linewidth]{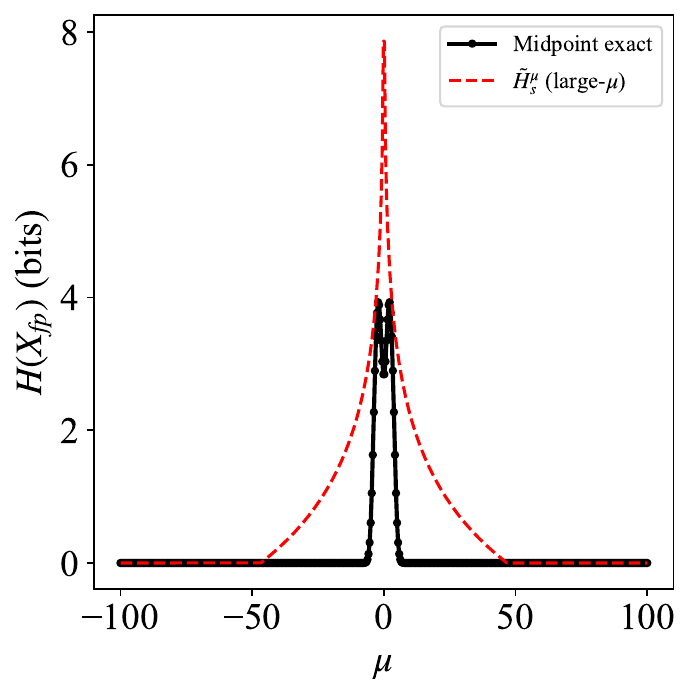}
    \caption{$p=4$, $E=1$.}
    \label{fig:app_mu_p4_E1}
  \end{subfigure}
  \begin{subfigure}{0.32\textwidth}
    \centering
    \includegraphics[width=\linewidth]{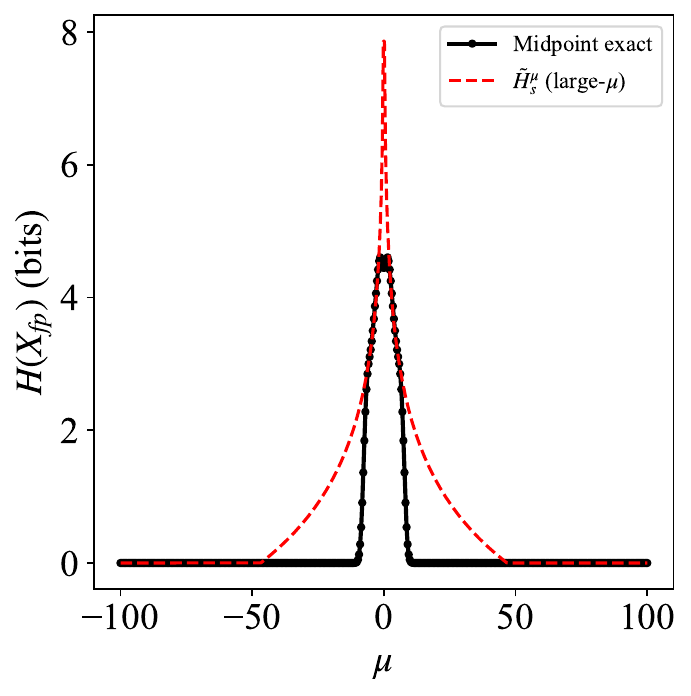}
    \caption{$p=4$, $E=2$.}
    \label{fig:app_mu_p4_E2}
  \end{subfigure}
  \begin{subfigure}{0.32\textwidth}
    \centering
    \includegraphics[width=\linewidth]{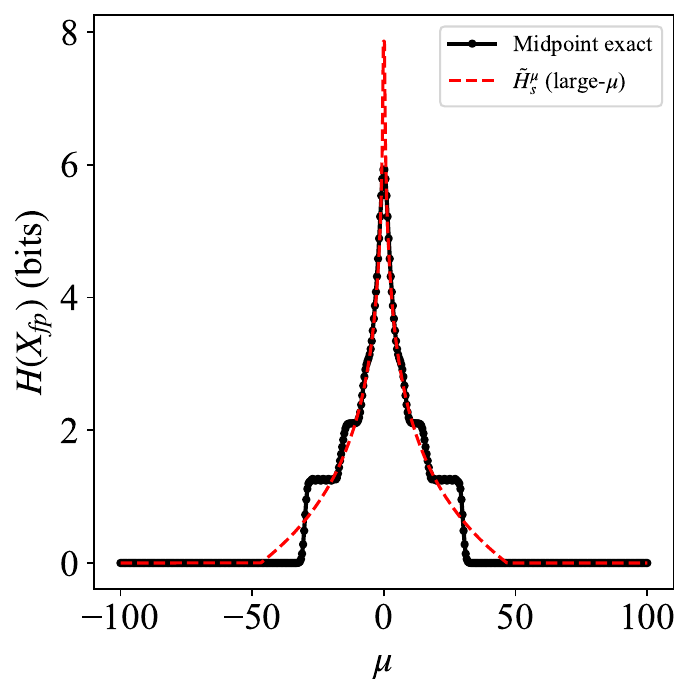}
    \caption{$p=4$, $E=3$.}
    \label{fig:app_mu_p4_E3}
  \end{subfigure}
  \begin{subfigure}{0.32\textwidth}
    \centering
    \includegraphics[width=\linewidth]{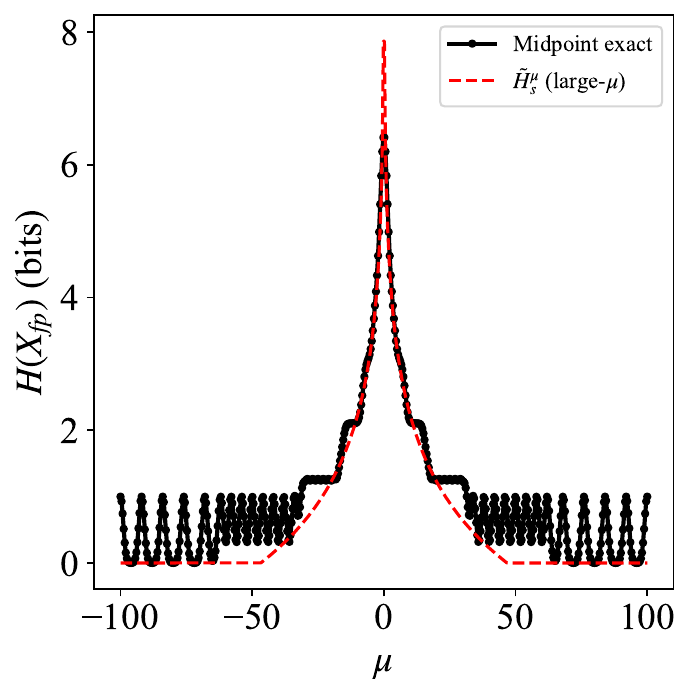}
    \caption{$p=4$, $E=4$.}
    \label{fig:app_mu_p4_E4}
  \end{subfigure}
  \begin{subfigure}{0.32\textwidth}
    \centering
    \includegraphics[width=\linewidth]{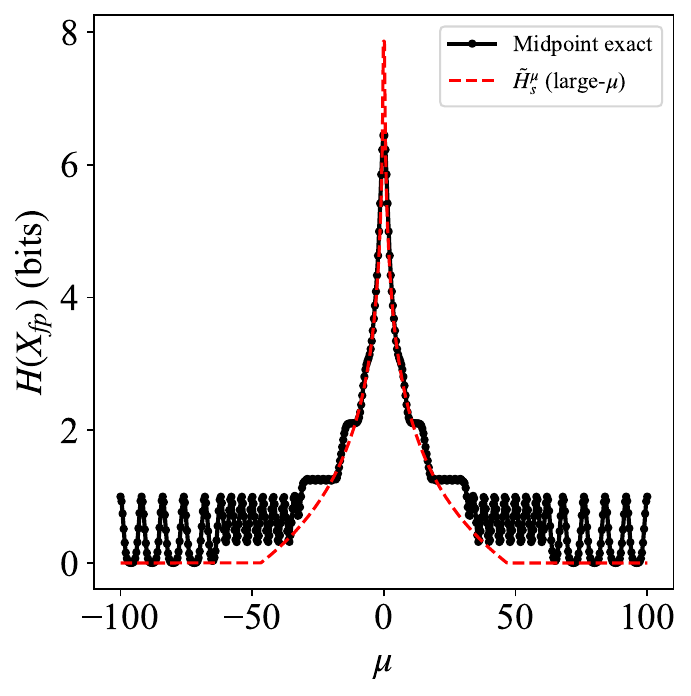}
    \caption{$p=4$, $E=5$.}
    \label{fig:app_mu_p4_E5}
  \end{subfigure}
  \begin{subfigure}{0.32\textwidth}
    \centering
    \includegraphics[width=\linewidth]{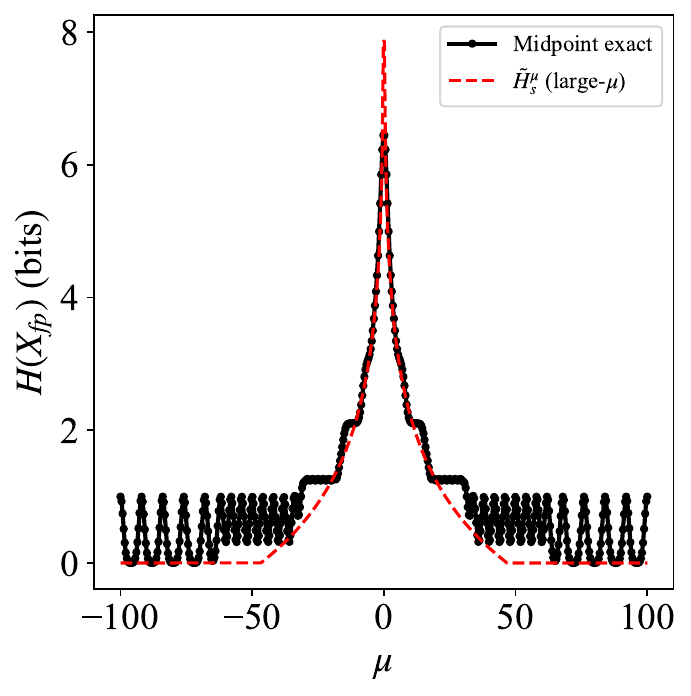}
    \caption{$p=4$, $E=6$.}
    \label{fig:app_mu_p4_E6}
  \end{subfigure}
  \begin{subfigure}{0.32\textwidth}
    \centering
    \includegraphics[width=\linewidth]{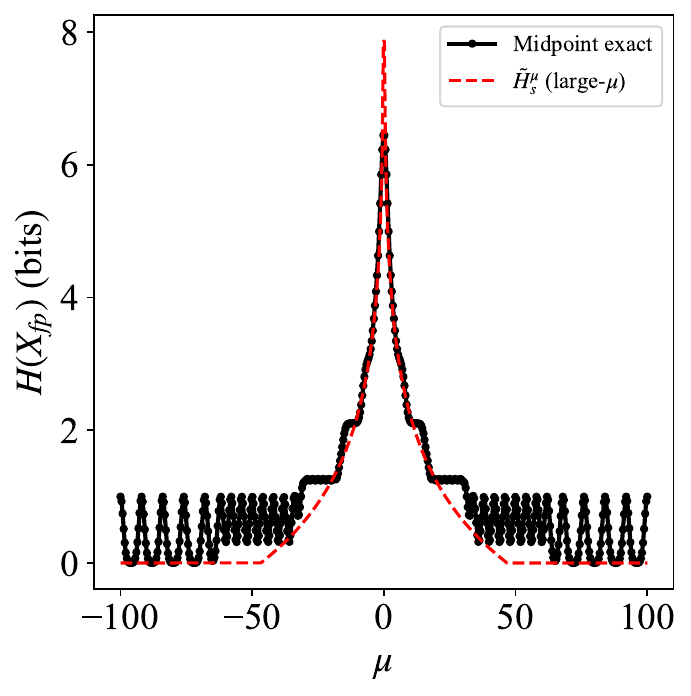}
    \caption{$p=4$, $E=7$.}
    \label{fig:app_mu_p4_E7}
  \end{subfigure}
  \caption{\textit{Exact midpoint-quantized entropy vs.\ mean $\mu$, $p=4$.} Same experiment as Fig.~\ref{fig:app_entropy_vs_mu_p1} with precision $p=4$.}
  \label{fig:app_entropy_vs_mu_p4}
\end{figure}

\begin{figure}[p]
  \centering
  \begin{subfigure}{0.32\textwidth}
    \centering
    \includegraphics[width=\linewidth]{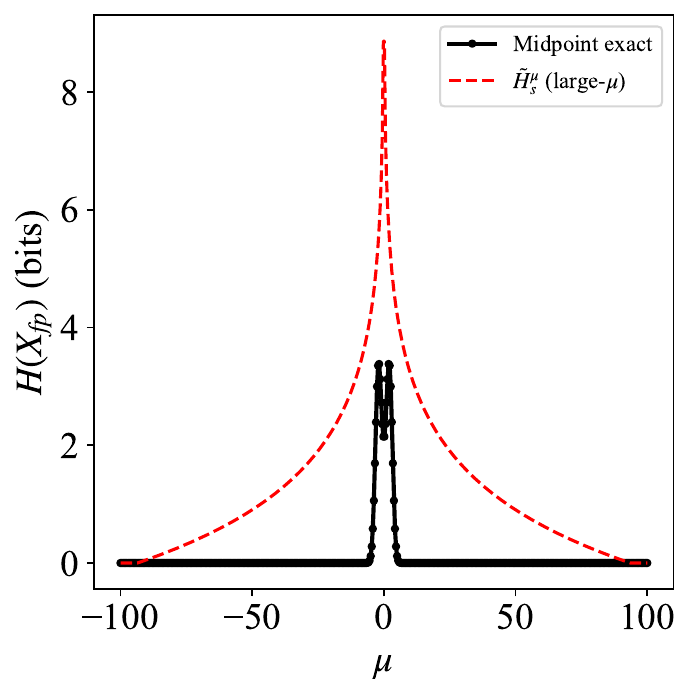}
    \caption{$p=5$, $E=0$.}
    \label{fig:app_mu_p5_E0}
  \end{subfigure}
  \begin{subfigure}{0.32\textwidth}
    \centering
    \includegraphics[width=\linewidth]{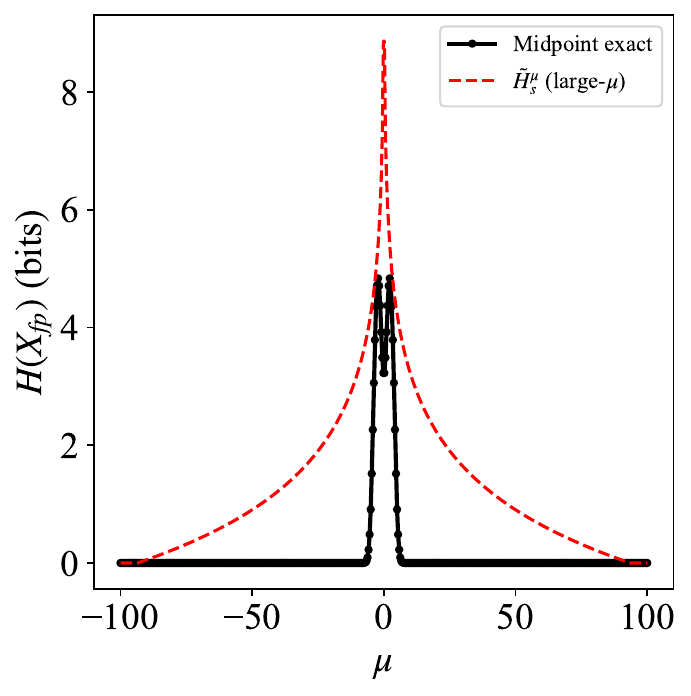}
    \caption{$p=5$, $E=1$.}
    \label{fig:app_mu_p5_E1}
  \end{subfigure}
  \begin{subfigure}{0.32\textwidth}
    \centering
    \includegraphics[width=\linewidth]{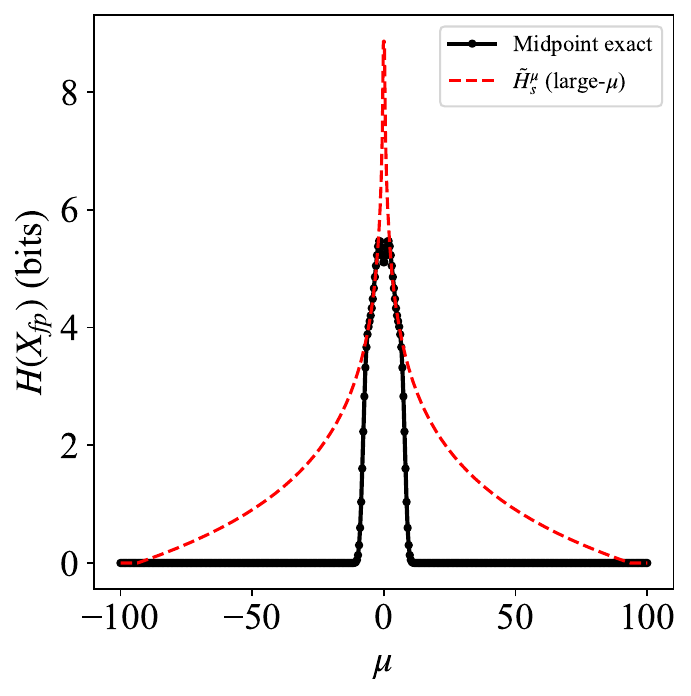}
    \caption{$p=5$, $E=2$.}
    \label{fig:app_mu_p5_E2}
  \end{subfigure}
  \begin{subfigure}{0.32\textwidth}
    \centering
    \includegraphics[width=\linewidth]{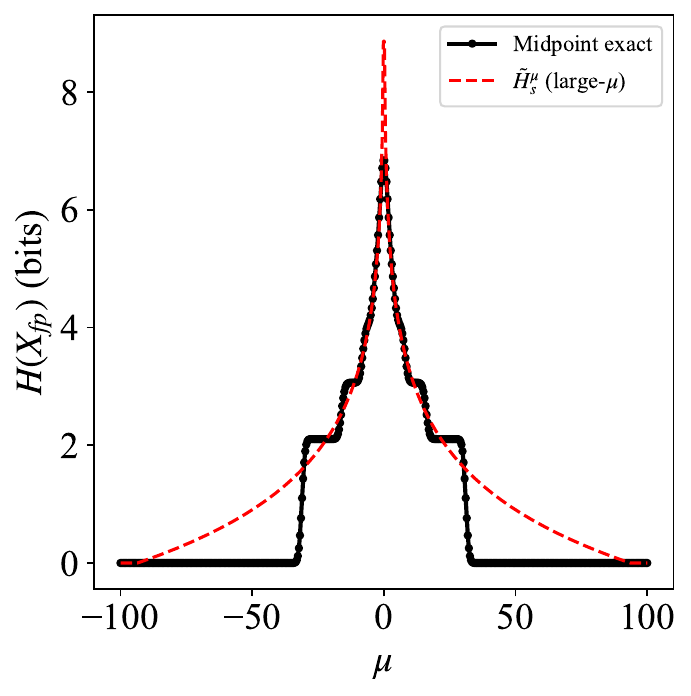}
    \caption{$p=5$, $E=3$.}
    \label{fig:app_mu_p5_E3}
  \end{subfigure}
  \begin{subfigure}{0.32\textwidth}
    \centering
    \includegraphics[width=\linewidth]{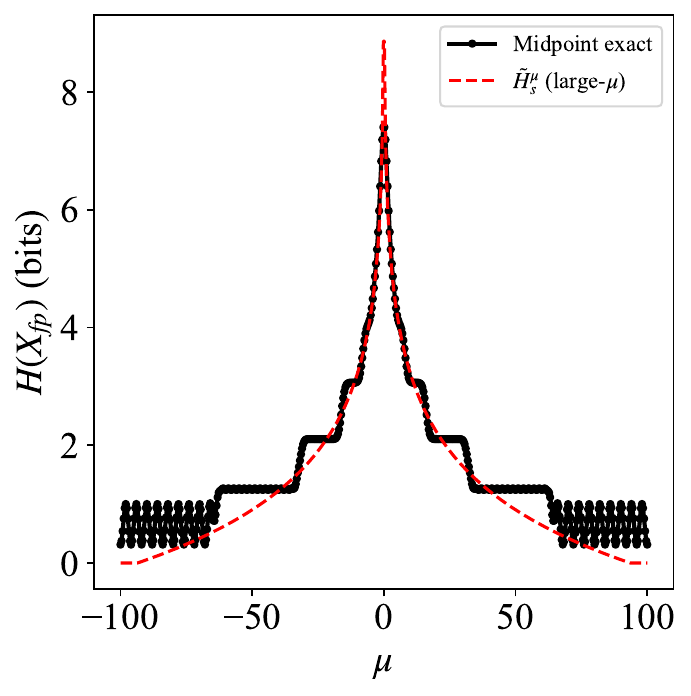}
    \caption{$p=5$, $E=4$.}
    \label{fig:app_mu_p5_E4}
  \end{subfigure}
  \begin{subfigure}{0.32\textwidth}
    \centering
    \includegraphics[width=\linewidth]{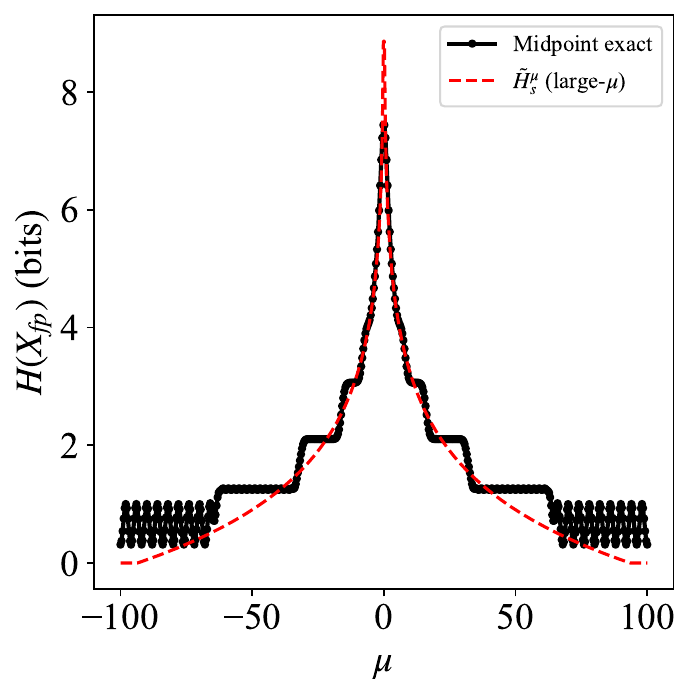}
    \caption{$p=5$, $E=5$.}
    \label{fig:app_mu_p5_E5}
  \end{subfigure}
  \begin{subfigure}{0.32\textwidth}
    \centering
    \includegraphics[width=\linewidth]{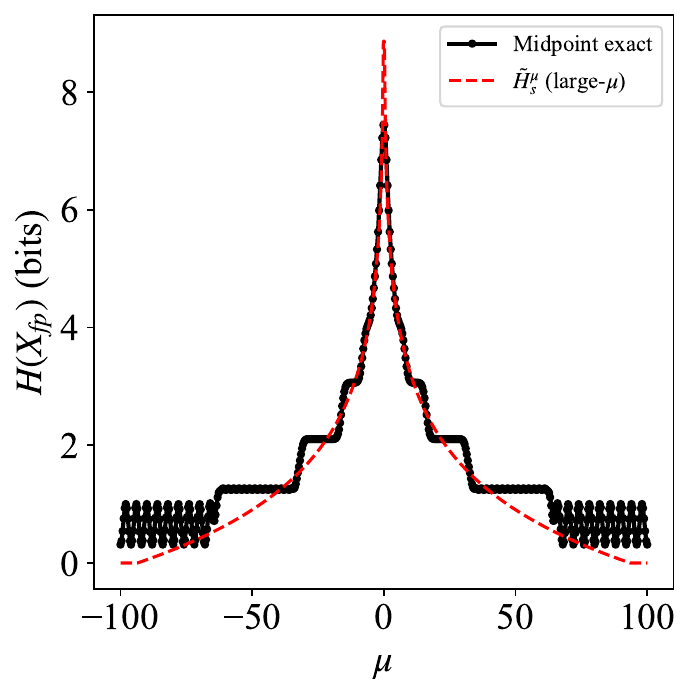}
    \caption{$p=5$, $E=6$.}
    \label{fig:app_mu_p5_E6}
  \end{subfigure}
  \begin{subfigure}{0.32\textwidth}
    \centering
    \includegraphics[width=\linewidth]{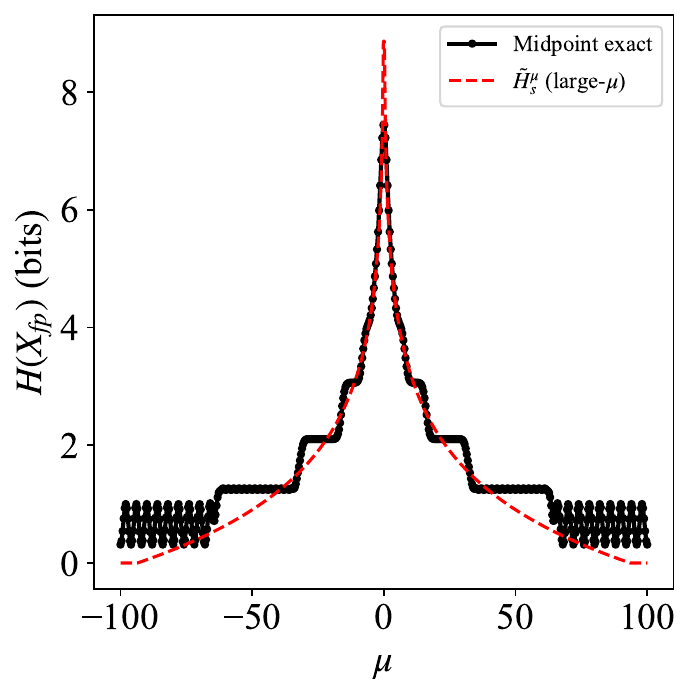}
    \caption{$p=5$, $E=7$.}
    \label{fig:app_mu_p5_E7}
  \end{subfigure}
  \caption{\textit{Exact midpoint-quantized entropy vs.\ mean $\mu$, $p=5$.} Same experiment as Fig.~\ref{fig:app_entropy_vs_mu_p1} with precision $p=5$.}
  \label{fig:app_entropy_vs_mu_p5}
\end{figure}

\begin{figure}[p]
  \centering
  \begin{subfigure}{0.32\textwidth}
    \centering
    \includegraphics[width=\linewidth]{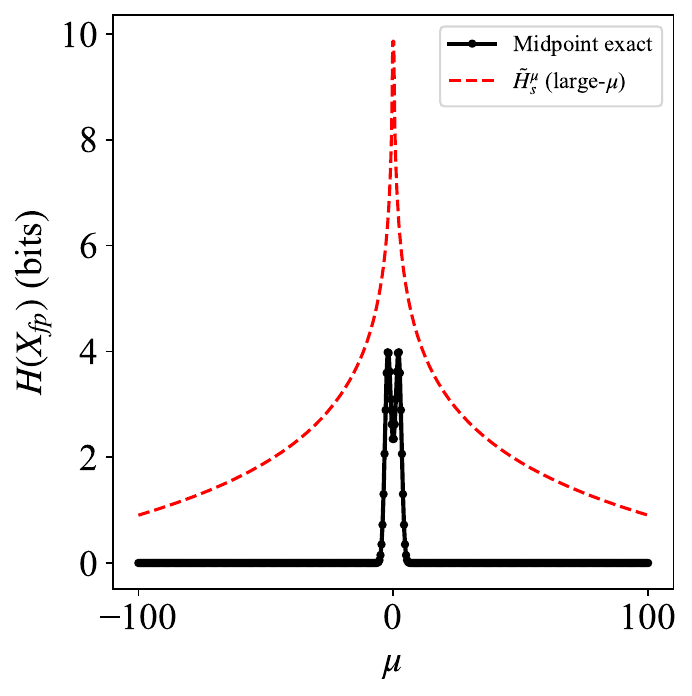}
    \caption{$p=6$, $E=0$.}
    \label{fig:app_mu_p6_E0}
  \end{subfigure}
  \begin{subfigure}{0.32\textwidth}
    \centering
    \includegraphics[width=\linewidth]{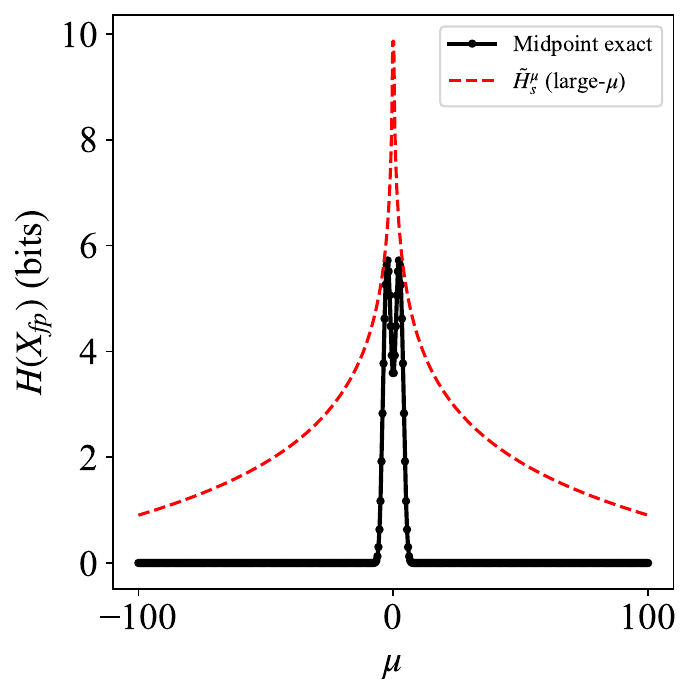}
    \caption{$p=6$, $E=1$.}
    \label{fig:app_mu_p6_E1}
  \end{subfigure}
  \begin{subfigure}{0.32\textwidth}
    \centering
    \includegraphics[width=\linewidth]{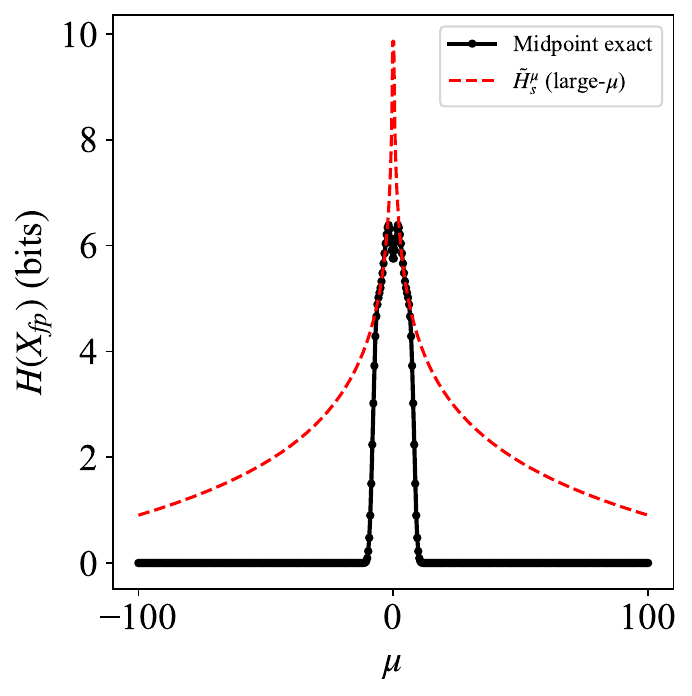}
    \caption{$p=6$, $E=2$.}
    \label{fig:app_mu_p6_E2}
  \end{subfigure}
  \begin{subfigure}{0.32\textwidth}
    \centering
    \includegraphics[width=\linewidth]{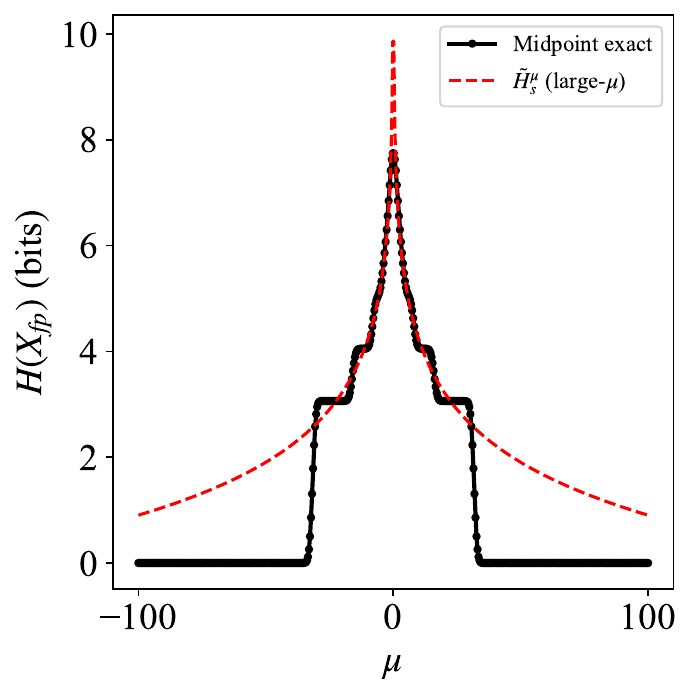}
    \caption{$p=6$, $E=3$.}
    \label{fig:app_mu_p6_E3}
  \end{subfigure}
  \begin{subfigure}{0.32\textwidth}
    \centering
    \includegraphics[width=\linewidth]{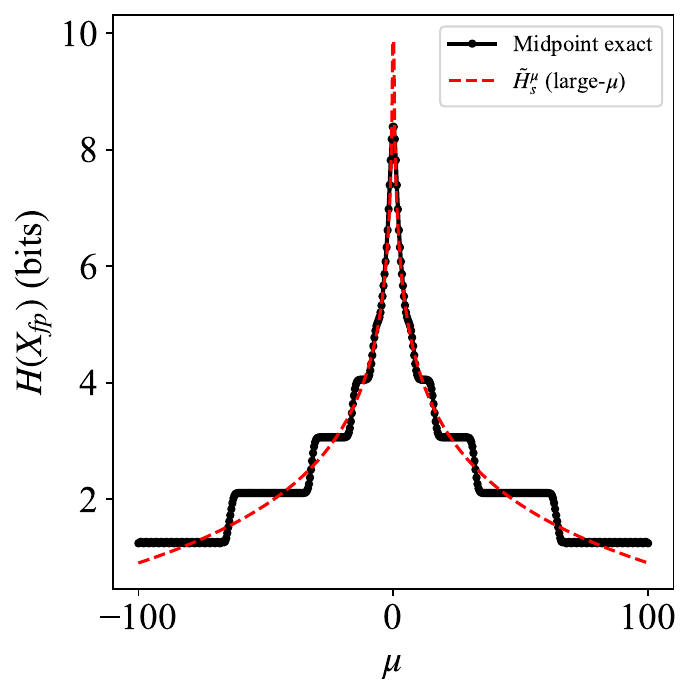}
    \caption{$p=6$, $E=4$.}
    \label{fig:app_mu_p6_E4}
  \end{subfigure}
  \begin{subfigure}{0.32\textwidth}
    \centering
    \includegraphics[width=\linewidth]{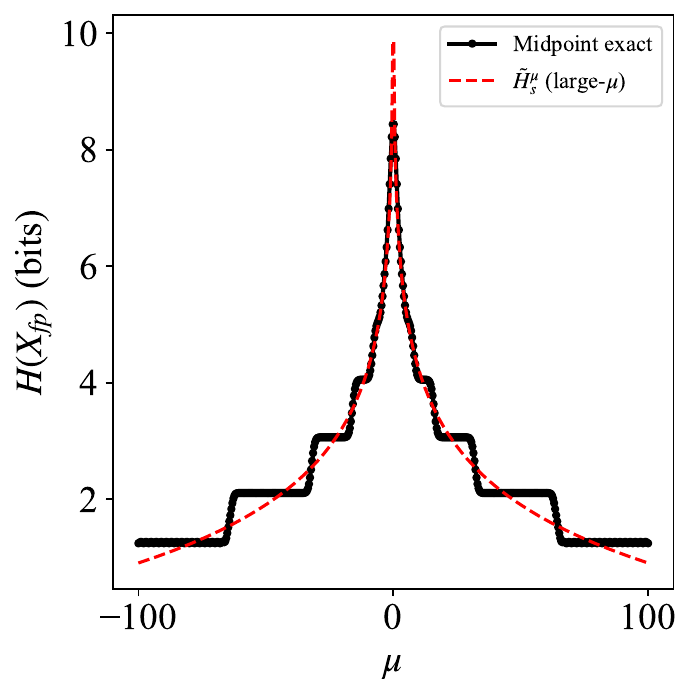}
    \caption{$p=6$, $E=5$.}
    \label{fig:app_mu_p6_E5}
  \end{subfigure}
  \begin{subfigure}{0.32\textwidth}
    \centering
    \includegraphics[width=\linewidth]{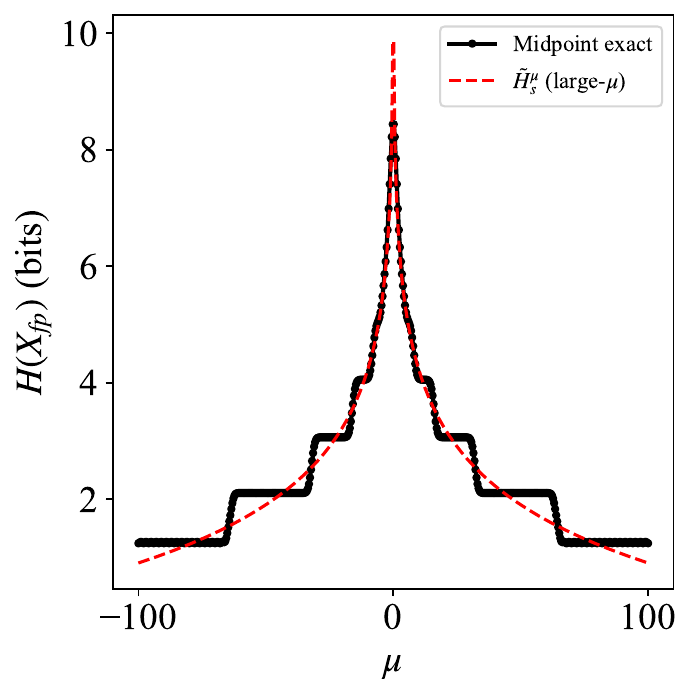}
    \caption{$p=6$, $E=6$.}
    \label{fig:app_mu_p6_E6}
  \end{subfigure}
  \begin{subfigure}{0.32\textwidth}
    \centering
    \includegraphics[width=\linewidth]{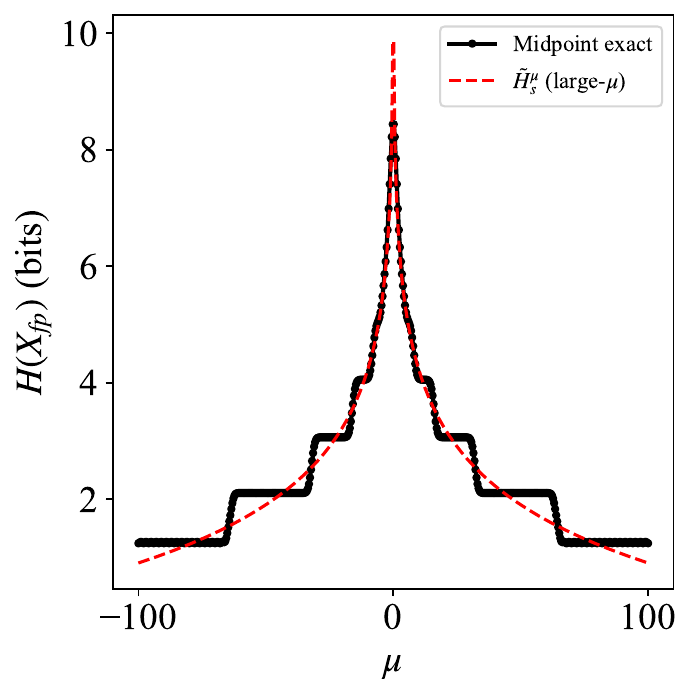}
    \caption{$p=6$, $E=7$.}
    \label{fig:app_mu_p6_E7}
  \end{subfigure}
  \caption{\textit{Exact midpoint-quantized entropy vs.\ mean $\mu$, $p=6$.} Same experiment as Fig.~\ref{fig:app_entropy_vs_mu_p1} with precision $p=6$.}
  \label{fig:app_entropy_vs_mu_p6}
\end{figure}

\begin{figure}[p]
  \centering
  \begin{subfigure}{0.32\textwidth}
    \centering
    \includegraphics[width=\linewidth]{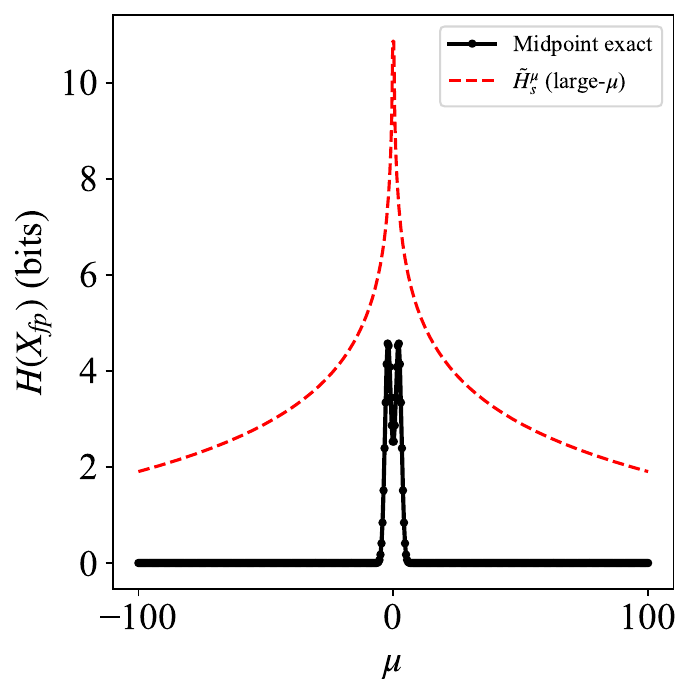}
    \caption{$p=7$, $E=0$.}
    \label{fig:app_mu_p7_E0}
  \end{subfigure}
  \begin{subfigure}{0.32\textwidth}
    \centering
    \includegraphics[width=\linewidth]{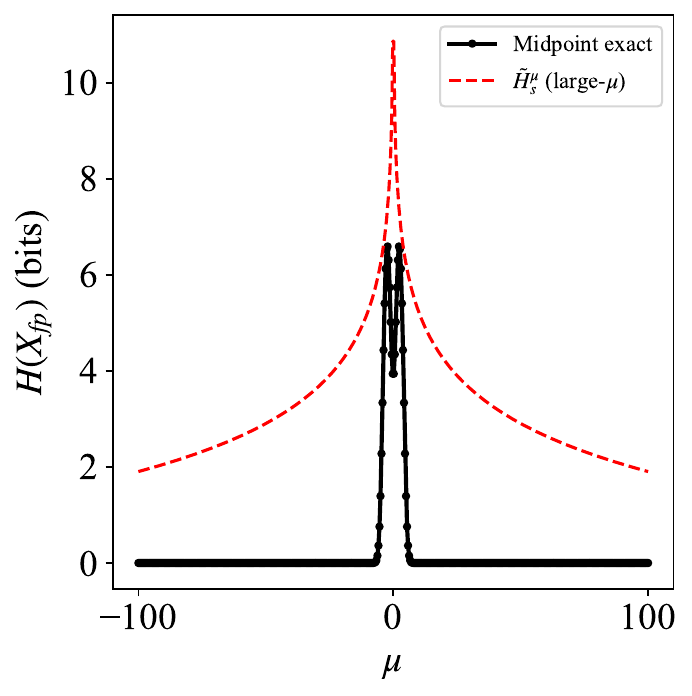}
    \caption{$p=7$, $E=1$.}
    \label{fig:app_mu_p7_E1}
  \end{subfigure}
  \begin{subfigure}{0.32\textwidth}
    \centering
    \includegraphics[width=\linewidth]{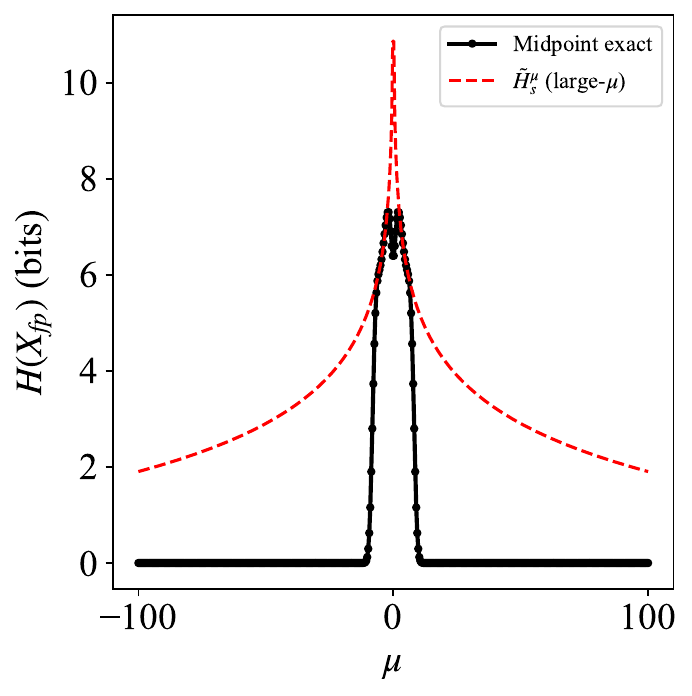}
    \caption{$p=7$, $E=2$.}
    \label{fig:app_mu_p7_E2}
  \end{subfigure}
  \begin{subfigure}{0.32\textwidth}
    \centering
    \includegraphics[width=\linewidth]{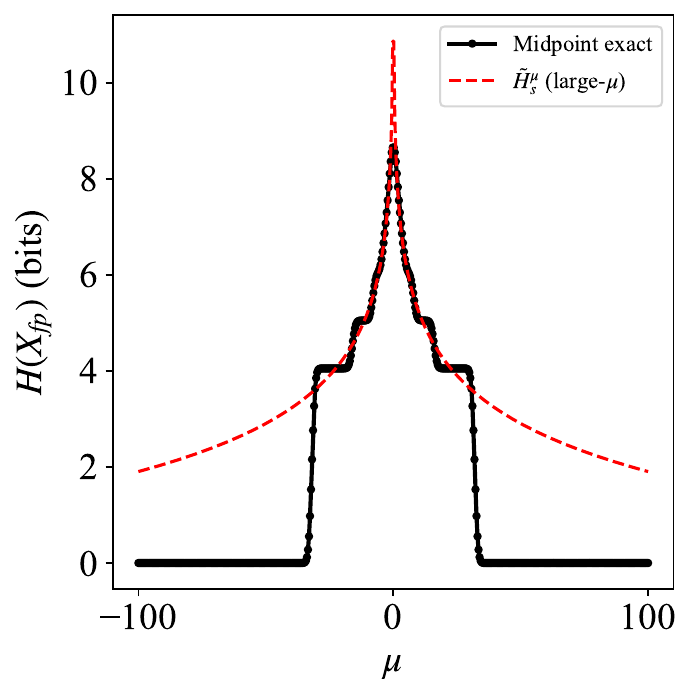}
    \caption{$p=7$, $E=3$.}
    \label{fig:app_mu_p7_E3}
  \end{subfigure}
  \begin{subfigure}{0.32\textwidth}
    \centering
    \includegraphics[width=\linewidth]{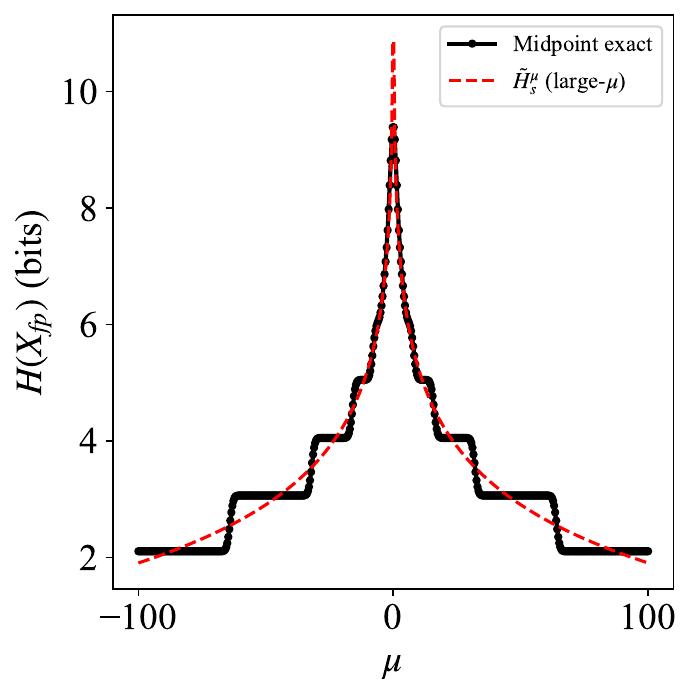}
    \caption{$p=7$, $E=4$.}
    \label{fig:app_mu_p7_E4}
  \end{subfigure}
  \begin{subfigure}{0.32\textwidth}
    \centering
    \includegraphics[width=\linewidth]{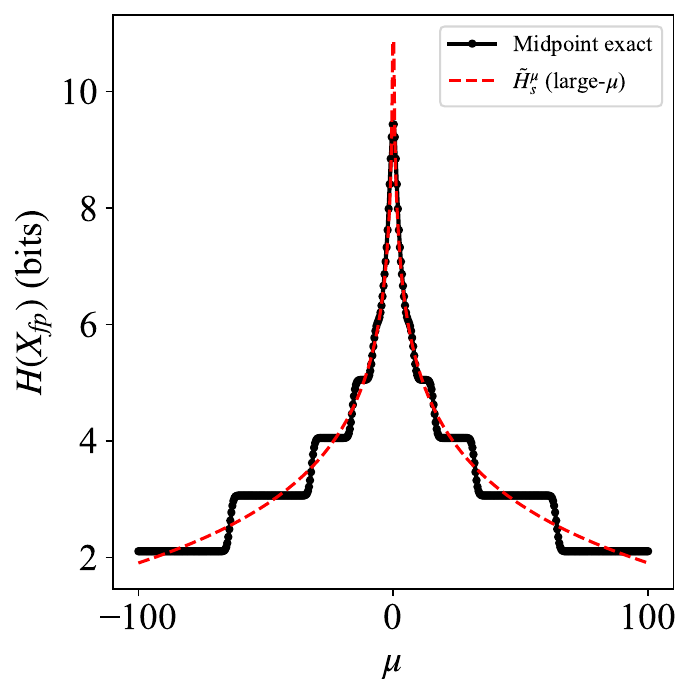}
    \caption{$p=7$, $E=5$.}
    \label{fig:app_mu_p7_E5}
  \end{subfigure}
  \begin{subfigure}{0.32\textwidth}
    \centering
    \includegraphics[width=\linewidth]{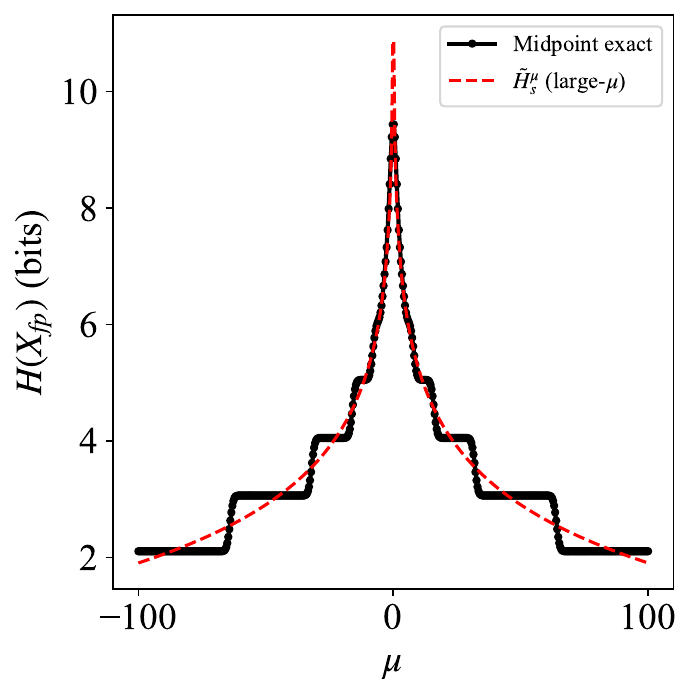}
    \caption{$p=7$, $E=6$.}
    \label{fig:app_mu_p7_E6}
  \end{subfigure}
  \begin{subfigure}{0.32\textwidth}
    \centering
    \includegraphics[width=\linewidth]{code/quantized_entropy_midpoint/mu_p_E/entropy_mu_midpoint_sigma_1.0_p_7_E_7.pdf}
    \caption{$p=7$, $E=7$.}
    \label{fig:app_mu_p7_E7}
  \end{subfigure}
  \caption{\textit{Exact midpoint-quantized entropy vs.\ mean $\mu$, $p=7$.} Same experiment as Fig.~\ref{fig:app_entropy_vs_mu_p1} with precision $p=7$.}
  \label{fig:app_entropy_vs_mu_p7}
\end{figure}

\end{document}